\documentclass[citeautoscript,floatfix,aps,prd,twocolumn,superscriptaddress,longbibliography]{revtex4-2}  %
\usepackage{graphicx}  %
\usepackage{dcolumn}   %
\usepackage{bm}        %
\usepackage{amssymb}   %
\usepackage{amsmath}
\usepackage{amsthm}
\theoremstyle{definition}
\newtheorem{proposition}{Proposition}[section]
\newtheorem{conjecture}{Conjecture}[section]
\newtheorem{assumption}{Assumption}[section]
\newtheorem{definition}{Definition}[section]
\newtheorem{theorem}{Theorem}[section]
\newtheorem{lemma}[theorem]{Lemma}
\newtheorem{corollary}[theorem]{Corollary}
\newtheorem{fact}{Fact}[section]
\newtheorem{remark}{Remark}[section]
\newtheorem{example}{Example}[section]
\usepackage{thmtools}
\usepackage{thm-restate}

\usepackage{multirow}
\usepackage{braket}
\usepackage{amsfonts}
\usepackage{dsfont}
\usepackage{hyperref}
\usepackage{cleveref}
\usepackage{xcolor}
\usepackage{simplewick}

\usepackage[export]{adjustbox}
\usepackage{mathtools}
\DeclarePairedDelimiter\floor{\lfloor}{\rfloor}
\newcommand{\Rmm}{\rotatebox[origin=c]{180}{$R$}}
\newcommand{\Rmp}{\rotatebox[origin=c]{90}{$R$}}

\newcommand{\id}{\mathrm{id}}

\newcommand{\C}{\mathbb{C}}
\newcommand{\Z}{\mathbb{Z}}

\newcommand{\calA}{\mathcal{A}}
\newcommand{\calC}{\mathcal{C}}

\newcommand{\calR}{\mathcal{R}}

\newcommand{\FA}{\mathfrak{A}}
\newcommand{\frakF}{\mathfrak{F}}
\newcommand{\End}{\mathrm{End}}
\newcommand{\Rep}{\mathrm{Rep}}
\newcommand{\Hom}{\mathrm{Hom}}
\newcommand{\Tr}{\mathrm{Tr}}

\newcommand{\Hil}{\mathcal{H}}

\newcommand{\T}{\mathcal{T}}
\newcommand{\oA}{o}
\newcommand{\oB}{s}

\newcommand{\mpwf}[3]{\Ket{{#1}_{#2}^{#3}}} %
\usepackage{pifont}%
\newcommand{\cmark}{\ding{51}}%
\newcommand{\xmark}{\ding{55}}%
\newcommand{\Yes}{{\color{blue}\cmark}}
\newcommand{\No}{{\color{red}\xmark}}
\newcommand{\Dsw}{\mathcal{D}_{\swarrow}}
\newcommand{\Dse}{\mathcal{D}_{\searrow}}
\newcommand{\DC}{\mathcal{D}'}
\newcommand{\DCsw}{\mathcal{D}'_{\swarrow}}
\newcommand{\DCse}{\mathcal{D}'_{\searrow}}
\newcommand{\zetapsi}{\psi}
\newcommand{\thetaR}{\theta}

\definecolor{darkorange}{RGB}{200,100,0}

\usepackage{tikz}
\usepackage{tensor}
\usepackage{tikz-network}
\usepackage{paratensor}
\usepackage{ulem}
\usepackage{color}
\definecolor{mypurple}{RGB}{255,0,255}

\makeatletter
\newcommand{\customlabel}[2]{%
	\protected@write \@auxout {}{\string \newlabel {#1}{{#2}{\thepage}{#2}{#1}{}} }%
	\hypertarget{#1}{#2}
}
\makeatother
\begin{document}

\title{On $R$-parastatistics I: Foundation}
	\author{Zhiyuan Wang}
    \affiliation{Perimeter Institute for Theoretical Physics, 31 Caroline St N, Waterloo, Ontario N2L-2Y5, Canada}
	\author{Kaden R.~A. Hazzard}
	\affiliation{Department of Physics and Astronomy, Rice University, Houston, Texas 77005,
		USA}
	\affiliation{Smalley-Curl Institute, Rice University, Houston, Texas 77005, USA}
	\date{\today}
\begin{abstract}
Parastatistics is an exotic %
type of exchange statistics beyond fermions and bosons. Paraparticles transform in higher dimensional representations of the
exchange symmetry group, 
analogous to non-Abelian anyons, %
yet consistently defined in any spatial dimension. %
Although paraparticles have long been proposed~\cite{Green1952},
they were widely believed to be physically equivalent to fermions or bosons~\cite{doplicher1971local, *doplicher1974local,doplicherFieldsStatisticsNonabelian1972}. Nevertheless, a recent paper~\cite{wang2023para} proposed a different theory, called $R$-parastatistics, 
and demonstrated that nontrivial $R$-paraparticles can emerge as quasiparticles in condensed matter systems, and are observably distinct from both fermions and bosons. This paper develops %
the theoretical foundation and several extensions of $R$-parastatistics, with particular emphasis on its observable consequences. 
Central to this paper is a general theory of local observables extending the basic family introduced in Ref.~\cite{wang2023para}. 
First, we define local observables that distinguish  particle types, providing the basis for mutual parastatistics and a systematic discussion of local indistinguishability of $R$-paraparticles. 
Second, we formulate local observables at special point defects that probe the internal indices of $R$-paraparticles, giving a firmer theoretical footing for %
observing $R$-parastatistics and for the proposed applications in quantum information~\cite{wang2024parastatistics,wang2025secret}. 
Third, we introduce local observables that create or annihilate particle-antiparticle pairs, providing a key ingredient for the $R$-paraparticle analog of  Bogoliubov-de Gennes mean-field theory of superconductors and of relativistic quantum field theory. %
We further introduce generalized hidden symmetries that act on internal indices of $R$-paraparticles while preserving the local observable algebra, providing a basis for proving local indistinguishability and for connecting to a categorical description of $R$-paraparticles. 
On the mathematical side, we 
prove a classification theorem on unitary $R$-matrices admitting local pair creation. %
This work sets a solid theoretical foundation for understanding the fundamental physical properties of $R$-paraparticles and %
pave the way for finding them in nature~(in particular as emergent quasiparticles in condensed matter systems, and as elementary particles in relativistic quantum field theories),  %
and the relation to previous no-go theorems~\cite{doplicher1971local, *doplicher1974local,doplicherFieldsStatisticsNonabelian1972,Buchholz1982}, all of which will be completed in the second part of this series.
\end{abstract}
	
	\maketitle
	\tableofcontents
\section{Introduction}\label{sec:intro}
\subsection{Background}
In 1953, Green~\cite{Green1952} proposed a generalized method of field quantization that go beyond the canonical quantization of fermionic and bosonic fields~(see App.~\ref{app:Green-theory} for a brief summary of Green's theory). 
The resulting quantum field theory describes an exotic type of particle exchange statistics, 
now called parastatistics, where identical particles transform in higher dimensional representations of the symmetric group $S_N$ under exchange~\cite{hartleQuantumMechanicsParaparticles1969}, in constrast to the one-dimensional representations realized by fermions and bosons. 
This theory was subsequently studied in detail~\cite{Araki1961,greenbergSpinUnitarySpinIndependence1964,Greenberg1965,LANDSHOFF196772,druhl1970parastatistics,Taylor1970a,Taylor1970b,tolstoyOnceMoreParastatistics2014,Baker2015Conventionality}, 
and also more generally and  rigorously~\cite{doplicher1971local, *doplicher1974local,doplicherFieldsStatisticsNonabelian1972,Buchholz1982, doplicher1990} %
within the framework of algebraic quantum field theory~\cite{WightmanPCTbook, haag2012book,halvorson2006algebraic}.
While these works did not rule out parastatistics as a theoretical possibility, they led to the conclusion that under natural physical assumptions~(including locality, unitarity, special relativity, along with a few more technical ones), paraparticles are 
 physically indistinguishable from ordinary fermions and bosons carrying some extra degrees of freedom such as color or flavor, in the sense that 
any physical prediction produced by previous paraparticle theories~(in particular, Green's theory) can also be produced by some equivalent theories of fermions and bosons. This is now called the ``conventionality argument'' ~\cite{Baker2015Conventionality}, which
prevents us from recognizing parastatistics as a distinctive type of exchange statistics, for its failing to produce experimental predictions that separate it from conventional particle statistics.

Nevertheless, a recent work~\cite{wang2023para} challenged this conventionality argument by introducing a different second-quantized theory of paraparticles, called $R$-parastatistics, based on bilinear commutation relations involving an $R$-matrix rather than Green’s trilinear commutation relations. 
Importantly, $R$-paraparticles described by this theory generally display nontrivial exclusion and exchange statistics that in general cannot be reduced to 
ordinary fermions or bosons carrying an extra degree of
freedom. 
The theory of $R$-paraparticles is compatible with spatial locality and unitary time evolution, and does not contradict the earlier no-go theorems~\cite{doplicher1971local,doplicher1974local}, because it evades some of their restrictive assumptions~(details will be discussed in part II of this series).
Ref.~\cite{wang2023para} further constructed exactly solvable quantum spin models in one and two dimensions~(later generalized to three dimensions in the supplemental material of Ref.~\cite{wang2024parastatistics}) in which such $R$-paraparticles emerge as quasiparticle excitations, and showed that the distinctive features of $R$-parastatistics can, in principle, be physically observed in condensed matter systems. %
Refs.~\cite{wang2024parastatistics,wang2025secret} sharpened this physical distinction from a quantum-information viewpoint, by formulating a secret-communication challenge game that cleanly separates $R$-paraparticles from ordinary fermions and bosons.

However, because Ref.~\cite{wang2023para} is necessarily brief, it did not lay down the full theoretical foundation %
of $R$-parastatistics. Even taking into account the supplemental information of Ref.~\cite{wang2023para}, several fundamental aspects of $R$-paraparticles are missing. First, Ref.~\cite{wang2023para} lacks a serious discussion on the (local) indistinguishability of $R$-paraparticles,
a fundamental physical property that qualifies them to be called identical particles in the first place, and is also crucial for the proposed applications in quantum information~\cite{wang2024parastatistics,wang2025secret}.  It is also unclear in general what physical observables should be measured in order to experimentally observe the distinctive features of $R$-parastatistics, although this is known in the specific models constructed in Ref.~\cite{wang2023para}. It also misses the discussion of antiparticles, which is fundamental for the construction of relativistic quantum field theories of $R$-paraparticles~(which is still missing to date, but will be completed in the second part of this series). 

In the following we first provide a non-technical introduction to parastatistics in the first quantization formulation and discuss the important concept of local indistinguishability, and then in Sec.~\ref{sec:intro-goal} we state the main goals of this paper and address the aforementioned open questions.

\subsection{What is parastatistics}\label{sec:what_is_para}
Below we give a non-technical introduction to parastatistics in the first quantization formulation, which provides a clean conceptual explanation of how parastatistics generalizes ordinary fermions and bosons, in a way distinct from non-Abelian anyons, and we also clarify the fundamental issue of (local) indistinguishability in this context. 

We begin by reviewing the standard textbook treatment of identical particle statistics that leads to the fermion-boson dichotomy. In quantum mechanics, each multiparticle quantum state is described by a ket $\mpwf{\Psi}{x_1,x_2,\ldots, x_n}{}$, where $x_1,x_2,\ldots, x_n\in \mathbb{R}^d$ are particle coordinates in a $d$ dimensional space. The particles are identical, meaning that when we exchange the positions of any two of them~(say $x_1,x_2$), the resulting state $\mpwf{\Psi}{x_2,x_1,\ldots, x_n}{}$ must represent the same physical state, and therefore can change by at most a constant factor  
\begin{equation}\label{eq:wavefuntion_exchange}
	\mpwf{\Psi}{x_2,x_1,\ldots, x_n}{}=c \mpwf{\Psi}{x_1,x_2,\ldots, x_n}{}.
\end{equation} 
If we do a second exchange, we have 
\begin{eqnarray}\label{eq:wavefuntion_exchange_2nd}
	\mpwf{\Psi}{x_1,x_2,\ldots, x_n}{}&=&c \mpwf{\Psi}{x_2,x_1,\ldots, x_n}{}\nonumber\\
	&=&c^2 \mpwf{\Psi}{x_1,x_2,\ldots, x_n}{},
\end{eqnarray}
leading to $c^2=1$. This provides exactly two possibilities, bosons~($c=1$) and fermions~($c=-1$). %

Despite being simple and convincing, there are two important exceptions to the fermion/boson dichotomy. The first is anyons  in two spatial dimension~(2D)~\cite{Leinaas1977,Wilczek1982Magnetic,*Wilczek1982Quantum,Wilczek1990book,Nayak2008NAAnyons,STERN2008204}, where we have to distinguish the two inequivalent ways of exchanging two anyons in 2+1D space time. 
Here we use $\hat{E}^{\text{ccw}}_{12}$ to denote the physical operation of braiding anyons 1 and 2 in the counterclockwise direction, and similarly  $\hat{E}^{\text{cw}}_{12}$ for the clockwise direction,  
and Eq.~\eqref{eq:wavefuntion_exchange} should be replaced by 
\begin{eqnarray}\label{eq:wavefuntion_exchange_anyon}
\hat{E}^{\text{ccw}}_{12}\mpwf{\Psi}{x_1,x_2,\ldots, x_n}{}&=&c \mpwf{\Psi}{x_1,x_2,\ldots, x_n}{},\nonumber\\
	\hat{E}^{\text{cw}}_{12}\mpwf{\Psi}{x_1,x_2,\ldots, x_n}{}&=&c^* \mpwf{\Psi}{x_1,x_2,\ldots, x_n}{},
\end{eqnarray} 
where $c$ can be any U$(1)$ phase factor~(hence the name ``anyons''), since in $2+1$D spacetime, $(\hat{E}^{\text{ccw}}_{12})^2$ is inequivalent to the trivial operation. Note that the second line of  Eq.~\eqref{eq:wavefuntion_exchange_anyon} is obtained from the first line due to the relation $ \hat{E}^{\text{cw}}_{12}=(\hat{E}^{\text{ccw}}_{12})^{-1}$. %

The second exception to the textbook argument is parastatistics~\cite{Green1952,Araki1961,Greenberg1965,LANDSHOFF196772,druhl1970parastatistics,Taylor1970b}, which can be consistently defined in any spatial dimension. The way this evades the above  argument is that the multiparticle state can carry extra indices that transform nontrivially during an exchange. Consider an $n$-particle state $\mpwf{\Psi}{x_1,x_2,\ldots, x_n}{I}$, where $I$ is a collection of extra indices corresponding to some internal degrees of freedom inaccessible to local measurements~(we will define precisely what this means in Sec.~\ref{sec:indistcrit}). Under an exchange between particles $j$ and $j+1$~\footnote{Note that we only need to specify the behavior of the state under exchange of particles with adjacent labels, since exchange of particles with nonadjacent labels can always be decomposed into a series of adjacent exchanges. For example, under the exchange of particles $1$ and $3$, the state should multiply by the matrix $R_{1}R_{2}R_{1}$.}, the state may undergo a matrix transformation
\begin{equation}\label{eq:wavefuntion_exchange_para}
	\mpwf{\Psi}{\{x_i\}^n_{i=1}}{I}|_{x_j\leftrightarrow x_{j+1}}=\sum_J \mpwf{\Psi}{\{x_i\}^n_{i=1}}{J}(R_{j,j+1})^J_{I},
\end{equation} 
for $j=1,\ldots,n-1$, where the summation is over all possible values of $J$. Similar to  the $c^2=1$ constraint for Eq.~\eqref{eq:wavefuntion_exchange}, the matrices $(R_{j,j+1})^I_{J}$ have to satisfy some algebraic constraints to guarantee consistency of Eq.~\eqref{eq:wavefuntion_exchange_para}. Here for simplicity we derive these consistency relations for the case $N=3$, but the generalization to arbitrary number of particles is straightforward.  Eq.~\eqref{eq:wavefuntion_exchange} is generalized to
\begin{eqnarray}\label{eq:involutive_condition}
	\mpwf{\Psi}{x_1,x_2,x_3}{I}&=& \sum_J\mpwf{\Psi}{x_2,x_1,x_3}{J}( R_{12})^J_I\nonumber\\
	&=& \sum_{J,K}\mpwf{\Psi}{x_1,x_2,x_3}{K}(R_{12})^K_J(R_{12})^J_I,
\end{eqnarray}
therefore $R_{12}^2=\mathds{1}$, the identity matrix, and similarly for $R_{23}$.  A second set of consistency conditions is obtained from the equivalence of two different ways of swapping $1,2,3$ to $3,2,1$, as shown in Fig.~\ref{fig:YBEgraphical}: on one side, we have
\begin{eqnarray}\label{eq:YBE1}
	\mpwf{\Psi}{x_1,x_2,x_3}{I}&=&  \sum_J \mpwf{\Psi}{x_2,x_1,x_3}{J}(R_{12})^J_I\\
	&=& \sum_J\mpwf{\Psi}{x_2,x_3,x_1}{J}(R_{23} R_{12})^J_I\nonumber\\
	&=&  \sum_J\mpwf{\Psi}{x_3,x_2,x_1}{J}(R_{12}R_{23}R_{12})^J_I,\nonumber
\end{eqnarray}
while on the other side, we have
\begin{eqnarray}\label{eq:YBE2}
	\mpwf{\Psi}{x_1,x_2,x_3}{I}&=& \sum_J \mpwf{\Psi}{x_1,x_3,x_2}{J}(R_{23})^J_I\\
	&=& \sum_J\mpwf{\Psi}{x_3,x_1,x_2}{J}(R_{12}R_{23})^J_I\nonumber\\
	&=& \sum_J\mpwf{\Psi}{x_3,x_2,x_1}{J}(R_{23}R_{12}R_{23})^J_I.\nonumber
\end{eqnarray}
Therefore, to guarantee consistency, we require $R_{12}R_{23}R_{12}=R_{23}R_{12}R_{23}$. Generalizing to $N$-particles, we have
\begin{eqnarray}\label{eq:YBE_SN}
	R_{j,j+1}^2&=&\mathds{1},~~~~ 1\leq j\leq N-1,\\
	R_{j-1,j}R_{j,j+1}R_{j-1,j}&=&R_{j,j+1}R_{j-1,j}R_{j,j+1},  \nonumber\\ %
	R_{i,i+1}R_{j,j+1}&=&R_{j,j+1}R_{i,i+1},\quad |i-j|\geq 2,\nonumber
\end{eqnarray}
where the last one is due to the commutativity of the swaps $x_{i}\leftrightarrow x_{i+1}$ and $x_{j}\leftrightarrow x_{j+1}$ for $|i-j|\geq 2$. 
Eq.~\eqref{eq:YBE_SN} is simply the requirement that $\{R_{j,j+1}\}^{N-1}_{j=1}$ must generate a representation of the permutation group $S_N$--indeed, one way to define $S_N$ is to present it via the generators $\{R_{j,j+1}\}^{N-1}_{j=1}$ and the relations in Eq.~\eqref{eq:YBE_SN}~\cite{kassel2008braid}.
If this representation of $S_N$ %
is not one-dimensional, we say Eq.~\eqref{eq:wavefuntion_exchange_para} defines a type of parastatistical particles, or paraparticles for short. 

We now make a comparison to non-Abelian anyons~\cite{wittenQuantumFieldTheory1989,MR1991,kitaev2003fault,kitaev2006anyons,Nayak2008NAAnyons} in 2D topological phases~\cite{Wen2017Zoo}, which can be viewed as a combination of Eq.~\eqref{eq:wavefuntion_exchange_anyon} and Eq.~\eqref{eq:wavefuntion_exchange_para}. 
In this case we associate $R_{j,j+1}$ to $\hat{E}^{\text{ccw}}_{j,j+1}$ %
and since $\hat{E}^{\text{ccw}}_{j,j+1}=(\hat{E}^{\text{cw}}_{j,j+1})^{-1}$, we should associate $R^{-1}_{j,j+1}$ to $\hat{E}^{\text{cw}}_{j,j+1}$. 
Repeating the arguments above~[one has to carefully keep track of the space-time trajectories in Eqs.~\eqref{eq:YBE1} and \eqref{eq:YBE2}, as shown in Fig.~\ref{fig:YBEgraphical}],
we find that for non-Abelian anyons, the matrices $\{R_{j}\}^{n-1}_{j=1}$ need to satisfy all the relations in Eq.~\eqref{eq:YBE_SN} except the first line, which is equivalent to saying that $\{R_{j,j+1}\}^{N-1}_{j=1}$ generate a representation of the braid group $B_N$~\cite{fredenhagenSuperselectionSectorsBraid1989,kassel2008braid}. 
Therefore in the special case of 2D, parastatistics can be considered as a special case of non-Abelian statistics satisfying additionally $R_{j,j+1}^2=\mathds{1}$, however, anyons cannot be consistently defined in higher dimensions while parastatistics is consistently defined in any dimension. 
Another key difference between anyons and paraparticles is that there does not exist an exactly solvable free particle theory for genuine anyons~(meaning that $R_{j,j+1}^2\neq\mathds{1}$), while there exists a solvable theory of free $R$-paraparticles~\cite{wang2023para}, as we will present in Sec.~\ref{sec:solution}. By a ``solvable free particle theory of anyons'', we mean a many-body Hamiltonian describing a system of non-interacting anyons freely moving in space, similar to Eq.~\eqref{eq:u1_sym_H}, such that one can obtain the exact many body spectrum by solving the one particle spectrum. 

\begin{figure}
	\center{\includegraphics[width=0.8\linewidth]{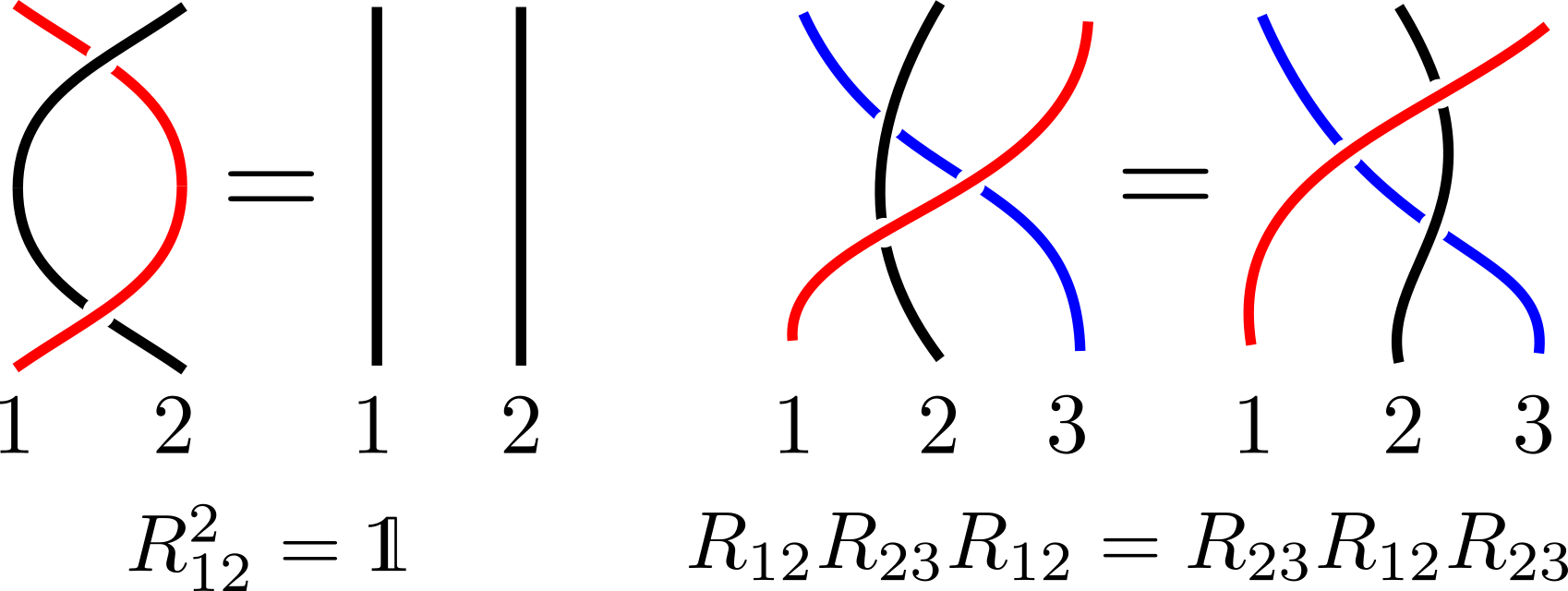}}
	\caption{\label{fig:YBEgraphical} Graphical illustrations of the derivations in  Eqs.~(\ref{eq:involutive_condition}-\ref{eq:YBE2}). Parastatistics satisfy both relations, while anyons in 2+1D only satisfy the second relation. } 
\end{figure}

\subsubsection{The indistinguishability criterion}\label{sec:indistcrit}
We now give a precise definition of indistinguishable particles, and discuss different notions of indistinguishability. 

Consider any quantum state $\ket{\Psi_{x_1,x_2,\ldots,x_n}}$ with quantum particles at positions $x_1,x_2,\ldots,x_n$. In what sense can we say that all these particles are identical? A natural idea is to consider a thought experiment in which we (adiabatically) exchange the positions of any two particles, say the ones at $x_i$ and $x_j$, and ask if it is physically~(experimentally) possible to distinguish the two quantum states before and after the exchange, i.e., $\ket{\Psi_{x_1,x_2,\ldots,x_n}}$ and $\hat{E}_{ij}\ket{\Psi_{x_1,x_2,\ldots,x_n}}$, where $\hat{E}_{ij}$ is the unitary exchange operator. 
Thus we first need to define quantum state indistinguishability. Depending on which class of physical observables we are able to measure in experiment, we can have different notions of state indistinguishability:
\begin{definition}\label{def:abs_indist_criterion}
Let $\ket{\Psi}$ and $\ket{\Phi}$ be two quantum states of a physical system. We say that $\ket{\Psi}$ and $\ket{\Phi}$ are \textit{absolutely indistinguishable} if they cannot be distinguished by any measurement, i.e., if
\begin{equation}\label{eq:abs_indist_criterion}
  \braket{\Psi|\hat{O}|\Psi}=\braket{\Phi|\hat{O}|\Phi} 
\end{equation}
for any observable (i.e., any Hermitian operator) $\hat{O}$. 
\end{definition}
Absolute indistinguishability of $\ket{\Psi}$ and $\ket{\Phi}$ forces them to be equal up to a phase $\ket{\Psi}=e^{i\phi}\ket{\Phi}$. Nevertheless, for most parts of this paper, we will be discussing identical particles in condensed matter physics and quantum field theories, where locality and symmetry fundamentally constrains which class of observable we are able to measure. It is therefore natural to incorporate locality and symmetry into the definition, leading to the following weaker notion of state indistinguishability
\begin{definition}\label{def:weak_indist_criterion}
We say that two quantum states $\ket{\Psi}$ and $\ket{\Phi}$ of a quantum many body system are \textit{locally indistinguishable}  if Eq.~\eqref{eq:abs_indist_criterion} holds for all local observables $\hat{O}$. Furthermore, let $G$ be a symmetry of the system, and let $\{\hat{U}(g)|g\in G\}$ be its representation on the state space. We say that $\ket{\Psi}$ and $\ket{\Phi}$ are $G$-protected indistinguishable if Eq.~\eqref{eq:abs_indist_criterion} holds for all $G$-symmetric observables $\hat{O}$, i.e., all observables $\hat{O}$ satisfying
\begin{equation}
[\hat{U}(g),\hat{O}]=0,\quad \forall g\in G. 
\end{equation}
Similarly, $G$-protected local indistinguishability is defined by combining the above two notions. 
\end{definition}
We are now ready to define indistinguishable particles in quantum many body systems:
\begin{definition}\label{def:weak_indistinguishable_particles}
	A quantum state $\ket{\Psi_{x_1,x_2,\ldots,x_n}}$ describes a system of [$G$-protected] (locally) indistinguishable
    particles at positions $x_1,x_2,\ldots,x_n$ if the states $\{\hat{E}_{ij}\ket{\Psi_{x_1,x_2,\ldots,x_n}}~|~i,j\in\{1,2,\ldots,n\}\}$ are [$G$-protected] (locally) indistinguishable.
    That is, if for all $i,j\in\{1,2,\ldots,n\}$, we have
	\begin{equation}\label{eq:local_indist}
\braket{\Psi|\hat{E}^\dagger_{ij}\hat{O}\hat{E}_{ij}|\Psi} =\braket{\Psi|\hat{O}|\Psi}, 
	\end{equation}
	 for all [$G$-symmetric] (local) observables $\hat{O}$.
\end{definition}
One can show that absolute indistinguishability requires that $\hat{E}_{ij}\ket{\Psi}\propto\ket{\Psi}$, which boils down to fermions or bosons~(and Abelian anyons in 2D). By contrast, local indistinguishability allows  
\begin{equation}\label{eq:EijPsitransform}
	\hat{E}_{ij}\ket{\Psi^I}=\sum_J (R_{ij})^I_J \ket{\Psi^J},
\end{equation}
where the internal index $I,J$ cannot be measured locally. The $G$-protected local indistinguishability criterion (iii) also allows Eq.~\eqref{eq:EijPsitransform}, but only requires that $I,J$ cannot be measured locally without breaking the $G$-symmetry.

We mention that being locally indistinguishable is the way how $R$-parastatistics gets around a more recent no-go theorem~\cite{mekonnen2025invariance} that rules out absolutely indistinguishable paraparticles~\footnote{In some sense, Ref.~\cite{mekonnen2025invariance} rules out $G$-protected indistinguishable paraparticles, where $G=S_n$~(the symmetric group) is the  particle exchange symmetry itself, or more generally, the so called ``quantum permutation symmetry''~\cite{mekonnen2025invariance}. In  
condensed matter and high energy physics, we normally do not treat particle exchange symmetry as a fundamental symmetry of a physical system, as it is quite different from a conventional symmetry. A conventional symmetry of a quantum many body system is a representation  $\{\hat{U}(g)|g\in G\}$ of a group $G$~(or a more general algebraic structure such as a Hopf algebra) on the Hilbert space of the quantum system that commutes with the Hamiltonian. Particle exchange symmetry
is a bit unconventional in that it is represented by an infinite sequence $\{\hat{U}_n(g)|g\in S_n\}_{n=0}^\infty$, where $\hat{U}_n(g)$ is the representation of $S_n$ on the $n$-particle subspace, i.e., it is not representation of one algebraic structure on the entire Hilbert space, but representations of an infinite sequence of groups, each acting on a subspace. 
}. 
All $R$-paraparticles described  by the second quantization formulation that we introduce later in this paper are locally indistinguishable rather than absolutely indistinguishable. In Ref.~\cite{wang2023para} and also in the second part of this series, we construct solvable quantum spin system~(in all dimensions) that host emergent $R$-paraparticles. In 2D and 3D, the emergent $R$-paraparticles in these models are locally indistinguishable, while in 1D, they are $G$-protected locally indistinguishable, where $G$ is generally a non-invertible symmetry represented by matrix product operators.

\subsection{Goal of this paper}\label{sec:intro-goal}
The goal of this paper is to supply the missing theoretical foundation of $R$-parastatistics. Our main focus is to investigate its observable content: what qualifies as a local observable, what aspects of an $R$-paraparticle can be physically measured, and how these local observables encode the relevant physical properties of $R$-paraparticles. 
To this end, we significantly extend the observable content of this theory, by giving a general definition of local observables in an $R$-paraparticle theory, and explicitly constructing families of local observables that go beyond  the basic family introduced in Ref.~\cite{wang2023para}. 
Specifically, we introduce local observables that distinguish particle types, and observables that probe internal indices at special point defects, both of which are fundamental to understanding local indistinguishability, and the former also motivates the concept of mutual parastatistics not discussed in Ref.~\cite{wang2023para}, while the latter is crucial for the experimental probes of exchange statistics and the proposed quantum-information applications~\cite{wang2024parastatistics,wang2025secret}. 
Another important family of local observables introduced in this paper consists of those that create or annihilate particle-antiparticle pairs,
which lead to an $R$-paraparticle analog of Bogoliubov-de Gennes mean-field theory of superconductors, and are also crucial for the construction of relativistic quantum field theories of $R$-paraparticles. We also prove a classification theorem for unitary $R$-matrices admitting local pair creation, in which physical concepts of topological twist factor, Frobenius-Schur indicator, and particle-hole symmetry emerge, connecting a subclass of $R$-paraparticles to the topological-order literature. Finally, we study hidden symmetries that transform the internal indices of $R$-paraparticles while preserving the local observable algebra. Such symmetries are naturally described by Hopf algebras and go beyond conventional group symmetries. In addition to their intrinsic interest, these generalized hidden symmetries help us prove several structural theorems on local indistinguishability of $R$-paraparticles. 

Our paper is organized as follows. 
In Sec.~\ref{sec:second_quantization} we review the second quantized theory of $R$-parastatistics %
proposed in Ref.~\cite{wang2023para}, and introduce a general definition of local observables, the concept of mutual parastatistics, and the notion of weak-equivalence between $R$-matrices.  In Sec.~\ref{sec:nontrivialstatistics} we discuss the  exchange statistics of $R$-paraparticles, %
emphasizing its observable physical consequences and the proposed applications in secret communication~\cite{wang2024parastatistics,wang2025secret}. 
In Sec.~\ref{sec:pair-creation-AP} we introduce a family of local observables responsible for local pair-creation, and classify the subclass of $R$-matrices that allow such observables, and discuss related concepts such as topological twist factor, Frobenius-Schur indicators, and particle-hole symmetry, and propose a $R$-parastatistical analog of Majorana fermion operators.  
In Sec.~\ref{sec:solution} we show how to exactly solve free $R$-paraparticle Hamiltonians, including an analog of Bogoliubov-type bilinear Hamiltonians with U(1)-breaking pair-creation/annihilation terms. In Sec.~\ref{sec:generalized_symmetry} we introduce generalized hidden symmetries of $R$-paraparticles, and use them as basis to discuss local indistinguishability and motivate a categorical description of $R$-paraparticles. We summarize the paper in Sec.~\ref{sec:summary_part1}. 

\section{The second quantized theory of $R$-paraparticles}\label{sec:second_quantization} %

In this section we review the second quantization formulation of $R$-parastatistics introduced in Ref.~\cite{wang2023para}, which is an important tool for the study of $R$-paraparticles in both condensed matter physics and high energy quantum field theories. 
Although the second quantization formulation only realizes a subfamily of the parastatistics defined by the first quantization formulation presented in Sec.~\ref{sec:what_is_para}~(see Sec.~\ref{sec:relation_first_quantization}  for the relation between the two formulations%
), it has the important advantage that it is naturally equipped with a locality structure crucial for condensed matter physics and quantum field theory,  
which is not naturally exhibited the first quantization formulation.
A key ingredient for this locality structure is the special family of representations of $S_N$ realized by the so-called $R$-matrices, which we discuss in Sec.~\ref{sec:YBE}. 
After this we present the second quantized theory of $R$-paraparticles with more technical details compared to Ref.~\cite{wang2023para}, and then in Sec.~\ref{sec:mutual_para} we introduce an extension of this formalism that describes mutual parastatistics. 

\subsection{Representations of $S_N$ and constant Yang-Baxter equations}\label{sec:YBE}
In the second quantization formulation, each type of parastatistics is associated to
an $R$-matrix, a four-index tensor $R^{ab}_{cd}$ with $1\leq a,b,c,d\leq m$~(here $m\in \mathbb{Z}$ is called the quantum dimension of the paraparticle) that realizes a solution to Eq.~\eqref{eq:YBE_SN} in the following way. First, we introduce an abstract tensor indexing notation that we will use throughout the paper: 
\begin{definition}{(Abstract tensor indexing)}\label{def:abstractTNindexing}
Any four index tensor $R^{ab}_{cd}$ defines a  linear map $R$ in the product vector space $\mathfrak{I}\otimes \mathfrak{I}$, where $\mathfrak{I}$ is a $m$-dimensional vector space with basis $\{v_1,v_2,\ldots, v_m\}$, and $R$ acts on the basis vectors as $R(v_c\otimes v_d)=\sum_{ab}R^{ab}_{cd}v_a\otimes v_b$. Now define $R_{j,j+1}$ as a  linear map on $\mathfrak{I}^{\otimes N}$ by
\begin{equation}\label{def:Rjjp1}
	R_{j,j+1}=\underset{{(1)}}{\mathds{1}}\otimes\ldots\otimes \underset{(j-1)}{\mathds{1}}\otimes \underset{(j,j+1)}{R}\otimes\underset{(j+2)}{\mathds{1}}\otimes\ldots\otimes \underset{(N)}{\mathds{1}},
\end{equation} 
i.e. $R_{j,j+1}$ applies $R$ on the $j$ and $j+1$-th factor spaces. 
\end{definition}
If we take the representation space of $S_N$ to be $\mathfrak{I}^{\otimes N}$ and define $R_{j,j+1}$ as in Eq.~\eqref{def:Rjjp1},
 then all the relations in Eq.~\eqref{eq:YBE_SN} reduce to the following two equations
\begin{eqnarray}\label{eq:YBE}
	R^2&=&\mathds{1}_{m^2}~~~~~~~~~~~~(\text{in } \mathfrak{I}\otimes \mathfrak{I}),\nonumber\\
	R_{12} R_{23} R_{12}&=& R_{23} R_{12} R_{23}~~~(\text{in } \mathfrak{I}\otimes \mathfrak{I}\otimes \mathfrak{I}).
\end{eqnarray}
These are the relations the tensor $R^{ij}_{kl}$ is required to satisfy. We also have the following tensor graphical representation of Eq.~\eqref{eq:YBE}:
\begin{equation}\label{eq:YBEgraphical}
	\begin{tikzpicture}[baseline={([yshift=-.8ex]current bounding box.center)}, scale=0.5]
		\Rmatrix{0}{\AL}{R}
		\Rmatrix{0}{-\AL}{R}
		\node  at (-\AL,2.5*\AL) {\footnotesize $a$};
		\node  at (\AL,2.5*\AL) {\footnotesize $b$};
		\node  at (-\AL,-2.5*\AL) {\footnotesize $c$};
		\node  at (\AL,-2.5*\AL) {\footnotesize $d$};
	\end{tikzpicture}=
	\begin{tikzpicture}[baseline={([yshift=-.8ex]current bounding box.center)}, scale=0.5]
		\draw[thick] (-\AL,-2*\AL) -- (-\AL,2*\AL);
		\draw[thick] (\AL,-2*\AL) -- (\AL,2*\AL);
		\node  at (-\AL,2.5*\AL) {\footnotesize $a$};
		\node  at (\AL,2.5*\AL) {\footnotesize $b$};
		\node  at (-\AL,-2.5*\AL) {\footnotesize $c$};
		\node  at (\AL,-2.5*\AL) {\footnotesize $d$};
		\node  at (-1.5*\AL,0*\AL) {\footnotesize $\delta$};
		\node  at (1.5*\AL,0*\AL) {\footnotesize $\delta$};
	\end{tikzpicture}~,~~~
	\begin{tikzpicture}[baseline={([yshift=-.8ex]current bounding box.center)}, scale=0.5]
		\Rmatrix{-\AL}{2*\AL}{R}
		\Rmatrix{\AL}{0}{R}
		\Rmatrix{-\AL}{-2*\AL}{R}
		\draw[thick] (-2*\AL,-\AL) -- (-2*\AL,\AL);
		\draw[thick] (2*\AL,\AL) -- (2*\AL,3*\AL);
		\draw[thick] (2*\AL,-\AL) -- (2*\AL,-3*\AL);
		\node  at (-2*\AL,3.5*\AL) {\footnotesize $a$};
		\node  at (0*\AL,3.5*\AL) {\footnotesize $b$};
		\node  at (2*\AL,3.5*\AL) {\footnotesize $c$};
		\node  at (-2*\AL,-3.7*\AL) {\footnotesize $d$};
		\node  at (0*\AL,-3.7*\AL) {\footnotesize $e$};
		\node  at (2*\AL,-3.7*\AL) {\footnotesize $f$};
	\end{tikzpicture}
	=
	\begin{tikzpicture}[baseline={([yshift=-.8ex]current bounding box.center)}, scale=0.5]
		\Rmatrix{\AL}{2*\AL}{R}
		\Rmatrix{-\AL}{0}{R}
		\Rmatrix{\AL}{-2*\AL}{R}
		\draw[thick] (2*\AL,-\AL) -- (2*\AL,\AL);
		\draw[thick] (-2*\AL,\AL) -- (-2*\AL,3*\AL);
		\draw[thick] (-2*\AL,-\AL) -- (-2*\AL,-3*\AL);
		\node  at (-2*\AL,3.5*\AL) {\footnotesize $a$};
		\node  at (0*\AL,3.5*\AL) {\footnotesize $b$};
		\node  at (2*\AL,3.5*\AL) {\footnotesize $c$};
		\node  at (-2*\AL,-3.7*\AL) {\footnotesize $d$};
		\node  at (0*\AL,-3.7*\AL) {\footnotesize $e$};
		\node  at (2*\AL,-3.7*\AL) {\footnotesize $f$};
	\end{tikzpicture},
\end{equation}
where $R^{ab}_{cd}=\!\!
\begin{tikzpicture}[baseline={([yshift=-.6ex]current bounding box.center)}, scale=0.45]
	\Rmatrix{0}{0}{R}
	\node  at (-1.5*\AL,\AL) {\footnotesize $a$};
	\node  at (1.5*\AL,\AL) {\footnotesize $b$};
	\node  at (-1.5*\AL,-\AL) {\footnotesize $c$};
	\node  at (1.5*\AL,-\AL) {\footnotesize $d$};
\end{tikzpicture}$,
and throughout this paper we use tensor graphical notation where open indices are identified on both sides of the equation and contracted indices are summed over, and a line segment represents a Kronecker $\delta$ function.

With $R_{j,j+1}$ defined in Eq.~\eqref{def:Rjjp1}, the $N$-particle state $\mpwf{\Psi}{x_1,x_2,\ldots, x_N}{I}$ in Eq.~\eqref{eq:wavefuntion_exchange_para} has $N$ auxiliary indices $I=(a_1,a_2,\ldots,a_N)$. Under two particle exchange, the state transforms in the following way~(we show the case of $N=3$ where $x_1,x_2$ are exchanged, but generalization to arbitrary $N$ is straightforward)
\begin{equation}\label{eq:wavefuntion_exchange_Rmat}
	\mpwf{\Psi}{x_2,x_1,x_3}{a_1, a_2, a_3}=\sum_{b_1,b_2} R_{a_1a_2}^{b_1 b_2 } \mpwf{\Psi}{x_1,x_2,x_3}{b_1, b_2, a_3},
\end{equation} 
which is obtained by inserting Eq.~\eqref{def:Rjjp1} into Eq.~\eqref{eq:wavefuntion_exchange_para}. %

Before giving examples, we mention a natural equivalence relation between $R$-matrices. We say that two unitary $R$-matrices $R$ and $R'$ are equivalent, denoted by $R\cong R'$, if there exists a unitary matrix $V$ such that
\begin{equation}\label{eq:basistransformRmat}
 R'=%
 (V\otimes V)^{-1}
 R(V\otimes V).
\end{equation}
(For equivalence between non-unitary $R$-matrices, we drop the unitarity condition on $V$.) 
Indeed, if $R$ and $R'$ are related by Eq.~\eqref{eq:basistransformRmat}, then the associated paraparticle theories are related by a basis transformation~(represented by $V$) in the internal space~[note, in particular, that the YBE~\eqref{eq:YBE} is invariant under this basis rotation], and consequently their physical predictions are the same, since all physical properties of a certain type of $R$-paraparticle are independent of the choice of basis in the internal space. In this case, we call the two theories physically equivalent. 
In Sec.~\ref{sec:weak-equivalence}, 
we will introduce a weaker equivalence relation between $R$-matrices which still preserves a subset of physical properties but does not lead to full physical equivalence. 

In summary, every tensor $R^{ab}_{cd}$ satisfying Eq.~\eqref{eq:YBE} uniquely characterizes a type of parastatistical particles. Let us take a closer look at this important equation. The second line of Eq.~\eqref{eq:YBE} is known in the mathematical physics literature as the constant YBE~(whose solutions are called $R$-matrices), which appears in diverse areas including representation theory of the braid group $B_N$~\cite{kohno1987monodromy,Wenzl1990Representation,ZHANG1991625}, theory of knot invariants~\cite{Turaev1988,Kauffman_knot_book}, quasitriangular Hopf algebras~\cite{drinfeld1986quantum,Majid1990,klimyk1997book,kasselQuantumGroups1995}, and quantum computing~\cite{Kauffman_2002,dye2003unitary}. A closely related version of the YBE~(with a spectral parameter) plays a fundamental role in Bethe ansatz solvable models~\cite{Jimbo1989,faddeev1996algebraic,korepinQuantumInverseScattering1993,baxter2016exactly}. $R$-matrices that further satisfy the first line of Eq.~\eqref{eq:YBE} are called ``involutive'', and have also been studied~\cite{etingof1999set,HIETARINTA1992245,RUMP2007153,cedo2010involutive,cedo2014braces,guarnieri2017skew,SMOKTUNOWICZ201886,LECHNER2019106769}.

In the following we present some basic examples of $R$-matrices, including ones that lead to non-trivial parastatistics. Notice that the integer $m\geq 1$ is a free parameter for all the following examples~(except Example~\ref{ex:setth}), and it is always assumed that $ 1\leq a,b,c,d\leq m$. That they are solutions to Eq.~\eqref{eq:YBE} can be checked by straightforward computation. \\
\textbf{Example \customlabel{ex:decoupled}{1}}~(trivial).
$R^{ab}_{cd}=\pm\delta_{ad}\delta_{bc}$. Eq.~\eqref{eq:wavefuntion_exchange_Rmat} becomes
$$	\mpwf{\Psi}{x_2,x_1,x_3}{a_2, a_1, a_3}=\pm \mpwf{\Psi}{x_1,x_2,x_3}{a_1, a_2, a_3}.$$
But this %
type of exchange behavior is the same as fermions~($-$) or bosons~($+$) with an internal degree of freedom~(such as spin or flavor), and importantly, when two particles are exchanged, these internal indices are carried along with the particles without changing. We therefore call this type of $R$-matrices and the associated $R$-parastatistics  trivial. [Throughout this paper, whenever we say that a certain aspect of an $R$-matrix~(such as exclusion or exchange statistics) is trivial, we mean that aspect of the $R$-matrix in question is the same as in this example.]   %
\\ 
\textbf{Example \customlabel{ex:Green}{2}}~(Green's parastatistics~\cite{Green1952}).
$R^{ab}_{cd}=\pm\delta_{ad}\delta_{bc}(-1)^{\delta_{ab}}$. Eq.~\eqref{eq:wavefuntion_exchange_Rmat} becomes
$$	
\mpwf{\Psi}{x_2,x_1,x_3}{a_2, a_1, a_3}=\pm (-1)^{\delta_{a_1a_2}} \mpwf{\Psi}{x_1,x_2,x_3}{a_1, a_2, a_3}.
$$
This  essentially leads to Green's parastatistics~\cite{Green1952}~(see App.~\ref{app:Green-theory} for the precise connection). %
Intuitively, for the ($+$) sign, we get $m$ flavors of fermions that are mutually symmetric~(order-$m$ para-fermions in Green's formulation), while for the ($-$) sign, we get $m$ flavors of bosons that are mutually antisymmetric~(order-$m$ parabosons).
However, %
interpreting the extra indices $a_1,a_2,a_3$ as a flavor index glosses over the important issue of local indistinguishability discussed in Sec.~\ref{sec:indistcrit}, as it is often understood that particle flavor can be locally measured while paraparticles are assumed to be locally indistinguishable. We will come back to the discussion of local indistinguishability later in Sec.~\ref{sec:local-indist-revisit}. 
\\
\textbf{Example \customlabel{ex:1m}{3}}
$R^{ab}_{cd}=\pm\delta_{ac}\delta_{bd}$
, or $R=\pm\mathds{1}_{m^2}$.  Eq.~\eqref{eq:wavefuntion_exchange_Rmat} becomes
$$	%
\mpwf{\Psi}{x_2,x_1,x_3}{a_1, a_2, a_3}=\pm \mpwf{\Psi}{x_1,x_2,x_3}{a_1, a_2, a_3}.
$$
This will be our primary example of unitary $R$-matrices with non-trivial exclusion statistics, as we show later in Sec.~\ref{sec:exclusion_statistics_calc}. \\
\textbf{Example \customlabel{ex:1m1}{4}}
For $m\geq 2$, let
\begin{eqnarray}\label{eq:example_1n1Rmatrix}
	R^{ab}_{cd}=\lambda_{ab}\xi_{cd}-\delta_{ac}\delta_{bd},
\end{eqnarray}
where $\lambda, \xi$ are $m\times m$ constant matrices satisfying 
\begin{equation}\label{eq:lambdac}
	\lambda \xi \lambda^T \xi^T=\mathds{1}_{m}, ~~~ \mathrm{Tr}(\lambda \xi^T)=2.
\end{equation}
Eq.~\eqref{eq:lambdac} guarantees that $R$ satisfies Eq.~\eqref{eq:YBE}. 
For example, we can take $\lambda=e^{-M},\xi=-e^M$, where $M$ is an $m\times m$ antisymmetric matrix $M^T=-M$ with complex entries satisfying $\mathrm{Tr}[e^{-2M}]=-2$~(one can get an explicit solution using a block-diagonal ansatz for $M$ with maximum block size $2$). 
For this example, we are mostly interested in the case $m\geq 3$, where it gives examples of $R$-matrices with nontrivial exclusion statistics that allow pair creation terms introduced later in Sec.~\ref{sec:U1breaking}. A caveat, though, is that all solutions to Eq.~\eqref{eq:lambdac} lead to non-unitary $R$-matrices for $m\geq 3$~\footnote{For $m=2$, Eq.~\eqref{eq:lambdac} is satisfied if $\xi=\lambda^*$ is unitary. This does leads to a unitary $R$-matrix, but its exclusion statistics $z_R(x)=1+mx+x^2=(1+x)^2$ is trivial.}.   
\\
\textbf{Example \customlabel{ex:setth}{5}}%
Define an $R$-matrix with $m=4$ as follows: let $S=\{1,2,3,4\}$ and $r:S\times S\to S\times S$ be an bijective map defined as 
\begin{equation}\label{eq:seth-R}
	r(a,b)=
	\left(
	\begin{array}{cccc}
		43 & 12 & 24 & 31 \\
		21 & 34 & 42 & 13 \\
		14 & 41 & 33 & 22 \\
		32 & 23 & 11 & 44 \\
	\end{array}
	\right)_{ab},
\end{equation}
where we use $ab$ as a short hand for $(a,b)$. For example, $r(1,1)=(4,3)$ and $r(3,2)=(4,1)$. The map $r$ in Eq.~\eqref{eq:seth-R} satisfies the set-theoretical YBE~\cite{etingof1999set}
\begin{equation}\label{eq:sethYBE}
	r^2=\mathrm{id}_{S\times S},\quad	r_{12}r_{23}r_{12}=r_{23}r_{12}r_{23},
\end{equation}
where in the second equation both sides are bijective maps from the set $S\times S\times S$ to itself,  $r_{12}=r\times \mathrm{id}_S$, and $r_{23}=\mathrm{id}_S\times r$. 
Now we define the $R$-matrix as
\begin{equation}\label{eq:set-thR}
	R\ket{a,b}=-\ket{b',a'},~~~\forall a,b\in S, 
\end{equation}
where $(b',a')=r(a,b)$. 
It then follows from 
Eq.~\eqref{eq:sethYBE} that $R$ satisfies the YBE~\eqref{eq:YBE}. We will later use this as a primary example to illustrate the quantum information aspects of $R$-paraparticles in Sec.~\ref{sec:QI_statistics}.\\
\textbf{Example \customlabel{ex:setth-ext}{6}}
Define an $R$-matrix with arbitrary $m$ by:
\begin{equation}\label{eq:RA4Z3}
	R\ket{a,b}=\varphi(a,b)\ket{b+1,a-1},
\end{equation}
where $a,b\in\{1,2,\ldots,m\}$ are understood modulo m, and $\varphi(a,b)=(-1)^{\delta_{a,b+1}}$. This $R$-matrix can be understood as a set-theoretical solution similar to Ex.~\ref{ex:setth}, but twisted by a phase factor $\varphi(a,b)$. %

The above six families of $R$-matrices are shown in Tab.~\ref{tab:Hilbert_series}, along with a summary of the key physical properties of the associated paraparticle theory. 
Many more %
examples of $R$-matrices are known in the literature~\cite{etingof1999set,HIETARINTA1992245,RUMP2007153,cedo2010involutive,cedo2014braces,guarnieri2017skew,SMOKTUNOWICZ201886,LECHNER2019106769}. However,  up to now there is not yet a complete classification of all possible solutions to the YBE~\eqref{eq:YBE}, although there is a classification of unitary $R$-matrices up to an equivalence relation~\cite{LECHNER2019106769}~(see Thm.~\ref{thm:classification_Herm_R}), and in Sec.~\ref{sec:pair-creation-AP} we give a  classification of $R$-matrices describing self-dual paraparticles~(paraparticles that are their own antiparticles).   
In Sec.~\ref{sec:cat-description} we describe a simple method to construct new $R$-matrices out of existing ones via fusion~\cite{Turaev1988,Wenzl1990Representation}. 

\begin{table*}
	\centering
	{\renewcommand{\arraystretch}{1.5}
		\begin{tabular}{|c|c|c|c|c|c|c|}
			\hline
			Ex. & \ref{ex:decoupled}& 	\ref{ex:Green} & \ref{ex:1m}  & \ref{ex:1m1} & \ref{ex:setth} & \ref{ex:setth-ext} \\
			\hline
			$R^{ab}_{cd}$ & $-\delta_{ad}\delta_{bc}$ & $\delta_{ad}\delta_{bc}(-1)^{\delta_{ab}}$&$-\delta_{ac}\delta_{bd}$&$\lambda_{ab}\xi_{cd}-\delta_{ac}\delta_{bd}$& $-\braket{a,b|r(c,d)}$ & $\varphi(c,d)\delta_{a,d+1}\delta_{b,c-1}$\\
			\hline
			$z_R(x)$  & $(1+x)^m$	 & $(1+x)^m$ &$1+m x$ &$1+mx+x^2$& $(1+x)^4$& $(1+x)^m$\\
			\hline
			Nontrivial Exclusion    & \No & \No & \Yes & \Yes & \No & \No \\
			\hline
			Nontrivial Exchange    & \No & \Yes & \Yes & \Yes & \Yes & \Yes\\
			\hline
			$d>1$ QSM    & \Yes & \Yes & \No & \No & \Yes & \Yes\\
			\hline
            $D_R$    & $1$ & $2^{\floor{m/2}}$ & $\infty$ & $\infty$ & $8$ & finite\\
			\hline
            $R$ is simple    & \No & \No & \Yes & \Yes & \Yes & \Yes \\
			\hline
Example gauge group & $Z_2,Q_{4n}$ & $D_8,A_4, Z_m\ltimes Z^{\times m}_2$ &\No & \No &  $D_8\ltimes Z_2^{\times 3}$  & $m=3:~A_4\times Z_3$\\
            \hline
			Pair creation & \Yes & \Yes & \No & \Yes & \Yes & \Yes\\
            \hline 
            Twist factor $\theta$ & $-1$ & $-1$ & - & - & $-1$ & $-1$\\
			\hline 
			$(\alpha,\eta_\alpha,\nu)$ & $(s,-1,1)$ & $(d,-1,1)$ & - & $(\xi,+1,\text{-})$ & $(\pi_{12},-1,1)$ & Not self-dual \\
			                               & $(a,+1,-1)$ &  &  &  &  & \\
			\hline
			Relativistic QFT & \Yes & \Yes& \No & \No & \Yes & \Yes\\
			\hline
			Comment    & fermions & Essentially Green~\cite{Green1952} &  & Non-unitary &   &  \\
			\hline
		\end{tabular}
	}
	\caption{\label{tab:Hilbert_series} Examples of involutive $R$-matrices and a summary of the key properties of the different types of paraparticle they define.   $z_R(x)$ is the Hilbert series of $R$ defined in Eq.~\eqref{eq:hilbert_series}, which physically corresponds to the single mode partition function via Eq.~\eqref{eq:single_mode_Z}. 
    The row $d>1$ QSM indicates whether the $R$-paraparticle is realizable in solvable quantum spin models~(QSMs) in dimension $d>1$. $D_R$ is the minimal twisting dimension of $R$ defined in Sec.~\ref{sec:weak-equivalence}, and $D_R=\infty$ means that the $R$-paraparticle is not weakly equivalent to ordinary fermions and bosons, hence beyond the description of symmetric fusion categories.  
    Example gauge group gives an example of the discrete gauge group $G$ such that the $R$-paraparticle can be realized in 3D deconfined $G$-gauge theory~(here ``\No'' means Exs.~\ref{ex:1m} and \ref{ex:1m1} cannot be realized in deconfined gauge theories).  
    The ``pair creation'' row indicates whether the theory allows pair creation terms introduced in Sec.~\ref{sec:U1breaking}. 
    The topological twist factor $\theta$ and the Frobenius-Schur indicator $\nu$ are defined in Thm.~\ref{thm:unitary_simple_self-dualR}, while
    $\alpha$ and $\eta_\alpha$ are defined in Eqs.~(\ref{eq:Ralpha_def-graphical}) and (\ref{eq:eta_R}), respectively. 
    In this row, $s$, $a$, and $d$ represent any $m\times m$ symmetric, antisymmetric, and diagonal matrix, respectively, and $\pi_{12}$ is the $4\times 4$ permutation matrix that permutes 1 and 2 while leaving 3 and 4 unchanged. %
	}
\end{table*}

\subsection{The second quantization algebra with $R$-matrix commutation relations}
Given any involutive $R$-matrix, we define the second quantization formalism of the corresponding $R$-paraparticles by introducing the following commutation relations~(CRs) between the paraparticle creation and annihilation operators $\hat{\psi}^\pm_{i,a}$
\begin{eqnarray}\label{eq:fundamental_Rcommu}
	\hat{\psi}^-_{i,a} \hat{\psi}^+_{j,b}&=&\sum_{cd}R^{ac}_{bd} \hat{\psi}_{j, c}^+ \hat{\psi}_{i,d}^-+\delta_{ab}\delta_{ij},\nonumber\\%
	\hat{\psi}^+_{i,a} \hat{\psi}^+_{j,b}&=&\sum_{cd}R^{cd}_{ab} \hat{\psi}_{j,c}^+ \hat{\psi}_{i,d}^+,\nonumber\\%
	\hat{\psi}^-_{i,a} \hat{\psi}^-_{j,b}&=&\sum_{cd}R^{ba}_{dc} \hat{\psi}_{j,c}^- \hat{\psi}_{i,d}^-.%
\end{eqnarray}
Here $i,j$ are called mode indices, which can label particle position or momentum. Other locally observable properties of the paraparticles, such as spin and flavor, should also be absorbed into $i,j$. By contrast, $a,b,c,d$ are internal indices which are assumed to be locally unobservable~(we will clarify this later in Sec.~\ref{sec:observe_parastatistics} and Sec.~\ref{sec:local-indist-revisit}).
Later in Sec.~\ref{sec:state_space} we explicitly define the action of $\hat{\psi}^\pm_{i,a}$ on the Fock space. 
Notice that the special case $R^{ab}_{cd}=\pm\delta_{ad}\delta_{bc}$ %
gives back fermions~($-$) and bosons~($+$) with an internal degree of freedom. 
While the second quantization algebra in Eq.~\eqref{eq:fundamental_Rcommu} is consistently defined for any $R$-matrix satisfying Eq.~\eqref{eq:YBE},
in this paper we mainly focus on unitary $R$-matrices, i.e.,  $\sum_{a,b} R^{ab}_{cd} (R^{ab}_{ef})^*=\delta_{ce}\delta_{df}$, which is satisfied by all examples in Tab.~\ref{tab:Hilbert_series} except Ex.~\ref{ex:1m1}. The restriction to unitary $R$-matrices will become physically natural later in Sec.~\ref{sec:physical_interpretation} when we connect the $R$-matrix to the exchange statistics of paraparticles.
With a unitary $R$, we will see  $\hat{\psi}^+_{i,a}=(\hat{\psi}^-_{i,a})^\dagger$ after we define the action of $\hat{\psi}^\pm_{i,a}$ on the state space in Sec.~\ref{sec:action_parafield_basis}. 

It is sometimes useful to adopt a tensor graphical representation of the CRs in Eq.~\eqref{eq:fundamental_Rcommu} to better visualize its algebraic structure and simplify calculations. This graphical representation is shown in Fig.~\ref{fig:RMQA}. We also find it useful to suppress the auxiliary indices and write Eq.~\eqref{eq:fundamental_Rcommu} in the following way
\begin{eqnarray}\label{eq:fundamental_Rcommu_supress}
	\hat{\psi}^-_{i} \hat{\psi}^+_{j}&=&\hat{\psi}_{j}^+ \hat{\psi}_{i}^-\cdot\Rmp+\delta_{ij}\delta,\nonumber\\%
	\hat{\psi}^+_{i} \hat{\psi}^+_{j}&=&\hat{\psi}_{j}^+ \hat{\psi}_{i}^+ \cdot R,\nonumber\\
	\hat{\psi}^-_{i} \hat{\psi}^-_{j}&=&\hat{\psi}_{j}^- \hat{\psi}_{i}^-\cdot \Rmm. %
\end{eqnarray}
\begin{figure}
	\center{\includegraphics[width=0.8\linewidth]{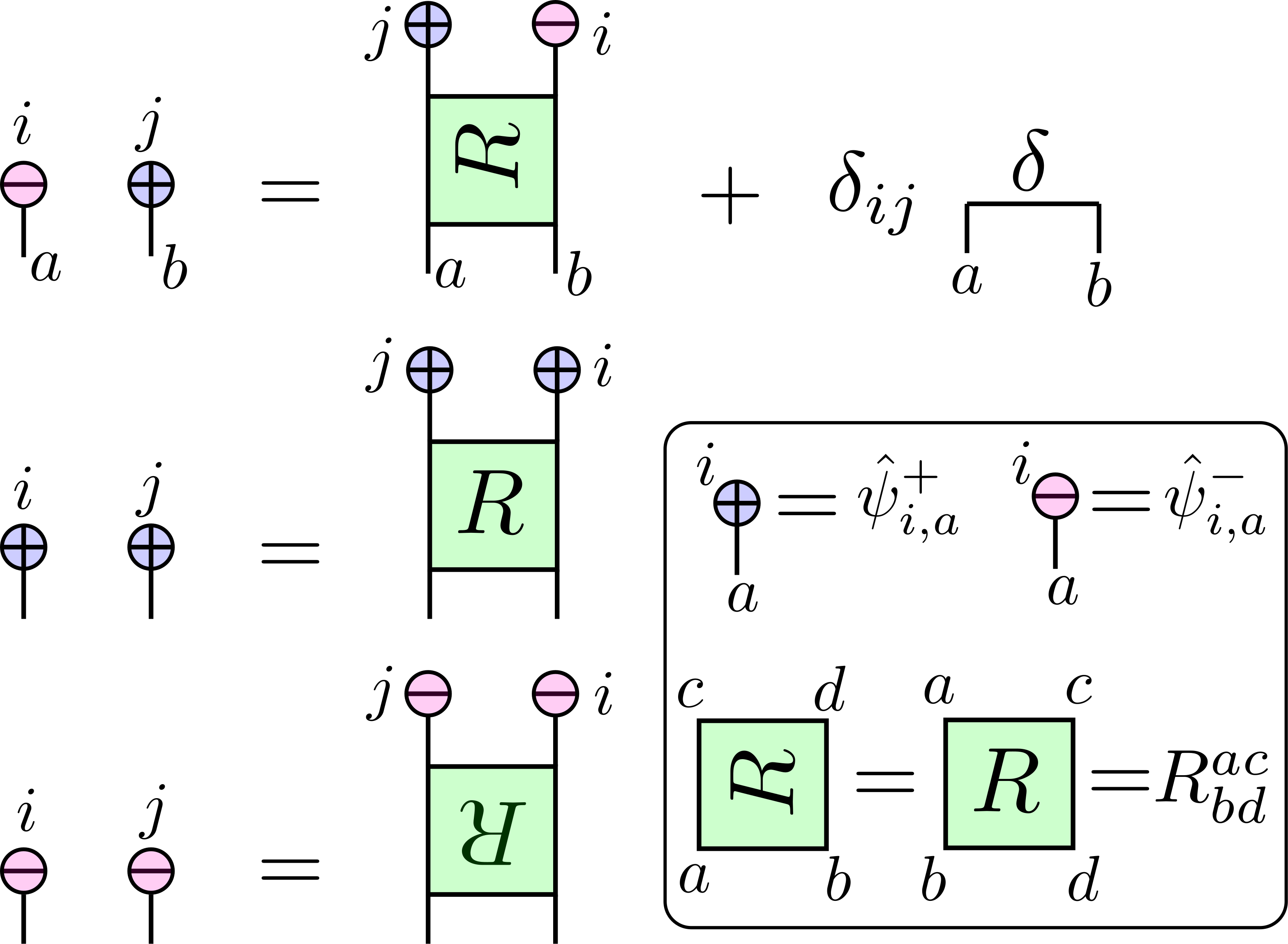}}
	\caption{\label{fig:RMQA} Tensor graphical representation of the quadratic CRs in Eq.~\eqref{eq:fundamental_Rcommu}. Open indices are identified~(in order from left to right) on both sides of the equation, while contracted indices are summed over.   Operator multiplications of $\hat{\psi}^\pm_{o,a}$ and $\hat{\psi}^\pm_{j,b}$ are ordered from left to right. }%
\end{figure}

We finally mention that the second quantization algebra in Eq.~\eqref{eq:fundamental_Rcommu} is a special case of quantum Weyl algebras~\cite{GIAQUINTO1995QWA}, and also a special case of Zamolodchikov-Faddeev algebras~\cite{zamolodchikovFactorizedSmatricesTwo1979,sklyaninQuantummechanicalApproachCompletely2011,lechnerFockRepresentationsZF2020,lechnerAlgebraicConstructiveQuantum2015,ZamolodchikovFaddeevAlgebra}, and closely related to the more general class of Koszul algebras~\cite{priddyKoszulResolutions1970,polishchuk2005quadratic}. However, to our knowledge, only this special class defined by Eq.~\eqref{eq:fundamental_Rcommu} with a multimode structure and an involutive $R$-matrix can physically describe parastatistics and lead to a free-particle theory~(see, however, Ref.~\cite{sanchez2025reconstruction} for a recent approach to extend Eq.~\eqref{eq:fundamental_Rcommu} to a larger class of Koszul algebras while preserving a simple Fock space structure and free particle solvability).

\subsection{Physical observables and Hamiltonians}\label{sec:locality}
Having defined the algebra of $R$-paraparticle operators, the next important step is to define physical observables--an important ingredient of a quantum theory that determines its physical content~(predicting power). In defining physical observables, one has to respect locality~(causality) principle, which is a fundamental principle in condensed matter physics and quantum field theory that requires spatially~(space-like) separated local operators to commute. More specifically, in %
a relativistic quantum field theory,
for any two local operators $\hat{O}_{S_1}$ and $\hat{O}_{S_2}$ supported on bounded regions $S_1$ and $S_2$, respectively, we have
\begin{equation}\label{eq:OS1S2commute}
[\hat{O}_{S_1},\hat{O}_{S_2}]=0,
\end{equation}
if $S_1$ and $S_2$ are space-like separated. %
This is enough to guarantee causality, i.e., %
no signal can travel faster than light. 
In non-relativistic condensed matter systems, Eq.~\eqref{eq:OS1S2commute} is required to hold for any spatially separated regions $S_1\cap S_2=\emptyset$. In this case, 
Ref.~\cite{wang2020tightening} proved that as long as a Hamiltonian is algebraically local and all local terms have uniformly bounded norm, then a Lieb-Robinson bound~\cite{liebFiniteGroupVelocity1972} holds~(as well as many other results based on it, e.g. gapped ground states have exponentially decaying correlations~\cite{hastings2006,nachtergaele2006}), which gives an effective lightcone of causality. 

To guarantee the locality principle, in the $R$-paraparticle theory, we define local observables in the following way:
\begin{definition}\label{def:general_LO}
Let $S$ be a bounded region of space, and let $\bar{S}$ be its complement~(in the relativistic case, $S$ is a bounded region of space-time, and $\bar{S}$ is its causal complement~\cite{haag2012book}). Denote by $\FA_S$ the algebra generated by $\{\hat{\psi}^\pm_{i,a}|i\in S,1\leq a\leq m\}$.
A local observable $\hat{O}$ in $S$ is an element of $\FA_S$ that commutes with all $\hat{\psi}^\pm_{j,a}$ for all $j\in \bar{S}$, i.e., $\hat{O}\in\FA_S$ satisfies~\footnote{We note that an even more expansive definition of local observable is possible, since in principle one requires only $[\hat{O},\hat{O}']=0$ when ${\hat O}$ and ${\hat O}'$ have non-overlapping support. 
In the rest of this paper we will not consider these more general classes of local observables for simplicity. }
\begin{equation}\label{eq:generalLOcondition}
[\hat{O},\hat{\psi}^\pm_{j,a}]=0,~\forall j\in \bar{S},
\end{equation}
and in this case we call $\hat{O}$ a local observable if it is Hermitian, i.e., $\hat{O}^\dagger=\hat{O}$. 
\end{definition}
It is clear that local observables defined this way satisfies the locality principle in Eq.~\eqref{eq:OS1S2commute}, since one can always decompose $\hat{O}_{S_2}$ as a sum of products of $\hat{\psi}^\pm_{j,a}$ for $j\in S_2$ and apply Eq.~\eqref{eq:generalLOcondition} with $\hat{O}=\hat{O}_{S_1}$. 
As a simple example, 
in the usual case of ordinary fermions $R=-1$, Eq.~\eqref{eq:generalLOcondition} simply requires that $\hat{O}$ is composed of even products of fermionic operators, which recovers the usual definition of local observables in fermionic quantum systems. 

The explicit construction of local observables in a general $R$-paraparticle theory is more involved. In this section, we focus on a relatively simple family, called the basic family of local observables, generated by 
contracted bilinear operators %
\begin{equation}\label{eq:def_e_ab}
	\hat{e}_{ij}\equiv \sum^m_{a=1} \hat{\psi}^+_{i,a}\hat{\psi}^-_{j,a}.
\end{equation}
\begin{definition}\label{def:phys_obs_basic}
For each bounded region of space $S$, the basic family of local observables on $S$ are Hermitian elements in the algebra generated by $\{\hat{e}_{ij}|i,j\in S\}$.   %
\end{definition}
As an example, $\hat{O}_{S}=\hat{e}_{ij}\hat{e}_{ji}$ with $i,j\in S$ is a local observable in $S$, since $\hat{e}^\dagger_{ij}=\hat{e}_{ji}$. Soon in Sec.~\ref{sec:bondLA} we will show that the basic family of local observables satisfy the aforementioned locality condition in  Eq.~\eqref{eq:generalLOcondition}, and hence Eq.~\eqref{eq:OS1S2commute}. %
Later in Sec.~\ref{sec:pair-creation-AP}, we will consider another important family of local observables involving particle-antiparticle pair creation, and then in 
Sec.~\ref{sec:local-indist-revisit} 
we will study the algebra of all allowed local observables as a whole, and discuss the local indistinguishability of $R$-paraparticles in the second quantization formalism. 

Once we have defined local observables, we can move on to define local Hamiltonians. A local Hamiltonian $\hat{H}$ is defined to be a sum of local observables $\hat{H}=\sum_{S} h_{S} \hat{O}_S$, where $h_{S}\in \mathbb{R}$ and the summation is over local regions $S$ whose diameters are smaller than some constant cutoff independent of the system size $N$. This definition of local observables and Hamiltonians guarantees unitarity, i.e., time evolution $\hat{U}=e^{-i\hat{H}t}$ generated by a Hamiltonian operator $\hat{H}$ is unitary~\footnote{This argument assumes that the $R$-matrix is unitary. If not, then in general we do not have $\hat{\psi}^+_{i,a}=(\hat{\psi}^-_{i,a})^\dagger$. Nevertheless, even for non-unitary $R$-matrices we can still define a suitable Hermitian norm on the state space such that $\hat{e}^\dagger_{ij}=\hat{e}_{ji}$, for $1\leq i,j\leq N$~(see Sec.~S2.D.5 of the supplementary information of Ref.~\cite{wang2023para}). Then we can use the same construction of basic local observables here, which still guarantees unitarity of the quantum theory.}.

\subsubsection{The Lie algebra of contracted bilinear operators}\label{sec:bondLA}
We now show that the basic family of local observables generated by $\{\hat{e}_{ij}\}_{1\leq i,j\leq N}$~(Definition~\ref{def:phys_obs_basic}) satisfies the locality condition in Eq~\eqref{eq:generalLOcondition}. 
To this end we derive the following CR between $\hat{e}_{ij}$ and $\hat{\psi}^\pm_{k,b}$:
\begin{eqnarray}\label{eq:commu_Eab_psi_p}
	[\hat{e}_{ij}, \hat{\psi}^+_{k,b}]&=&\delta_{jk}\hat{\psi}^+_{i,b},\nonumber\\
	{}[\hat{e}_{ij}, \hat{\psi}^-_{k,b}]&=&-\delta_{ik}\hat{\psi}^-_{j,b}.
\end{eqnarray}
The first line of Eq.~\eqref{eq:commu_Eab_psi_p} is derived as follows
\begin{eqnarray}\label{eq:commu_Eab_psi_p-supp}
	[\hat{e}_{ij}, \hat{\psi}^+_{k,b}]&=&
    \adjincludegraphics[width=9ex,valign=c]{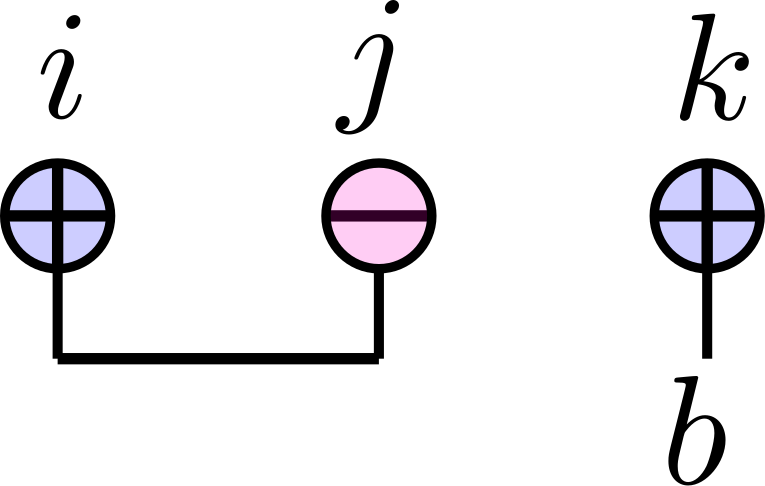}~~-~~
    \adjincludegraphics[width=9ex,valign=c]{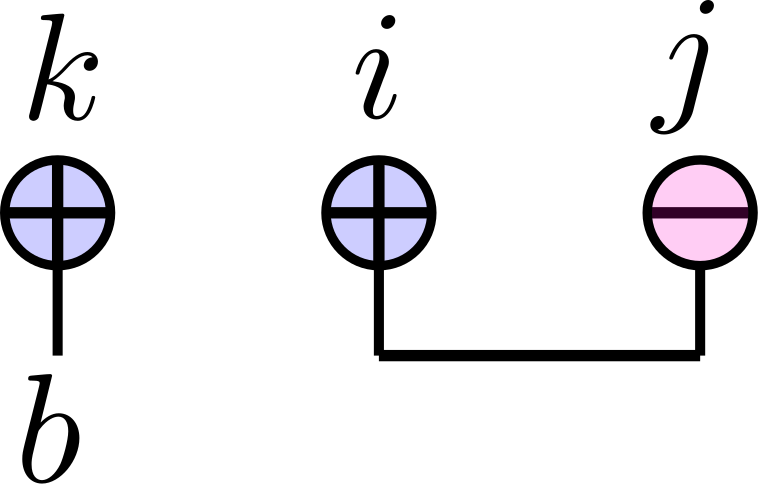}\nonumber\\
	&=&
    \adjincludegraphics[width=9ex,valign=c]{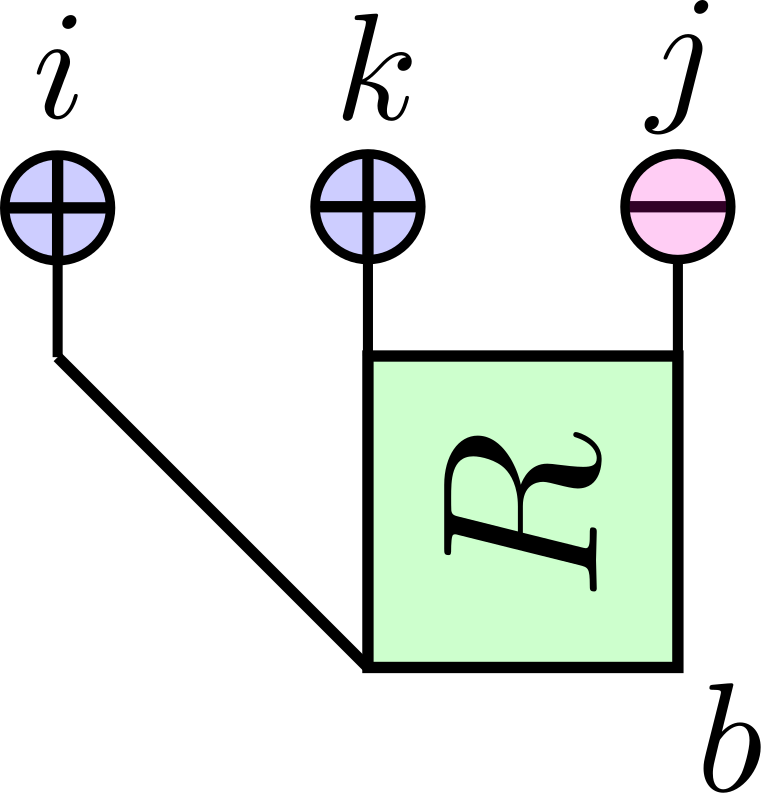}~~+~~
    \adjincludegraphics[width=9ex,valign=c]{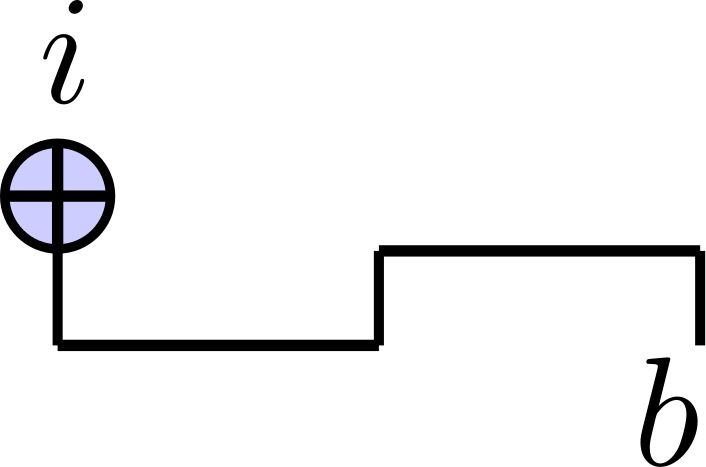}~\delta_{jk}
	-~~
    \adjincludegraphics[width=9ex,valign=c]{Figures/LA-1/epsi_deriv-1-2.png}\nonumber\\
	&=&
    \adjincludegraphics[width=9ex,valign=c]{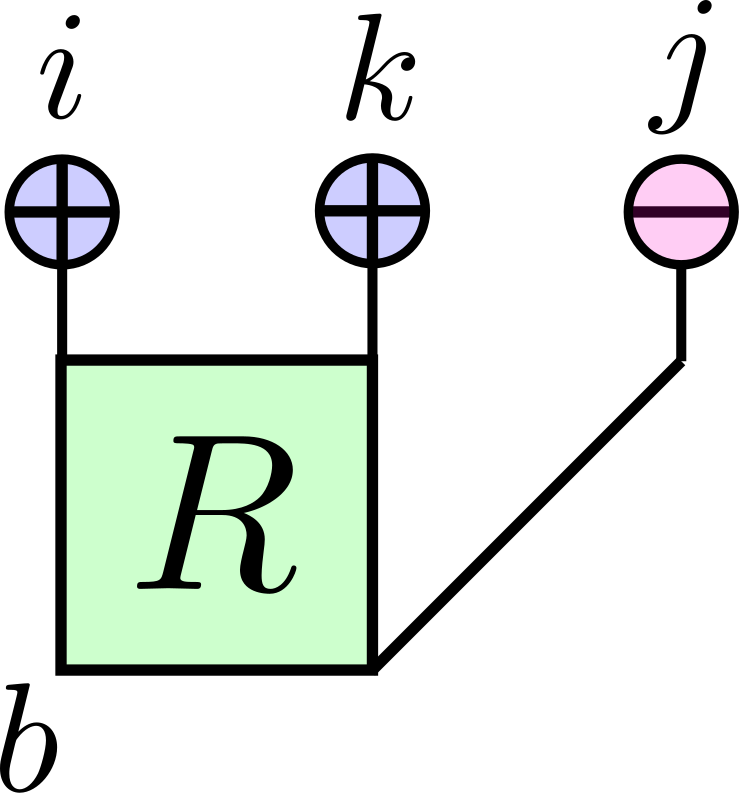}~~-~~
    \adjincludegraphics[width=9ex,valign=c]{Figures/LA-1/epsi_deriv-1-2.png}~~+\delta_{jk}\hat{\psi}^+_{i,b}\nonumber\\
	&=&\delta_{jk}\hat{\psi}^+_{i,b},
\end{eqnarray}
where in the second~(last) line we used the first~(second) line of Eq.~\eqref{eq:fundamental_Rcommu}.

The second line of Eq.~\eqref{eq:commu_Eab_psi_p} is derived in a similar way. Since~(by Definition~\ref{def:phys_obs_basic}) any basic local observable $\hat{O}$ in a bounded region $S$ can be written as a sum of products of $\hat{e}_{ij}$, where $i,j\in S$,
Eq.~\eqref{eq:commu_Eab_psi_p} implies that $\hat{O}$ satisfies the locality condition in Eq.~\eqref{eq:generalLOcondition}. 

We remark that the commutator $[\hat{e}_{ij}, \hat{e}_{kl}]$ also has a very simple form
\begin{eqnarray}\label{eq:commu_Eab_Ecd}
	[\hat{e}_{ij}, \hat{e}_{kl}]&=&\sum_b[\hat{e}_{ij}, \hat{\psi}^+_{k,b}\hat{\psi}^-_{l,b}]\nonumber\\
	&=&\sum_b[\hat{e}_{ij}, \hat{\psi}^+_{k,b}]\hat{\psi}^-_{l,b}+\sum_b\hat{\psi}^+_{k,b}[\hat{e}_{ij}, \hat{\psi}^-_{l,b}]\nonumber\\
	&=&\sum_b\delta_{jk}\hat{\psi}^+_{i,b}\hat{\psi}^-_{l,b}-\sum_b\hat{\psi}^+_{k,b}\delta_{il}\hat{\psi}^-_{j,b}\nonumber\\
	&=&\delta_{jk}\hat{e}_{il}-\delta_{il}\hat{e}_{kj},
\end{eqnarray}
where in the second line we used Eq.~\eqref{eq:commu_Eab_psi_p}.
Eq.~\eqref{eq:commu_Eab_Ecd} is the CR between the basis elements $\{\hat{e}_{ij}\}_{1\leq i,j\leq N}$ of the $\mathfrak{gl}_N$ Lie algebra, where $\hat{e}_{ij}$ represents the matrix that has $1$ in the $i$-th row and $j$-th column and zero everywhere else. 

Besides locality, this $\mathfrak{gl}_N$ Lie algebra structure plays another crucial role in the theory of $R$-paraparticles:
it guarantees that all bilinear Hamiltonians are exactly solvable~(Sec.~\ref{sec:solution}), thereby extending the notion of free particles to $R$-paraparticles. Indeed, 
Eq.~(\ref{eq:commu_Eab_psi_p}) means that, under the adjoint action of this $\mathfrak{gl}_N$ Lie algebra,  
the creation operators $\{\hat{\psi}^+_{j,a}\}^N_{j=1}$ for each $a$ transform in the fundamental representation of  $\mathfrak{gl}_N$, and the annihilation operators $\{\hat{\psi}^-_{j,a}\}^N_{j=1}$  transform in its fundamental conjugate representation. 
This allows us to construct eigenmodes $k$ of the Hamiltonian $\hat{H}$ such that the ${\tilde\psi}_{k,a}^\pm$ formed from  the associated linear combinations of ${\hat \psi}_{j,a}^\pm$ satisfy $[\hat{H},\tilde{\psi}^\pm_{k,a}]=\pm \epsilon_{k}\tilde{\psi}^\pm_{k,a}$, thereby obtaining the whole spectrum of $\hat{H}$. 
A particularly important family of physical observables are the particle number operators 
\begin{equation}\label{eq:def_n_a}
	\hat{n}_i\equiv \hat{e}_{ii}= \sum^m_{i=1} \hat{\psi}^+_{i,a}\hat{\psi}^-_{i,a}.
\end{equation}
It follows from Eq.~\eqref{eq:commu_Eab_Ecd} that they mutually commute $[\hat{n}_i,\hat{n}_j]=0$, so they have a complete set of common eigenstates. %
Meanwhile, Eq.~\eqref{eq:commu_Eab_psi_p}  gives $[\hat{n}_{i}, \hat{\psi}^\pm_{j,b}]=\pm\delta_{ij}\hat{\psi}^\pm_{j,b}$, meaning that $\hat{\psi}^+_{j,b}$~($\hat{\psi}^-_{j,b}$) increases~(decreases) the eigenvalue of $\hat{n}_j$ by $1$, and does not change the eigenvalue of $\hat{n}_i$ for $j\neq i$. This justifies the terminology creation and annihilation operators, since $\hat{\psi}^+_{j,b}$~($\hat{\psi}^-_{j,b}$) creates~(annihilates) a particle in the mode $j$. We also define the total particle number operator $\hat{n}=\sum_{i=1}^N \hat{n}_i$, so we have $[\hat{n}, \hat{\psi}^\pm_{j,b}]=\pm\hat{\psi}^\pm_{j,b}$. These CRs involving the number operators are the same as for fermions and bosons. However, we will see later that due to the generalized CRs between $\{\hat{\psi}^\pm_{i,b}\}$ in Eqs.~\eqref{eq:fundamental_Rcommu}, the spectrum of $\{\hat{n}_i \}$ is different for paraparticles.

We close this subsection by giving the most general form of quadratic local operators that commute with the total particle number $\hat{n}$:
\begin{equation}\label{eq:def_e_ijkappa}
	\hat{e}^{\kappa}_{ij}\equiv \sum^m_{a=1} \kappa_{ab}\hat{\psi}^+_{i,a}\hat{\psi}^-_{j,b},
\end{equation}
where $\kappa_{ab}$ is a coefficient matrix. 
The condition in Eq.~\eqref{eq:generalLOcondition} for $\hat{e}^{\kappa}_{ij}$ to be a local operator supported on $S=\{i,j\}$ becomes 
$[\hat{e}^{\kappa}_{ij},\hat{\psi}^\pm_{k,a}]=0~\forall k\notin S$,  
which requires $\kappa$ to satisfy  
\begin{equation}\label{eq:kappaDC}
\adjincludegraphics[height=7ex,valign=c]{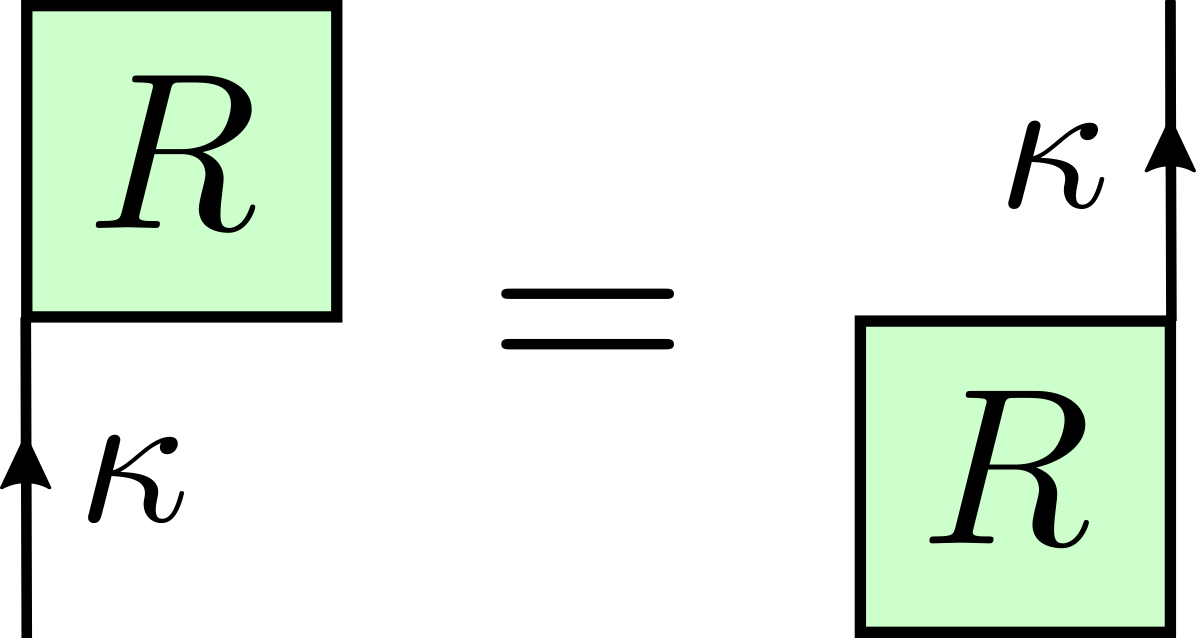},
~~
\adjincludegraphics[height=7ex,valign=c]{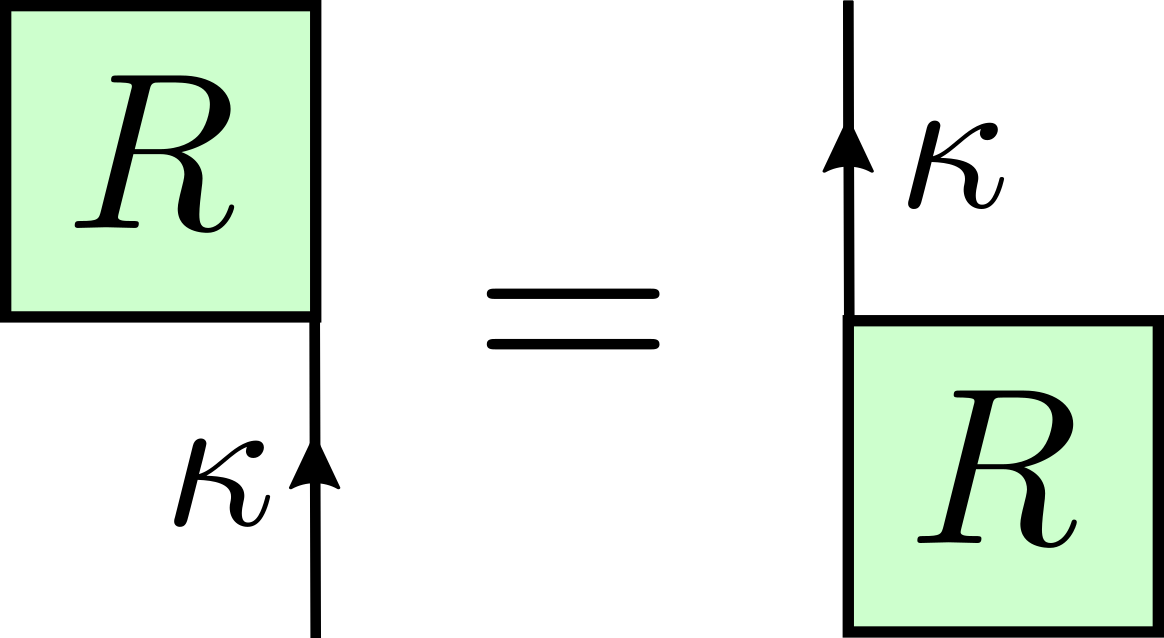}~.
\end{equation}
Note that these two equations are equivalent due to $R^2=\mathds{1}$. Eq.~\eqref{eq:kappaDC} and the general quadratic operators $\hat{e}^{\kappa}_{ij}$ will become important when we discuss the local indistinguishability of $R$-paraparticles in Sec.~\ref{sec:local-indist-revisit}. 
Note that in principle we can also generalize Eqs.~\eqref{eq:commu_Eab_psi_p-supp} and \eqref{eq:commu_Eab_Ecd} to $\hat{e}^{\kappa}_{ij}$ as well, however, we do not need this result in this paper. %

\subsection{The state space}\label{sec:state_space}
We have defined the fundamental CRs~\eqref{eq:fundamental_Rcommu} of paraparticle operators and showed that they lead to locality of the theory. To fully define a quantum theory, we need to specify a many-particle state space~(i.e. the Fock space), and then define the action of the creation and annihilation operators $\hat{\psi}^\pm_{i,a}$ on the state space so that Eq.~\eqref{eq:fundamental_Rcommu} is satisfied. More formally, we need to find a representation of the algebra defined by the CRs in Eq.~\eqref{eq:fundamental_Rcommu}, which is the primary goal of this section. We begin by constructing a basis for the state space in Sec.~\ref{sec:basis_state_space} and then define the action of $\hat{\psi}^\pm_{i,a}$ in Sec.~\ref{sec:action_parafield_basis}.

\subsubsection{A basis for the state space}\label{sec:basis_state_space}
Analogous to the Fock space of fermions and bosons, there is a unique vacuum state $|0\rangle$ satisfying $\hat{\psi}^-_{i,a}\ket{0}=0$, and we always assume it is normalized as $\braket{0|0}=1$. Physically, $|0\rangle$ represents the state with no particles $\hat{n}|0\rangle=0$.  The many-particle state space is spanned by all states of the form  
\begin{equation}\label{eq:psi_statespace_general}
\ket{\psi}=\hat{\psi}^+_{i_1,a_1}\hat{\psi}^+_{i_2,a_2}\ldots \hat{\psi}^+_{i_n,a_n}|0\rangle,
\end{equation}
where $n\in\mathbb{Z}_{\geq 0}$ is the particle number. Our job is to construct a basis for all such states. Without loss of generality, we can always assume that the creation operators in Eq.~\eqref{eq:psi_statespace_general} are ordered such that $i_1\leq i_2\leq \ldots\leq i_n$, since %
an unordered state can always be written as a linear combination of ordered states by repeatedly using the second line of  Eq.~\eqref{eq:fundamental_Rcommu}.

We first focus on the subspace of states in which all particles are in one mode $i_1=i_2=\ldots=i_n=i$. An arbitrary state in this subspace can be written as
\begin{equation}\label{eq:Psi_state_single_mode}
	|\Psi\rangle=\frac{1}{\sqrt{n!}}\sum_{\substack{a_1,\ldots,a_n}}\Psi_{a_1\ldots a_n}\hat{\psi}^+_{i,a_1}\ldots\hat{\psi}^+_{i,a_n}|0\rangle.
\end{equation}
Due to the $R$-CRs between the creation operators $\{\hat{\psi}^+_{i,a}\}$ in Eq.~\eqref{eq:fundamental_Rcommu},  we can always assume without loss of generality that the coefficients $\{\Psi_{a_1\ldots a_n}\}$ satisfy
\begin{equation}\label{eq:V_n_basis}
	\sum_{a'_{j},a'_{j+1}}R^{a_ja_{j+1}}_{a'_ja'_{j+1}}\Psi_{a_1\ldots a'_ja'_{j+1}\ldots a_n}=\Psi_{a_1\ldots a_ja_{j+1}\ldots a_n}
\end{equation} 
for $j=1,2,\ldots,n-1$. Intuitively, Eq.~\eqref{eq:V_n_basis} requires that $\Psi_{a_1\ldots a_n}$ is $R$-symmetric, which in the case of fermions or bosons~($R=\pm 1$) reduces to being totally symmetric or antisymmetric. 
Indeed, an unsymmetrized wavefunction $\tilde{\Psi}$ can always be symmetrized  in the following way~\cite{wang2023para}
\begin{equation}
	\Psi^{A}=\frac{1}{n!}\sum_{A',\sigma\in S_n}\rho_{AA'}(\sigma)\tilde{\Psi}^{A'},
\end{equation}
where $\rho$ is the representation of $S_n$ realized by the $R$-matrix, as defined in Sec.~\ref{sec:YBE}, and we use $A=(a_1,a_2,\ldots,a_n)$ to collectively denote the internal states of the paraparticles. 
This symmetrization does not change the overall quantum state $|\Psi\rangle$ due to the second line in Eq.~\eqref{eq:fundamental_Rcommu}, and it is straightforward to verify that the symmetrized wavefunction $\Psi$ satisfies Eq.~\eqref{eq:V_n_basis}.
	
Now consider the general case $i_1\leq i_2\leq \ldots\leq i_n$. 
Let $\{\Psi^\alpha_{a_1a_2\ldots a_n}\}_{\alpha=1}^{d_n}$ be a complete set of linearly independent solutions to the system of linear equations~\eqref{eq:V_n_basis}.
A basis for the entire state space is constructed as the set of states of the form
\begin{equation}\label{eq:full_basis}
	|{}^{\alpha_1}_{n_1},{}_{n_2}^{\alpha_2},\ldots, {}_{n_N}^{\alpha_N}\rangle=\hat{\Psi}^{(1)+}_{n_1,\alpha_1}\hat{\Psi}^{(2)+}_{n_2,\alpha_2}\ldots\hat{\Psi}^{(N)+}_{n_N,\alpha_N}|0\rangle,%
\end{equation}
where the numbers $\{(n_i,\alpha_i)\}^N_{i=1}$ can be chosen independently for different modes~(with the only constraint being $1\leq \alpha_i\leq d_{n_i}$ for each $i$), and the  operator
\begin{equation}\label{eq:single_mode_creation}
	\hat{\Psi}^{(i)+}_{n,\alpha}\equiv \frac{1}{\sqrt{n!}}\sum_{a_1 a_2\ldots a_n}\Psi^\alpha_{a_1 a_2\ldots a_n}\hat{\psi}^+_{i,a_1}\hat{\psi}^+_{i,a_2}\ldots\hat{\psi}^+_{i,a_n}, 
\end{equation}
creates a multiparticle state in the mode $i$ with occupation number $n$~(eigenvalue of  $\hat{n}_i\equiv \hat{e}_{ii}$). It is straightforward to see that any state in the Fock space can be written as a linear combination of the states in Eq.~\eqref{eq:full_basis}, based on our arguments presented above. %
To prove that Eq.~\eqref{eq:full_basis} defines a basis for the state space, we still need to show that they are linearly independent, which we do below. We start with the easier case of unitary $R$-matrices~[where we actually prove a stronger result that states in Eq.~\eqref{eq:full_basis} are orthonormal], and later generalize to non-unitary $R$-matrices. 
\paragraph{Unitary $R$-matrices}
When the $R$-matrix is unitary, we can consistently set $\hat{\psi}^+_{i,a}=(\hat{\psi}^-_{i,a})^\dagger$~\footnote{First, notice that $\hat{\psi}^+_{i,a}=(\hat{\psi}^-_{i,a})^\dagger$ is fully consistent with the fundamental CRs in Eq.~\eqref{eq:fundamental_Rcommu}, as taking Hermitian conjugate on both sides leaves Eq.~\eqref{eq:fundamental_Rcommu} invariant~(maps the first line to itself and swaps the second and the third lines). Then it can be checked straightforwardly that the explicit matrix representation of $\hat{\psi}^\pm_{i,a}$ defined in Sec.~\ref{sec:action_parafield_basis}  indeed satisfies $\hat{\psi}^+_{i,a}=(\hat{\psi}^-_{i,a})^\dagger$. By contrast, if $R$ is not unitary, the relation $\hat{\psi}^+_{i,a}=(\hat{\psi}^-_{i,a})^\dagger$ is not consistent with  Eq.~\eqref{eq:fundamental_Rcommu} as taking Hermitian conjugate on both sides leads to extra relations that result in an algebraic inconsistency, and the representation of $\hat{\psi}^\pm_{i,a}$ constructed in Sec.~\ref{sec:action_parafield_basis}  does not satisfy $\hat{\psi}^+_{i,a}=(\hat{\psi}^-_{i,a})^\dagger$. 
However, as we proved in Sec.S2.D.5 of the supplementary information of Ref.~\cite{wang2023para}, even for a non-unitary $R$, we can still define a Hermitian inner product on the Fock space such that $\hat{e}^\dagger_{ij}=\hat{e}_{ji}$ is still satisfied, and consequently all physical observables defined in Sec.~\ref{sec:locality} are still Hermitian with respect to this inner product.} and in this case, we require that the coefficients
$\{\Psi^\alpha_{a_1a_2\ldots a_n}\}_{\alpha=1}^{d_n}$ are normalized as 
\begin{equation}\label{eq:singlemodewf_normalization}
	\sum_{a_1,a_2,\ldots, a_n}\Psi^{\beta*}_{a_1a_2\ldots a_n}\Psi^\alpha_{a_1a_2\ldots a_n}=\delta_{\alpha\beta}.
\end{equation}
Then the basis in Eq.~\eqref{eq:full_basis} is orthonormal, i.e.  
\begin{equation}\label{eq:full_basis_orthonormal}
	\Braket{{}^{\beta_1}_{n^{\prime}_1},{}_{n^{\prime}_2}^{\beta_2},\ldots, {}_{n^{\prime}_N}^{\beta_N}|{}^{\alpha_1}_{n^{\vphantom{\prime}}_1},{}_{n^{\vphantom{\prime}}_2}^{\alpha_2},\ldots, {}_{n^{\vphantom{\prime}}_N}^{\alpha_N}}=\prod^N_{j=1}\delta_{n^{\vphantom{\prime}}_jn^{\prime}_j}\delta_{\alpha_j\beta_j}.%
\end{equation}
To prove Eq.~\eqref{eq:full_basis_orthonormal}, we use the fact that for any state $\ket{\Phi}$ satisfying $\hat{\psi}^-_{i,a}\ket{\Phi}=0,\forall a$, and any  $\Psi_{a_1\ldots a_n}$ satisfying Eq.~\eqref{eq:V_n_basis}, 
we have 
\begin{equation}\label{eq:psim_action_basis}
	\hat{\psi}^-_{i,a}\hat{\Psi}^{(i)+}_{n}\ket{\Phi}= \frac{n}{\sqrt{n!}}\sum_{ a_2\ldots a_n}\Psi_{a a_2\ldots a_n}\hat{\psi}^+_{i,a_2}\ldots\hat{\psi}^+_{i,a_n}\ket{\Phi},
\end{equation}
and 
\begin{equation}\label{eq:PsiPsi}
	\hat{\Psi}^{(i)-}_{n,\beta}\hat{\Psi}^{(i)+}_{n,\alpha}\ket{\Phi}=\delta_{\alpha\beta}\ket{\Phi},
\end{equation}
where $\hat{\Psi}^{(i)-}_{n,\beta}=(\hat{\Psi}^{(i)+}_{n,\beta})^\dagger$. Eq.~\eqref{eq:psim_action_basis} is proved by using the first relation in Eq.~\eqref{eq:fundamental_Rcommu} to move $\hat{\psi}^-_{i,a}$ all the way to the right until it hits $|0\rangle$, and the fact that $\Psi_{a_1a_2\ldots a_n}$ satisfies Eq.~\eqref{eq:V_n_basis}. Eq.~\eqref{eq:PsiPsi} is proved by expanding the definition of $\hat{\Psi}^{(i)-}_{n,\beta}$%
~[the Hermitian conjugate of Eq.~\eqref{eq:single_mode_creation}] and then apply Eq.~\eqref{eq:psim_action_basis} $n$ times. 

Eq.~\eqref{eq:full_basis_orthonormal} is proved as follows. Notice that we only need to prove the case $n_j=n'_j,~1\leq j\leq N$, since it is clear that the common eigenstates of 
$\{\hat{n}_j\}^N_{j=1}$ with different eigenvalues are orthogonal. 
\begin{eqnarray}\label{eq:full_basis_orthonormal_proof}
	&&\Braket{{}^{\beta_1}_{n_1},{}_{n_2}^{\beta_2},\ldots, {}_{n_N}^{\beta_N}|{}^{\alpha_1}_{n_1},{}_{n_2}^{\alpha_2},\ldots, {}_{n_N}^{\alpha_N}}\nonumber\\
	&=&\Braket{{}^0_0,{}_{n_2}^{\beta_2},\ldots, {}_{n_N}^{\beta_N}| 	\hat{\Psi}^{(i)-}_{n_1,\beta_1}\hat{\Psi}^{(i)+}_{n_1,\alpha_1}|{}^0_0,{}_{n_2}^{\alpha_2},\ldots, {}_{n_N}^{\alpha_N}}\nonumber\\
	&=&\delta_{\alpha_1\beta_1}\Braket{{}^0_0,{}_{n_2}^{\beta_2},\ldots, {}_{n_N}^{\beta_N}|{}^0_0,{}_{n_2}^{\alpha_2},\ldots, {}_{n_N}^{\alpha_N}}\nonumber\\
	&=&\ldots\nonumber\\
	&=&\prod^N_{j=1}\delta_{\alpha_j\beta_j},
\end{eqnarray}
where in the third line we used the fact that $\hat{\psi}^-_{1,a}\ket{{}^0_0,{}_{n_2}^{\alpha_2},\ldots, {}_{n_N}^{\alpha_N}}=0,~\forall a$ and Eq.~\eqref{eq:PsiPsi}. 

\paragraph{Non-unitary $R$-matrices}
We now show that for non-unitary $R$-matrices, the basis states in Eq.~\eqref{eq:full_basis} are still linearly independent, even though they are no longer orthonormal. In this case we can still formally define a dual basis
\begin{equation}\label{eq:dual_basis_non-unitary}
	\bra{{}^{\beta_1}_{n_1},{}_{n_2}^{\beta_2},\ldots, {}_{n_N}^{\beta_N}}=\bra{0}\hat{\Psi}^{(1)-}_{n_1,\beta_1}\hat{\Psi}^{(2)-}_{n_2,\beta_2}\ldots\hat{\Psi}^{(N)-}_{n_N,\beta_N},
\end{equation}
where 
\begin{equation}\label{eq:single_mode_anni_non-unitary}
	\hat{\Psi}^{(i)-}_{n,\beta}\equiv \frac{1}{\sqrt{n!}}\sum_{a_1 a_2\ldots a_n}\Psi^{\beta*}_{a_1 a_2\ldots a_n}\hat{\psi}^-_{i,a_n}\ldots\hat{\psi}^-_{i,a_2}\hat{\psi}^-_{i,a_1}, 
\end{equation}
Here we use the same normalization for $\{\Psi^\alpha_{a_1a_2\ldots a_n}\}_{\alpha=1}^{d_n}$ given in Eq.~\eqref{eq:singlemodewf_normalization}. Then it is straightforward to see that Eqs.~(\ref{eq:psim_action_basis},\ref{eq:PsiPsi},\ref{eq:full_basis_orthonormal_proof}) are still valid, since we never use the 
unitarity of the $R$-matrix~[or relations like $\hat{\psi}^+_{i,a}=(\hat{\psi}^-_{i,a})^\dagger$] in their derivations. Then Eq.~\eqref{eq:full_basis_orthonormal} implies that the basis states in Eq.~\eqref{eq:full_basis} are linearly independent.  A caveat, however, is that for non-unitary $R$-matrices, 	the dual basis state $\bra{{}^{\beta_1}_{n_1},{}_{n_2}^{\beta_2},\ldots, {}_{n_N}^{\beta_N}}$ defined in Eq.~\eqref{eq:dual_basis_non-unitary} is not the Hermitian conjugate of $\ket{{}^{\beta_1}_{n_1},{}_{n_2}^{\beta_2},\ldots, {}_{n_N}^{\beta_N}}$  in general, due to $\hat{\psi}^+_{i,a}\neq (\hat{\psi}^-_{i,a})^\dagger$. So Eq.~\eqref{eq:full_basis_orthonormal} does not imply that the basis states are orthonormal. 

\subsubsection{Action of $\hat{\psi}^\pm_{i,a}$ on the basis states}\label{sec:action_parafield_basis}
The action of the creation and annihilation operators $\hat{\psi}^\pm_{i,a}$ on the basis states in Eq.~\eqref{eq:full_basis} are completely determined by the fundamental CRs in Eq.~\eqref{eq:fundamental_Rcommu}. 
We can work out their explicit matrix representation 
using the following~(more convenient) relations that are derived from Eq.~\eqref{eq:fundamental_Rcommu} and the definition of $\hat{\Psi}^{(i)+}_{n,\alpha}$ in Eq.~\eqref{eq:single_mode_creation},
\begin{eqnarray}\label{eq:CRpsiPsi}
	\hat{\psi}^\pm_{j,a}\hat{\Psi}^{(i)+}_{n,\alpha}&=&\sum_{b,\beta}T^\pm_{ab,\beta\alpha} \hat{\Psi}^{(i)+}_{n,\beta}\hat{\psi}^\pm_{j,b},~j\neq i,\nonumber\\
	\hat{\psi}^+_{i,a}\hat{\Psi}^{(i)+}_{n,\alpha}&=&\sum_{b,\beta}Y^+_{a,\beta\alpha} \hat{\Psi}^{(i)+}_{n+1,\beta},\nonumber\\
	\hat{\psi}^-_{i,a}\hat{\Psi}^{(i)+}_{n,\alpha}\ket{\Phi}&=&\sum_{b,\beta}Y^-_{a,\beta\alpha} \hat{\Psi}^{(i)+}_{n-1,\beta}\ket{\Phi},
\end{eqnarray} 
where $\ket{\Phi}$ satisfies $\hat{\psi}^-_{i,a}\ket{\Phi}=0,\forall a$, and the tensors $Y^\pm_{a,\beta\alpha}$ are  defined as 
	\begin{eqnarray}\label{eq:YExplicit}
	\sum_{\beta}Y^{+}_{a,\beta\alpha}\Psi^\beta_{a_0 a_1 \ldots a_n}&=&\!\frac{1}{\sqrt{n+1}}\bar{Y}^{a_0 a_1\ldots a_n}_{~a_{\hphantom{0}} b_1\ldots b_n} \Psi^\alpha_{b_1\ldots b_n},\nonumber\\
	\sum_{\beta}Y^-_{a,\beta\alpha}\Psi^\beta_{a_2\ldots a_n}&=&\sqrt{n}\Psi^\alpha_{a a_2\ldots a_n},
\end{eqnarray}
	where 
	\begin{equation}\label{def:YRtensor}
	\bar{Y}_{01\ldots n}=1+R_{01}+R_{12}R_{01}+\ldots+R_{n-1,n}\cdots R_{12}R_{01}.%
\end{equation}
Here we use the abstract tensor indexing introduced in Eq.~(\ref{def:Rjjp1}). The tensors $T^\pm_{ab,\beta\alpha}$ are defined as
\begin{eqnarray}\label{eq:Tpm_matrix_element}
	\sum_\beta T^+_{ab,\beta\alpha}\Psi^\beta_{a_1 a_2 \ldots a_n}&=&
	\sum_{a'_1,\ldots,a'_n}\Psi^\alpha_{a'_1,a'_2\ldots,a'_n}\nonumber\\
	&&\times 
	\begin{tikzpicture}[baseline={([yshift=-.4ex]current bounding box.center)}, scale=.8]
		\TpRmatrix{0}{0}{a_1}{a}{a'_1}{}
		\TpRmatrix{1}{0}{a_2}{}{a'_2}{}
		\draw[dotted, thick] (3*\AL, 0) -- (5*\AL,0);
		\TpRmatrix{3}{0}{a_n}{}{a'_n}{b}
	\end{tikzpicture},\nonumber\\
	\sum_\beta T^-_{ab,\beta\alpha}\Psi^\beta_{a_1 a_2 \ldots a_n}&=&
	\sum_{a'_1,\ldots,a'_n}\Psi^\alpha_{a'_1 a'_2\ldots a'_n}\\
	&&\times
	\begin{tikzpicture}[baseline={([yshift=-.4ex]current bounding box.center)}, scale=.8]
		\TmRmatrix{0}{0}{a_1}{a}{a'_1}{}
		\TmRmatrix{1}{0}{a_2}{}{a'_2}{}
		\draw[dotted, thick] (3*\AL, 0) -- (5*\AL,0);
		\TmRmatrix{3}{0}{a_n}{}{a'_n}{b}
	\end{tikzpicture}.\nonumber
\end{eqnarray} 
Using Eq.~\eqref{eq:CRpsiPsi}, we can obtain an explicit matrix representation of $\{\hat{\psi}^\pm_{j,a}\}$ acting on the basis states in Eq.~\eqref{eq:full_basis}
\begin{eqnarray}\label{eq:MPOactionpsi}
	\hat{\psi}^+_{j,a}\ket{{}^{\alpha_1}_{n_1},{}_{n_2}^{\alpha_2},\ldots, {}_{n_N}^{\alpha_N}}&=&\sum_{\beta_1,\ldots,\beta_j}\ket{{}^{\beta_1}_{n_1},\ldots,{}_{n_j+1}^{\beta_j}, {}_{n_{j+1}}^{\alpha_{j+1}},\ldots, {}_{n_N}^{\alpha_N}}\nonumber\\
	&&\!\!\!\!\times\begin{tikzpicture}[baseline={([yshift=.4ex]current bounding box.center)}, scale=0.8]
		\node  at (-1.5*\AL,0*\AL) {\footnotesize $a$};
		\umatr{0}{0}{T^+}
		\quantumindices{0}{0}{\beta_1}{\alpha_1}
		\umatr{2*\AL}{0}{T^+}
		\quantumindices{2*\AL}{0}{\beta_2}{\alpha_2}
		\draw[dotted, thick] (3*\AL, 0) -- (5*\AL,0);
		\umatr{6*\AL}{0}{T^+}
		\quantumindices{6*\AL}{0}{\beta_{j-1}}{\alpha_{j-1}}
		\Ytrian{8*\AL}{0}{+}{}
		\quantumindices{8*\AL}{0}{\beta_j}{\alpha_j}
	\end{tikzpicture}\!\!,\nonumber\\
	\hat{\psi}^-_{j,a}\ket{{}^{\alpha_1}_{n_1},{}_{n_2}^{\alpha_2},\ldots, {}_{n_N}^{\alpha_N}}&=&\sum_{\beta_1,\ldots,\beta_j}\ket{{}^{\beta_1}_{n_1},\ldots,{}_{n_j-1}^{\beta_j}, {}_{n_{j+1}}^{\alpha_{j+1}},\ldots, {}_{n_N}^{\alpha_N}}\nonumber\\
	&&\!\!\!\!\times\begin{tikzpicture}[baseline={([yshift=.4ex]current bounding box.center)}, scale=0.8]
		\node  at (-1.5*\AL,0*\AL) {\footnotesize $a$};
		\umatr{0}{0}{T^-}
		\quantumindices{0}{0}{\beta_1}{\alpha_1}
		\umatr{2*\AL}{0}{T^-}
		\quantumindices{2*\AL}{0}{\beta_2}{\alpha_2}
		\draw[dotted, thick] (3*\AL, 0) -- (5*\AL,0);
		\umatr{6*\AL}{0}{T^-}
		\quantumindices{6*\AL}{0}{\beta_{j-1}}{\alpha_{j-1}}
		\Ytrian{8*\AL}{0}{-}
		\quantumindices{8*\AL}{0}{\beta_j}{\alpha_j}
	\end{tikzpicture}\!\!,%
\end{eqnarray}
where $
\begin{tikzpicture}[baseline={([yshift=-.6ex]current bounding box.center)}, scale=.8]
	\umatr{0}{0}{T^\pm}
	\quantumindices{0}{0}{\beta}{\alpha}
	\paraindices{0}{0}{a}{b}
\end{tikzpicture}=T^\pm_{ab,\beta\alpha}$, and $
\begin{tikzpicture}[baseline={([yshift=-.6ex]current bounding box.center)}, scale=.8]
	\Ytrian{0}{0}{\pm}
	\quantumindices{0}{0}{\beta}{\alpha}
	\node  at (-1.5*\AL,0*\AL) {\footnotesize $a$};
\end{tikzpicture}=Y^\pm_{a,\beta\alpha}$. 
Eq.~\eqref{eq:MPOactionpsi} applies to paraparticle theories defined by any $R$-matrix, including non-unitary ones.

It is straightforward to verify that the explicit matrix representation of $\{\hat{\psi}^\pm_{j,a}\}$ in Eq.~\eqref{eq:MPOactionpsi} satisfies all the CRs in Eq.~\eqref{eq:fundamental_Rcommu}. This can be done using elementary tensor network manipulations. We will prove this explicitly later in Part II when we study the properties of the  MPO Jordan-Wigner transformation~(JWT)~\cite{wang2023para} in the 1D spin models with emergent paraparticles, as 
the MPO representation of $\{\hat{\psi}^\pm_{j,a}\}$ in Eq.~\eqref{eq:MPOactionpsi} is essentially the same as the MPO JWT in the spin models. Indeed, we will show in Part II that the second quantization algebra in Eq.~\eqref{eq:fundamental_Rcommu} along with its representation theory studied in this section provides an alternative way to construct the 1D spin model. 

The MPO string in Eq.~\eqref{eq:MPOactionpsi} is reminiscent of the familiar fact in the second quantization of fermions that the action of fermionic operators on the particle number basis involves a fermion minus sign, and implementing such a fermion minus sign computationally is equivalent to doing a JWT. For paraparticles, the fermion minus sign becomes the paraparticle $R$-matrix, and the JW string becomes an MPO string of $R$~[or more precisely the tensor $T^\pm$, which is constructed out of $R$ and $\{\Psi^\alpha_{a_1a_2\ldots a_n}\}_{\alpha=1}^{d_n}$ in Eq.~\eqref{eq:Tpm_matrix_element}]. 

\begin{remark}\label{rmk:Fock_irrep}
We finally make some technical remarks on the formal aspects of second quantization algebra and its Fock space representation. 
Denote by $\mathcal{X}_{R,N}$ the unital associative algebra over $\mathbb{C}$ generated by  $\{\hat{\psi}^\pm_{i,b}|1\leq i \leq N, 1\leq b\leq m\}$ modulo all the relations in Eq.~\eqref{eq:fundamental_Rcommu}. 
One can show that the Fock space representation of $\mathcal{X}_{R,N}$ constructed above is always irreducible~\cite{GIAQUINTO1995QWA}, 
and it is the unique~(up to isomorphism) irrep in which the spectrum of $\hat{n}$ is bounded from below. Furthermore, if %
the single mode Hilbert space is finite-dimensional with dimension $d$, %
then the algebra $\mathcal{X}_{R,N}$ is finite dimensional and simple~\cite{GIAQUINTO1995QWA}, with  the unique irrep being the Fock space representation above. In this case, $\mathcal{X}_{R,N}$ is isomorphic to $M_D[\C]$, the algebra of $D\times D$ matrices, with $D=d^N$ being the Fock space dimension. 
\end{remark}

\subsection{Generalized exclusion statistics}\label{sec:exclusion_statistics_calc}
\begin{figure}
	\center{
    \includegraphics[width=.95\linewidth]{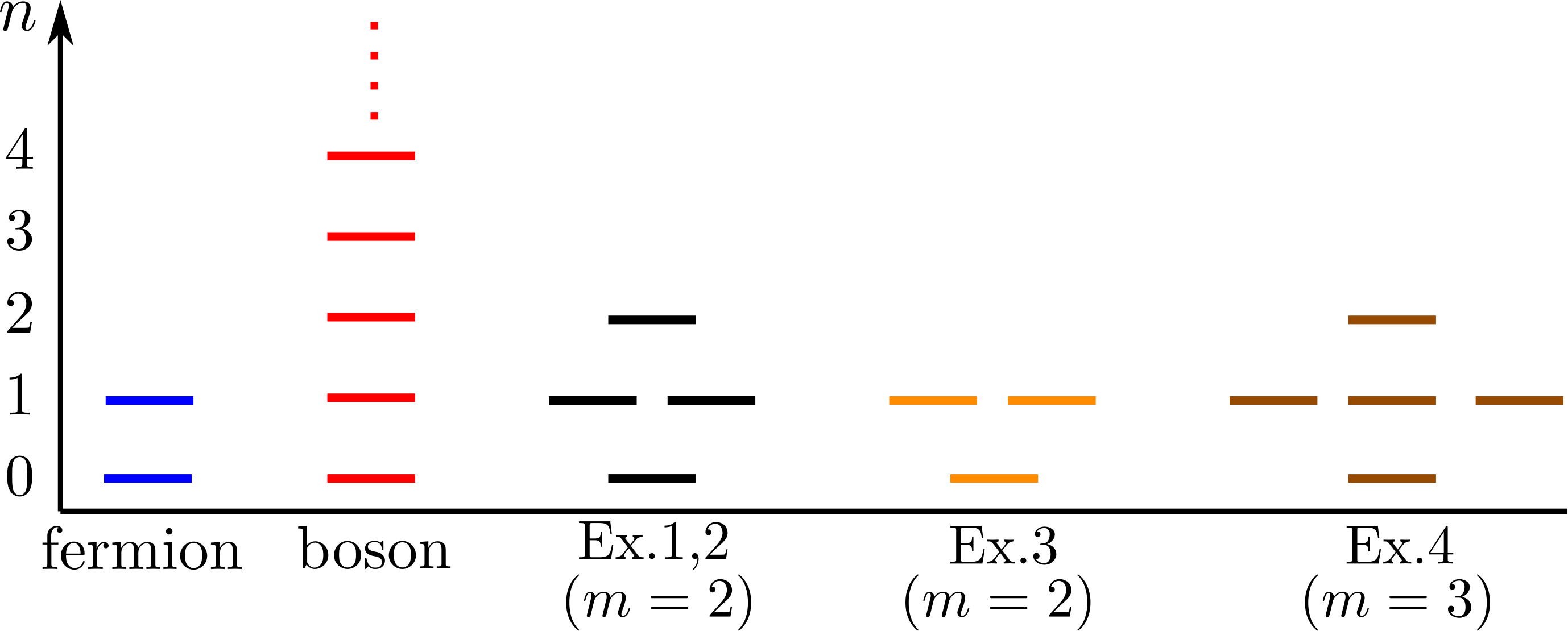}
    }
	\caption{\label{fig:level_statistics} The generalized exclusion statistics of paraparticles defined by the $R$-matrices in Exs.~\ref{ex:decoupled}-\ref{ex:1m1} of Tab.~\ref{tab:Hilbert_series}, and a comparison to ordinary fermions and bosons. Exs.~\ref{ex:setth} and \ref{ex:setth-ext} both have trivial exclusion statistics isomorphic to ordinary fermions with some internal degrees of freedom, similar to Exs.~\ref{ex:decoupled} and \ref{ex:Green}, so we omit them here.  %
    }
\end{figure}
The structure of the state space reveals a distinctive feature of paraparticles--the generalized exclusion statistics, which is encoded in    
the numbers $\{d_n\}_{n\geq 0}$, where $d_n$ is the number of linearly independent solutions to %
Eq.~\eqref{eq:V_n_basis}. As illustrated in Fig.~\ref{fig:level_statistics}, the number $d_n$ corresponds to the  number of orthonormal $n$-particle states in a single mode. 
For example, for the $R$-matrix in Ex.~\ref{ex:1m} with $m=2$, each mode $i$ has three possible quantum states: a vacuum state $\ket{0}$, and two singly occupied states $\hat{\psi}^+_{i,1}\ket{0}$ and $\hat{\psi}^+_{i,2}\ket{0}$. This kind of exclusion statistics is different from any type of free fermions or bosons, representing a distinct type of free particle exclusion statistics. Ex.~\ref{ex:1m1} has a more exotic type of exclusion statistics: each mode $i$ has $m+2$ possible quantum states in total, a vacuum state $\ket{0}$, $m$ linearly independent singly occupied states $\{\hat{\psi}^+_{i,a}\ket{0}\}_{a=1}^m$, and a doubly occupied state $\ket{2}$, which is distinct from ordinary fermions for any $m\geq 3$. 

It turns out to be  helpful to encode the numbers $\{d_n\}_{n\geq 0}$ into a ``characteristic polynomial'' of the $R$-paraparticle, known in the mathematics literature as the \textit{Hilbert series} of the $R$-matrix~\cite{polishchuk2005quadratic}, defined as
\begin{equation}\label{eq:hilbert_series}
	z_R(x)\equiv \sum_{n=0}^\infty d_n x^n.
\end{equation}
The Hilbert series $z_R(x)$ provides a quick way to show the non-triviality~(i.e. distinct from fermions and bosons) of the exclusion statistics. The reason is that any $R$-matrix describing a system of ordinary fermions and bosons must have a Hilbert series of the trivial form $z_R(x)=(1+x)^p (1-x)^{-q}$, where $p$~(and $q$) is the number of flavors of fermions~(and bosons) in this system.  
Therefore the $R$-matrix given in Ex.~\ref{ex:1m} must define a non-trivial type of exclusion statistics for $m\geq 2$, since $z_R(x)=1+mx$ is not equal to  
$(1+x)^p (1-x)^{-q}$ for any integers $p,q$. 
Ex.~\ref{ex:1m1} is similarly non-trivial for $m\geq 3$. A caveat, however, is that $z_R(x)$ only gives a sufficient condition for non-triviality, and having a trivial $z_R(x)$ does not imply that the corresponding paraparticle theory is completely equivalent to fermions or bosons. 
For example, %
for the $R$-matrix in Ex.~\ref{ex:setth}, %
its Hilbert series $z_R(x)=(1+x)^4$ is the same as free fermions with an SU$(4)$ symmetry, but the exchange statistics of these emergent paraparticles are still physically~(observably) distinct from fermions and bosons, as we demonstrate in Sec.~\ref{sec:nontrivialstatistics}. 

The physical meaning of $z_R(x)$ can be understood by computing the grand canonical partition function for a single mode of a non-interacting system at temperature $T$. 
Suppose that each particle in this mode carries energy $\epsilon$~(i.e., the   Hamiltonian is $\hat{H}=\epsilon\hat{n}$). Then 
\begin{equation}\label{eq:single_mode_Z}
	\mathrm{Tr} [e^{-\beta \epsilon \hat{n}}]=z_R(e^{-\beta\epsilon}),
\end{equation}
where $\beta=1/(k_B T)$, $k_B$ is Boltzmann's constant, and we have absorbed the chemical potential $\mu$ into $\epsilon$.
When generalizing to multi-mode Hamiltonians $\hat{H}=\sum_i \epsilon_i\hat{n}_i$, a key simplifying feature is that
the partition function of the whole system factorizes into products of single-mode partition functions exactly as for fermions and bosons~\footnote{This is one of the advantages of the $R$-matrix parastatistics compared to Green's parastatistics, for which the many-mode partition functions do not simply factorize this way~\cite{Stoilova2020}, and is very hard to compute in general, apart from a few simple cases.}, i.e., 
\begin{equation}
	Z=\Tr[e^{-\beta \hat{H}}]=\prod_i z_R(e^{-\beta\epsilon_i}),
\end{equation} 
which is essential for the exact solution of free paraparticle systems we present in Sec.~\ref{sec:solution}.

We now mention a useful mathematical result that relates the numbers $\{d_n\}_{n\geq 0}$  to the $S_n$ representation $\rho_{R,n}$ realized by the $R$-matrix~(see Sec.~\ref{sec:YBE}):
\begin{proposition}\label{prop:dn-char-rel}
For $\tau\in S_n$, let $\chi_{R,n}(\tau)=\Tr[\rho_{R,n}(\tau)]$ be the character of $\rho_{R,n}$. Then we have
  \begin{equation}\label{eq:dn-char-rel}
		d_n=\frac{1}{n!}\sum_{\tau\in S_n}\chi_{R,n}(\tau).
	\end{equation}  
\end{proposition}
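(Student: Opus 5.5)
The plan is to identify $d_n$ with the dimension of the subspace of $S_n$-invariant vectors in $\mathfrak{I}^{\otimes n}$ under $\rho_{R,n}$. Then apply the standard averaging (Reynolds) projector from the representation theory of finite groups.

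First, I will rewrite Eq.~\eqref{eq:V_n_basis} as a statement about $\rho_{R,n}$. Regard the coefficient tensor $\Psi_{a_1\ldots a_n}$ as a vector $\Psi\in\mathfrak{I}^{\otimes n}$. Then Eq.~\eqref{eq:V_n_basis} says exactly that $R_{j,j+1}\Psi=\Psi$ for $j=1,\ldots,n-1$, with $R_{j,j+1}$ as in Eq.~\eqref{def:Rjjp1}. If index placement makes it $R^T_{j,j+1}$ instead, I will note that transposition does not change traces, so the final formula is unaffected. By Eq.~\eqref{eq:YBE_SN}, the $R_{j,j+1}$ are the images of the adjacent transpositions that generate $S_n$. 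So $\Psi$ solves Eq.~\eqref{eq:V_n_basis} if and only if $\rho_{R,n}(\tau)\Psi=\Psi$ for all $\tau\in S_n$. Hence $d_n=\dim V_n^{S_n}$, where $V_n^{S_n}$ is the space of invariants of $V_n=\mathfrak{I}^{\otimes n}$.

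Second, define $P=\frac{1}{n!}\sum_{\tau\in S_n}\rho_{R,n}(\tau)$, which is the symmetrizer already used in the paper. I will check three facts by reindexing the group sum. First, $\rho_{R,n}(\sigma)P=P$ for every $\sigma$, so the image of $P$ is invariant. Second, $P\Psi=\Psi$ for every invariant $\Psi$. Together these give $P^2=P$ with $\mathrm{Im}\,P=V_n^{S_n}$. For an idempotent, the trace equals the rank, which holds over $\mathbb{C}$ with no unitarity assumption, since $P$ is diagonalizable with eigenvalues $0$ and $1$. Therefore $d_n=\Tr P=\frac{1}{n!}\sum_\tau\chi_{R,n}(\tau)$.

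The only step needing care is the first one: matching the index convention of Eq.~\eqref{eq:V_n_basis} to $\rho_{R,n}$. The transpose remark covers any mismatch. I will also use that $d_n$ is defined as the number of linearly independent solutions, which is exactly the dimension of the solution space. It does not depend on the Fock-space linear independence established earlier, though that result confirms $d_n$ counts single-mode states.
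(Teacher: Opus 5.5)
Your proposal is correct and follows the same route as the paper's proof: identify $d_n$ with the dimension of the $S_n$-invariant subspace of $\mathfrak{I}^{\otimes n}$ under $\rho_{R,n}$ (the multiplicity of the trivial representation), then take the trace of the averaging projector $P_n$. Your extra checks fill in details the paper leaves implicit: that Eq.~\eqref{eq:V_n_basis} is literally $R_{j,j+1}\Psi=\Psi$ under the convention of Definition~\ref{def:abstractTNindexing}, so no transpose is needed, and that trace equals rank for an idempotent without any unitarity assumption.
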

\begin{proof}
By definition, $d_n$ is the number of linearly independent solutions to 
Eq.~\eqref{eq:V_n_basis}, which exactly counts how many times the trivial  representation of $S_n$ occurs in the direct sum decomposition of $\rho_{R,n}$. The operator
\begin{equation}\label{eq:PnProjTrivialSubrep-main}
    P_n=\frac{1}{n!}\sum_{\tau\in S_n}\rho_{R,n}(\tau)
\end{equation}
gives the projection to the trivial subrepresentation of $\rho_{R,n}$, and therefore we have $d_n=\Tr[P_n]$, leading to Eq.~\eqref{eq:dn-char-rel}. 
\end{proof}

For explicit computations of the Hilbert series, we refer to the accompanying Mathematica code of Ref.~\cite{wang2023para}, which computes $z_R(x)$~(up to a given degree $n$) for any input involutive $R$-matrix. This is achieved by directly computing the numbers  $\{d_n\}_{n\geq 0}$ from  their definition,  i.e., $d_n$ is the number of linearly independent solutions to Eq.~\eqref{eq:V_n_basis}. Here we summarize some general facts. First, 
note that Eq.~\eqref{eq:V_n_basis} does not put any restriction on $\Psi$ for $n=0$ and $n=1$, so we always have $d_0=1, d_1=m$ for any $R$-matrix. The physical meaning of this is clear: we always have one vacuum state $|0\rangle$ and $m$ degenerate singly occupied states. 
Second, there is a  useful identity relating the Hilbert series of the $R$-matrices $R$ and $-R$~(note that $-R$ also satisfies the YBE in Eq.~\eqref{eq:YBE} if $R$ does)~\cite{polishchuk2005quadratic}:  %
\begin{equation}\label{eq:hRh_mRrelation}
	z_{R}(-x)z_{-R}(x)=1. 
\end{equation}
which allows us to compute the exclusions statistics $\{d_n\}_{n\geq 0}$ of $-R$ if the exclusion statistics of $R$ is known. For example, for the $R$-matrix in Ex.~\ref{ex:1m1}, we have $z_{-R}(x)=1/(1-mx+x^2)$, from which we obtain $d_0=1, d_1=m, d_2=m^2-1$, and $d_{n+1}=m d_n-d_{n-1}$ for $n\geq 1$. 

Finally, Lemma~\ref{lemma:OmegaLemma-Ralpha} shows that if $\Tr_1[R]=\theta\mathds{1}$ for some constant $\theta$, then $\theta$ and $m$ completely determine its Hilbert series. For example, for the $R$-matrices  in Exs.~\ref{ex:decoupled}, \ref{ex:Green}, \ref{ex:setth}, and \ref{ex:setth-ext} of Tab.~\ref{tab:Hilbert_series}, we have $\theta=-1$, therefore their Hilbert series are all the same as that of Ex.~\ref{ex:decoupled}: $z_R(x)=(1+x)^m$.

\subsection{Mutual parastatistics}\label{sec:mutual_para}
Up to now we have assumed that there is only one type of  paraparticle in our physical system, described by a single $R$-matrix. One may wonder what happens if we have different types of paraparticles~(described by different $R$-matrices) in one system. In particular, what should be the commutation relation between different types of particles? A simple consistent choice is to demand that the creation/annihilation operators for different types of particles commute. Physically, this means that different types of paraparticles are ``relatively bosonic'', i.e., have trivial mutual statistics between each other.  However, it is also possible that some different types of particles have nontrivial mutual statistics. In this section we give a general framework to treat physical systems with several different types of paraparticles and derive the algebraic constraints on their mutual particle statistics. 
The observable physical consequences of mutual parastatistics and their application in secret communication will be discussed in Sec.~\ref{sec:observe_parastatistics}.

Let $\mathcal{T}$ be the collection of particle types in a physical system. Each particle type $\psi\in \mathcal{T}$ has its own quantum dimension $m_\psi$ and $R$-matrix $R^{(\psi\psi)}$ describing its exchange statistics. For any two types of particles $\psi,\varphi\in \mathcal{T}$, let us use $R^{(\psi\varphi)}$ to describe their mutual particle statistics. The commutation relations between $\psi,\varphi$ are given as follows.
\begin{eqnarray}\label{eq:fundamental_Rcommu-rela}
	\hat{\psi}^-_{i,a} \hat{\varphi}^+_{j,b}&=&\sum_{cd}[R^{(\varphi\psi)}]^{ac}_{bd}~\hat{\varphi}_{j,c}^+ \hat{\psi}_{i,d}^-+\delta_{\varphi\psi}\delta_{ij}\delta_{ab},\nonumber\\%
	\hat{\psi}^+_{i,a} \hat{\varphi}^+_{j,b}&=&\sum_{cd}[R^{(\psi\varphi)}]^{cd}_{ab}~\hat{\varphi}_{j,c}^+ \hat{\psi}_{i,d}^+,\nonumber\\%
	\hat{\psi}^-_{i,a} \hat{\varphi}^-_{j,b}&=&\sum_{cd}[R^{(\psi\varphi)}]^{ba}_{dc}~\hat{\varphi}_{j,c}^- \hat{\psi}_{i,d}^-.%
\end{eqnarray}
The tensor graphical representation for Eq.~\eqref{eq:fundamental_Rcommu-rela} are given in Fig.~\ref{fig:RMQA-rela}.
Notice that the special case $\psi=\varphi$ gives us back Eq.~\eqref{eq:fundamental_Rcommu}. Furthermore, an argument similar to that shown in Fig.~\ref{fig:YBEgraphical} shows that the mutual $R$-matrices have to satisfy the following generalized YBE
\begin{equation}\label{eq:YBEgraphical-rela}
\adjincludegraphics[height=10ex,valign=c]{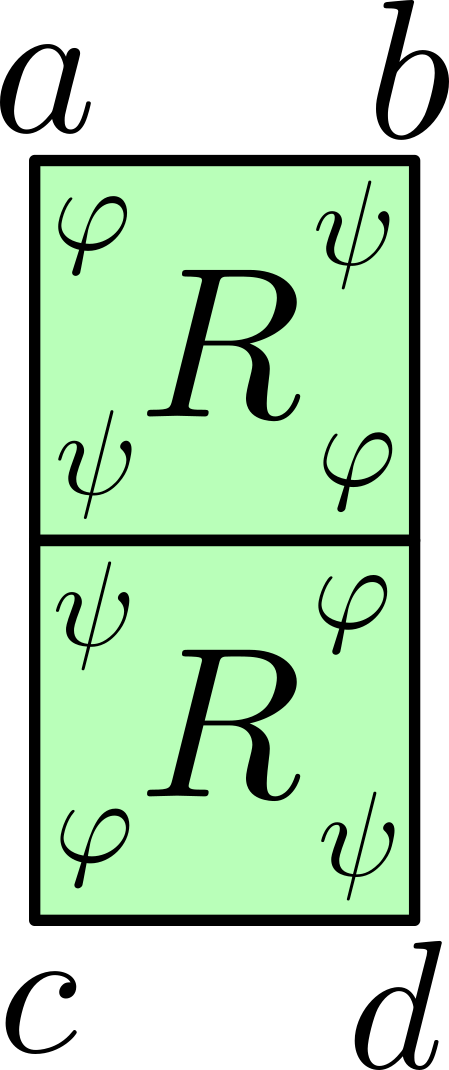}=
\adjincludegraphics[height=10ex,valign=c]{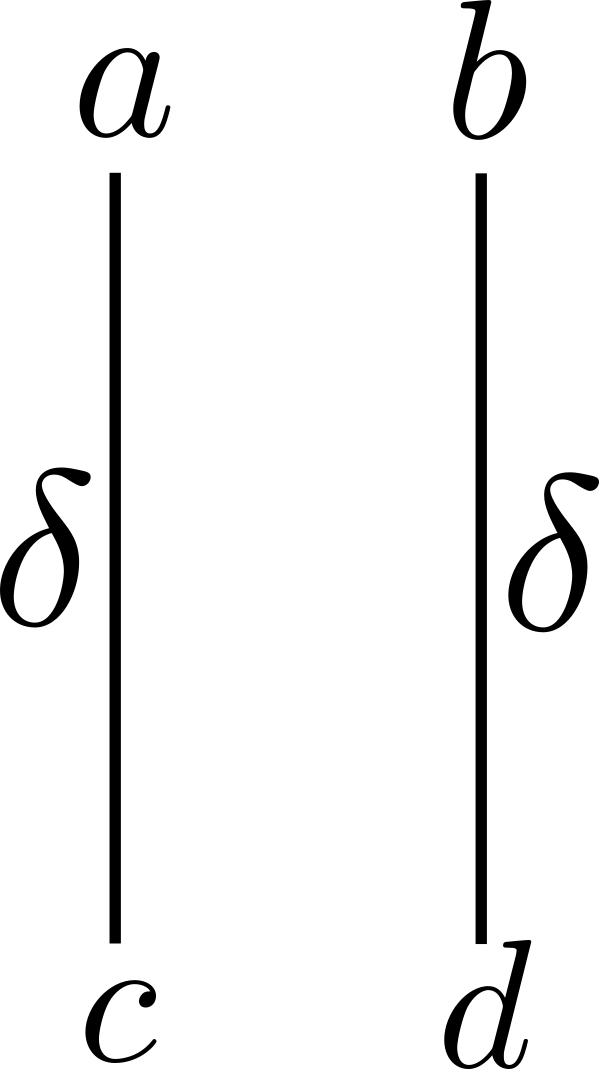},\quad \adjincludegraphics[height=15ex,valign=c]{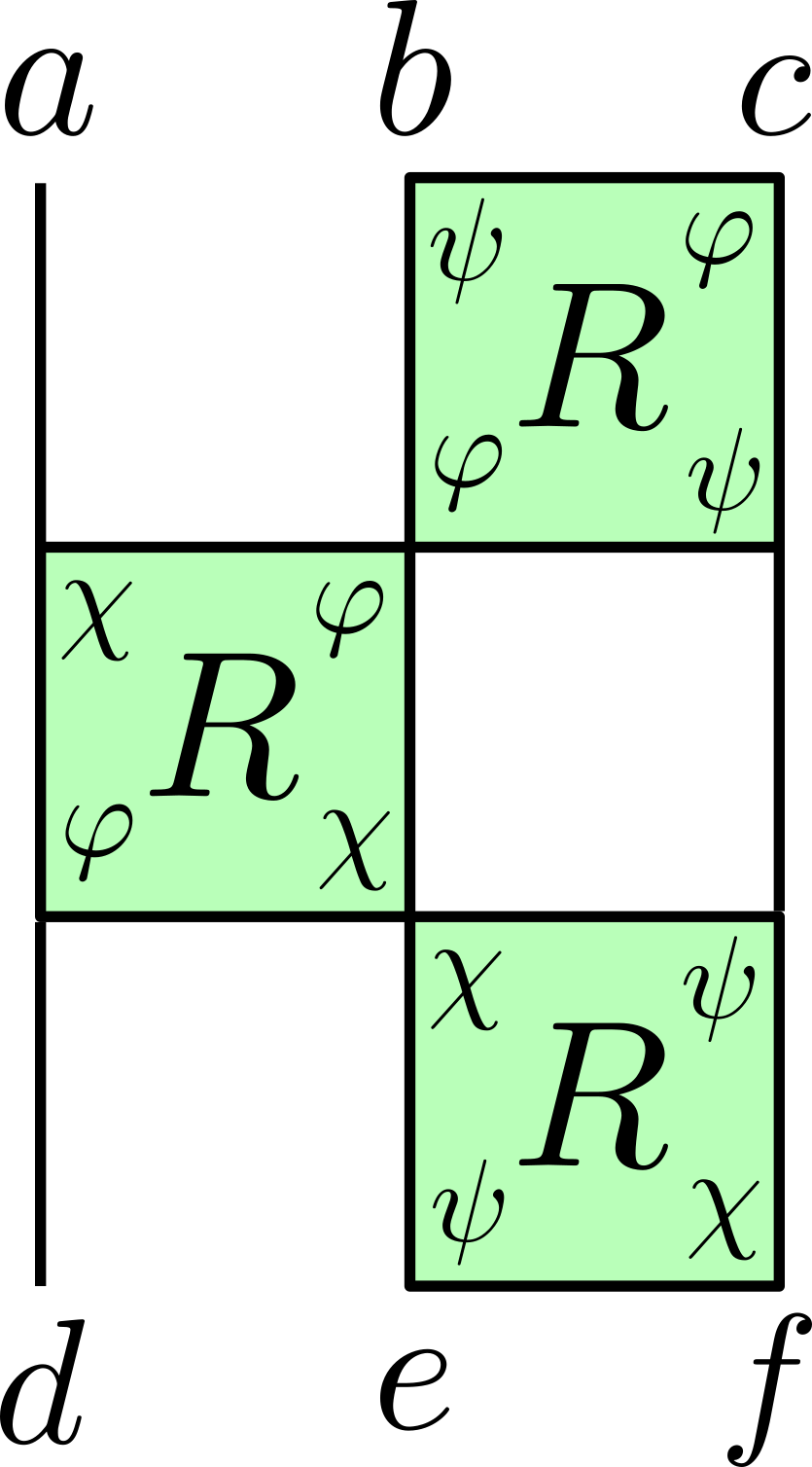}=
\adjincludegraphics[height=15ex,valign=c]{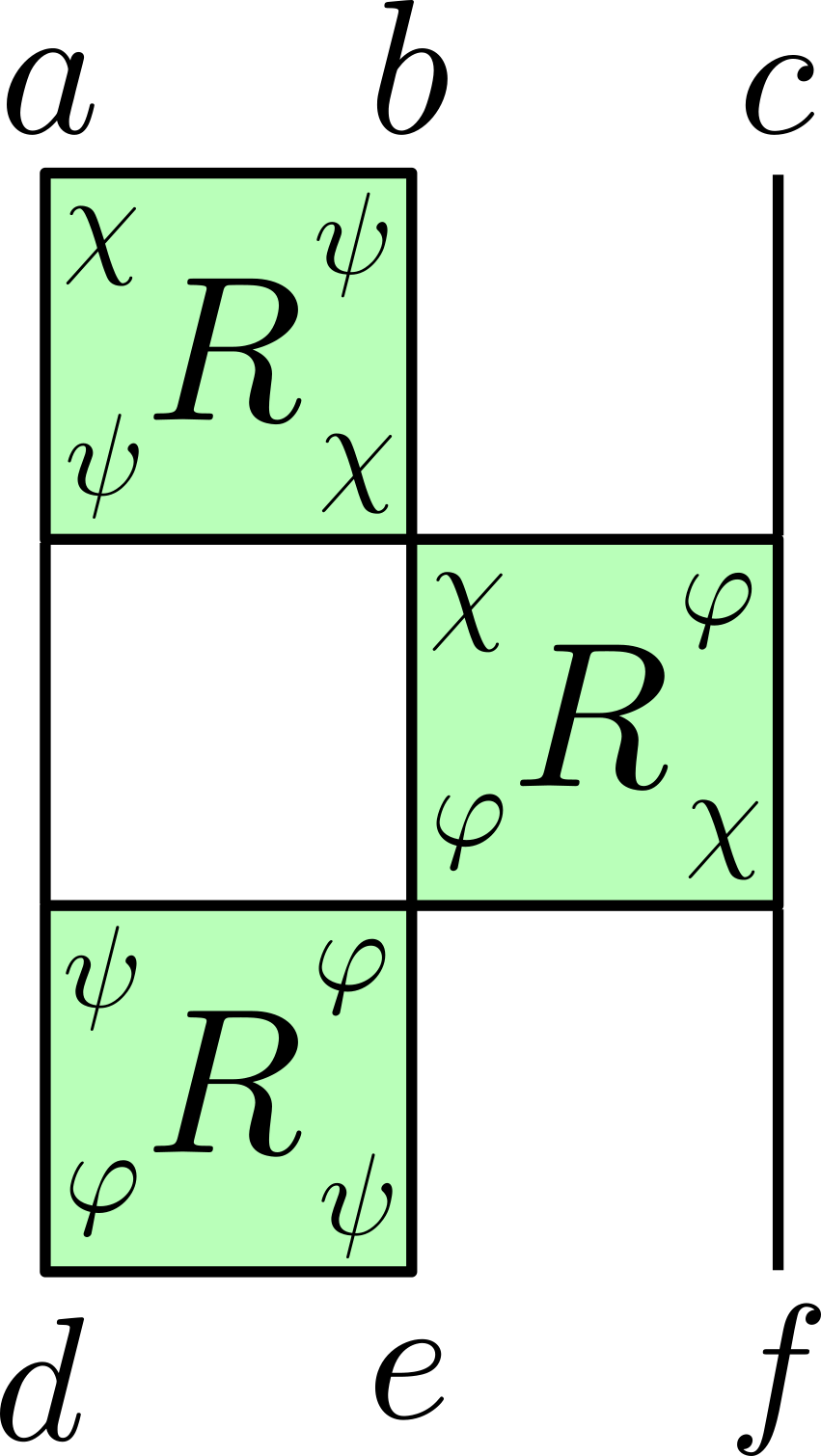},
\end{equation}
for any $\psi,\varphi,\chi\in\mathcal{T}$~(not necessarily mutually distinct),
where we use the tensor graphical notation for the relative $R$-matrix $[R^{(\psi\varphi)}]^{ab}_{cd}=
\adjincludegraphics[height=6ex,valign=c]{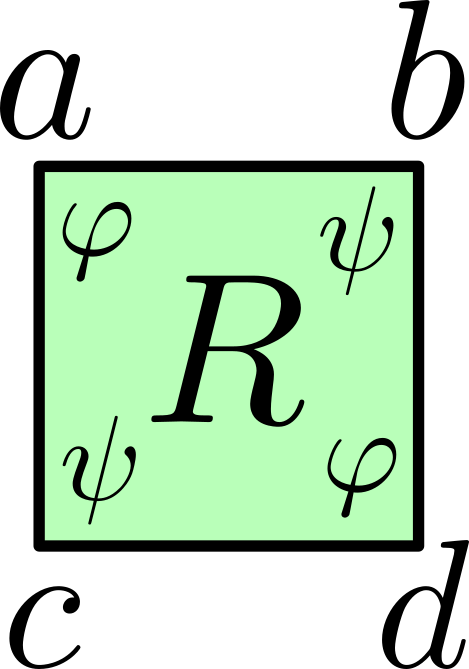}$ and similarly for others.

A basis for the Fock space of the whole system can be constructed in a similar way as in Sec.~\ref{sec:basis_state_space}. For simplicity, we show the case $|\T|=2$ with $\T=\{\psi,\varphi\}$, and we focus on the single mode case, the generalization to the multimode case with any number of particle types is straightforward. In this case, it is clear that any state in the Fock space can be represented as a superposition of states of the form~[analogous to Eq.~\eqref{eq:psi_statespace_general}]
\begin{equation}\label{eq:statespace_mutual}
\ket{\psi}=\hat{\psi}^+_{a_1}\ldots \hat{\psi}^+_{a_n}\hat{\varphi}^+_{b_1}\ldots \hat{\varphi}^+_{b_n}|0\rangle,
\end{equation}
by using the second line of Eq.~\eqref{eq:fundamental_Rcommu-rela} to move all $\hat{\psi}^+$ to the left of $\hat{\varphi}^+$. 
A basis for the single mode Fock space can be constructed as
\begin{equation}\label{eq:full_basis-mutual}
	|{}^{\alpha_1}_{n_1},{}_{n_2}^{\alpha_2}\rangle=\hat{\Psi}^{+}_{n_1,\alpha_1}\hat{\Phi}^{+}_{n_2,\alpha_2}|0\rangle,%
\end{equation}
where as in Eq.~\eqref{eq:full_basis}, the numbers $\{(n_i,\alpha_i)\}^2_{i=1}$ can be chosen independently for different particle types %
and
$\hat{\Psi}^{(1)+}_{n,\alpha}$ is defined in Eq.~\eqref{eq:full_basis} with $\Psi^\alpha_{a_1 a_2\ldots a_n}$ satisfying Eq.~\eqref{eq:V_n_basis} with $R=R^{(\psi\psi)}$ and $\hat{\Phi}^{(2)+}_{n_2,\alpha_2}$ is given by the analogous expression for $\hat{\varphi}^+$. 
This basis is orthonormal if all the (self and mutual) $R$-matrices are unitary and the normalization in Eq.~\eqref{eq:singlemodewf_normalization} is used. The action of paraparticle operators $\hat{\psi}^\pm_{a}, \hat{\varphi}^\pm_{a}$ on the state space can be worked out in a similar way as in Sec.~\ref{sec:action_parafield_basis}. 

We now mention a useful trick that allows us to treat systems with multiple types of paraparticles using our previous formalism that involves only one type of paraparticle. Indeed, we can always combine all particle types in $\mathcal{T}$ into a single ``composite type'' as follows. Let $A=(\psi,a)$ be a collective index that labels the internal~(parastatistical) states of the composite particle, where $\psi\in\mathcal{T}$ and $a=1,2,\ldots,m_\psi$, and let  $\hat{\Psi}^\pm_{i,A}=\hat{\psi}^\pm_{i,a}$.  We define the $R$-matrix for the composite type paraparticle as follows: for $A=(\psi,a), B=(\varphi,b), C=(\chi,c)$ and $D=(\gamma,d)$, let 
\begin{equation}\label{eq:compositeR-def}
\mathbf{R}^{AB}_{CD}=\delta_{\psi\gamma}\delta_{\varphi\chi} [R^{(\varphi\psi)}]^{ab}_{cd}.
\end{equation} 
Then it is straightforward to verify  %
that Eq.~\eqref{eq:fundamental_Rcommu-rela} for all $\psi,\varphi\in\mathcal{T}$ is equivalent to 
\begin{eqnarray}\label{eq:fundamental_Rcommu-rela-full}
	\hat{\Psi}^-_{i,A} \hat{\Psi}^+_{j,B}&=&\sum_{CD}\mathbf{R}^{AC}_{BD} \hat{\Psi}_{j,C}^+ \hat{\Psi}_{i,D}^-+\delta_{AB}\delta_{ij},\nonumber\\%
	\hat{\Psi}^+_{i,A} \hat{\Psi}^+_{j,B}&=&\sum_{CD}\mathbf{R}^{CD}_{AB} \hat{\Psi}_{j,C}^+ \hat{\Psi}_{i,D}^+,\nonumber\\%
	\hat{\Psi}^-_{i,A} \hat{\Psi}^-_{j,B}&=&\sum_{CD}\mathbf{R}^{BA}_{DC} \hat{\Psi}_{j,C}^- \hat{\Psi}_{i,D}^-,%
\end{eqnarray}
which is exactly the original Eq.~\eqref{eq:fundamental_Rcommu} for $\hat{\Psi}^\pm_{i,A}$, and furthermore, the generalized YBE in Eq.~\eqref{eq:YBEgraphical-rela} for all $\psi,\varphi,\chi\in\mathcal{T}$ is equivalent to the original YBE for $\mathbf{R}$
\begin{eqnarray}\label{eq:YBE-composite}
	\mathbf{R}^2&=&\mathds{1}_{m^2},\nonumber\\
	\mathbf{R}_{12} \mathbf{R}_{23} \mathbf{R}_{12}&=& \mathbf{R}_{23} \mathbf{R}_{12} \mathbf{R}_{23}.
\end{eqnarray}
This trick allows us to treat a collection of different types of paraparticles as a single composite type, where most of our previous results apply.

As a simple example, consider a physical system with two types of paraparticles $\mathcal{T}=\{\psi,\varphi\}$, and let 
$[R^{(\psi\varphi)}]^{ab}_{cd}=\adjincludegraphics[height=6ex,valign=c]{Figures/MutualStatistics-1/Rpsiphi.png}$. 
 Then, in addition to the YBE for $R^{(\psi\psi)}$ and $R^{(\varphi\varphi)}$, %
 we also have the following two equations for their mutual $R$-matrix, which are obtained from the second relation in Eq.~\eqref{eq:YBEgraphical-rela},
\begin{equation}\label{eq:Rpsiphiconstraint}
\adjincludegraphics[width=9ex,valign=c]{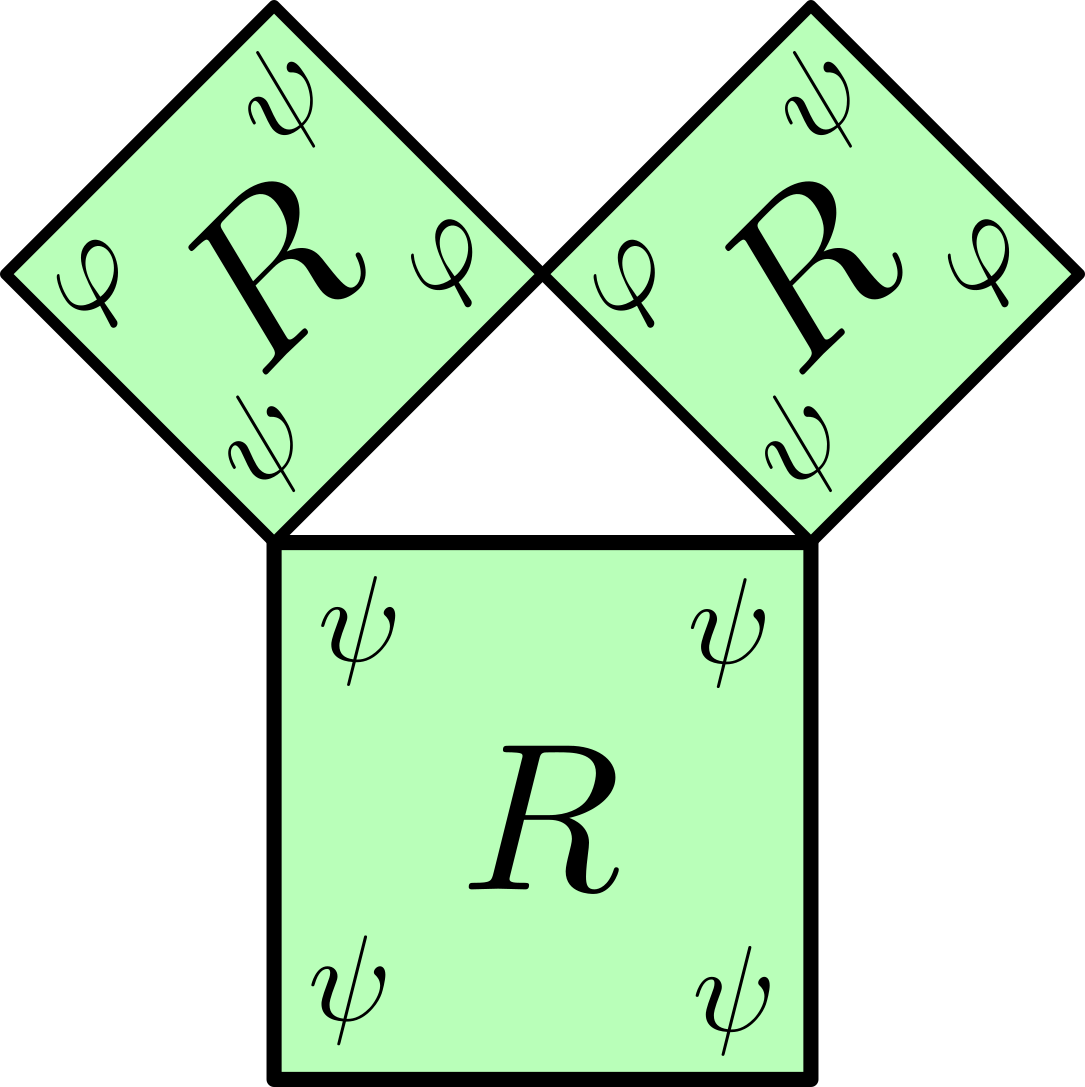}
=\adjincludegraphics[width=9ex,valign=c]{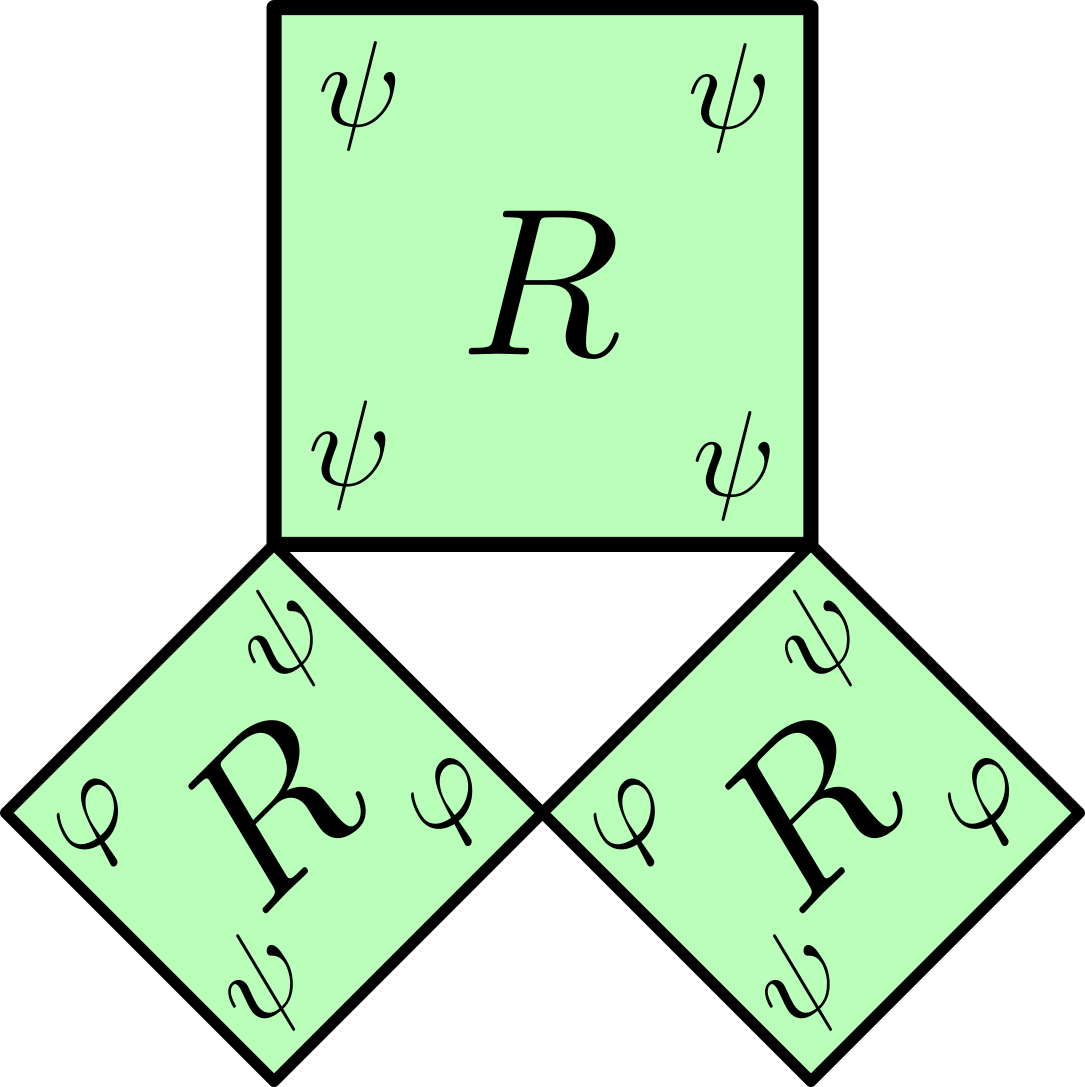},\quad 
\adjincludegraphics[height=9ex,valign=c]{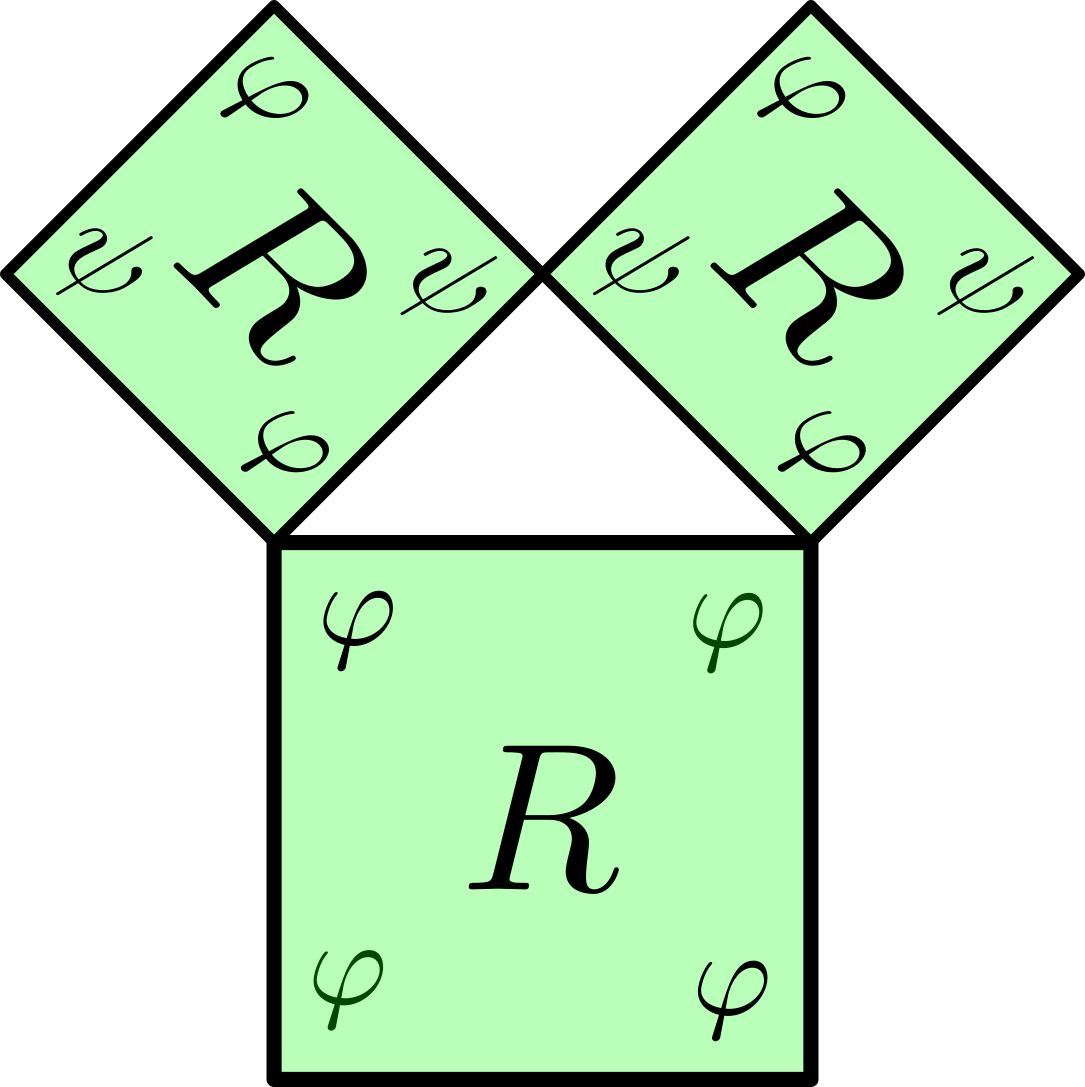}=
\adjincludegraphics[height=9ex,valign=c]{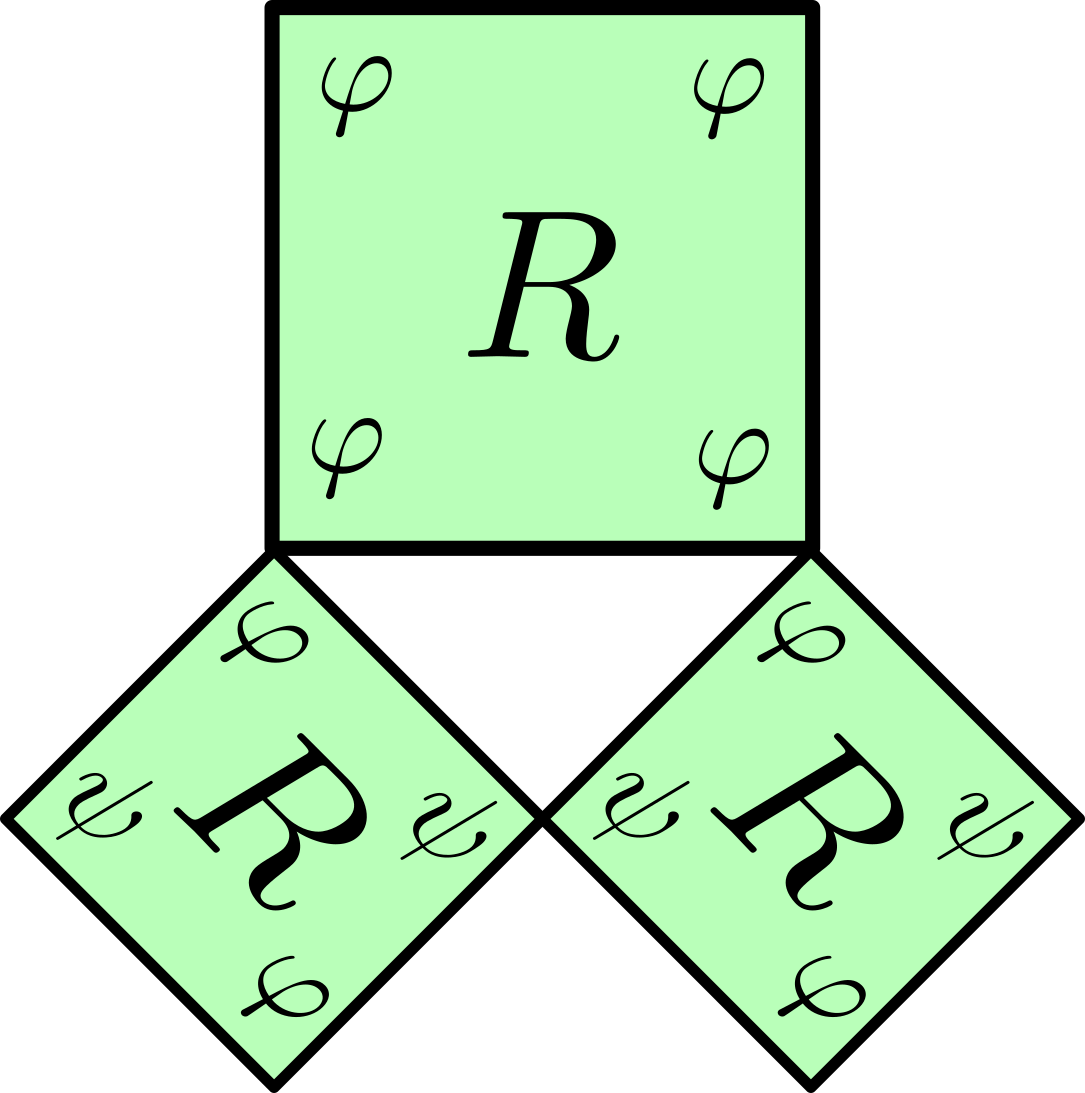}.
\end{equation}
Note that in principle we also have two similar equations for $[R^{(\varphi\psi)}]^{ab}_{cd}=
\adjincludegraphics[height=6ex,valign=c]{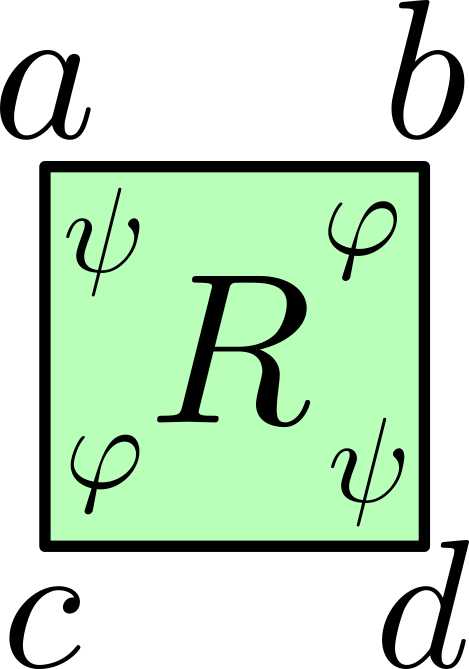}$; however, the first relation in Eq.~\eqref{eq:YBEgraphical-rela} fixes $R^{(\varphi\psi)}=[R^{(\psi\varphi)}]^{-1}$, and one can check that once $R^{(\varphi\psi)}$ satisfies Eq.~\eqref{eq:Rpsiphiconstraint}, the corresponding constraints for $R^{(\varphi\psi)}=[R^{(\psi\varphi)}]^{-1}$ are satisfied automatically. 

As an aside, we note that Eq.~\eqref{eq:Rpsiphiconstraint} along with Eq.~\eqref{eq:compositeR-def} gives a method of constructing new $R$-matrices out of existing ones. Namely, given any two $R$-matrices $R$ and $R'$ describing paraparticles $\psi$ and $\varphi$, respectively, we can solve Eq.~\eqref{eq:Rpsiphiconstraint}~(with $R^{(\psi\psi)}=R$ and $R^{(\varphi\varphi)}=R'$) to find a mutual $R$-matrix $S=R^{(\psi\varphi)}$ compatible with $R$ and $R'$, and then use Eq.~\eqref{eq:compositeR-def} to get a new $R$-matrix $\mathbf{R}$. We denote $\mathbf{R}=R\oplus_{S} R'$, and we call  $\mathbf{R}$ the direct sum of $R$ and $R'$ twisted by the mutual $R$-matrix $S$. We also call $\mathbf{R}$ a direct sum of $R$ and $R'$, due to the fact that Eq.~\eqref{eq:Rpsiphiconstraint} often have many different solutions, each solution leads to such a twisted direct sum. In particular, note that if we take $S^{ab}_{cd}=\delta_{bc}\delta_{ad}$ to be the swap matrix, then   Eq.~\eqref{eq:Rpsiphiconstraint} is always satisfied, and in this case we call $\mathbf{R}$ simply the (untwisted) direct sum of $R$ and $R'$, denoted by $\mathbf{R}=R\oplus R'$. As an example, the $R$-matrix of Green's paraparticle~(Ex.~\ref{ex:Green}) can be constructed by recursively using this direct sum construction, starting from the trivial $R$-matrix $R=\pm 1$.  
The following theorem is useful for computing the Hilbert series of the resulting composite $R$-matrix $\mathbf{R}$:
\begin{theorem} Given any system $\T$ of $R$-paraparticles with potentially nontrivial mutual statistics, the Hilbert series of the composite $R$-matrix $\mathbf{R}$ in Eq.~\eqref{eq:compositeR-def} is equal to the product of the Hilbert series of its individual components, i.e.,
\begin{equation}
z_{\mathbf{R}}(x)=\prod_{\psi\in\mathcal{T}}z_{R^{(\psi)}}(x). 
\end{equation}
\end{theorem}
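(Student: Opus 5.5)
We need $d_n(\mathbf{R})$, the dimension of the $\mathbf{R}$-symmetric tensors in $(\bigoplus_\psi \mathfrak{I}_\psi)^{\otimes n}$, that is, the number of independent solutions of Eq.~\eqref{eq:V_n_basis} with $R$ replaced by $\mathbf{R}$. The claim is that this equals the coefficient of $x^n$ in $\prod_\psi z_{R^{(\psi)}}(x)$, namely $\sum_{\sum_\psi n_\psi=n}\prod_\psi d_{n_\psi}(R^{(\psi\psi)})$. The plan is to identify $d_n(\mathbf{R})$ with the dimension of the $n$-particle single-mode Fock space of the composite algebra in Eq.~\eqref{eq:fundamental_Rcommu-rela-full}. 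We then count that same space a second way using the basis in Eq.~\eqref{eq:full_basis-mutual}. By the general construction of Sec.~\ref{sec:basis_state_space} applied to $\mathbf{R}$, the single-mode $n$-particle subspace is spanned by the orthonormal (or at least linearly independent) states $\hat{\Psi}^{(\mathbf{R})+}_{n,\alpha}\ket{0}$, so its dimension is exactly $d_n(\mathbf{R})$.

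Next, grade the composite tensor space by particle-type content. Since $\mathbf{R}^{AB}_{CD}=\delta_{\psi\gamma}\delta_{\varphi\chi}[\cdots]$, each $\mathbf{R}_{j,j+1}$ maps a sector with type sequence $(\psi_1,\dots,\psi_n)$ to the sector with $\psi_j,\psi_{j+1}$ swapped. The $S_n$ representation $\rho_{\mathbf{R},n}$ therefore permutes the type labels exactly as the standard permutation action does, and it decomposes as a direct sum over multisets $\{n_\psi\}$. Each summand is induced from the subgroup $\prod_\psi S_{n_\psi}$ acting on the ordered sector $\mathfrak{I}_{\psi_1}^{\otimes n_{\psi_1}}\otimes\mathfrak{I}_{\psi_2}^{\otimes n_{\psi_2}}\otimes\cdots$.

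By Frobenius reciprocity, the multiplicity of the trivial representation in an induced representation equals the multiplicity of the trivial representation of $\prod_\psi S_{n_\psi}$ on that ordered sector. On this sector the adjacent transpositions inside each block act by $R^{(\psi\psi)}$ on the corresponding factors. The invariants are therefore $\bigotimes_\psi(\text{$R^{(\psi\psi)}$-symmetric tensors of degree } n_\psi)$, of dimension $\prod_\psi d_{n_\psi}(R^{(\psi\psi)})$. Summing over compositions gives the coefficient identity, and hence the product formula for the Hilbert series.

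The main obstacle is making the induced-representation identification rigorous. The block-swapping operators that move a $\psi$-block past a $\varphi$-block involve the mutual matrices $R^{(\psi\varphi)}$, not plain swaps, so the representation is only isomorphic to an induced one. To build that isomorphism, I would fix, for each type sequence, the minimal-length coset representative $w$ and send the ordered-sector vector $v$ to $\rho_{\mathbf{R},n}(w)v$. Well-definedness follows from the braid relations, i.e.\ from Eq.~\eqref{eq:YBE-composite}, and the operators are invertible because $\mathbf{R}^2=\mathds{1}$. Once the isomorphism is in place, the standard induced-module structure holds, so Frobenius reciprocity applies directly. Alternatively, one can count via Eq.~\eqref{eq:full_basis-mutual}: reordering with Eq.~\eqref{eq:fundamental_Rcommu-rela} shows spanning, and linear independence comes from the dual-basis argument of Eq.~\eqref{eq:full_basis_orthonormal_proof} applied type by type.
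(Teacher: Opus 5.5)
Your proof is correct, and its core takes a different route from the paper's.

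\textbf{The paper's route.} The paper works in the single-mode Fock space. It relies on the mixed-type states $\hat{\Psi}^{+}_{n_1,\alpha_1}\hat{\Phi}^{+}_{n_2,\alpha_2}\ket{0}$ of Eq.~\eqref{eq:full_basis-mutual} forming a basis. Given that, the type-resolved number operators $\hat{n}^{(\psi)}$ have independent spectra and $\Tr[e^{-\beta\epsilon\hat{n}^{(\Psi)}}]$ factorizes over $\psi\in\T$. This is exactly the plan in your first paragraph and your closing ``alternatively'' sentence.

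\textbf{Your main argument.} It never uses the Fock space. You count $\mathbf{R}$-symmetric tensors directly from the definition of $d_n$. The only structural input is that $\rho_{\mathbf{R},n}(\sigma)$ carries the sector with type sequence $t$ to the sector with $\sigma t$. You then reduce to invariants of the Young subgroup $\prod_\psi S_{n_\psi}$ acting on the sorted sector.

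\textbf{What each buys.} Yours is a self-contained finite-dimensional argument, valid for any involutive solution $\mathbf{R}$ of the Yang--Baxter equation. In effect it proves, at the wavefunction level, the basis property that the paper asserts for Eq.~\eqref{eq:full_basis-mutual} only by analogy with Sec.~\ref{sec:basis_state_space}. The paper's version is shorter and makes the physics immediate: different types occupy a mode independently, and mutual statistics do not affect exclusion.

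\textbf{The step you flag as the main obstacle is not one.} Once $\rho_{\mathbf{R},n}$ is known to be a representation of $S_n$ (this is where Eq.~\eqref{eq:YBE-composite} and $\mathbf{R}^2=\mathds{1}$ enter), the count is elementary:
\begin{itemize}
\item an invariant vector is determined by its component $u$ on the sorted sector $t_0$;
\item $u$ must be fixed by $\prod_\psi S_{n_\psi}$;
\item conversely, any such $u$ extends uniquely by setting the $wt_0$ component to $\rho_{\mathbf{R},n}(w)u$.
\end{itemize}
This extension does not depend on the choice of $w$, because $w^{-1}w'$ stabilizes $t_0$ whenever $wt_0=w't_0$. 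So you need neither minimal-length coset representatives nor Frobenius reciprocity. The mutual matrices $R^{(\psi\varphi)}$ then drop out of the count entirely.
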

\begin{proof}
The single mode particle number operator of the composite paraparticle $\Psi$ is decomposed as follows
\begin{equation}\label{eq:def_n_Psi}
	\hat{n}^{(\Psi)}= \sum_{\psi\in \T} \hat{n}^{(\psi)},
\end{equation}
which directly follows from definition. 
Since the spectrum of $\hat{n}^{(\psi)}$ are mutually independent for different particle types, as we mentioned below Eq.~\eqref{eq:full_basis-mutual},
we have 
\begin{eqnarray}\label{eq:single_mode_Z-mutual}
	z_{\mathbf{R}}(x)&=&\mathrm{Tr} \left[e^{-\beta \epsilon \hat{n}^{(\Psi)}}\right]\nonumber\\
    &=&\mathrm{Tr} \left[\prod_{\psi\in \T} e^{-\beta \epsilon \hat{n}^{(\psi)}}\right]\nonumber\\
    &=&\prod_{\psi\in\mathcal{T}}z_{R^{(\psi)}}(x),
\end{eqnarray}
where $e^{-\beta\epsilon}$ and we used the definition of $z_{\mathbf{R}}(x)$ in Eq.~\eqref{eq:single_mode_Z}.
\end{proof}

$R$-matrices constructed from the above direct sum construction will be called decomposable,  and we give the formal definition below
\begin{definition}\label{def:indecomposable}
An $R$-matrix is called decomposable if it is  equivalent~[with respect to the equivalence relation defined in Eq.~\eqref{eq:basistransformRmat}] to a direct sum of the form  $R_1\oplus_S R_2$, where both $R_1$ and $R_2$ have quantum dimension at least $1$, and it is called indecomposable otherwise. 
\end{definition}
The indecomposability of $R$-matrices will become important in later sections when discussing indistinguishability of $R$-paraparticles. 

Below we give some more concrete examples of systems of $R$-paraparticles with nontrivial mutual parastatistics. Let $M\geq 3$ be an integer. Consider a system with $M$ different types of paraparticles, with particle types labeled by elements of the cyclic group $\T=\Z/M\Z$. All of them %
have the same quantum dimension $m=M$, and we label the  basis states of their internal space also by elements of $\Z/M$. We define their mutual $R$-matrix~(including self-$R$-matrices as special cases) by  
\begin{eqnarray}\label{eq:RA4Z3-mutual}
R^{(\psi\varphi)}\ket{a,b}&=&(-1)^{\delta_{a,b+\psi}}\ket{b+\psi,a-\varphi},
\end{eqnarray}
with $\psi,\varphi,a,b\in \Z/M$, and all the arithmetic are understood modulo $M$. The composite $R$-matrix in Eq.~\eqref{eq:compositeR-def} for this system is given by
\begin{equation}\label{eq:RA4Z3-composite}
\mathbf{R}\ket{(\psi,a),(\varphi,b)}=(-1)^{\delta_{a,b+\psi}}\ket{(\varphi,b+\psi),(\psi,a-\varphi)}.
\end{equation}
It is straightforward to verify that $\mathbf{R}$ satisfies Eq.~\eqref{eq:YBE-composite}, for example, both sides of the second line maps $\ket{(\psi,a),(\varphi,b),(\chi,c)}$ to the state
$\ket{(\chi,c+\psi+\varphi),(\varphi,b+\psi-\chi),(\psi,a-\varphi-\chi)}$ times the phase factor $(-1)^{\delta_{a,b+\psi}+\delta_{b,c+\varphi}+\delta_{a,c+\psi+\varphi}}$. Notice that $R^{(0,0)}$ is the $R$-matrix for Green's para-fermion~[Ex.~\ref{ex:Green} with $(+)$ sign], and $R^{(1,1)}$ is the $R$-matrix in Ex.~\ref{ex:setth-ext}, and $R^{(0,1)}$ and $R^{(1,0)}$ define their mutual parastatistics. 

Interestingly, even ordinary fermions can have nontrivial mutual statistics. 
In the simplest case where $\psi$ and $\varphi$ are fermions, i.e., both $R$ and $R'$ are taken from Ex.~\ref{ex:decoupled} in Tab.~\ref{tab:Hilbert_series} with quantum dimensions $m$ and $m'$~(assume $m<m'$ without loss of generality), respectively,  the general solution to  Eq.~\eqref{eq:Rpsiphiconstraint} has the following form:
\begin{equation}
\adjincludegraphics[height=6ex,valign=c]{Figures/MutualStatistics-1/Rpsiphi.png}=\delta_{ad}\delta_{bc}\phi_{a,b}%
\end{equation}
where $\{\phi_{a,b}\}$ are arbitrary phases. 
\begin{figure}
	\center{\includegraphics[width=0.8\linewidth]{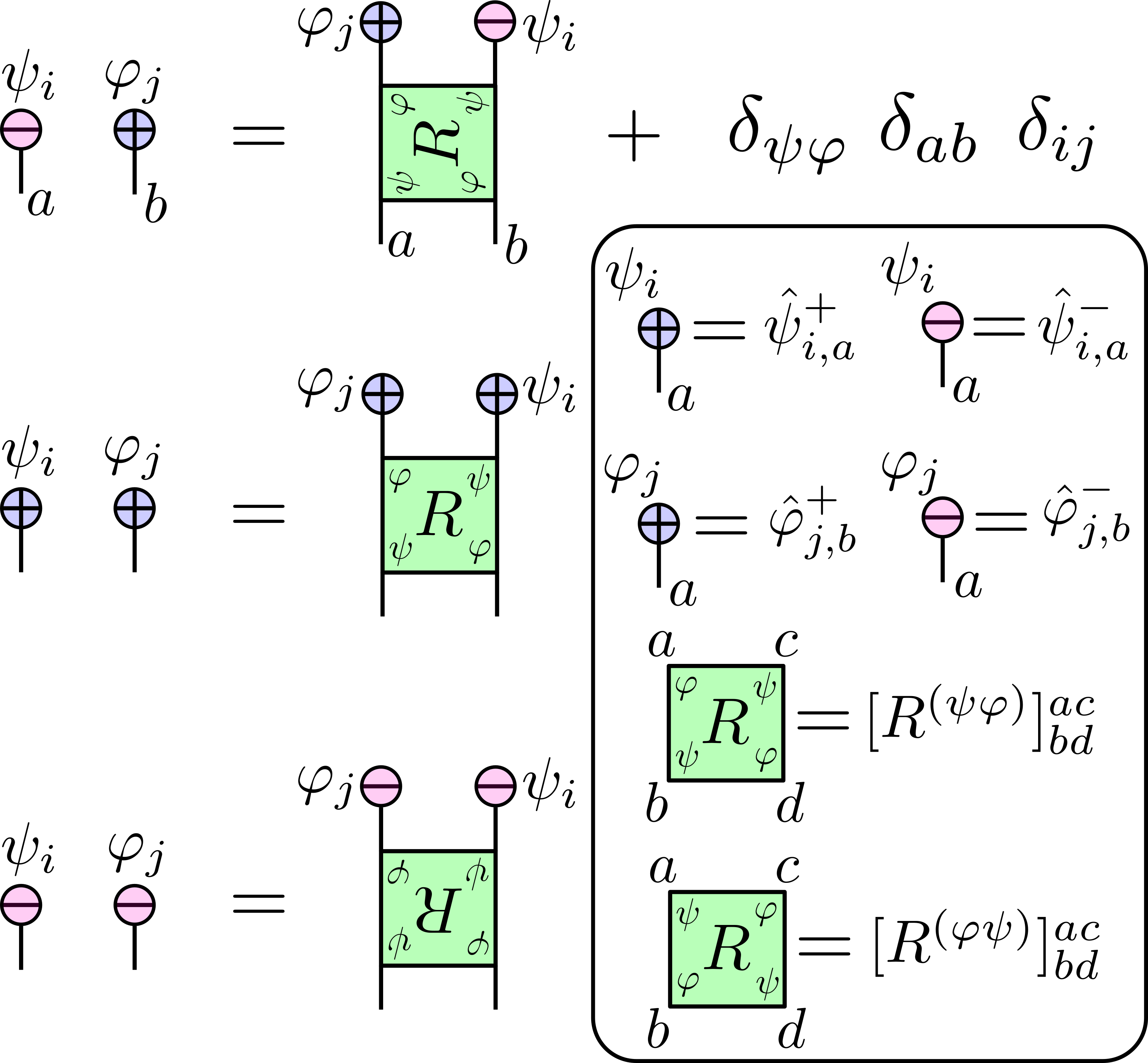}}
	\caption{\label{fig:RMQA-rela} Tensor graphical representation of the quadratic CRs in Eq.~\eqref{eq:fundamental_Rcommu-rela}, using the same convention as in Fig.~\ref{fig:RMQA}.  }
\end{figure}

We finally mention the definition of local observables 
for a system with different particle types involving nontrivial mutual parastatistics. In this more general case, one should replace $\hat{\psi}^\pm_{i,a}$ by $\hat{\Psi}^\pm_{i,A}$~[defined above in Eq.~\eqref{eq:compositeR-def}] in Definition~\ref{def:general_LO}; equivalently, this requires Eq.~\eqref{eq:generalLOcondition} to hold for all particle types $\psi\in\mathcal{T}$.
For example, the local operators $\hat{e}^{}_{ij}$  defined in Eq.~\eqref{eq:def_e_ab} can be generalized to
\begin{equation}\label{eq:def_e_ab_psi}
	\hat{e}^{(\psi)}_{ij}\equiv \sum^{m_\psi}_{a=1} \hat{\psi}^+_{i,a}\hat{\psi}^-_{j,a}
\end{equation} 
for all $\psi\in\mathcal{T}$.  %
Indeed, $\hat{e}^{(\psi)}_{ij}$ is a special case of the more general quadratic local operator $\hat{e}^{\kappa}_{ij}$ in Eq.~\eqref{eq:def_e_ijkappa}, if we use the composite  $\mathbf{R}$~[defined in Eq.~\eqref{eq:compositeR-def}] to treat a system of $R$-paraparticles, the connection can be seen as follows.
For any particle type $\psi\in\mathcal{T}$, we
define a matrix $p_\psi$, called the internal space projector to $\psi$, as $(p_\psi)_{BC}= \delta_{\psi\varphi}\delta_{\psi\chi}\delta_{bc}$, where we use the notation $B=(\varphi,b),~C=(\chi,c)$ introduced above Eq.~\eqref{eq:compositeR-def}. It is then straightforward to show that $\kappa=p_\psi$ satisfies Eq.~\eqref{eq:kappaDC}, and in this case $
\hat{e}^{\kappa}_{ij}$ in Eq.~\eqref{eq:def_e_ijkappa} is the same as $\hat{e}^{(\psi)}_{ij}$ in Eq.~\eqref{eq:def_e_ab_psi}. %

\subsection{Weak equivalence between $R$-paraparticles}\label{sec:weak-equivalence}
In Sec.~\ref{sec:YBE}, we defined two $R$-matrices to be equivalent if they are related by a basis transformation in the internal space, see Eq.\eqref{eq:basistransformRmat}. In this section we introduce a weaker equivalence relation between $R$-matrices, which 
helps us distinguish between two fundamentally different classes of $R$-paraparticles, and understand the boundary between $R$-paraparticles and ordinary fermions and bosons, their realization in higher dimensional topological phases, and the relation to no-go theorems.
\begin{definition}
	Let $R$ and $R'$ be %
    $R$-matrices with the same quantum dimension $m$. %
    An MPO intertwiner~(or a morphism) from $R$ to $R'$ is four-index tensor 
	$W^{pq}_{ab}=
	\scalebox{1}{
		\begin{tikzpicture}[baseline={([yshift=-.8ex]current bounding box.center)}, scale=.8]
			\umatr{0}{0}{W}{}
			\paraindices{0}{0}{a}{b}
			\quantumindices{0}{0}{p}{q}
		\end{tikzpicture}
	}
	$ with $1\leq a,b\leq m$ and $1\leq p,q\leq d_W$~(where $d_W\in\mathbb{Z}_{\geq 1}$), satisfying   
	\begin{equation}\label{eq:MPOintertwinerR}
		\begin{tikzpicture}[baseline={([yshift=.4ex]current bounding box.center)}, scale=.8]
			\umatrix{0}{0}{W}{}
			\umatrix{1}{0}{W}{}
			\Rmatrix{0.5}{-1}{R}
		\end{tikzpicture}= 
		\begin{tikzpicture}[baseline={([yshift=.4ex]current bounding box.center)}, scale=.8]
			\Rmatrix{0.5}{1}{R'}
			\umatrix{0}{0}{W}{}
			\umatrix{1}{0}{W}{}
		\end{tikzpicture}.	
	\end{equation}
	We denote Eq.~\eqref{eq:MPOintertwinerR} by $R\xrightarrow{W} R'$. 
	We say that two unitary $R$-matrices $R$ and $R'$ are weakly equivalent if there exists a dual unitary tensor $W$ such that $R\xrightarrow{W} R'$, and we denote this weak equivalence relation by $R\sim_W R'$, or simply $R\sim R'$.
\end{definition}
Note that if there exists an invertible MPO intertwiner with $d_W=1$, then the two $R$-matrices are equivalent $R\cong R'$, as previously defined in Eq.\eqref{eq:basistransformRmat}.

It is straightforward to verify the following:
\begin{eqnarray}
	R\xrightarrow{W} R'&\Leftrightarrow&  R'\xrightarrow{W^{-1}} R\nonumber\\
	R_1\xrightarrow{W} R_2,~R_2\xrightarrow{V} R_3&\Rightarrow&  R_1\xrightarrow{W\boxtimes V} R_3,
\end{eqnarray}
where $(W\boxtimes V)_{ab}=\sum_c W_{ac}\otimes V_{cb}$. Therefore, $\sim$ indeed consistently defines an equivalence relation in the mathematical sense.

In particular, if a unitary $R$-matrix satisfies $R\sim\pm X$, we say that the $R$-paraparticle is weakly equivalent to an ordinary boson~$(+)$ or fermion~$(-)$, 
and we define $D_R=\min\{d_W| R\sim_{W} \pm X\}$. %
In Ref.~\cite{wang2025secret} it is shown that any $R$-matrix describing a simple particle type $\psi$ in a symmetric fusion category $\calC$ is 
weakly-equivalent to the trivial $R$-matrix $R=\theta X$, where $\theta=\pm 1$ is the topological twist factor of $\psi$. 
Intuitively, $D_R$ measures the ``distance'' of a given type of $R$-paraparticle from ordinary fermions and bosons, and we will see later in Sec.~\ref{sec:upperboundImax} that $D_R$ sets an upper bound on the information transfer capability of the $R$-paraparticle. The value of $D_R$ for our primary examples of $R$-matrices are given in Tab.~\ref{tab:Hilbert_series}. 

It is straightforward to show~(see Proposition~\ref{prop:weakeqvHseries} below) that weak equivalence preserves the exclusion statistics of $R$-matrices, i.e., if $R\sim R'$, then $z_R(x)=z_{R'}(x)$, where $z_R(x)$ is the Hilbert series of $R$. Therefore, $R$-matrices that have nontrivial exclusion statistics, such as $R=-\mathds{1}$ with $m\geq 2$, are not weakly equivalent to trivial $R$-matrices of the form $\pm X$, and are therefore beyond the description of SFCs~[they are generally described by (possibly non-rigid) $\C$-linear symmetric monoidal categories with infinitely many isomorphic classes of simple objects]. 

In part II of this series, we will discuss how weak equivalence of $R$-paraparticles is physically implemented in condensed matter realizations of $R$-paraparticles, i.e., how two different models realizing weakly equivalent $R$-paraparticles are related at the microscopic level. In short, 
in 1D solvable spin model realizations, weak equivalence can be implemented by an MPO intertwiner of the spin chain Hamiltonian, or alternatively by adding auxiliary sites accompanied with a finite depth quantum circuit transformation. For realizations of $R$-paraparticles in higher dimensional topological phases, weak equivalence is naturally implemented by an invertible domain wall in the system. 

\begin{proposition}\label{prop:weakeqvHseries}
	Let $\rho_{R,n}$ denote the representation of the symmetric group $S_n$ generated by the $R$-matrix, as defined in Sec.~\ref{sec:YBE}. If two unitary $R$-matrices $R_1$ and $R_2$ are weakly equivalent, then we have\\
	(1) $\rho_{R_1,n}$ and $\rho_{R_2,n}$ are equivalent representations of $S_n$ for any positive integer $n$. In particular, their characters are equal, i.e., $\chi_{R_1,n}(\tau)=\chi_{R_2,n}(\tau),~\forall \tau\in S_n$;\\
	(2) they have the same Hilbert series: $z_{R_1}(x)=z_{R_2}(x)$.
\end{proposition}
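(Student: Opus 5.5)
The plan is to build, from the MPO intertwiner $W$, an explicit intertwiner between $\rho_{R_1,n}\otimes\mathds{1}$ and $\rho_{R_2,n}\otimes\mathds{1}$, and then remove the auxiliary factor using characters.

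\textbf{Constructing the $n$-site operator.} Given $R_1\xrightarrow{W}R_2$ with $W$ dual unitary, we form the $n$-site MPO
\begin{equation*}
\mathbb{W}_n=\sum_{a_1\ldots a_n,b_1\ldots b_n}(W_{a_1b_1}W_{a_2b_2}\cdots W_{a_nb_n})\,\ket{a_1\ldots a_n}\bra{b_1\ldots b_n},
\end{equation*}
with $W_{ab}$ viewed as a $d_W\times d_W$ matrix on the virtual space. Reading the virtual indices as an open bond, $\mathbb{W}_n$ acts on $\mathfrak{I}^{\otimes n}\otimes\C^{d_W}$.

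\textbf{Step 1: the intertwining relation.} Eq.~\eqref{eq:MPOintertwinerR}, applied locally at sites $j,j+1$, gives
\begin{equation*}
\mathbb{W}_n\,(R_1)_{j,j+1}=(R_2)_{j,j+1}\,\mathbb{W}_n
\end{equation*}
for every $j$. It is cleaner to drop the trace and keep the virtual bond open, so that the identity holds as operators on $\mathfrak{I}^{\otimes n}\otimes\C^{d_W}$ (with identity on the virtual factor). Since the $(R_i)_{j,j+1}$ generate $\rho_{R_i,n}$, this yields
\begin{equation*}
\mathbb{W}_n\,(\rho_{R_1,n}(\tau)\otimes\mathds{1}_{d_W})=(\rho_{R_2,n}(\tau)\otimes\mathds{1}_{d_W})\,\mathbb{W}_n\quad\text{for all }\tau\in S_n .
\end{equation*}

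\textbf{Step 2: invertibility (the main obstacle).} I need the open-bond operator $\mathbb{W}_n$ to be invertible. This is where the dual unitarity of $W$ enters: unitarity of $W$ in the horizontal grouping should make the sequential product $\mathbb{W}_n$ a unitary. Indeed, regarded as a map $(a,q)\to(b,p)$, or in whichever grouping is appropriate, $\mathbb{W}_n$ is a staircase product of copies of $W$, each acting unitarily on one site together with the bond. A product of unitaries is unitary, so $\mathbb{W}_n$ is unitary, and I will use this as the invertibility argument. If instead the grouping makes $\mathbb{W}_n$ an MPO with a traced bond, I would fall back on $W^{-1}$ from the stated composition rule $R'\xrightarrow{W^{-1}}R$, and show that $W\boxtimes W^{-1}$ acts as the identity on the physical legs.

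\textbf{Step 3: the conclusions.} Given invertibility, $\rho_{R_1,n}\otimes\mathds{1}_{d_W}\cong\rho_{R_2,n}\otimes\mathds{1}_{d_W}$. Taking characters gives $d_W\chi_{R_1,n}=d_W\chi_{R_2,n}$, hence $\chi_{R_1,n}=\chi_{R_2,n}$. Since representations of the finite group $S_n$ are determined by their characters, $\rho_{R_1,n}\cong\rho_{R_2,n}$, which is part (1). Part (2) then follows from Proposition~\ref{prop:dn-char-rel}: $d_n$ is the average of the character, so it agrees for $R_1$ and $R_2$ for every $n$, and therefore $z_{R_1}(x)=z_{R_2}(x)$.
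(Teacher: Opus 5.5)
Your proposal is correct and takes essentially the same route as the paper. The paper's proof is just the observation that the open-bond MPO built from the dual-unitary $W$ intertwines $\rho_{R_1,n}\otimes\id_W$ with $\id_W\otimes\rho_{R_2,n}$ on $\mathfrak{I}^{\otimes n}\otimes\C^{d_W}$, followed by Eq.~\eqref{eq:dn-char-rel}. Your staircase-unitarity argument and the cancellation of the factor $d_W$ via characters make explicit two steps the paper leaves implicit.

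The grouping you state in Step 2 does not match your own matrix-product convention: there each $W$ acts as $(b,q)\to(a,p)$, while your $(a,q)\to(b,p)$ corresponds to the opposite orientation of the open bond. This is harmless, because dual unitarity makes $W$ unitary in both diagonal groupings, so the argument goes through either way and the fallback is unnecessary.
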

\begin{proof}
	(1) is proved by noticing that a dual-unitary MPO intertwiner $W$ defines an isomorphism between the representations $\rho_{R_1,n}\otimes \id_W$ and $\id_W\otimes \rho_{R_2,n} $ on the tensor product space $V_m^{\otimes n}\otimes V_W$, where $V_W$ is the vector space corresponding to the horizontal indices of $W$ with $\dim(V_W)=d_W$. (2) follows from (1) since the Hilbert series $z_R(x)$ is completely determined by the character $\chi_{R,n}$ via Eq.~\eqref{eq:dn-char-rel}. 
\end{proof}

\section{The  exchange statistics of paraparticles}\label{sec:nontrivialstatistics}
In Sec.~\ref{sec:second_quantization} we established the second quantization formulation of parastatistics based on the $R$-matrix CRs in Eq.~\eqref{eq:fundamental_Rcommu}, and briefly explained how this formalism connects to the first quantization description of exchange statistics in Sec.~\ref{sec:intro}. In this section we present a more illuminating derivation of exchange statistics in the second quantization formalism. Specifically, we provide two different viewpoints to interpret parastatistics in our theory--a wavefunction interpretation~[Sec.~\ref{sec:relation_first_quantization}] and a physical~(operational) interpretation~[Sec.~\ref{sec:physical_interpretation}]. Then in Sec.~\ref{sec:observe_parastatistics} we show how to experimentally observe parastatistics based on the result of Sec.~\ref{sec:physical_interpretation}. Finally, in Sec.~\ref{sec:QI_statistics} we present some physical consequences and potential applications of parastatistics from the viewpoint of quantum information theory. 

\subsection{The wavefunction interpretation}\label{sec:relation_first_quantization}
In the following we provide an alternative viewpoint to understand parastatistics as a symmetry property of the many particle wavefunction. 
To begin, we note that an arbitrary $n$-particle state in the Fock space can always be represented as
\begin{equation}\label{eq:Psi_state_full}
	|\Psi\rangle=\frac{1}{\sqrt{n!}}\sum_{\substack{A,x_1,\ldots,x_n}}\Psi^{A}(x_1,\ldots,x_n)\hat{\psi}^+_{x_1,a_1}\ldots\hat{\psi}^+_{x_n,a_n}|0\rangle,
\end{equation}
where $\Psi^{A}(x_1,\ldots,x_n)$ is called the many particle wavefunction and we use $A=(a_1,a_2,\ldots,a_n)$ to collectively denote the internal states of the paraparticles. Due to the $R$-CRs between the creation operators $\{\hat{\psi}^+_{x,a}\}$ in Eq.~\eqref{eq:fundamental_Rcommu},  we can always assume without loss of generality that the many particle wavefunction satisfies the following symmetrization property
\begin{equation}
	\Psi^{A}( x_{\bar{\sigma}(1)},\ldots,x_{\bar{\sigma} (n)})=\sum_{A'}\rho_{AA'}(\sigma)\Psi^{A'}(x_1,\ldots,x_n),
\end{equation}
for any $\sigma\in S_n$, where $\bar{\sigma}=\sigma^{-1}$, and $\rho$ is the representation of $S_n$ realized by the $R$-matrix, as defined in Sec.~\ref{sec:YBE}. 
For example, in the case $n=3$ and $\sigma=(12)$, we have
\begin{equation}\label{eq:wavefuntion_exchange_Rmat-wf}
	\Psi^{a_1 a_2 a_3}(x_2,x_1,x_3)=\sum_{b_1,b_2} R^{a_1a_2}_{b_1 b_2 } \Psi^{b_1 b_2 a_3}(x_1,x_2,x_3).
\end{equation} 
Indeed, an unsymmetrized wavefunction $\tilde{\Psi}$ can always be symmetrized  in the following way~\cite{wang2023para}
\begin{equation}
\Psi^{A}(x_1,\ldots,x_n)=\frac{1}{n!}\sum_{A',\sigma\in S_n}\rho_{AA'}(\sigma)\tilde{\Psi}^{A'}( x_{\sigma(1)},\ldots,x_{\sigma (n)})
\end{equation}
This symmetrization does not change the overall quantum state $|\Psi\rangle$ due to the second line in Eq.~\eqref{eq:fundamental_Rcommu}, and it is straightforward to verify that the symmetrized wavefunction $\Psi$ satisfies Eq.~\eqref{eq:wavefuntion_exchange_Rmat-wf}. Eq.~\eqref{eq:wavefuntion_exchange_Rmat-wf} gives the wavefunction interpretation of parastatistics: when one swaps the arguments of the many particle wavefunction, its internal label $A$ transforms in the representation $\rho$ of the symmetric group $S_n$ generated by the $R$-matrix. For bosons, $\rho$ is  
the trivial representation, for fermions, it is the sign representation, while parastatistics correspond to more general higher dimensional representations.

\subsection{The physical interpretation}\label{sec:physical_interpretation}
In the following we derive the physical exchange statistics of paraparticles starting from the second quantization algebra in Eq.~\eqref{eq:fundamental_Rcommu}.
To begin, we introduce a convenient notation to label a general quantum state with $n$ paraparticles
\begin{equation}\label{eq:2ndQ_basis_n_particle_space}
	\ket{0;i_1^{a_1} i_2^{a_2} \ldots i_n^{a_n}}=\hat{\psi}^+_{i_1,a_1}\hat{\psi}^+_{i_2,a_2}\ldots\hat{\psi}^+_{i_n,a_n}\ket{0},
\end{equation}
where  $i_1,i_2,\ldots, i_n$ are the positions of the paraparticles, and in this section we assume that they are mutually different for simplicity. 
The numbers $a_1,a_2,\ldots,a_n\in \{1,2,\ldots,m\}$ label the internal states of the paraparticles, as before. When the positions $i_1,\ldots, i_n$ are fixed, there are $m^n$ linearly independent internal states, spanning a $m^n$-dimensional subspace which we denote as $\Hil_{i_1i_2 \ldots i_n}$.

\begin{figure}
\centering
\includegraphics[width=.48\linewidth]{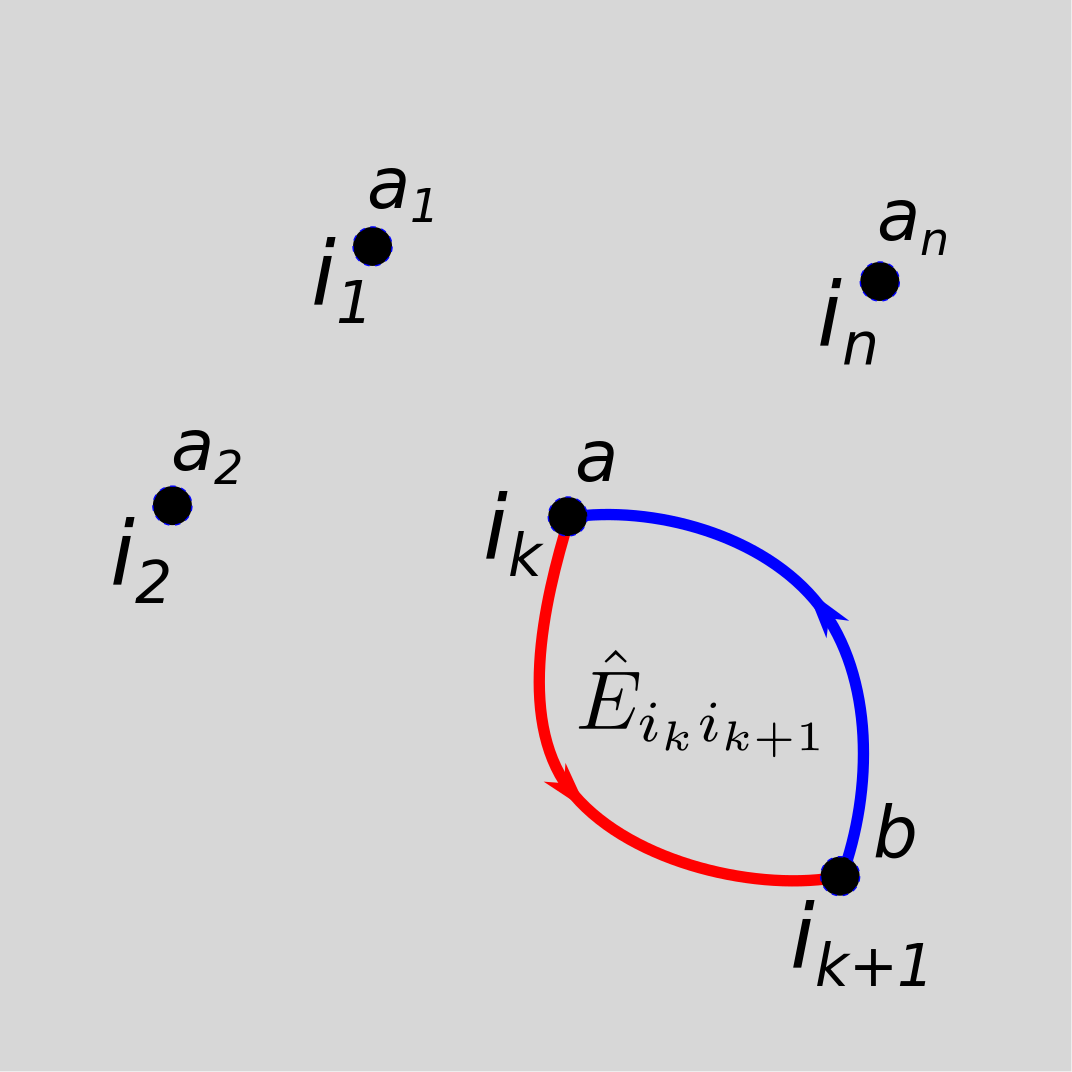}
\includegraphics[width=.48\linewidth]{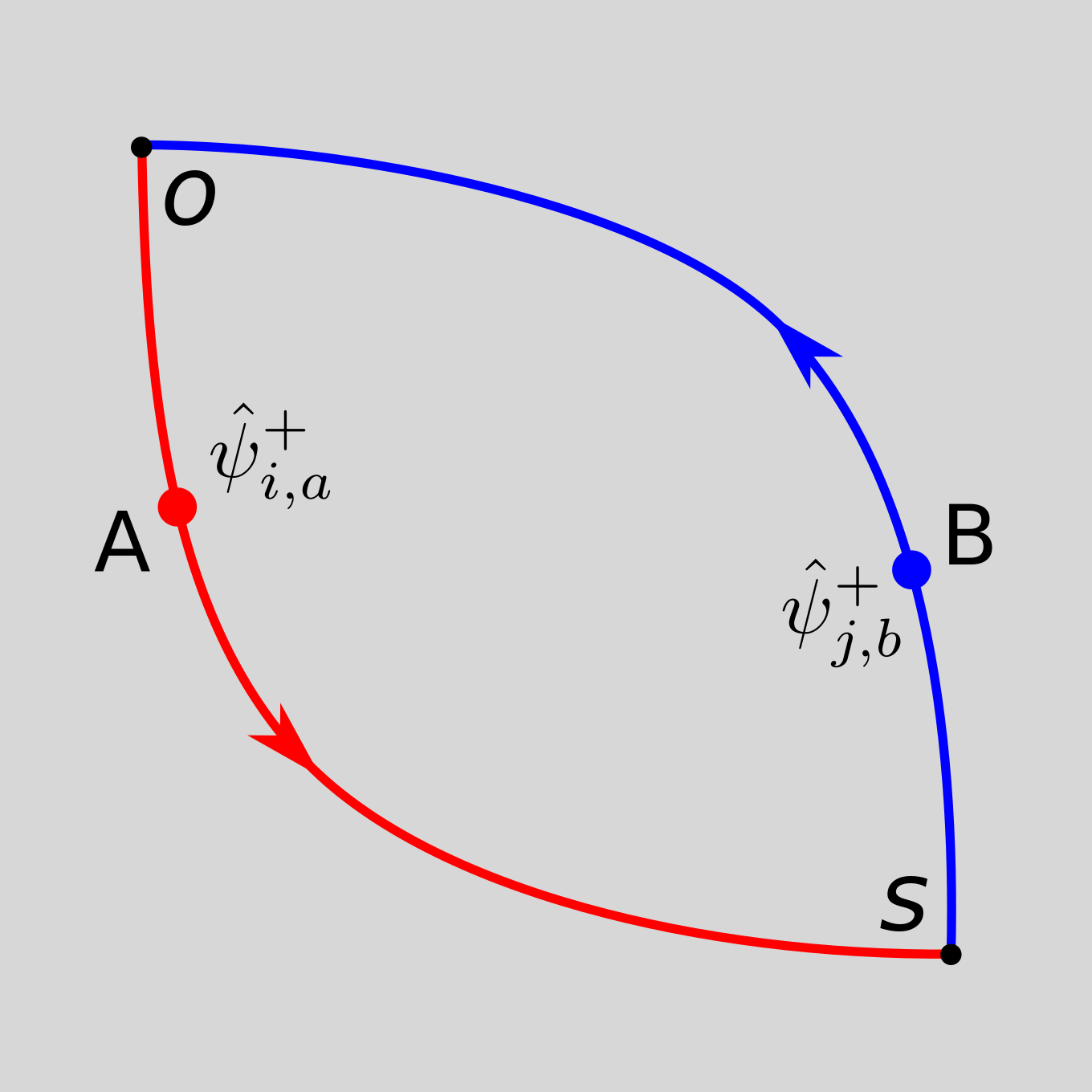}
\caption{\label{fig:physical_exchange} (left) Deriving the exchange statistics of paraparticles. $\hat{E}_{i_k i_{k+1}}$ is a unitary operator that exchanges the position of the two paraparticles at positions $i_k $ and $i_{k+1}$ along the designated paths; (right) Illustrating the secure communication protocol between Alice~(A) and Bob~(B), as discussed in Sec.~\ref{sec:secret_communication}.}
\end{figure}
Let $\hat{E}_{ij}$ be a unitary operator that induces the exchange of the positions of the paraparticles at $i$ and $j$:
\begin{equation}\label{eq:Uij_action}
	\hat{E}_{ij}\hat{\psi}^+_{i,a}\hat{E}_{ij}^\dagger=\hat{\psi}^+_{j,a},\quad	\hat{E}_{ij}\hat{\psi}^+_{j,a}\hat{E}_{ij}^\dagger=\hat{\psi}^+_{i,a},\quad \forall a.
\end{equation}
Note that such an operator can always be decomposed as a product of elementary local exchange operators of the form $\hat{E}_{kl}=e^{i\frac{\pi}{2} (\hat{e}_{kl}+\hat{e}_{lk})}$ along two paths connecting $i$ and $j$, as shown in Fig.~\ref{fig:physical_exchange}, and we assume that the two chosen paths do not go through any other particles.  
$\hat{E}_{ij}$ acts on the $n$ particle state Eq.~\eqref{eq:2ndQ_basis_n_particle_space} as 
\begin{eqnarray}\label{eq:braiding_derivation}
	\hat{E}_{ij}\ket{0;\ldots i^{a}j^{b}\ldots }&\equiv& \hat{E}_{ij}(\ldots)\hat{\psi}^+_{i,a}\hat{\psi}^+_{j,b}(\ldots)\ket{0}\nonumber\\
	&=&(\ldots)(\hat{E}_{ij}\hat{\psi}^+_{i,a}\hat{E}_{ij}^\dagger)(\hat{E}_{ij}\hat{\psi}^+_{j,b}\hat{E}_{ij}^\dagger)(\ldots)\ket{0}\nonumber\\
	&=&(\ldots)\hat{\psi}^+_{j,a}\hat{\psi}^+_{i,b}(\ldots)\ket{0}\nonumber\\
		&=&\sum_{a',b'}R^{b'a'}_{ab}(\ldots)\hat{\psi}^+_{i,b'}\hat{\psi}^+_{j,a'}(\ldots)\ket{0}\nonumber\\
	&=&\sum_{a',b'}R^{b'a'}_{ab}|0;\ldots i^{b'}j^{a'}\ldots \rangle,
\end{eqnarray}
where $\ldots$ collectively denotes other labels that are unaffected by the exchange, in the second line we applied Eq.~\eqref{eq:Uij_action} and the invariance of $\ket{0}$ under $\hat{E}_{ij}$, and in the third line we used the fundamental CR~\eqref{eq:fundamental_Rcommu}  between $\hat{\psi}^+_{j,a}$ and $\hat{\psi}^+_{i,b}$. 
Eq.~\eqref{eq:braiding_derivation} defines the physical meaning of the $R$-matrix as the unitary rotation of the $n$-particle state space $\Hil_{i_1i_2 \ldots i_n}$ that results from physically exchanging paraparticles. 
We emphasize, however, that experimentally preparing and measuring the state in Eq.~\eqref{eq:2ndQ_basis_n_particle_space} is challenging due to the nature of paraparticles being topological excitations, making it challenging to physically observe the effect of parastatistics.
Later in Sec.~\ref{sec:observe_parastatistics} we show a way to do this using a sequence of local operations and measurements, which shows a striking difference from ordinary fermions and bosons, leading to %
a potential application in quantum information theory~(Sec.~\ref{sec:QI_statistics}). %

We also remark that while the above derivation works in any spatial dimension, in the special case of 1D, exchange statistics is not uniquely defined~[there are different ways to exchange particles other than described in Eq.~\eqref{eq:Uij_action}], since exchanging two particles in 1D necessarily involve colliding them in the process, and the result generally depends on the microscopic details of the system and the exchange process, and is not robust against perturbations. By contrast, for the realization of paraparticles in  condensed matter systems in 2D or 3D~\cite{wang2023para,wang2024parastatistics,wang2025secret}, exchange statistics is uniquely defined and robust against local perturbations. 

\subsection{Observing the exchange statistics}\label{sec:observe_parastatistics}
In the physical interpretation of $R$-parastatistics, 
the $R$-matrix defines the unitary rotation of the internal states of the paraparticles when one swaps the position of two paraparticles. How can we experimentally observe this unitary rotation? 
This requires preparing the state used in the procedure~(which requires access to otherwise ``hidden'' internal index) and then reading out the otherwise-hidden internal index. At first glance, this seems to be forbidden by the local-indistinguishability criterion we established back in Sec.~\ref{sec:indistcrit}, where we stated explicitly below Eq.~\eqref{eq:EijPsitransform} that the hidden internal index of an $R$-parapaticle cannot be accessed by any local measurement. Yet there is still a way out. 

It turns out that, in condensed matter realizations of $R$-paraparticles, if an $R$-paraparticle gets close to a special type of point-like defect in the system, then it becomes possible to access its internal state using local operations. This does not contradict the local indistinguishability criterion in Sec.~\ref{sec:indistcrit}, as the paraparticles remain  locally indistinguishable in the bulk of the matter.  Such defects are called ``black defects'' in Ref.~\cite{wang2025secret} and are described and studied using the general theory of topological defects~\cite{Kitaev2012gappedboundary,kong2014braided,Barkeshli2019SETclassification} in 2D and 3D topological phases of matter. 
In the second part of this series, we will provide a systematic way to study black defects within the second quantization framework of $R$-parastatistics and present physical realizations of these black defects in the exactly solvable models with emergent $R$-paraparticle; for now, we only describe the relevant properties of black defects at a high level by introducing an additional assumption: %
\begin{assumption}\label{assump:observe_internal}
There exist %
two special points  $\oA,\oB$ in the physical system where one can measure the internal state of a paraparticle. More precisely, 
there exist observables $\hat{O}_{\oA}$, $\hat{O}'_{\oB}$ localized around $\oA,\oB$, respectively, satisfying
\begin{eqnarray}\label{eq:localmeasurementatcorner_0}
	\hat{O}_{\oA}\ket{0;i_1^{a_1} \ldots i_n^{a_n}}&=&a_1 \ket{0; i_1^{a_1} \ldots i_n^{a_n}},\text{ if } i_1=\oA,\nonumber\\
	\hat{O}'_{\oB}\ket{0;i_1^{a_1} \ldots i_n^{a_n}}&=&a_n \ket{0;i_1^{a_1} \ldots i_n^{a_n}},\text{ if } i_n=\oB.
\end{eqnarray}	
Note that if we instead have $i_k=\oA$ for some $k>1$~(or $i_k=\oB$ for some $k<n$), then we need to use the second line in Eq.~\eqref{eq:fundamental_Rcommu} to swap $i_k$ all the way to the front~(back) before applying Eq.~\eqref{eq:localmeasurementatcorner_0} to compute the action of  $\hat{O}_{\oA}$~(or $\hat{O}'_{\oB}$). %
\end{assumption}
The observable $\hat{O}_{\oA}$ in Eq.~\eqref{eq:localmeasurementatcorner_0} can be simply constructed as
\begin{equation}\label{eq:OoA}
\hat{O}_{\oA}=\sum_a a\hat{\psi}^+_{\oA,a}\hat{\psi}^-_{\oA,a}.
\end{equation} 
One can verify that the first line in Eq.~\eqref{eq:localmeasurementatcorner_0} follows from Eq.~\eqref{eq:fundamental_Rcommu}. 
Importantly, for any local observable $\hat{O}_S$ defined in Definition~\ref{def:general_LO} away from $\oA$~(i.e., $\oA\notin S$), we have $[\hat{O}_{\oA},\hat{O}_{S}]=0$~[which follows from Eq.~\eqref{eq:generalLOcondition}], therefore, introducing the ``extra'' local observable $\hat{O}_{\oA}$ only at $\oA$ does not break the locality of the quantum theory. 
The construction of $\hat{O}'_{\oB}$ in the second quantization formulation is trickier and will be presented in part II. %
The conclusion is that it can also be introduced in a way compatible with locality.  
Notice that the existence of the special points  $\oA,\oB$ breaks the translation invariance of the system, however, this is very natural in condensed matter systems with boundary or defects~(impurity). Indeed, in our 2D solvable quantum spin models constructed in Ref.~\cite{wang2023para}, %
$\oA,\oB$ are special points on the boundary, where both $\hat{O}_{\oA}$ and $\hat{O}'_{\oB}$ are strictly local observables in the spin model. 

Assumption~\ref{assump:observe_internal} extends the observable content of the theory beyond that discussed in Sec.~\ref{sec:locality}, and with this 
assumption, it is straightforward to experimentally observe parastatistics. For simplicity, let us consider the exchange of two paraparticles, i.e., the case $n=2$ in Eq.~\eqref{eq:braiding_derivation}, and choose $i=\oA$ and $j=\oB$. 
At $t=0$, right before the exchange, the two particles are located at $\oA,\oB$, and we measure $\hat{O}_{\oA}$ and $\hat{O}'_{\oB}$ to obtain their internal states $a$ and $b$, respectively, and the quantum state collapse into $|0;i^a j^b\rangle$.
After the  exchange process the quantum state of the system evolves to the last line of Eq.~\eqref{eq:braiding_derivation}, and we can measure $\hat{O}_{\oA}$ and $\hat{O}'_{\oB}$ again to see their internal state evolution. For example,
with the set-theoretical $R$-matrix in Eqs.~(\ref{eq:seth-R},\ref{eq:set-thR}), the final state is $-\ket{0;i^{b'}j^{a'}}$, where $(b',a')=r(a,b)$, and the measurement results $a'$ and $b'$ are definite. For example, if we start with $a=b=1$, we end up measuring $b'=4,a'=3$. That is, the internal states of the paraparticles undergo a non-trivial unitary rotation even though the two particles keep distance with each other throughout the whole process. 
Such a nontrivial rotation is present for all the unitary $R$-matrices in Tab.~\ref{tab:Hilbert_series} except Ex.~\ref{ex:decoupled}~(note that to observe this nontrivial rotation in Ex.~\ref{ex:Green}, one needs to measure in a different basis for internal states). 
This is in stark contrast with fermions and bosons, in which case we would measure $a'=a$ and $b'=b$, i.e. the indices are simply carried with the particles without any change.

\subsection{Physical consequence of the exchange statistics--a quantum information viewpoint}\label{sec:QI_statistics}
In this section we discuss some notable physical consequences and potential applications of the exchange statistics of paraparticles from  a quantum information viewpoint, in the form of a novel protocol for secret communication~\cite{wang2024parastatistics,wang2025secret}~[Sec.~\ref{sec:secret_communication}] and entanglement generation~[Sec.~\ref{sec:entanglement_generation}] over long distance that are robust against local noise and eavesdropping when realized in 2D or 3D topologically ordered systems. 

\subsubsection{Secret communication}\label{sec:secret_communication}
It has been shown~\cite{wang2024parastatistics,wang2025secret} that the exchange statistics of paraparticles as described in Secs.~\ref{sec:physical_interpretation} and \ref{sec:observe_parastatistics} leads to a novel secure communication protocol that is physically impossible to be intercepted by local measurements. More precisely, by exploiting  the exchange statistics in Eq.~\eqref{eq:braiding_derivation}, two parties holding paraparticles can communicate a message to each other merely by exchanging their positions and manipulating the internal states of the paraparticles, without ever coming close to each other, and without leaving any trace behind that can be detected by a third party.  To implement this protocol, we need the additional assumption~\ref{assump:observe_internal}~(the presence of black defects~\cite{wang2025secret} at special points $\oA,\oB$) %
that one can perform an arbitrary unitary rotation of the internal state of a paraparticle when it is near $\oA$ or $\oB$. Specifically, we assume that for any $m\times m$ unitary matrix $U\in\mathrm{SU}(m)$, there exist unitary operators $\hat{U}_{\oA}$, $\hat{U}'_{\oB}$ localized around $\oA,\oB$, respectively, satisfying
\begin{eqnarray}\label{eq:localoperationatcorner_0}
	\hat{U}_{\oA}\ket{0;i_1^{a_1} \ldots i_n^{a_n}}&=&\sum_{b_1}U_{ b_1 a_1} \ket{0; i_1^{b_1} \ldots i_n^{a_n}},\text{ if } i_1=\oA,\nonumber\\
	\hat{U}'_{\oB}\ket{0;i_1^{a_1} \ldots i_n^{a_n}}&=&\sum_{b_n}U_{ b_n a_n} \ket{0;i_1^{a_1} \ldots i_n^{b_n}},\text{ if } i_n=\oB.\nonumber\\
\end{eqnarray}	
Notice that analogous to Eq.~\eqref{eq:OoA}, $\hat{U}_{\oA}$ can be implemented by exponentiating local observables of the form
\begin{equation}\label{eq:colorLOatoA}
\hat{F}^{a,b}_{\oA}=\mu\hat{\psi}^+_{\oA,a}\hat{\psi}^-_{\oA,b}+\mathrm{h.c.},\quad \mu\in\C. 
\end{equation} 
As before, we have $[\hat{F}^{a,b}_{\oA},\hat{O}_{S}]=0$ for any local observable $\hat{O}_{S}$ supported on a bounded region $S$ away from $\oA$~(i.e., $\oA\notin S$), therefore, introducing $\hat{F}^{a,b}_{\oA}$ as an extra local observable only at $\oA$ does not break locality. 
The construction of a similar observable $\hat{F}^{\prime a,b}_{\oB}$ at $\oB$ is trickier and will be presented in part II. %

With this additional assumption, we now briefly explain the secure communication protocol. The process is illustrated in the right panel of Fig.~\ref{fig:physical_exchange}. In this protocol, two parties, who we call Alice and Bob, exchange their positions by going along the respective paths shown in Fig.~\ref{fig:physical_exchange}, keeping a large distance from each other throughout the process. If each of them carries a paraparticle and is allowed to perform arbitrary local quantum operations at their respective positions, they can secretly send a message to each other using the following strategy. At the beginning of the exchange process, Alice encodes her message into the internal state $a$ of her paraparticle, and similarly Bob encodes his message into the internal state $b$ of his paraparticle. This can be achieved by a combination of the local measurement in Eq.~\eqref{eq:localmeasurementatcorner_0} and local unitary operation in Eq.~\eqref{eq:localoperationatcorner_0} at $\oA$ and $\oB$, respectively. After this, the state of the whole system is 
$\ket{\Psi(0)}=\ket{0;\oA^a \oB^b}$.  When the exchange is complete, the state evolves to
\begin{eqnarray}\label{eq:Psi3}
	\ket{\Psi(T)}=\ket{0;\oB^a \oA^b}=\sum_{a',b'}R^{b'a'}_{ab}\ket{G;\oA^{b'} \oB^{a'}},
\end{eqnarray}
where we use Eq.~\eqref{eq:braiding_derivation}. Then Alice and Bob measure $\hat{O}'_{\oB}$ and $\hat{O}_{\oA}$ in Eq.~\eqref{eq:localmeasurementatcorner_0} 
to obtain $a'$ and $b'$, respectively. Importantly, as long as $R^{b'a'}_{ab}$ is not of the trivial product form $R^{b'a'}_{ab}\neq p_{a'a}q_{b'b}$, knowing $(a,a')$ allows Alice to extract some information about $b$, and similarly for Bob, allowing them to communicate.

As a concrete example, consider the set-theoretical $R$-matrix in Eq.~\eqref{eq:seth-R}.
With this $R$-matrix, the measurement results $a',b'$ are definite, and in this case,
knowing $(b,b')$ completely determines $a$~(we assume that both players know the $R$-matrix beforehand), and similarly, knowing $(a,a')$ completely determines $b$. For example, if Alice has $(a,a')=(2,3)$, then she searches in the second row of the matrix for the column that has $a'=3$. This turns out to be the fourth column, so she determines that $b=4$ and $b'=1$.  
This allows the players to secretly send two bits of information to each other.

This protocol demonstrates a  clear physical distinction between paraparticles and ordinary fermions or bosons with an internal symmetry, such as color or flavor. The statistics of the latter is described by $R^{b'a'}_{ab}=\pm\delta_{a'a}\delta_{b'b}$. 
With such an $R$-matrix, Alice will simply obtain $a'=a$ and Bob will obtain $b'=b$, which contains no information about each other's number. 

In the condensed matter realization of emergent paraparticles~\cite{wang2023para} that we introduce in Part II, 
this protocol is robust against any kind of eavesdropping via local measurements: during the exchange process, when both Alice and Bob are far away from $\oA$ and $\oB$, an eavesdropper cannot obtain any information about the players' numbers $a$ and $b$ using any local measurements anywhere in the system, even including measurements at $\oA,\oB$ or near the position of Alice and Bob. The reason is that when the paraparticles are far away from $\oA,\oB$, the information $a,b$ is stored non-locally in the system and is topologically protected against local measurements. This is discussed in detail in Refs.~\cite{wang2024parastatistics,wang2025secret}.
We also point out that mutual parastatistics can also be used to carry out this secret communication protocol~[for example, using the mutual $R$-matrix in Eq.~\eqref{eq:RA4Z3-mutual}], using essentially the same strategy mentioned above~(the only difference is that Alice and Bob carry different types of $R$-paraparticles, and the specific encoding and decoding algorithms generally depend on the $R$-matrix).

\subsubsection{Entanglement generation}\label{sec:entanglement_generation}
The protocol described in Sec.~\ref{sec:secret_communication} also provides a robust way to create quantum entanglement over long distance. For example, consider again the set-theoretical $R$-matrix in Eq.~\eqref{eq:seth-R}. If Alice and Bob initialize the internal state of their paraparticles to be $\ket{1}$ and $(\ket{1}+\ket{2}+\ket{3}+\ket{4})/2$, then using Eq.~\eqref{eq:Psi3}, after the exchange, the internal state of the two paraparticles will become the maximally entangled state $(\ket{34}+\ket{21}+\ket{42}+\ket{31})/2$. Importantly,  using a similar argument given in Refs.~\cite{wang2024parastatistics,wang2025secret},  the amount of quantum entanglement that can be generated this way does not decay in the distance between $\oA$ and $\oB$ even when noise is present. 
This again demonstrates 
a clear physical distinction between paraparticles and ordinary fermions or bosons with an internal symmetry~(or more generally $R$-matrices of the trivial product form $R^{b'a'}_{ab}= p_{a'a}q_{b'b}$), where no entanglement can be generated this way.

\subsubsection{Upper bounds on information transfer and entanglement generation}\label{sec:upperboundImax}
We have seen that the secret communication protocol proposed in Sec.~\ref{sec:secret_communication} allows a small amount of information~(not exceeding $2\log m$) to be transferred between the players in one round of exchange, if each party holds one paraparticle. Naively, one might expect that by holding more and more paraparticles, 
the players may be able to transfer an arbitrary amount of information in one round of exchange. Yet this is not the case for any type of $R$-paraparticle currently known to be realizable in 2D or 3D topological phases, all of which are described by symmetric fusion categories and satisfy the following upper bound~(proof will be given in part II)
\begin{equation}\label{eq:parapartilce-infobound}
I_{\max}\leq E_{\max}\leq 2\log(D_R),
\end{equation}
where $D_R$ is the minimal twisting dimension of the $R$-matrix  defined in Sec.~\ref{sec:weak-equivalence}, %
and $I_{\max}$~($E_{\max}$) is the maximal amount of information that can be transferred~(the maximal amount of entanglement that can be generated) in one round of exchange, if each player is allowed to hold an arbitrary number of paraparticles along with a quantum ancilla of any dimension. %

More generally, in any known 2D or 3D topological phases, we have the following conjectured upper bound
\begin{equation}\label{eq:universal_infobound}
I_{\max}(\ket{G})\leq E_{\max}(\ket{G})\leq 2\log(D)\leq S_{\mathrm{top}}(\ket{G}),
\end{equation}
where $I_{\max}(\ket{G})$ is the maximal amount of information that can be transferred in one round of exchange~(subject to the constraint stated at the beginning of Sec.~\ref{sec:secret_communication}), if each player is allowed to hold any number of quasiparticles of any type~(even including non-Abelian anyons in 2D) in the system with gapped ground state $\ket{G}$. 
It is straightforward to show~\cite{wang2025secret} that both $I_{\max}(\ket{G})$ and $E_{\max}(\ket{G})$ are invariants under local unitary transformations~\cite{Chen2010LUT}, i.e., $I_{\max}(\ket{G})=I_{\max}(\hat{U}\ket{G})$ and  $E_{\max}(\ket{G})=E_{\max}(\hat{U}\ket{G})$ for any gapped ground state $\ket{G}$ and local unitary transformation $\hat{U}$, and therefore both quantities only depend on the universal long-distance properties of the underlying phase of matter.  
The number $D$ in Eq.~\eqref{eq:universal_infobound} is the total quantum dimension of all point-like topological quasiparticles in the system, and $S_{\mathrm{top}}$ is the topological entanglement  of $\ket{G}$. Note that %
in any topological phase hosting an $R$-paraparticle, we always have $D_R\leq D$, as $D_R$ is also the total quantum dimension of the symmetric fusion subcategory generated by the $R$-paraparticle. 

These upper bounds have an important implication that the exotic type of $R$-paraparticle in Ex.~\ref{ex:1m} is unlikely to be realizable in 2D or 3D gapped phases of matter, as we argue below. 

\subsubsection{Violation of the maximal information bound and the realizability of beyond-SFC paraparticles}
We now study the quantum information properties  of $R$-paraparticles in Ex.~\ref{ex:1m}, with 
$R=-\mathds{1}_{m\times m}$. 
We will see that this type of paraparticles exhibits a fundamentally different behavior from all those $R$-paraparticles known to be realizable in topological phases, in that it allows the two parties to transfer an unbounded amount of information in a single round of the exchange, thereby violating the upper bounds in Eqs.~\eqref{eq:parapartilce-infobound} and \eqref{eq:universal_infobound}. 

In the following we assume this type of paraparticle to be ``realizable'' in the sense that there exists a gapped phase of matter in 2D or 3D hosting this type of paraparticle, with two far-separated black defects satisfying Assumption~\ref{assump:observe_internal}~(or more precisely, all the axioms of emergent parastatistics formulated in Ref.~\cite{wang2025secret} with $R=-\mathds{1}_{m\times m}$). 

The protocol we give here is similar to the one given in Sec.~\ref{sec:secret_communication}, but this time each player holds $n$ paraparticles. Specifically, at the beginning of the exchange process, Alice encodes $n$ dits of information $a_1a_2\ldots a_n$ into the internal states of her paraparticles, and similarly for Bob, using a combination of the local measurement in Eq.~\eqref{eq:localmeasurementatcorner_0} and local unitary operation in Eq.~\eqref{eq:localoperationatcorner_0} at $\oA$ and $\oB$, respectively. 
The state of the whole system becomes
\begin{equation}
\ket{\Psi(0)}=\ket{0;i_1^{a_1}i_2^{a_2}\ldots i_n^{a_n}j_1^{b_1}j_2^{b_2}\ldots j_n^{b_n}}.
\end{equation}
Here we assume that Alice's paraparticles are located at $n$ different points around Alice's position $i_1,i_2, \ldots, i_n$, to avoid subtleties related to exclusion statistics, and similarly for Bob.  
After the exchange is complete, let us assume that Alice's paraparticles are  at $\tilde{i}_1,\tilde{i}_2,\ldots,\tilde{i}_n$, close to $\oB$, and similarly for Bob. The state evolves to
\begin{eqnarray}\label{eq:Psi3-Ex3}
\ket{\Psi(T)}&=&\ket{0;\tilde{i}_1^{a_1}\tilde{i}_2^{a_2}\ldots \tilde{i}_n^{a_n}\tilde{j}_1^{b_1}\tilde{j}_2^{b_2}\ldots \tilde{j}_n^{b_n}}\nonumber\\
	&=&(-1)^{n^2}\ket{0;\tilde{j}_1^{a_1}\tilde{j}_2^{a_2}\ldots \tilde{j}_n^{a_n}\tilde{i}_1^{b_1}\tilde{i}_2^{b_2}\ldots \tilde{i}_n^{b_n}}
\end{eqnarray}
where we have used the second line of Eq.~\eqref{eq:fundamental_Rcommu} $n^2$ times in total. Since $\tilde{j}_1,\tilde{j}_2,\ldots,\tilde{j}_n$ are now  close to $\oA$, Bob can now measure local observables of the form in Eq.~\eqref{eq:colorLOatoA} to obtain $a_1,\ldots,a_n$, and similarly Alice can measure $b_1,\ldots, b_n$ at $\oB$. This allows the players to transfer an unbounded amount of information in a single round of the exchange process. By contrast, we have seen in Eq.~\eqref{eq:universal_infobound} that for any quasiparticle in known 2D/3D topological phases described by braided/symmetric fusion categories, the total amount of information that can be transferred in a single round of exchange is bounded from above even if the players can carry arbitrary number of quasiparticles.  
Moreover, the violation of the last inequality in Eq.~\eqref{eq:universal_infobound} suggests that this type of paraparticles are unlikely realizable in 2D or 3D gapped phases as it implies an unbounded topological entanglement entropy. Still, it is hard to give a fully rigorous proof to rule them out, since to our knowledge, there is currently no proof that the $S_{\mathrm{top}}$ must be upper bounded in any 2D and 3D gapped ground state. Notice that an unbounded   $S_{\mathrm{top}}$ does not necessarily contradict the area-law conjecture: even if one can prove an area-law inequality $c_1|\partial A|+c_1'\leq S(A)\leq c_2|\partial A|+c_2'$, we still cannot obtain an upper bound on $S_{\mathrm{top}}$ unless we have an ultra-strong inequality with $c_1=c_2$, due to having negative terms in $S_{\mathrm{top}}$. Furthermore, the realization of this type of paraparticles in much-less-understood 2D or 3D gapless phases is also possible.

\section{Pair creation and antiparticles}\label{sec:pair-creation-AP}
In Sec.~\ref{sec:locality} we provided a general definition of local observables~(Definition~\ref{def:general_LO}) in an $R$-paraparticle theory and explicitly constructed a basic family of local observables generated by the contracted bilinear operators $\hat{e}_{ij}$ in Eq.~\eqref{eq:def_e_ab}.
Hamiltonians and local observables constructed from  $\{\hat{e}_{ij}\}$ all %
have a global $U(1)$ symmetry $\hat{\psi}^\pm_{i,a}\to e^{\pm i\phi}\hat{\psi}^\pm_{i,a} $, corresponding to conservation of total particle number, $[\hat{n},\hat{e}_{ij}]=0$. 
In this section, we will show that, for a  certain class of $R$-matrices, we can construct 
a family of local operators that breaks this $U(1)$ particle number symmetry and create pairs of paraparticles from the vacuum. Such pair creation operators naturally define the notion of antiparticles in an $R$-paraparticle theory, which is crucial for several physical reasons. 
First, particle-antiparticle pair creation is crucial for constructing relativistic quantum field theories~(QFTs) of $R$-paraparticles, since all free particle QFTs we know have pair creation, and
there are theorems saying that for relativistic QFTs satisfying mild conditions, every particle has an antiparticle~\cite{Fredenhagen1981Antiparticles,WightmanPCTbook}. 
Second, the pair creation and annihilation operators we introduce are $R$-parastatistical analogs of the $\hat{\psi}^\dagger_i\hat{\psi}^\dagger_j$ terms found in the Bogoliubov mean-field theory of fermionic superconductors~\cite{Bogoljubov1958,Valatin1958}, and they preserve the solvability of bilinear Hamiltonians, as we show explicitly in Sec.~\ref{sec:u1_breaking_solu}. This naturally leads to a Bogoliubov-type mean-field theory of $R$-paraparticle superconductors, which can be realized in a new family of exactly solvable quantum spin systems generalizing the ones constructed in Ref.~\cite{wang2023para}~(to be presented in part II). %
This also gives an $R$-parastatistical generalization of Majorana fermions, as we present in Sec.~\ref{sec:gen_Majorana}.  
Third, the classification of unitary $R$-matrices with pair creation naturally lead to several important physical  concepts of $R$-paraparticles, such as topological twist factor and the Frobenius-Schur indicator, which are gauge-invariant, intrinsic topological properties, and directly connect to corresponding concepts in TQFT and topological order literature~\cite{wittenQuantumFieldTheory1989,Reshetikhin-Turaev1991,kitaev2006anyons,kong2014braided,simon2020topological_protobook}. 

In this following, we begin by defining pair creation for self-dual $R$-paraparticles in Sec.~\ref{sec:U1breaking}, and then present the more general case in Sec.~\ref{sec:pair-creation-nonSD}.

\subsection{Pair creation for self-dual paraparticles}\label{sec:U1breaking}
We begin in Sec.~\ref{sec:condLOpaircreation} by constructing pair creation and annihilation operators and define the relevant class of $R$-matrices compatible with local pair creation. 
Then in Sec.~\ref{sec:LApaircreationbilin} we study the Lie algebra structure generated by all the bilinear operators including pair creation and annihilation, which is important for solving bilinear Hamiltonians in Sec.~\ref{sec:u1_breaking_solu}. In Sec.~\ref{sec:TPSpin-FSI-PHS} we present a classification theorem of unitary $R$-matrices compatible with pair creation, and discuss relevant physical concepts such as topological twist factor, Frobenius-Schur indicator, and particle-hole symmetry. 
Finally in Sec.~\ref{sec:gen_Majorana} we present an $R$-parastatistical generalization of Majorana fermions, which is sometimes more convenient for describing self-dual $R$-paraparticles.

\subsubsection{Constructing local pair creation operators}\label{sec:condLOpaircreation}
In general, we call an $R$-paraparticle $\psi$ self-dual if there exists a local operator $\hat{O}$~(in the sense of Definition~\ref{def:general_LO}) that creates a pair of $\psi$ from the vacuum $\ket{0}$. 
More concretely, we introduce the following ansatz for pair annihilation and creation operators~\footnote{While there can also exist non-bilinear local pair creation operators, it is straightforward to show, using the normal ordering of paraparticle operators, that if an $R$-paraparticle theory admits any local pair creation operator, then it must also admit bilinear pair creation operators. Since our main purpose here is to understand which subclass of $R$-matrices admit local pair creation, there is no loss of generality by restricting to the simpler class of bilinear pair creation operators. }:
\begin{eqnarray}\label{def:e_pm_ab}
\hat{e}^{-}_{ij}&=&\sum_{a,b}\alpha_{ba}\hat{\psi}_{i,a}^-\hat{\psi}_{j,b}^-,\nonumber\\	
	\hat{e}^{+}_{ij}&=&-\sum_{a,b}\alpha'_{ab}\hat{\psi}_{i,a}^+\hat{\psi}_{j,b}^+,
\end{eqnarray}
where the tensors $\alpha$ and $\alpha'$ are $m\times m$ constant matrices 
that are introduced to satisfy the locality condition in Eq.~\eqref{eq:generalLOcondition}.  %
[The minus sign convention in the second line of Eq.~\eqref{def:e_pm_ab} will be explained later in Sec.~\ref{para:herm_paircreation}.]  
More precisely, we require that all the commutators $[\hat{e}^{\pm}_{ij},\hat{\psi}^+_{k,c}]$ and $[\hat{e}^{\pm}_{ij},\hat{\psi}^-_{k,c}]$ 
vanish when $k\notin\{i,j\}$~\footnote{Actually, this condition is sufficient  for locality and solvability but not necessary. Indeed, for locality and solvability of bilinear Hamiltonians, we only need $[\hat{e}^\mu_{ij},\hat{e}^\nu_{kl}]=0$ whenever $\{i,j\}\cap\{k,l\}=\emptyset$, where $\mu,\nu\in\{+,-\}$. The problem about this generalization is that if $[\hat{e}^{\pm}_{ij},\hat{\psi}^+_{k,c}]\neq 0$ for $k\notin\{i,j\}$, then 
$\hat{\psi}^+_{k,c}$ loses its physical meaning of creating one point-like particle at position $k$, as it may excite local observables $\hat{e}^{\pm}_{ij}$ far away, and the  state $\hat{\psi}^+_{k,c}\ket{0}$ may not even have finite excitation energy in the thermodynamic limit. 
We therefore do not discuss this more general possibility in this paper.
}. 
This sets a constraint on the $R$-matrix and the tensors $\alpha,\alpha'$, which we derive in the following. 
We begin by computing $[\hat{e}^-_{ij}, \hat{\psi}^+_{k,b}]$: 
\begin{eqnarray}\label{eq:commu_Eabp_psi}
	[\hat{e}^-_{ij}, \hat{\psi}^+_{k,b}]&=&
    \adjincludegraphics[width=9ex,valign=c]{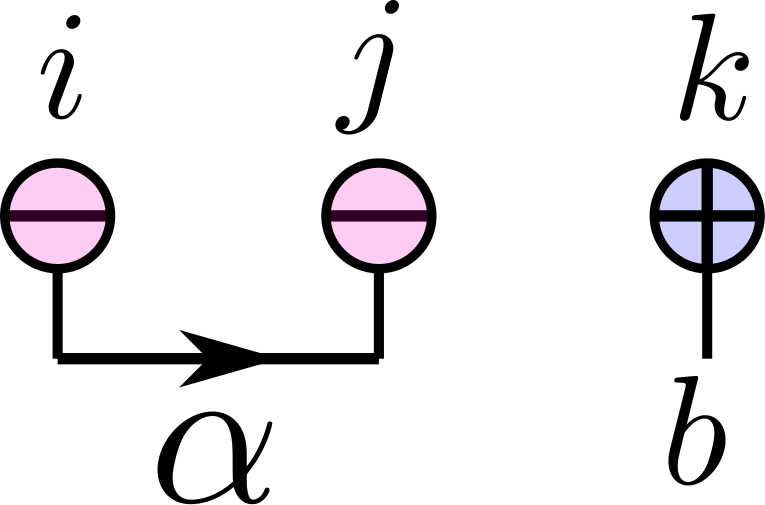}~~-~~
    \adjincludegraphics[width=9ex,valign=c]{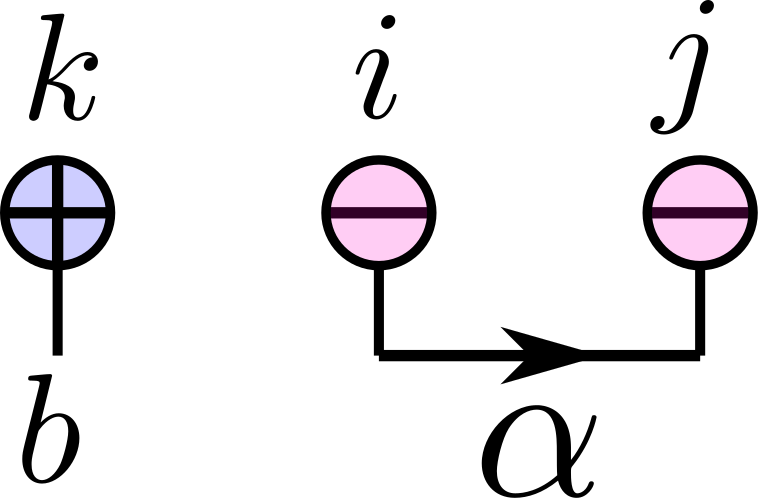}\nonumber\\
	&=&
    \adjincludegraphics[width=9ex,valign=c]{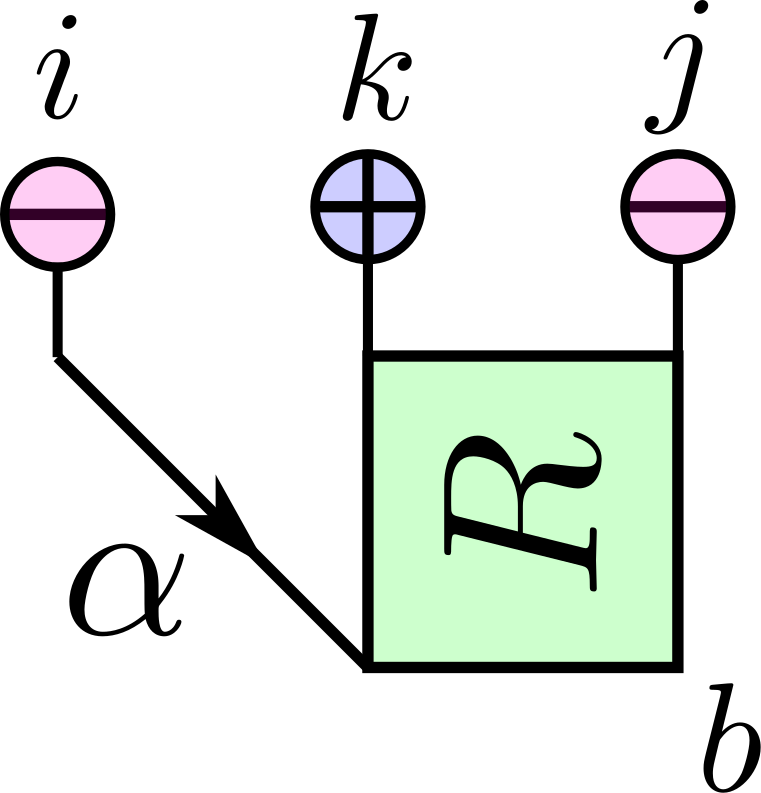}~~+~~
    \adjincludegraphics[width=9ex,valign=c]{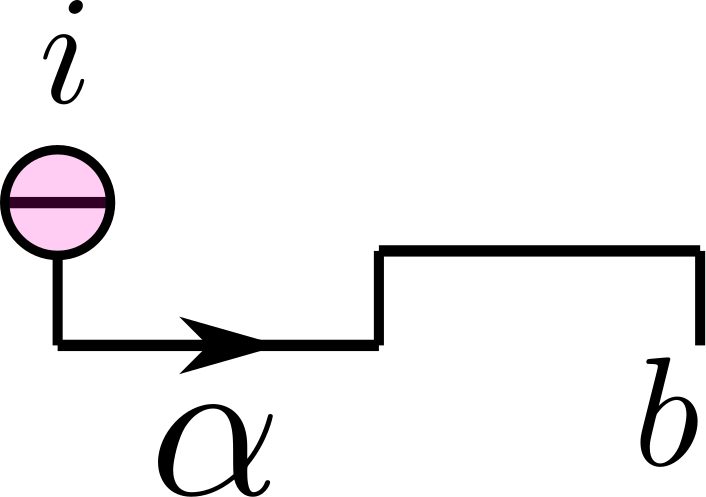}~\delta_{jk}
	-~~
    \adjincludegraphics[width=9ex,valign=c]{Figures/LA-1/empsip_deriv-1-2.png}\nonumber\\
	&=&\delta_{ik}~~
    \adjincludegraphics[width=9ex,valign=c]{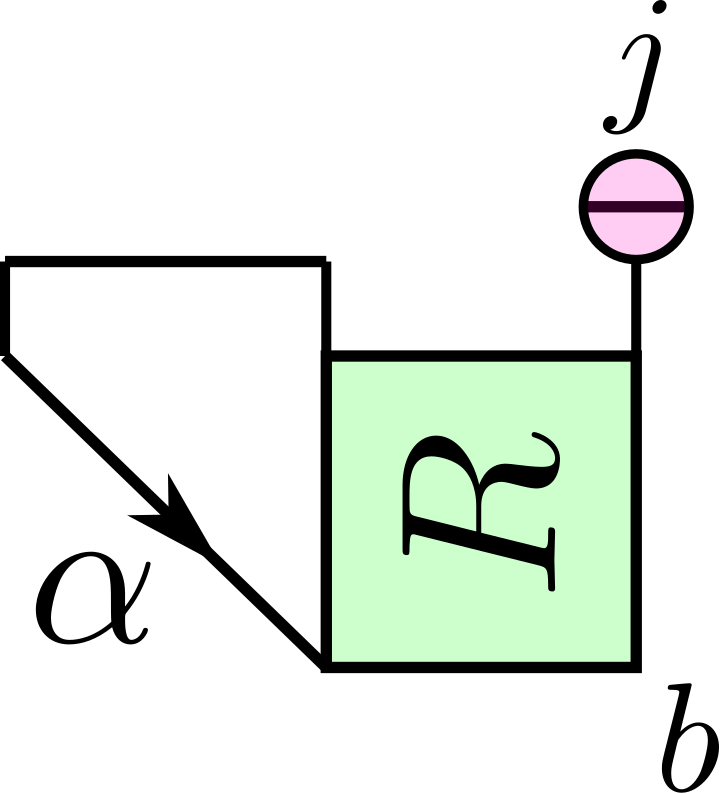}~~+\delta_{jk}
    \adjincludegraphics[width=9ex,valign=c]{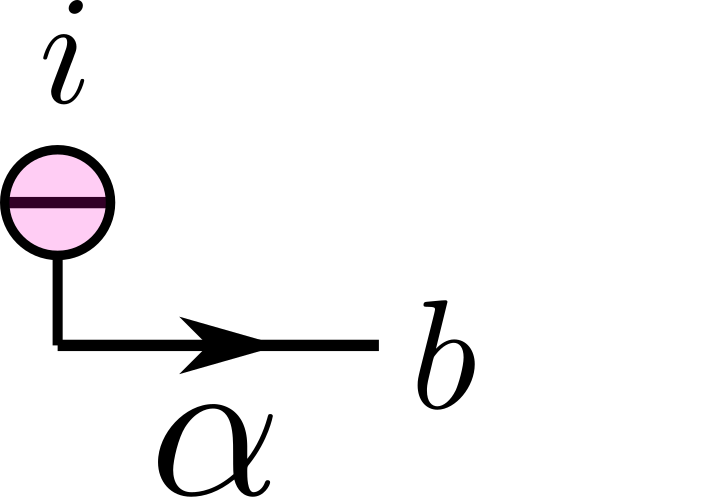}\nonumber\\
	&&+~~
    \adjincludegraphics[width=9ex,valign=c]{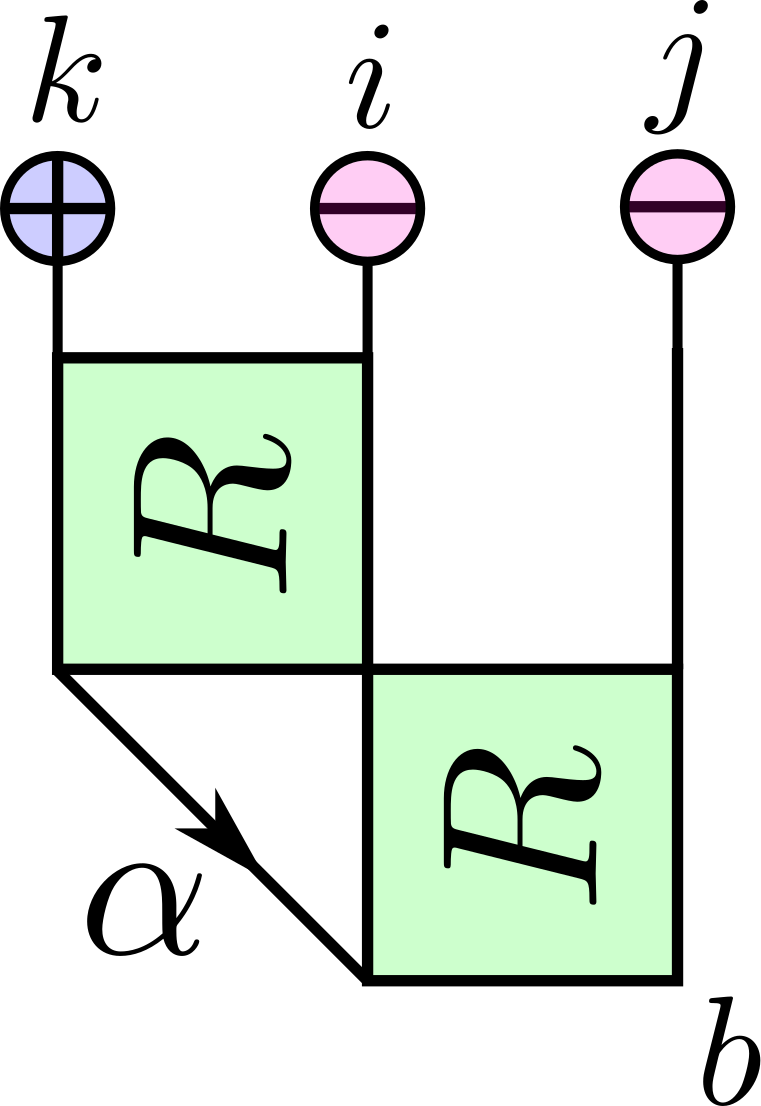}~~-~~
    \adjincludegraphics[width=9ex,valign=c]{Figures/LA-1/empsip_deriv-1-2.png},
\end{eqnarray}
where in the second and third line we used the first line of Eq.~\eqref{eq:fundamental_Rcommu}. 
The first two terms %
vanish when $k\notin\{i,j\}$, so we only need to require that the two normally ordered cubic terms cancel, which happens when 
\begin{equation}\label{eq:RRalpha}
	\adjincludegraphics[width=24ex,valign=c]{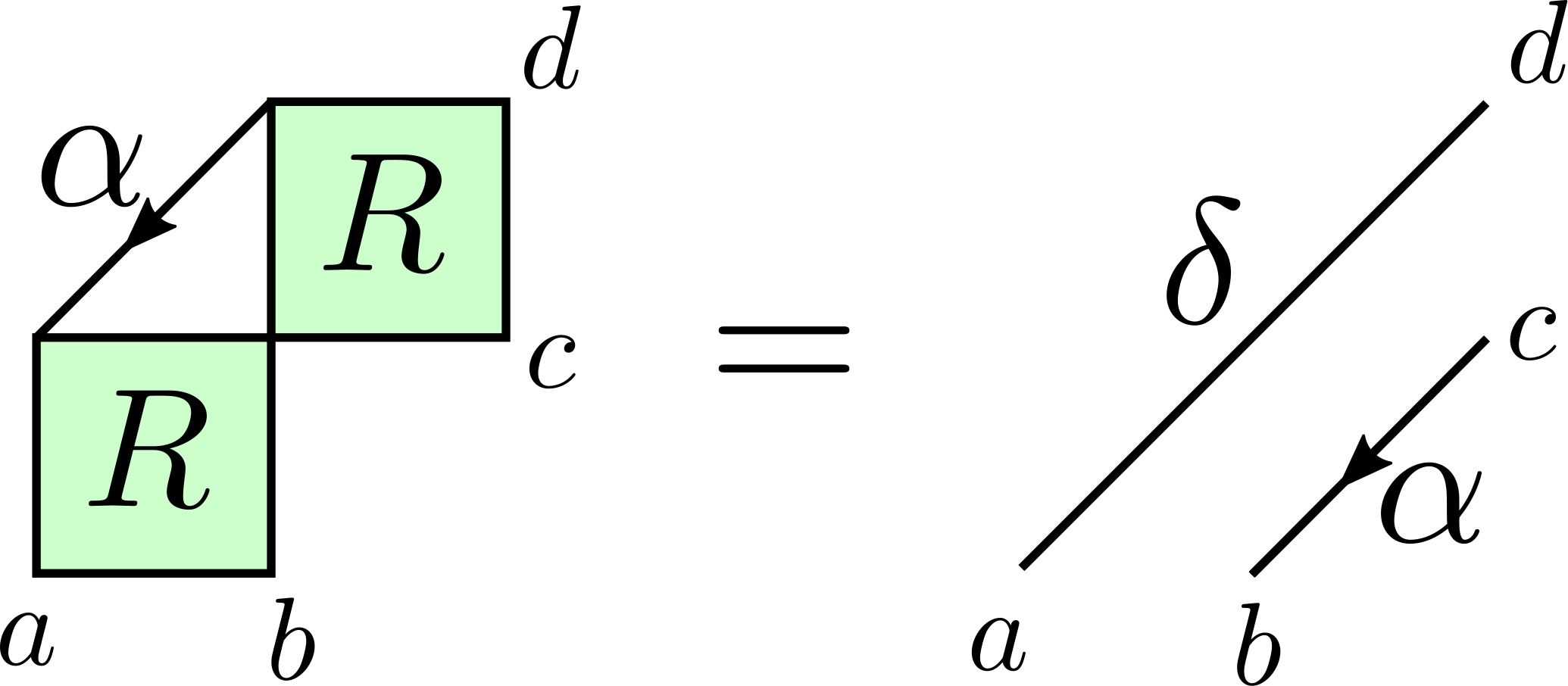}~.
\end{equation}
Using $R^2=\mathds{1}$, Eq.~\eqref{eq:RRalpha} is equivalent to
\begin{equation}\label{eq:Ralpha_def-graphical}
\adjincludegraphics[width=22ex,valign=c]{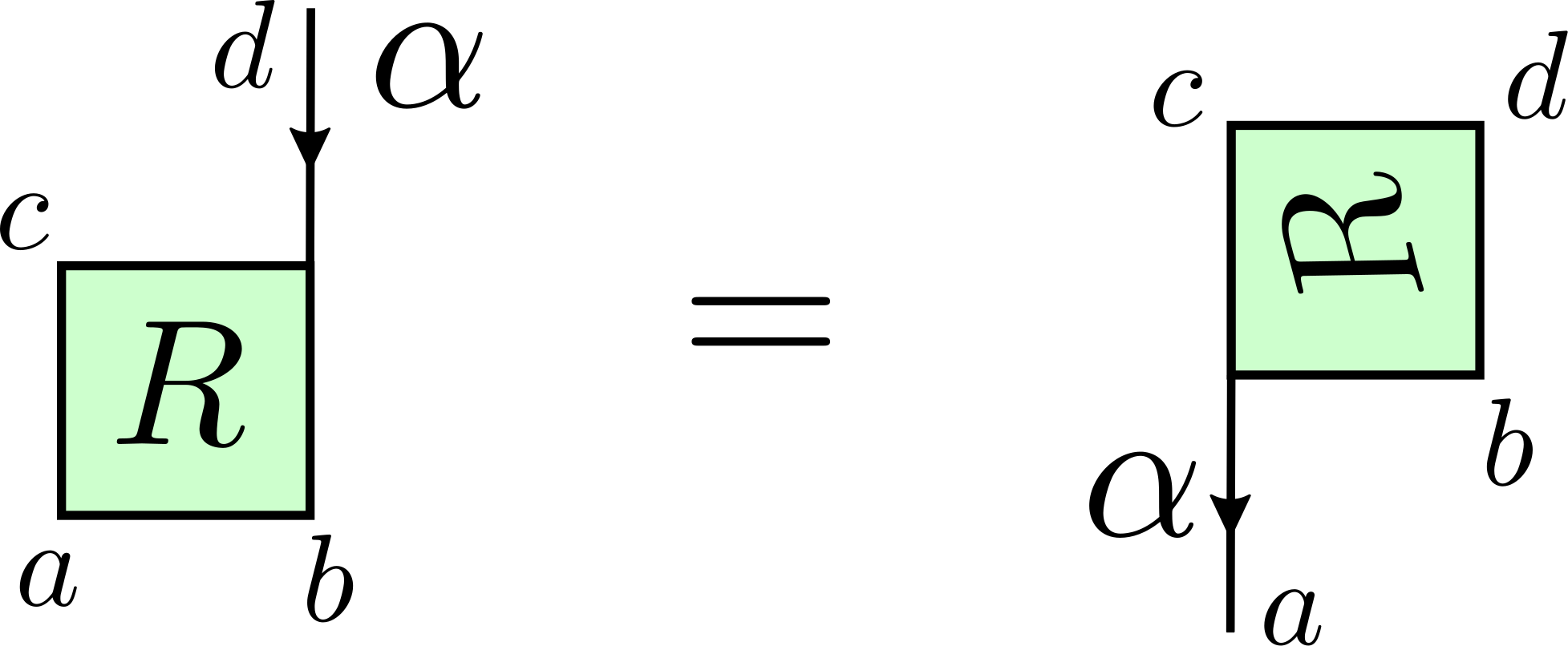}~.
\end{equation}
Next, we compute $[\hat{e}^+_{ij}, \hat{\psi}^+_{k,b}]$:
\begin{eqnarray}\label{eq:commu_Eabp_psip}
	[\hat{e}^+_{ij}, \hat{\psi}^+_{k,b}]&=&-\adjincludegraphics[width=9ex,valign=c]{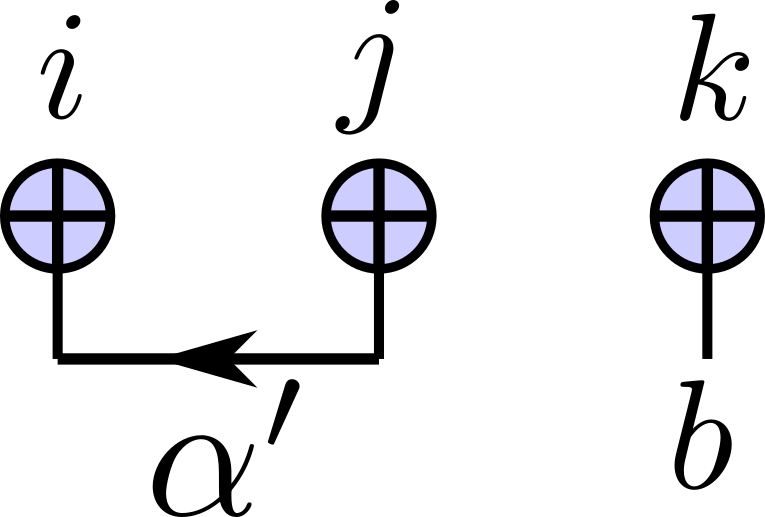}~~+~~
    \adjincludegraphics[width=9ex,valign=c]{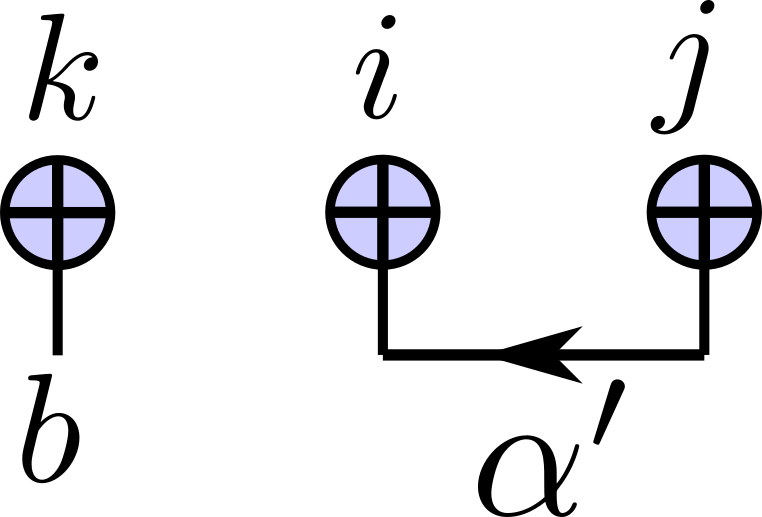}~,\nonumber\\
	&=&-
    \adjincludegraphics[width=9ex,valign=c]{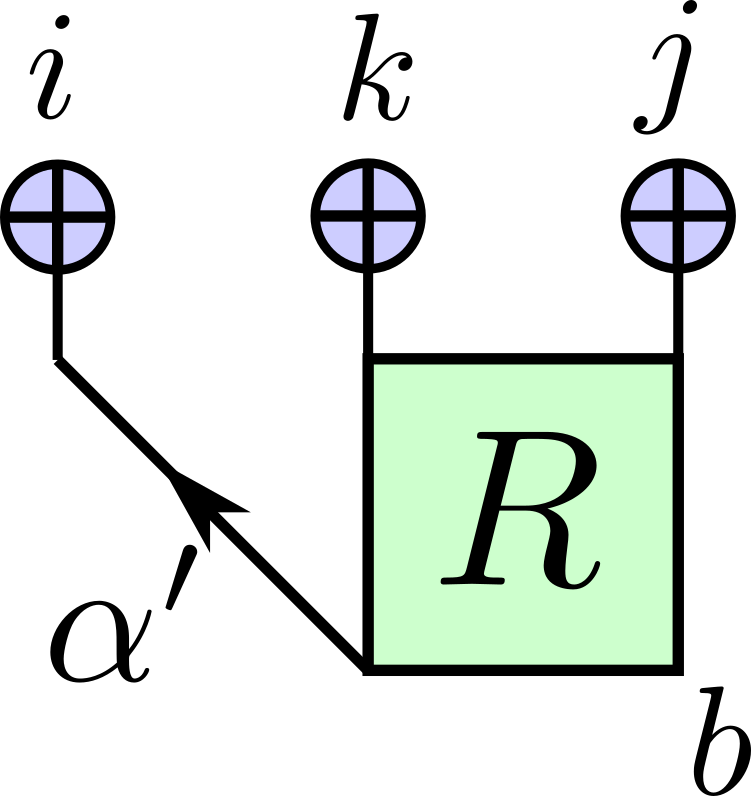}~~+~~
    \adjincludegraphics[width=9ex,valign=c]{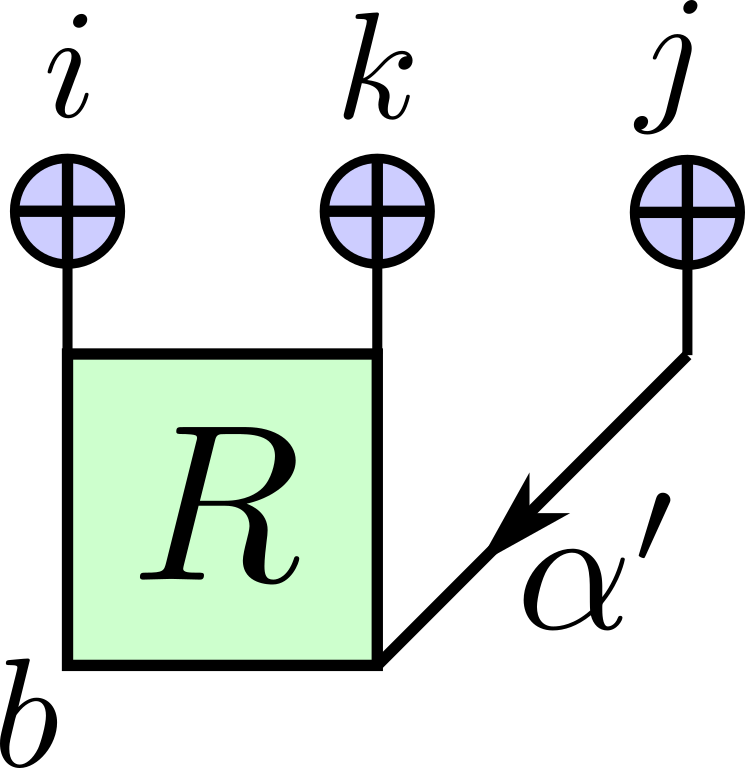}~,
\end{eqnarray}
where we use the second line of Eq.~\eqref{eq:fundamental_Rcommu}. The two cubic terms in the second line of Eq.~\eqref{eq:commu_Eabp_psip} cancel if
\begin{equation}\label{eq:Ralpha_prime}
	\adjincludegraphics[width=22ex,valign=c]{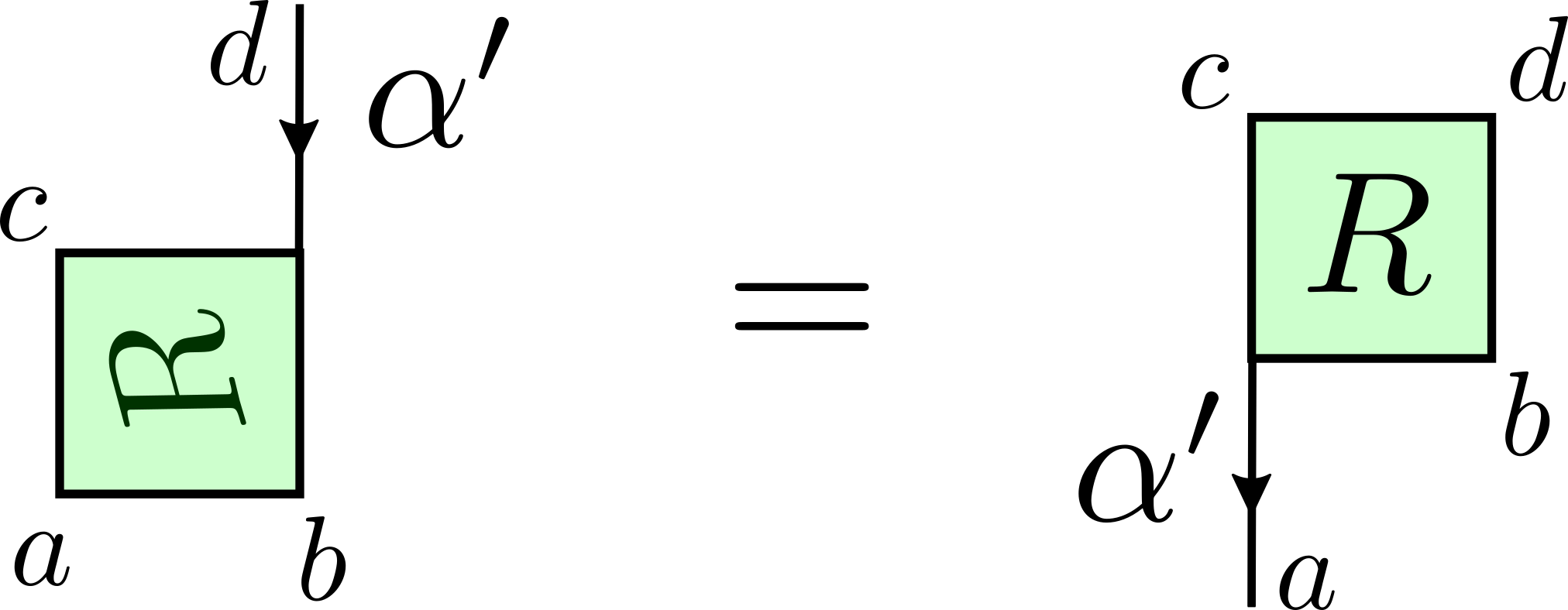}~.
\end{equation}
Note that Eq.~\eqref{eq:Ralpha_prime} follows from  Eq.~\eqref{eq:Ralpha_def-graphical} if we define  $\alpha'=\alpha^{-1}$. %
Using similar calculations, one can show that as long as $\alpha$ and $\alpha'$ satisfy Eqs.~\eqref{eq:Ralpha_def-graphical} and \eqref{eq:Ralpha_prime}, the  commutators $[\hat{e}^{\pm}_{ij},\hat{\psi}^-_{k,c}]$ also %
vanish when $k\notin\{i,j\}$.

\begin{figure}
\center{\includegraphics[width=.85\linewidth]{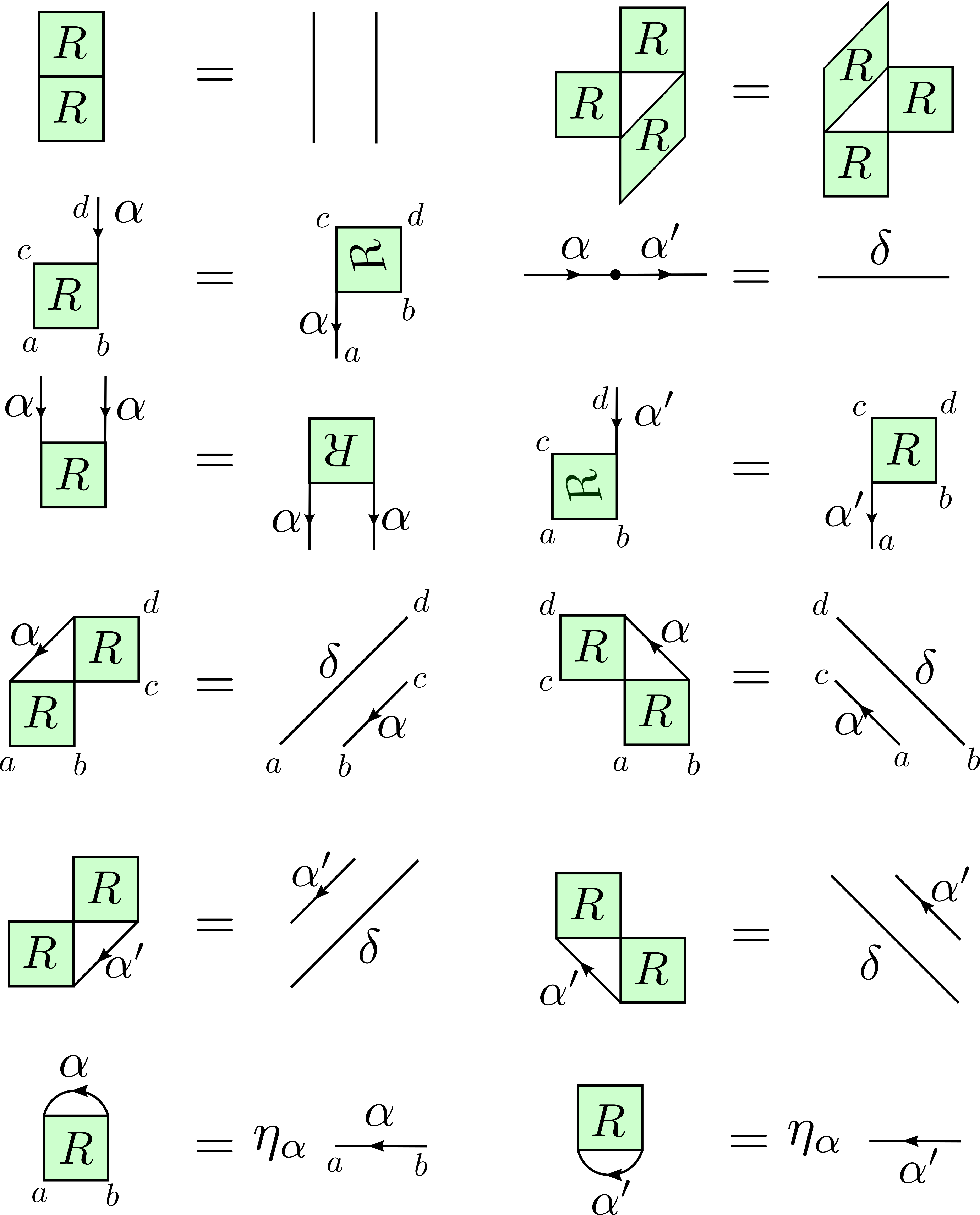}}
	\caption{\label{fig:Ralphaeqn} A list of tensor equations satisfied by $\alpha,\alpha'$, and $R$, useful for computational purpose. We use the graphical notation $\alpha_{ab}=\adjincludegraphics[width=7ex,valign=c]{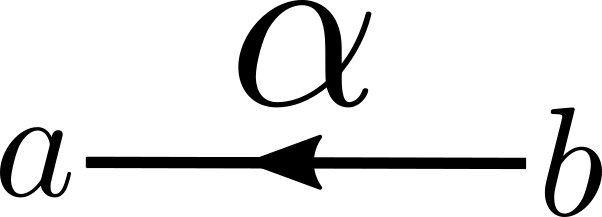}$ 
    and $\alpha'_{ab}=\adjincludegraphics[width=7ex,valign=c]{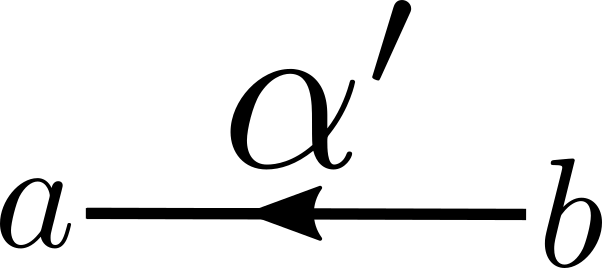}$. The first line is the YBE in Eq.~\eqref{eq:YBE}, the second line is the definition of $\alpha$ and $\alpha'$. The third to the fifth lines are derived from the first two lines. The fourth and fifth lines are used in calculating $[\hat{e}^\pm_{ij},\hat{\psi}_{k,c}^\pm]$--they guarantee that the~(normal-ordered) cubic terms cancel. The last line gives the definition of $\eta_\alpha$. }
\end{figure}

In summary, local pair creation of $\psi$ is possible %
if there exists an invertible $m\times m$ constant matrix $\alpha$ satisfying Eq.~\eqref{eq:Ralpha_def-graphical}. This defines the subclass of $R$-matrices compatible with local pair creation. %
Not every $R$-matrix satisfies this condition; for example, for the $R$-matrix $R^{ab}_{cd}=\pm\delta_{ac}\delta_{bd}$ with $m\geq 2$~(Ex.~\ref{ex:1m} of Sec.~\ref{sec:YBE}), there does not exist an invertible $\alpha$ satisfying this equation. 
But the $R$-matrix defined in Eqs.~(\ref{eq:example_1n1Rmatrix},\ref{eq:lambdac}) satisfies this condition provided that $\lambda \xi=-\mathds{1}_m$~[this is satisfied by the construction below Eq.~\eqref{eq:lambdac}] in which case we can take $\alpha=\xi$ and $\alpha'=-\lambda$, then  Eqs.~\eqref{eq:Ralpha_def-graphical} and \eqref{eq:Ralpha_prime} follow by straightforward calculation. 
The $R$-matrix in Ex.~\ref{ex:setth} also satisfies this condition, where we have 
\begin{equation}
\alpha=\alpha'=\pi_{12}=\begin{pmatrix}
	0 & 1 & 0 & 0\\
	1 & 0 & 0 & 0\\
	0 & 0 & 1 & 0\\
	0 & 0 & 0 & 1\\
\end{pmatrix}.
\end{equation}
Note that the above treatment assumes that $\alpha$ is invertible, which allows us to set $\alpha'=\alpha^{-1}$. In general, there can be situations where a nonzero $\alpha$ satisfying Eq.~\eqref{eq:Ralpha_def-graphical} exists, but is not invertible, see Remark~\ref{rmk:example-noninvalpha} for an example. However, if the $R$-matrix is simple~(in the sense of Definition~\ref{def:simpleR-matrix}), and a nonzero $\alpha$ satisfying Eq.~\eqref{eq:Ralpha_def-graphical} exists, then $\alpha$ must be unique~(up to a multiplicative constant) and invertible,
as stated in Lemma~\ref{lemma:alpha_unique_inv}. 

Let us now come back to Eq.~\eqref{eq:commu_Eabp_psi}. The second term in the last line of Eq.~\eqref{eq:commu_Eabp_psi} can also be simplified. One can show that if $\alpha_{cd}$ satisfies Eq.~\eqref{eq:Ralpha_def-graphical}, so does $\sum_{a,b}\alpha_{ab}R^{ab}_{cd}$. Since all the $\alpha$ satisfying Eq.~\eqref{eq:Ralpha_def-graphical} form a linear space~(indeed, in all examples we know this space is one-dimensional), we can take $\alpha$ to be a left eigenvector of $R$ with eigenvalue $\eta_{\alpha}=\pm 1$:
\begin{equation}\label{eq:eta_R}
	\sum_{ab}\alpha_{ab}R^{ab}_{cd}=\eta_\alpha {\alpha}_{cd},
\end{equation}
which is shown in the last line of Fig.~\ref{fig:Ralphaeqn}. For example, for the $R$-matrix in Eq.~\eqref{eq:example_1n1Rmatrix}, we have $\alpha=\xi$ and therefore $\eta_\alpha=+1$. For the $R$-matrix in Ex.~\ref{ex:setth}, we have $\eta_\alpha=-1$.
The tensor graphical representations of the various equations satisfied by $\alpha$ and $\alpha'$ are summarized in Fig.~\ref{fig:Ralphaeqn}. 

\subsubsection{Lie algebra of the bilinear operators $\{\hat{e}^{}_{ij},\hat{e}^{\pm}_{ij}\}$}\label{sec:LApaircreationbilin}
We now study the Lie algebra structure generated by all the bilinear operators $\{\hat{e}^{}_{ij},\hat{e}^{\pm}_{ij}\}$, which is important for solving bilinear Hamiltonians in Sec.~\ref{sec:u1_breaking_solu}. 
The CRs between $\hat{e}^{\pm}_{ij}$ and $\hat{\psi}^\pm_{k,c}$ are summarized as follows:
\begin{eqnarray}\label{eq:commu_Eab_pm_psi}
	[\hat{e}^+_{ij}, \hat{\psi}^+_{k,c}]&=&[\hat{e}^-_{ij}, \hat{\psi}^-_{k,c}]=0,\nonumber\\ 
	{}[\hat{e}^-_{ij}, \hat{\psi}^+_{k,c}]&=&\delta_{jk}\bar{\psi}^-_{i,c}+\delta_{ik}\bar{\psi}^-_{j,c}\eta_\alpha,\nonumber\\
	{}[\hat{e}^+_{ij}, \bar{\psi}^-_{k,c}]&=&\eta_\alpha \delta_{jk}\hat{\psi}^+_{i,c}+\delta_{ik}\hat{\psi}^+_{j,c}.
\end{eqnarray}
where $\bar{\psi}^-_{i,c}\equiv \sum_a \hat{\psi}^-_{i,a}\alpha_{ca}$. 
Along with Eq.~\eqref{eq:commu_Eab_psi_p}, one can calculate the commutators between $\hat{e}^+_{ij}, \hat{e}^-_{ij}$ and $\hat{e}_{ij}$, similar to the calculation done in Eq.~\eqref{eq:commu_Eab_Ecd}:
\begin{eqnarray}\label{eq:commu_Eab_Ecd_sosp}
	[\hat{e}_{ij}, \hat{e}_{kl}]&=&\delta_{jk}\hat{e}_{il}-\delta_{il}\hat{e}_{kj},\nonumber\\
	{}[\hat{e}_{ij}, \hat{e}_{kl}^+]&=&\delta_{jk}\hat{e}^+_{il}+\delta_{jl}\hat{e}^+_{ki},\nonumber\\
	{}[\hat{e}_{kl}^-,\hat{e}_{ij}]&=&\delta_{ki}\hat{e}^-_{jl}+\delta_{li}\hat{e}^-_{kj},\nonumber\\
	{}[\hat{e}_{kl}^+, \hat{e}_{ij}^+]&=&[\hat{e}_{kl}^-, \hat{e}_{ij}^-]=0,\nonumber\\
	{}[\hat{e}_{kl}^+, \hat{e}_{ij}^-]&=&\delta_{ik}\hat{e}_{jl}+\delta_{jl}\hat{e}_{ki}+\eta_\alpha(\delta_{jk}\hat{e}_{li}+\delta_{il}\hat{e}_{kj})\nonumber\\
	&&+\mathrm{Tr}(\alpha'\alpha^T)(\delta_{jk}\delta_{li}+\eta_\alpha\delta_{ik}\delta_{jl}).
\end{eqnarray}
The constant term that appears in the last equation of Eq.~\eqref{eq:commu_Eab_Ecd_sosp} can be absorbed into the  Lie algebra generators as  follows:
\begin{eqnarray}
\hat{e}^\prime_{ij}&=&\hat{e}_{ij}+\frac{\eta_\alpha}{2}\mathrm{Tr}(\alpha'\alpha^T)\delta_{ij},\nonumber\\
\hat{e}^{\prime\pm}_{ij}&=&\hat{e}_{ij}^\pm.
\end{eqnarray}
These operators satisfy Eq.~\eqref{eq:commu_Eab_Ecd_sosp} but without the constant term in the last line, which are the defining relations of either the  $\mathfrak{sp}_{2N}$ or the $\mathfrak{so}_{2N}$ Lie algebra~\cite{humphreys_LA}, corresponding to $\eta_\alpha=+1$ or $\eta_\alpha=-1$, respectively. Then Eq.~\eqref{eq:commu_Eab_psi_p} and \eqref{eq:commu_Eab_pm_psi} imply that $\hat{\psi}^+_{i,c}$ transform in the fundamental representations of these Lie algebras, while $\bar{\psi}^-_{i,c}$ transform in the fundamental conjugate representation. 
This allows us to construct eigenmodes of a general bilinear Hamiltonian $\hat{H}$ using an $R$-paraparticle generalization of Bogoliubov transformation, thereby obtaining the whole spectrum of $\hat{H}$, as will be presented in Sec.~\ref{sec:u1_breaking_solu}. 
Thus, similar to the number conserving case, the  $\mathfrak{sp}_{2N}$ or $\mathfrak{so}_{2N}$ Lie algebra structure still guarantees 
solvability of bilinear Hamiltonians.

\subsubsection{Topological twist factor, Frobenius-Schur indicator, and particle-hole symmetry}\label{sec:TPSpin-FSI-PHS}
We now present a classification theorem of unitary $R$-matrices that admit pair creation, in which several important physical concepts such as topological twist factor and Frobenius-Schur indicator naturally arises, and the discussion of hermiticity and particle-hole symmetry also relies on this theorem. 
\paragraph{A general classification result on unitary $R$-matrices admitting pair-creation}
\begin{restatable}{thm}{unitarysimpleselfdualR} \label{thm:unitary_simple_self-dualR}
	Let $R$ be a unitary involutive simple $R$-matrix such that 
    there exists a nonzero $\alpha$ satisfying Eq.~\eqref{eq:Ralpha_def-graphical}. Then
	we have\\
	(1) $\alpha$ is unique~(up to a multiplicative constant) and invertible;\\
	(2) we can always normalize $\alpha$ such that $\alpha^*\alpha=\nu \mathds{1}$, where $\nu=\pm 1$ is called the Frobenius-Schur indicator of the paraparticle;\\
	(3) $\alpha^T=\nu\alpha$, therefore,  $\alpha$ is unitary  with the above normalization;\\
	(4) $R$ is a dual-unitary tensor, or equivalently, $R$ satisfies 
    \begin{equation}\label{eq:DUcondition}
\adjincludegraphics[width=22ex,valign=c]{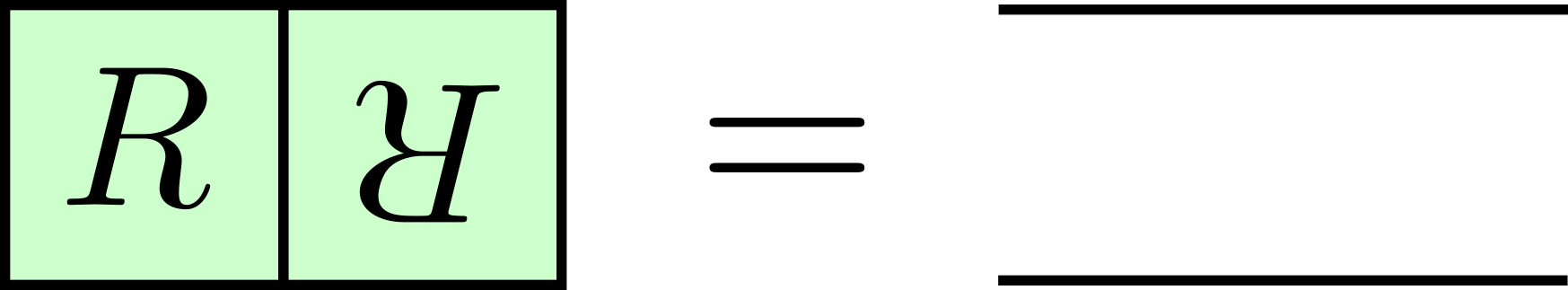}     .   
    \end{equation}
	(5) $\Tr_1[R]=\Tr_2[R]=\thetaR \mathds{1}_m$, where $\thetaR=\nu\eta_\alpha$ is called the topological twist factor, and $z_R(x)=(1-\thetaR x)^{-\thetaR m}$.
\end{restatable}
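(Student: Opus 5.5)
The plan is to recast Eq.~\eqref{eq:Ralpha_def-graphical} as an intertwining statement and then let simplicity (Definition~\ref{def:simpleR-matrix}) do the uniqueness work, Schur-lemma style. Reading $\alpha$ as a map $\mathfrak{I}\to\mathfrak{I}^*$ (a cup/cap), Eq.~\eqref{eq:Ralpha_def-graphical} says that bending one leg of $R$ with $\alpha$ turns it into an $R$-type tensor. That tensor governs the ``antiparticle'' generators $\bar\psi^-_{i,c}=\sum_a\hat\psi^-_{i,a}\alpha_{ca}$, which by Eq.~\eqref{eq:commu_Eab_pm_psi} obey the same commutation relations as $\hat\psi^+$.

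\textbf{Part (1).} Lemma~\ref{lemma:alpha_unique_inv} states exactly that for a simple $R$ a nonzero solution $\alpha$ is unique up to scale and invertible, so (1) follows by citing it. Uniqueness is what drives everything below.

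\textbf{Parts (2) and (3).} Both come from uniqueness applied to new solutions built out of $\alpha$. Take the complex conjugate of Eq.~\eqref{eq:Ralpha_def-graphical} and use unitarity, $R^{-1}=R^\dagger$ together with $R=R^{-1}$. This should show that $(\alpha^{*})^{-1}$ (or $\alpha^\dagger$, depending on leg orientation) is again a solution; I will check which one by writing the equation in components. Similarly, transposing the equation and using the companion identities in Fig.~\ref{fig:Ralphaeqn} should show that $\alpha^T$ is a solution.

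Uniqueness then gives $\alpha^T=c\,\alpha$. Applying this twice yields $c^2=1$, so $c=\nu=\pm1$, which is (3). It also gives $\alpha^*=\lambda(\alpha)^{-1}$, that is, $\alpha^*\alpha=\lambda\mathds{1}$. Taking the conjugate of $\alpha^*\alpha=\lambda\mathds1$ gives $\alpha\alpha^*=\lambda^*\mathds1$, so $\lambda$ is real.

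To fix the sign of $\lambda$, combine with transposition: $\alpha^\dagger\alpha=\nu\,\alpha^*\alpha=\nu\lambda\mathds1$. Since $\alpha^\dagger\alpha$ is positive, $\nu\lambda>0$. Rescaling $\alpha$ by a positive real then gives $\alpha^*\alpha=\nu\mathds1$ and $\alpha^\dagger\alpha=\mathds1$. This proves (2) and the unitarity claim in (3), and it shows the Frobenius--Schur indicator coincides with the transpose sign.

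\textbf{Part (4).} Dual unitarity of $R$ means that $R$ with one leg bent is unitary. Eq.~\eqref{eq:Ralpha_def-graphical} expresses this bent tensor as $(\alpha\otimes 1)R(\alpha^{-1}\otimes 1)$ up to leg relabelings, and $\alpha$ is now unitary. So the bent tensor is a product of unitaries, and Eq.~\eqref{eq:DUcondition} follows directly.

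\textbf{Part (5).} First compute $\Tr_1R$ by closing a loop through $\alpha,\alpha^{-1}$. With Eq.~\eqref{eq:Ralpha_def-graphical} and the eigen-relation Eq.~\eqref{eq:eta_R}, $\sum_{ab}\alpha_{ab}R^{ab}_{cd}=\eta_\alpha\alpha_{cd}$, the partial trace becomes $\eta_\alpha\,\alpha^{-1}$ contracted against $\alpha^T=\nu\alpha$. This gives $\Tr_1R=\nu\eta_\alpha\mathds1$, and $\Tr_2R$ follows the same way.

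I expect the index bookkeeping here, namely which leg is bent and where the transpose enters, to be the main obstacle. I will verify it on Ex.~\ref{ex:decoupled}, where $R=-X$ and $\alpha$ is symmetric or antisymmetric, which must give $\theta=-1$.

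Finally, the Hilbert series. Lemma~\ref{lemma:OmegaLemma-Ralpha} says $\Tr_1R=\theta\mathds1$ determines $z_R$ from $\theta$ and $m$. The result must agree with the case $R=\theta X$, whose Hilbert series is $(1+x)^m$ for fermions and $(1-x)^{-m}$ for bosons. This gives $z_R(x)=(1-\theta x)^{-\theta m}$.
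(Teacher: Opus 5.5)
Your proposal has a genuine gap. In your treatment of (2)--(3) you assert that transposing Eq.~\eqref{eq:Ralpha_def-graphical} and using the identities of Fig.~\ref{fig:Ralphaeqn} shows that $\alpha^T$ is again a solution. This step cannot be carried out that way, and everything after it depends on it: $\alpha^T=\nu\alpha$, the sign of $\lambda$, unitarity of $\alpha$, dual unitarity in (4), and your loop computation of $\Tr_1R$ in (5).

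In components, Eq.~\eqref{eq:Ralpha_def-graphical} reads $\sum_a\alpha_{pa}R^{af}_{qg}=\sum_e R^{fe}_{pq}\alpha_{eg}$; this follows from the cubic-term cancellation in Eq.~\eqref{eq:commu_Eabp_psi} together with $R^2=\mathds{1}$. The identities of Fig.~\ref{fig:Ralphaeqn} follow from this equation, the YBE and involutivity alone. They therefore also hold for the non-unitary Ex.~\ref{ex:1m1} with $\lambda=e^{-M}$, $\xi=-e^{M}$ and $m\geq 3$, where $\alpha=\xi$ is a solution. Yet $\beta=\xi^T$ is not a solution: inserting it and contracting $f=g$ would require $e^{2M}+e^{-2M}+m\mathds{1}=0$, whose trace is $m^2-4\neq 0$.

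So any proof that $\alpha^T$ is a solution must use unitarity. Given $R^2=\mathds{1}$, unitarity supplies only $R^*=R^T$. The formal symmetries of the defining equation send a solution $\alpha$ for $R$ to $\alpha^{-1}$ for $R^T$, to $\alpha^*$ for $R^*$, and to $\alpha^T$ for $XRX$. Combining them gives back, as a new solution for $R$ itself, only $(\alpha^*)^{-1}$. That is your correct step yielding $\alpha^*\alpha=\lambda\mathds{1}$ with $\lambda\in\mathbb{R}$. Beyond that you get only the weaker statement $[\Omega\otimes\Omega,R]=0$ for $\Omega=\alpha^\dagger\alpha$, and not $\Omega\propto\mathds{1}$, which is exactly what $\alpha^T\propto\alpha$ amounts to.

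The missing ingredient is positivity. In the paper, the transpose symmetry is an output of the Hilbert-series analysis rather than an input to it:
\begin{itemize}
\item After possibly replacing $R$ by $-R$ so that $\nu\eta_\alpha=-1$, Eq.~\eqref{eq:Tr1aaR} gives $\Tr_2[(\mathds{1}\otimes\Omega)R]=-\mathds{1}$ with $\Omega=\alpha^\dagger\alpha\geq 0$.
\item Lemma~\ref{lemma:OmegaLemma-Ralpha}, which rests on $\Tr[\Omega^{\otimes n}P_n]\geq 0$, then makes $z_R$ finite.
\item Lemma~\ref{lemma:alphaunitary} combines three facts: particle--hole symmetry via Eq.~\eqref{eq:Tralphaalpha'}, giving $\Tr(\alpha\alpha^\dagger)=n_{\max}$; the Cauchy--Schwarz bound $\Tr(\alpha^\dagger\alpha)\geq|\Tr(\alpha^*\alpha)|=m$; and the bound $n_{\max}\leq m$ from Thm.~\ref{thm:classification_Herm_R}.
\item Equality in Cauchy--Schwarz then forces $\alpha^T\propto\alpha$ and $\alpha$ unitary.
\end{itemize}

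Your part (1), the $(\alpha^*)^{-1}$ argument in (2), and the final deduction of $z_R$ from $\Tr_2 R=\theta\mathds{1}$ are fine. However, (3), (4) and the partial-trace claim in (5) need this positivity argument or a genuine substitute for it.
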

The proof of this theorem is given in App.~\ref{sec:unitary-R-with-alpha}. 
As a non-technical summary, 
Thm.~\ref{thm:unitary_simple_self-dualR} implies that if the $R$-matrix is unitary, involutive,  simple, and allows pair-creation, then $\alpha$ is invertible and can be normalized to a canonical form, 
the Frobenius-Schur indicator $\nu$ and the topological twist factor $\thetaR $ are well-defined and can be directly extracted from $R$, and
the exclusion statistics of the associated $R$-paraparticle is either fermionic or bosonic,  depending on the topological twist factor $\thetaR$. Note however, that this does not imply that the $R$-paraparticle is equivalent or weakly-equivalent to fermions or bosons, as the equality of the Hilbert series $z_{R_1}(x)=z_{R_2}(x)$ is a necessary but not sufficient condition for weak equivalence of $R_1$ and $R_2$. %
We also note that both Ex.~\ref{ex:1m} and Ex.~\ref{ex:1m1} in Tab.~\ref{tab:Hilbert_series} evades the conclusion of Thm.~\ref{thm:unitary_simple_self-dualR} for having nontrivial exclusion statistics, the former evades by not allowing pair creation, while the latter evades by not being unitary~(consequently, $\nu$ and $\thetaR$ are not well-defined for both Ex.~\ref{ex:1m} and Ex.~\ref{ex:1m1}). 

The physical meaning of the Frobenius-Schur indicator $\nu$ can be seen from the following calculation: %
\begin{eqnarray}\label{eq:FSI-2ndQ}
	\hat{e}^-_{ij}\hat{\psi}^+_{i,a}(\hat{e}^-_{jk})^\dagger\ket{0}&=&[\hat{e}^-_{ij},\hat{\psi}^+_{i,a}](-\thetaR\hat{e}^+_{jk})\ket{0}\nonumber\\
	&=&-\thetaR \eta_\alpha \bar{\psi}^-_{j,a}\hat{e}^+_{jk}\ket{0}\nonumber\\
	&=&-\nu [\bar{\psi}^-_{j,a},\hat{e}^+_{jk}]\ket{0}\nonumber\\
	&=&\nu \hat{\psi}^+_{k,a}\ket{0}.
\end{eqnarray}
Eq.~\eqref{eq:FSI-2ndQ} shows the relation between two different ways of moving a paraparticle from position $i$ to position $k$: one can either directly move it using $\hat{e}_{ik}$, or one  can first create a pair of paraparticles at positions $j$ and $k$, and then annihilate a pair at $i$ and $j$. The phase difference between these two processes is the Frobenius-Schur indicator $\nu$. A diagrammatic way to present Eq.~\eqref{eq:FSI-2ndQ} is
\begin{equation}\label{eq:FSI-graphical}
	\adjincludegraphics[height=9ex,valign=c]{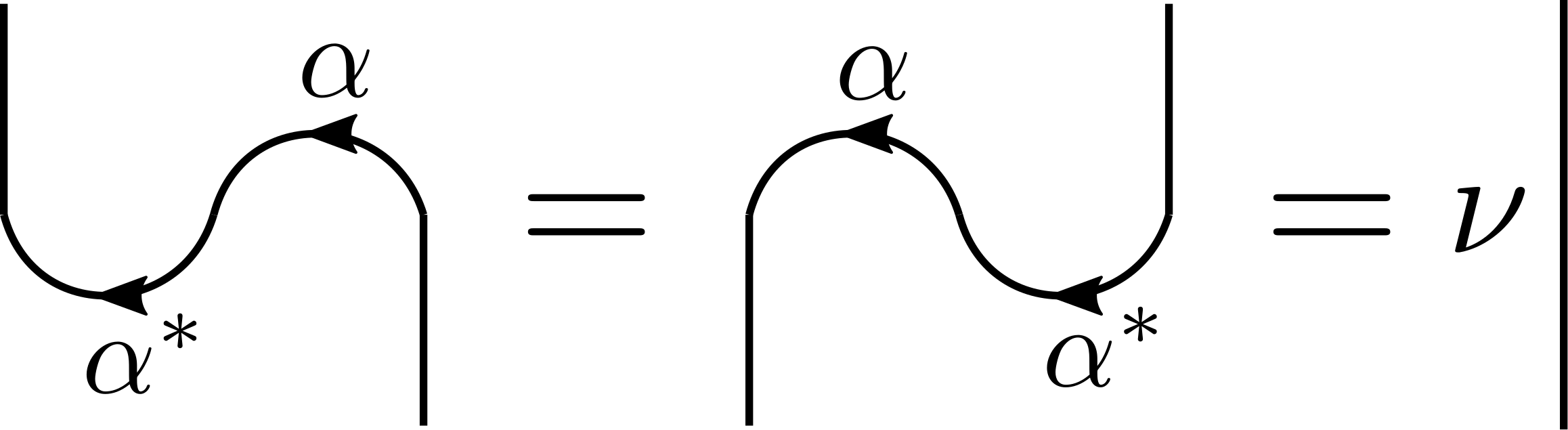}~~,
\end{equation}
which connects to the definition of the Frobenius-Schur indicator in more general unitary fusion categories~\cite{kitaev2006anyons,simon2020topological_protobook}. 

The physical meaning of the topological twist factor $\thetaR$ is more transparent if we draw the $R$-matrix as a crossing, as in Secs.~\ref{sec:mutual_para} and  \ref{sec:wick_thm}:
\begin{equation}\label{eq:TPS-graphical}
	\adjincludegraphics[height=12ex,valign=c]{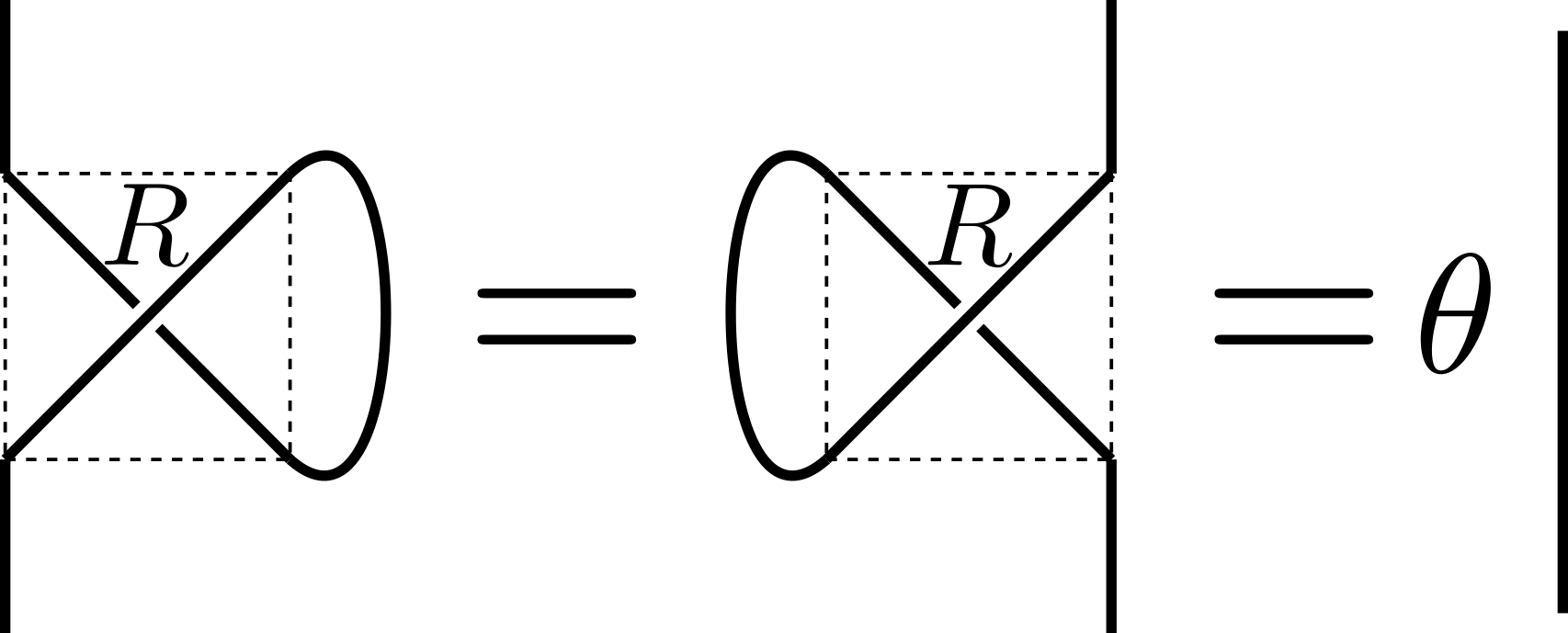}~~,
\end{equation}
where the partial trace on $R$ is drawn as a contraction of indices. 
This again connects to the definition of the topological twist factor in general unitary braided fusion categories~\cite{kitaev2006anyons,kong2014braided,simon2020topological_protobook}. We will discuss the topological twist factor in greater detail in part II after we present the realization of $R$-paraparticles in topological phases. 

\paragraph{Hermitian conjugate of $\hat{e}^+_{ij}$}\label{para:herm_paircreation}
We now discuss the important issue of hermiticity, which becomes more subtle when we include the pair creation and annihilation terms. For simplicity, let us first focus on the case where the $R$-matrix is unitary~\footnote{If the $R$-matrix is not unitary, but still has a finite Hilbert series~(e.g. Ex.~\ref{ex:1m1} in Tab.~\ref{tab:Hilbert_series}), then 
Thm.~S2.5 in the supplementary information of Ref.~\cite{wang2023para}
applied to $\mathfrak{sp}_{2N}$ or $\mathfrak{so}_{2N}$ still  guarantees that there exists a Hermitian inner product on the state space such that $(\hat{e}_{ij})^\dagger=\hat{e}_{ji}$ and $(\hat{e}^\pm_{ij})^\dagger= \hat{e}^\mp_{ij}$. 
For non-unitary $R$-matrices with infinite Hilbert series, $\{\hat{e}_{ij},\hat{e}^\pm_{ij}\}$ form an infinite dimensional representation of the Lie algebra $\mathfrak{sp}_{2N}$ or $\mathfrak{so}_{2N}$ that is generally not a direct sum of finite dimensional representations, therefore %
the aforementioned theorem
does not apply.
(By contrast, without the U(1)-breaking terms $\hat{e}^\pm_{ij}$, $\{\hat{e}_{ij}\}$ form an infinite dimensional representation of $\mathfrak{sl}_N$ that decomposes as a direct sum of finite dimensional representations, hence the aforementioned theorem applies to that case.) We will not treat this complicated case in this paper.} and simple, i.e., $R^\dagger R=\mathds{1}_{m^2}$ and $\DC[R]\cong\C$. Under these conditions, Thm.~\ref{thm:unitary_simple_self-dualR} applies, which %
leads to $\alpha'=\nu\alpha^*$. We compute $(\hat{e}^-_{ij})^\dagger$ as follows:
\begin{equation}
	(\hat{e}^-_{ij})^\dagger=\left(\adjincludegraphics[width=5ex,valign=c]{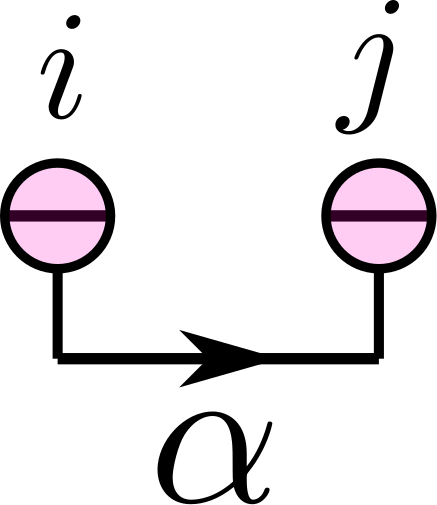}\right)^\dagger
	=
    \adjincludegraphics[width=5ex,valign=c]{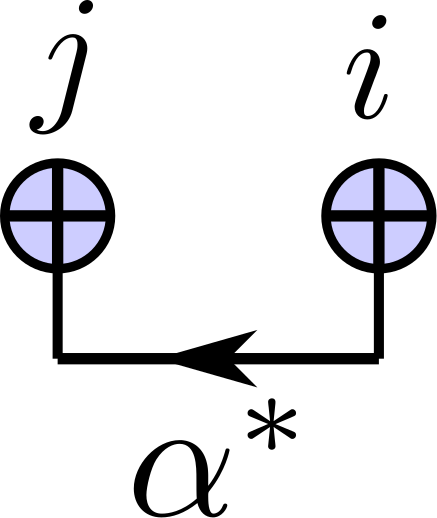}
	=\nu~
    \adjincludegraphics[width=5ex,valign=c]{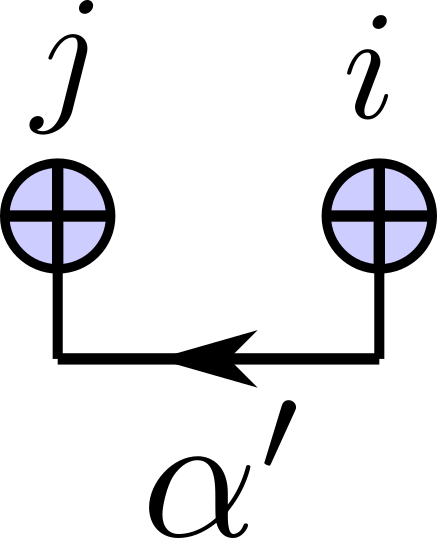}
	=-\nu\eta_\alpha\hat{e}^+_{ij},
\end{equation}
where in the third step we use $\alpha'=\nu\alpha^*$ and in the last step we used $\hat{e}^\pm_{ij}=\eta_\alpha \hat{e}^\pm_{ji}$~[which directly follows from Eqs.~\eqref{eq:fundamental_Rcommu_supress}, \eqref{def:e_pm_ab} and \eqref{eq:eta_R}].

Therefore we have $(\hat{e}^-_{ij})^\dagger=-\thetaR\hat{e}^+_{ij}$ where $\thetaR=\nu\eta_\alpha$. Meanwhile,
Thm.~\ref{thm:unitary_simple_self-dualR} says that $\thetaR$ also controls the Hilbert series: $z_R(x)=(1-\thetaR x)^{-\thetaR m}$. Therefore, we can summarize the above results as follows: %
for any $R$-matrix satisfying the conditions in Thm.~\ref{thm:unitary_simple_self-dualR}, 
we have 
\begin{eqnarray}\label{eq:hermiticity_anomoly}
(\hat{e}^-_{ij})^\dagger&=&\hat{e}^+_{ij}\Leftrightarrow 	\thetaR=-1 \Leftrightarrow z_R(x)=(1+x)^{m},\\
(\hat{e}^-_{ij})^\dagger&=&-\hat{e}^+_{ij}\Leftrightarrow 	\thetaR=1 \Leftrightarrow z_R(x)=(1-x)^{-m}.\nonumber
\end{eqnarray}
The minus sign in the second case may seem weird, but this happens already in the case of ordinary bosons with our definition Eq.~\eqref{def:e_pm_ab}, where we have 
\begin{equation}
\hat{e}^-_{ij}=\hat{b}_i\hat{b}_j,\quad 	\hat{e}^+_{ij}=-\hat{b}^\dagger_i\hat{b}^\dagger_j. 
\end{equation}
Importantly, the sign convention that we use in Eq.~\eqref{def:e_pm_ab} is to guarantee that $\hat{e}^+_{ij}$ is the Hermitian conjugate of $\hat{e}^-_{ij}$ in the fundamental representation of the corresponding Lie algebra. 
We will come back to the hermiticity issue when we solve Bogoliubov-type free-paraparticle Hamiltonians in Sec.~\ref{sec:u1_breaking_solu}, where we will see the consequence of the potential minus sign in  Eq.~\eqref{eq:hermiticity_anomoly}.

\paragraph{Particle-hole symmetry}
To discuss particle-hole symmetry, it is
useful to explicitly write down the CRs between $\hat{\psi}^+_{i,a}$ and $\bar{\psi}^-_{j,b}\equiv \sum_c\hat{\psi}^-_{j,c}\alpha_{bc}$ for this family of $R$ matrices, since they have simpler transformation rules than those given in  Eq.~\eqref{eq:commu_Eab_pm_psi}. This is done by contracting all the $\hat{\psi}^-_{i,a}$  with $\alpha_{ca}$ in the tensor equations in Fig.~\ref{fig:RMQA} and applying  the graphical  rules in Fig.~\ref{fig:Ralphaeqn}. We obtain~(suppressing auxiliary indices)
\begin{eqnarray}\label{eq:fundamental_Rcommu_NNC}
	\bar{\psi}^-_{i} \hat{\psi}^+_{j}&=&\hat{\psi}_{j}^+ \bar{\psi}_{i}^-\cdot R+\delta_{ij}\alpha,\nonumber\\%
	\hat{\psi}^+_{i} \hat{\psi}^+_{j}&=&\hat{\psi}_{j}^+ \hat{\psi}_{i}^+ \cdot R,\nonumber\\
	\bar{\psi}^-_{i} \bar{\psi}^-_{j}&=&\bar{\psi}_{j}^- \bar{\psi}_{i}^-\cdot R.%
\end{eqnarray}
The advantage of Eq.~\eqref{eq:fundamental_Rcommu_NNC} over Eq.~\eqref{eq:fundamental_Rcommu_supress} is that the $R$-matrix appears in all the CRs with the same orientation. 
We use $\mathcal{X}_{R,N}$ to denote the algebra  generated by  $\{\hat{\psi}^+_{i,b},\bar{\psi}^-_{i,b}|1\leq i \leq N, 1\leq b\leq m\}$  modulo all the relations in Eq.~\eqref{eq:fundamental_Rcommu_NNC}, which is isomorphic to the one defined in Remark~\ref{rmk:Fock_irrep} via a change of basis in the generators described above. 

To define particle-hole transformation, we first define an automorphism $\mathfrak{C}$ of the second quantization algebra $\mathcal{X}_{R,N}$ via its action on the generators 
\begin{equation}\label{eq:particle-hole-trans}
    \mathfrak{C}[\hat{\psi}^+_{i,a}]=\bar{\psi}^-_{i,a},\quad \mathfrak{C}[\bar{\psi}^-_{i,a}]=-\eta_\alpha\hat{\psi}^+_{i,a}.
\end{equation} 
It is straightforward to check that this transformation 
preserves all the defining relations of $\mathcal{X}_{R,N}$ in Eq.~\eqref{eq:fundamental_Rcommu_NNC}, therefore, $\mathfrak{C}$ extends uniquely to an automorphism of $\mathcal{X}_{R,N}$. 
    
The above discussion is valid for any involutive $R$-matrix. In order to implement the automorphism $\mathfrak{C}$ as a unitary transformation on the Fock space, we need to specialize to the subclass of $R$-matrices for which $z_R(x)$ is finite. In this case we have $\mathcal{X}_{R,N}\cong M_D[\C]$, as discussed  in Remark~\ref{rmk:Fock_irrep}. It is known that every automorphism of $M_D[\C]$ must be inner~\cite{etingof2011introduction}, meaning that 
there exists $\hat{C}\in \mathcal{X}_{R,N}$ such that 
\begin{equation}\label{eq:PHsymm-inner}
\mathfrak{C}[\hat{x}]=\hat{C}\hat{x}\hat{C}^{-1},\quad\forall\hat{x}\in \mathcal{X}_{R,N}.
\end{equation} 
Physically, $\hat{C}$ implements the particle-hole transformation in the Fock space. When the $R$-matrix is unitary and $z_R(x)$ is finite, %
$\mathcal{X}_{R,N}$ is a $\C^*$-algebra, and  $\mathfrak{C}$ %
is a $*$-automorphism, i.e., $\mathfrak{C}[\hat{x}^\dagger]=\mathfrak{C}[\hat{x}]^\dagger$ for all $\hat{x}\in \mathcal{X}_{R,N}$~(the last claim follows from the two conclusions $\alpha^*\alpha=\nu \mathds{1}$ and $\nu\eta_\alpha=-1$ from  Thm.~\ref{thm:unitary_simple_self-dualR}). 
In this case, we can assume $\hat{C}$ to be unitary, since every $*$-automorphism of the $*$-algebra   $\mathcal{X}_{R,N}\cong M_D[\C]$ is inner. 
Furthermore, we have $\hat{C}^2\hat{\psi}^\pm_{i,a}\hat{C}^{-2}=\mathfrak{C}^2[\hat{\psi}^\pm_{i,a}]=-\eta_\alpha\hat{\psi}^\pm_{i,a}=\nu\hat{\psi}^\pm_{i,a}=\nu^{\hat{n}}\hat{\psi}^\pm_{i,a}\nu^{-\hat{n}}$, hence $\nu^{-\hat{n}}\hat{C}^2$ is a central element of $\mathcal{X}_{R,N}\cong M_D[\C]$. Hence $\nu^{-\hat{n}}\hat{C}^2=\lambda \mathds{1}$, where $\lambda$ is a constant phase factor. %
Since Eq.~\eqref{eq:PHsymm-inner} only defines $\hat{C}$ up to a multiplicative constant, we can always rescale $\hat{C}$ such that $\lambda=1$, leading to
\begin{equation}
\hat{C}^2=\nu^{\hat{n}}.
\end{equation}

Below we apply the particle-hole symmetry to prove the following equality which we will use in %
App.~\ref{sec:unitary-R-with-alpha} 
for proving Thm.~\ref{thm:unitary_simple_self-dualR}:
\begin{equation}\label{eq:Tralphaalpha'}
\Tr(\alpha^T\alpha')=-\eta_\alpha n_\mathrm{max},
\end{equation}
where $n_\mathrm{max}$ is the maximal occupation number of each mode~(i.e., largest eigenvalue of $\hat{n}_i$). 
Here we assume $z_R(x)$ to be finite but we do not assume $R$ to be unitary~(in particular, the proof below does not use Thm.~\ref{thm:unitary_simple_self-dualR}).  
We begin by computing
\begin{equation}\label{eq:particle-hole_n_i}
\mathfrak{C}[\hat{n}_i]=\sum_a\mathfrak{C}[\hat{\psi}^+_{i,a}]\mathfrak{C}[\hat{\psi}^-_{i,a}]=-\eta_\alpha\Tr(\alpha^T\alpha')-\hat{n}_i,
\end{equation}
	where we use Eqs.~\eqref{eq:fundamental_Rcommu_NNC} and \eqref{eq:particle-hole-trans}, the definition $\alpha' = \alpha^{-1}$, and the fact that $\mathfrak{C}$ is an automorphism. 
    Applying Eq.~\eqref{eq:PHsymm-inner}, we see that $-\eta_\alpha\Tr(\alpha^T\alpha')-\hat{n}_i=\hat{C}\hat{n}_i\hat{C}^{-1}$ must have exactly the same spectrum as $\hat{n}_i$, which is $\{0,1,\ldots,n_{\mathrm{max}}\}$, leading to Eq.~\eqref{eq:Tralphaalpha'}.  %

\subsubsection{A parastatistical generalization of Majorana fermions}\label{sec:gen_Majorana}
It will turn out to be useful in many situations to rewrite the CRs in Eq.~\eqref{eq:fundamental_Rcommu_NNC} in a more compact way as follows. 
First, we define $\hat{\phi}_{j,\zeta, b} \equiv
		(\hat{\psi}^+_{j,b},
		\bar{\psi}^-_{j,b})_\zeta$ for $\zeta=1,2$, and introduce 
collective labels $I=(i,\zeta_I),J=(j,\zeta_J)$.
Then we can rewrite all the CRs in Eq.~\eqref{eq:fundamental_Rcommu_NNC} in a single line as
\begin{eqnarray}\label{eq:fundamental_Rcommu_NNC_compact}
	\hat{\phi}_{I,a} \hat{\phi}_{J,b}=\sum_{c,d}R^{cd}_{ab}\hat{\phi}_{J,c} \hat{\phi}_{I,d}-\eta_\alpha \alpha_{ab}s_{IJ},%
\end{eqnarray}
where $\delta_{IJ}=\delta_{ij}\delta_{\zeta_I\zeta_J}$, and 
\begin{equation}
s_{IJ}\equiv \delta_{ij}\begin{pmatrix}
		0 & \mathds{1}\\
		-\eta_\alpha \mathds{1} & 0
	\end{pmatrix} _{\zeta_I\zeta_J} .
\end{equation}
The simple CR in Eq.~\eqref{eq:fundamental_Rcommu_NNC_compact} will be used in the exact solution of Bogoliubov-type Hamiltonians of $R$-paraparticles in Sec.~\ref{sec:u1_breaking_solu}, and also in constructing $R$-paraparticle quantum field theories. 

When $\eta_\alpha=-1$, we can also have an equivalent formulation of the theory in terms of $R$-parastatistical Majorana fermion operators. Define $\hat{\chi}_{j\zeta a}\equiv i^{\delta_{\zeta,-1}} (\hat{\psi}^+_{j,a}+\zeta \bar{\psi}^-_{j,a})$, where $\zeta=\pm1$. One can then check that Eq.~\eqref{eq:fundamental_Rcommu_NNC}~[or Eq.~\eqref{eq:fundamental_Rcommu_NNC_compact}] is equivalent to the following CR between $\hat{\chi}_{j\zeta a}$ 
\begin{equation}\label{eq:paraClifford}
\hat{\chi}_{I,a}\hat{\chi}_{J,b}=\sum_{c,d}\hat{\chi}_{J,c}\hat{\chi}_{I,d} R^{cd}_{ab}+2\delta_{IJ}\alpha_{ab},
\end{equation}
where we use the same collective labels as above. 
Eq.~\eqref{eq:paraClifford} is an $R$-parastatistical generalization of the Clifford algebra of Majorana fermion operators~\cite{Kitaev2001Majorana} where $R=-1$ and $\alpha=1$. 

Within this $R$-Majorana fermion formulation, the basic family of local observables is generated by bilinear operators of the form $\alpha'_{ab}\hat{\chi}_{Ia}\hat{\chi}_{Jb}$, and a general local observable $\hat{O}_S$ in a bounded region $S$ satisfies $[\hat{O}_S,\hat{\chi}_{Ia}]=0$ for any $I\in \bar{S}$.   %
The most general solvable bilinear~(non-interacting) Hamiltonian is
\begin{eqnarray}
	\hat{H}=\sum_{I,J,a,b}t_{IJ}\alpha'_{ab}\hat{\chi}_{Ia}\hat{\chi}_{Jb}.
\end{eqnarray}
For local Hamiltonians, we require that the tunneling matrix $t_{IJ}$ is zero for far separated $I$ and $J$. One can then check that the CRs in Eq.~\eqref{eq:paraClifford}  guarantee the locality of local observables and local Hamiltonians, and that $[\hat{H}, \hat{\chi}_{Ia}]$ is linear in $\{\hat{\chi}_{Ia}\}$.

\subsection{Pair creation for non-self-dual paraparticles}\label{sec:pair-creation-nonSD}
We now generalize the formalism of Sec.~\ref{sec:U1breaking} to non-self-dual $R$-paraparticles. Let  $\T$ be the set of particle types in a system of $R$-paraparticles, as considered in Sec.~\ref{sec:mutual_para}, and let $\psi,\bar{\psi}$ be two distinct simple particle types. 
We call $\bar{\psi}$ the antiparticle of $\psi$ if there exists a local operator $\hat{O}$~(in the sense of Definition~\ref{def:general_LO}) that creates a pair of particles $\psi$ and $\bar{\psi}$ from the vacuum $\ket{0}$. 
The ansatz for pair annihilation and creation operators used in Eq.~\eqref{def:e_pm_ab} is generalized as follows
\begin{eqnarray}\label{eq:e_pm_ab-psipsib}
\hat{e}^{(\psi\bar{\psi})-}_{ij}&=&\sum_{a,b}\alpha^{(\bar{\psi}\psi)}_{ba}\hat{\psi}_{i,a}^-\hat{\bar{\psi}}_{j,b}^-,\nonumber\\	
	\hat{e}^{(\psi\bar{\psi})+}_{ij}&=&-\sum_{a,b}\alpha^{\prime(\psi\bar{\psi})}_{ab}\hat{\psi}_{i,a}^+\hat{\bar{\psi}}_{j,b}^+,
\end{eqnarray}
where as before,  the tensors $\alpha^{(\bar{\psi}\psi)}$ and $\alpha^{\prime(\psi\bar{\psi})}$ are $m\times m$ constant matrices  
such that $\hat{e}^{(\psi\bar{\psi})-}_{ij}$ and $\hat{e}^{(\psi\bar{\psi})+}_{ij}$ commute with 
$\hat{\varphi}^\pm_{k,a}$ for any $k\notin\{i,j\}$ and any $\varphi\in \T$. 
This sets a condition between $\alpha^{(\bar{\psi}\psi)}$ and $\alpha^{\prime(\psi\bar{\psi})}$ and the relevant mutual $R$-matrices: %
\begin{equation}\label{eq:evcoevdef}
\adjincludegraphics[height=11ex,valign=c]{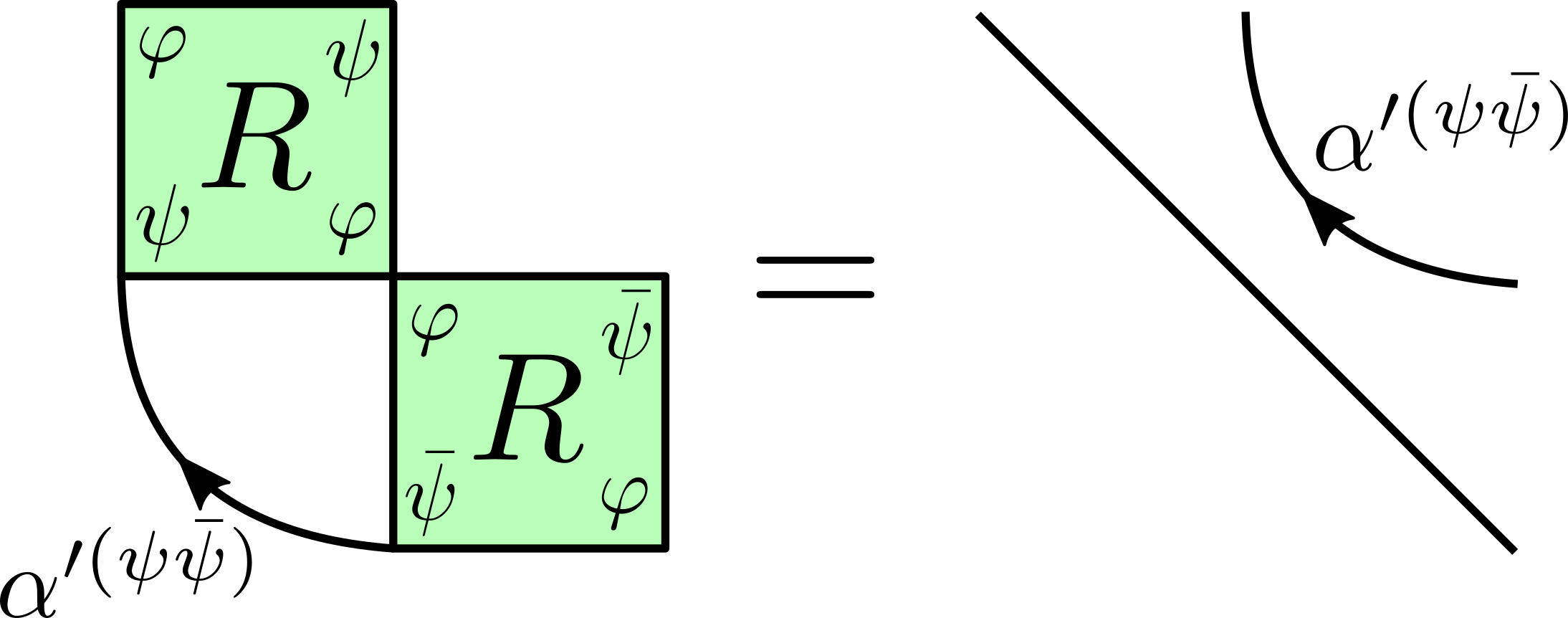}~,
\quad
\adjincludegraphics[height=11ex,valign=c]{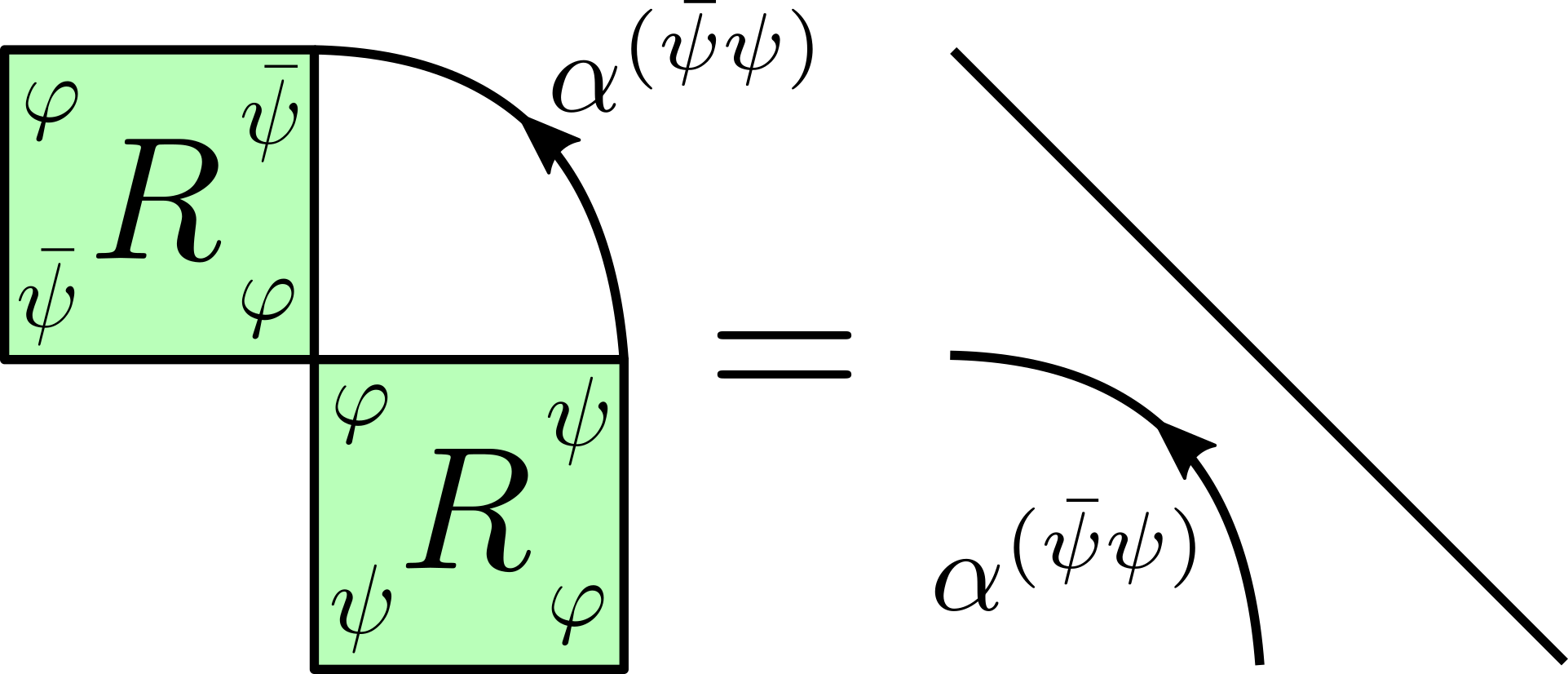}~,	
\end{equation}
which generalizes Eq.~\eqref{eq:RRalpha}. 
The equality $\alpha\alpha'=\mathds 1$ is now generalized to  the following 
\begin{equation}\label{eq:snakelemma}
	\adjincludegraphics[height=9ex,valign=c]{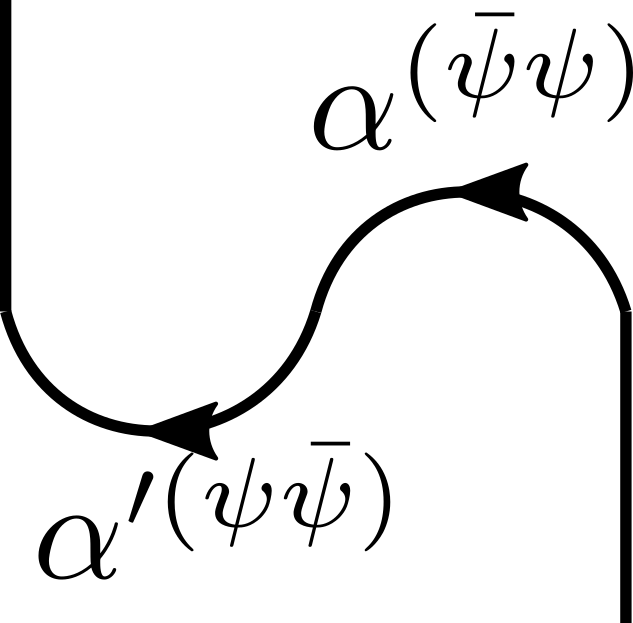}~=~\id_\psi~
    \adjincludegraphics[height=9ex,valign=c]{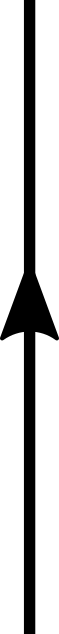}~~,
    \quad
	\adjincludegraphics[height=9ex,valign=c]{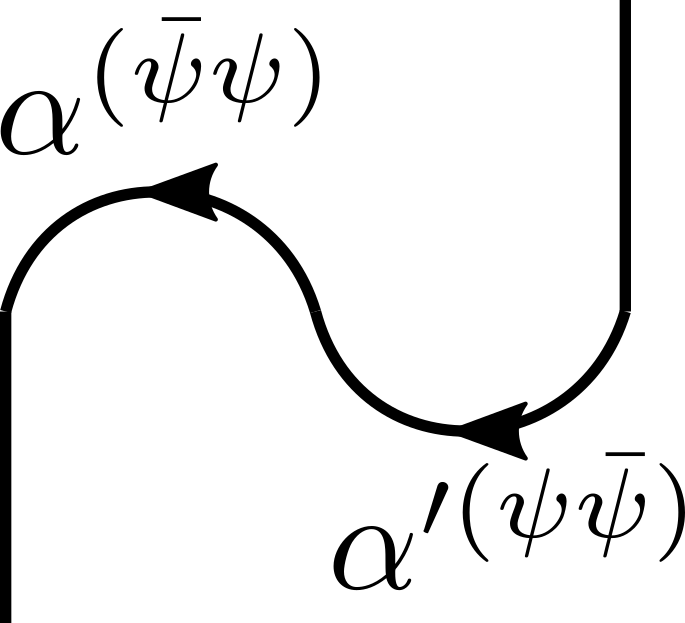}~=~\id_{\bar{\psi}}~
    \adjincludegraphics[height=9ex,valign=c]{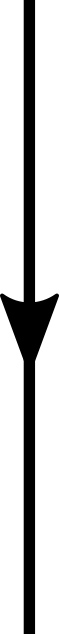}~~,
\end{equation}
which is reminiscent of the snake identity in rigid monoidal categories~\cite{TenCat_EGNO}.

The Lie algebra relations of all bilinear local operators can be computed exactly as in Sec.~\ref{sec:LApaircreationbilin}, and we do not repeat the calculations here, but point out that bilinear Hamiltonians still remain exactly solvable. 

In fact, if we use the composite $R$-matrix $\mathbf{R}$ defined in Eq.~\eqref{eq:compositeR-def}, then we can have a collective description of pair creation and annihilation of all particle types in the system, and it essentially leads us back to the formalism in Sec.~\ref{sec:U1breaking}
with $R$ replaced by $\mathbf{R}$. For example, we can write pair annihilation and creation operators collectively as
\begin{eqnarray}\label{eq:e_pm_ab-Psi}
\hat{e}^{-}_{ij}&=&\sum_{A,B}\boldsymbol{\alpha}_{BA}\hat{\Psi}_{i,A}^-\hat{\Psi}_{j,B}^-,\nonumber\\	
	\hat{e}^{+}_{ij}&=&-\sum_{A,B}\boldsymbol{\alpha}^{\prime}_{AB}\hat{\Psi}_{i,A}^+\hat{\Psi}_{j,B}^+,
\end{eqnarray}
where $\mathbf{R}$,  $\boldsymbol{\alpha}$, and $\boldsymbol{\alpha}'$ satisfy the first five lines  in Fig.~\ref{fig:Ralphaeqn}. The relation between Eq.~\eqref{eq:e_pm_ab-psipsib} and Eq.~\eqref{eq:e_pm_ab-Psi} is seen in the following expansion 
\begin{equation}
\hat{e}^{-}_{ij}=\sum_{\psi\in T} \hat{e}^{(\psi\bar{\psi})-}_{ij},
\end{equation}
where $\alpha^{(\bar{\psi}\psi)}_{ba}=\boldsymbol{\alpha}_{(\bar{\psi},b),(\psi,a)}$, and similarly for $\hat{e}^{+}_{ij}$.

\begin{fact}{(Standard construction for antiparticle)} If the $R$-matrix of $\psi$ satisfies the condition in Eq.~\eqref{eq:DUcondition}~(e.g., when $R$ is dual-unitary), then we can always define the antiparticle $\bar\psi$ by defining the mutual $R$-matrices as follows
\begin{equation}\label{eq:standard-dual}
\adjincludegraphics[height=14ex,valign=c]{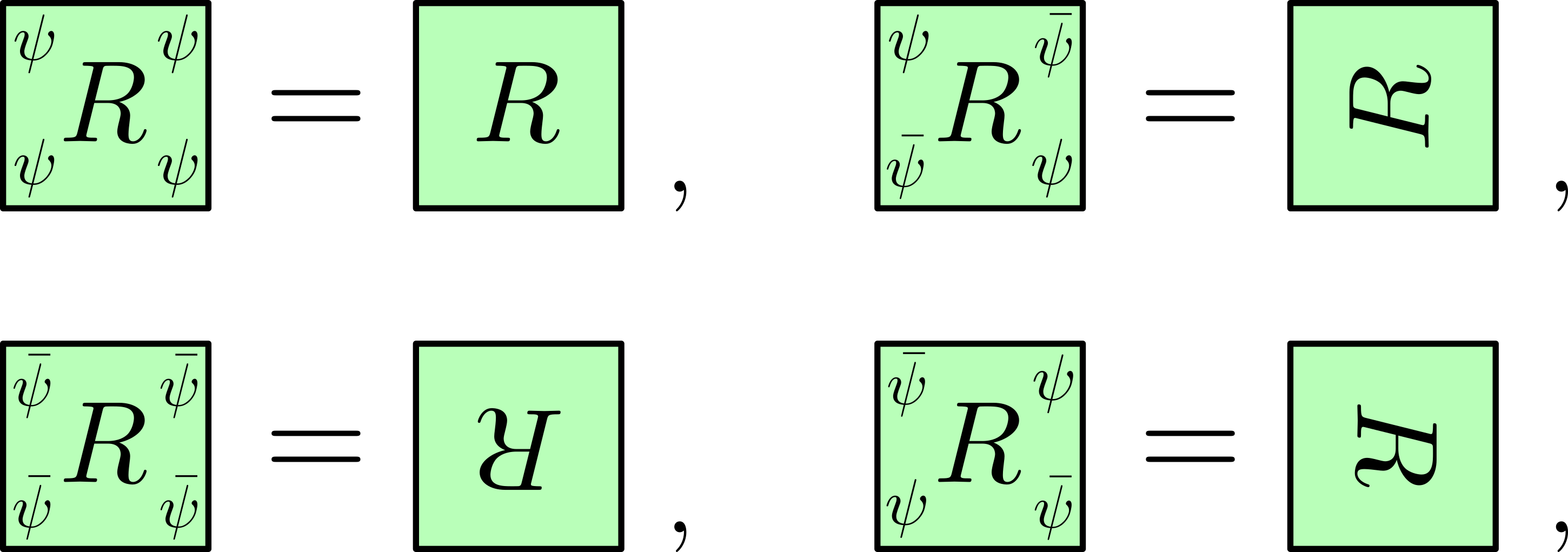}.	
\end{equation}
It is straightforward to verify that this assignment satisfy the generalized YBE in Eq.~\eqref{eq:YBEgraphical-rela}, hence it consistently defines the mutual statistics between $\psi$ and $\bar\psi$. Moreover, Eqs.~(\ref{eq:evcoevdef},\ref{eq:snakelemma}) are satisfied by taking 
$ \alpha^{(\bar{\psi}\psi)}=\alpha^{\prime(\psi\bar{\psi})}=\mathds{1}$, 
hence $\bar\psi$ is indeed the antiparticle of $\psi$ according to our earlier definition. 

The subtle point here is that if $R$ satisfies Eq.~\eqref{eq:Ralpha_def-graphical}, then from Fig.~\ref{fig:Ralphaeqn} we see that $R$ is actually isomorphic to $\Rmm$ in the sense of
Eq.~\eqref{eq:basistransformRmat}, where $\alpha$ gives the basis transformation, and we will see later in Definition~\ref{def:morphisms} that $\psi$ and $\bar\psi$ describe isomorphic particle types, i.e., $\psi\cong \bar\psi$ is a self-dual paraparticle. In this case we should use the formalism in Sec.~\ref{sec:U1breaking}, where $\psi$ and $\bar\psi$ are created by the same set of creation operators $\hat\psi^+_{i,a}$, instead of using an independent field system for $\bar\psi$.   
\end{fact}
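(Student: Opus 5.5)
The plan is to treat the assignment in Eq.~\eqref{eq:standard-dual} purely diagrammatically. The antiparticle $\bar\psi$ is given internal space $\mathfrak{I}$ again, and every mutual $R$-matrix involving $\bar\psi$ is obtained from $R=R^{(\psi\psi)}$ by bending legs, i.e.\ by partial transposes or $90^\circ$/$180^\circ$ rotations of the crossing: $\Rmm$ for $R^{(\bar\psi\bar\psi)}$, and $\Rmp$ with the appropriate orientation for $R^{(\psi\bar\psi)}$ and $R^{(\bar\psi\psi)}$. I would then pass to the composite matrix $\mathbf{R}$ of Eq.~\eqref{eq:compositeR-def} with $\T=\{\psi,\bar\psi\}$. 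By the discussion around Eq.~\eqref{eq:YBE-composite}, checking Eq.~\eqref{eq:YBEgraphical-rela} for all choices of types is equivalent to $\mathbf{R}^2=\mathds{1}$ and the braid relation for $\mathbf{R}$. I would verify these block by block in the type labels.

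First, involutivity. The $\psi\psi$ block is $R^2=\mathds{1}$. The $\bar\psi\bar\psi$ block is $(\Rmm)^2=\mathds{1}$, which is just $R^2=\mathds{1}$ rotated by $180^\circ$. The mixed blocks require $R^{(\bar\psi\psi)}=[R^{(\psi\bar\psi)}]^{-1}$. Since both are $90^\circ$ rotations of $R$ in opposite senses, their product is, after straightening the bent legs, the composite in Eq.~\eqref{eq:DUcondition}. This is exactly where the hypothesis enters, and it is the only place I expect to use something beyond the YBE~\eqref{eq:YBE}.

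Second, the braid relation. There are eight type assignments $(\psi\text{ or }\bar\psi)^{3}$. For each one, I would take the three-crossing diagram and pull the $\bar\psi$ strands, which are drawn with reversed orientation, around through cups and caps until every strand points upward. After this move, each side of Eq.~\eqref{eq:YBEgraphical-rela} becomes the corresponding side of the ordinary YBE for $R$, possibly rotated as a whole. The all-$\psi$ and all-$\bar\psi$ cases are immediate. The mixed cases are the main obstacle. The hexagon-like diagram has one strand reversed relative to the other two, so the deformation does not give the ordinary YBE directly; it gives the YBE with one strand bent. To close it, I would first multiply both sides by the inverse crossings that are available from the involutivity step. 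This rewrites the relation as $R_{12}R_{23}R_{12}=R_{23}R_{12}R_{23}$ conjugated by partial transposes. I would then use the dual-unitarity identity Eq.~\eqref{eq:DUcondition} to cancel the leftover cup–cap pairs. By the $\psi\leftrightarrow\bar\psi$ symmetry of the construction, only two genuinely distinct mixed cases need checking: one $\bar\psi$, or two $\bar\psi$'s.

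Finally, I would check the pair-creation conditions with $\alpha^{(\bar\psi\psi)}=\alpha^{\prime(\psi\bar\psi)}=\mathds{1}$. In that case the cup and cap in Eq.~\eqref{eq:evcoevdef} are bare bent lines. Each condition says that sliding a third strand (of type $\psi$ or $\bar\psi$) across the cup or cap, through two crossings, is the same as passing it beneath. Because the mutual crossings were defined as rotations of $R$, this is a tautology once the involutivity relation from the first step is inserted. The snake identities Eq.~\eqref{eq:snakelemma} with $\alpha=\alpha'=\mathds{1}$ reduce to $\delta$-contractions, i.e.\ $\mathds{1}\cdot\mathds{1}=\mathds{1}$, and hold trivially. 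Together these give all the claims of the Fact.
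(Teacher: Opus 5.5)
Your diagrammatic check of the generalized YBE and of Eqs.~\eqref{eq:evcoevdef}--\eqref{eq:snakelemma} is the kind of verification the paper has in mind; the paper only asserts these checks are straightforward. However, the proposal stops at the first paragraph of the Fact. The second paragraph is never addressed, so your closing claim to have obtained all the claims is not justified.

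That paragraph asserts two things when an invertible $\alpha$ satisfies Eq.~\eqref{eq:Ralpha_def-graphical}: first, $R\cong\Rmm$ in the sense of Eq.~\eqref{eq:basistransformRmat}, with $\alpha$ as the change of basis; second, $\psi\cong\bar\psi$ as particle types. The missing step is to read off from Fig.~\ref{fig:Ralphaeqn} that conjugating a $\bar\psi$ strand by $\alpha$ turns each of $R^{(\bar\psi\psi)}$, $R^{(\psi\bar\psi)}$ and $R^{(\bar\psi\bar\psi)}=\Rmm$ into $R$. Equivalently, this is the content of Eq.~\eqref{eq:fundamental_Rcommu_NNC}: there $\bar\psi^-_{i,c}=\sum_a\hat\psi^-_{i,a}\alpha_{ca}$ has the same exchange relations as $\hat\psi^+_{i,c}$, up to the $c$-number term. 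Conjugating both strands of $\Rmm$ gives $R\cong\Rmm$. Conjugating only the object strand is exactly the morphism condition Eq.~\eqref{eq:def_morphisms} for $\alpha$, viewed as a map between $\psi$ and $\bar\psi$, relative to both summands of the generating object $\psi\oplus\bar\psi$. Invertibility of $\alpha$ (assumed, or supplied by Lemma~\ref{lemma:alpha_unique_inv} for simple $R$) then makes it an isomorphism in the sense of Definition~\ref{def:morphisms}.

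Separately, your account of where Eq.~\eqref{eq:DUcondition} enters the braid relation needs correcting, although the tools you name are enough. Cups and caps here are Kronecker deltas, so zigzags cancel for free and dual unitarity is never needed for that. What matters is the position of the odd strand:
\begin{itemize}
\item If it sits at an end ($\bar\psi\psi\psi$, $\psi\psi\bar\psi$, and the analogous two-$\bar\psi$ cases), both crossings on it have the same mixed type. The relation is then literally the tensor identity Eq.~\eqref{eq:YBE} with its open indices regrouped (a partial transpose), so only the YBE is used.
\item If it sits in the middle ($\psi\bar\psi\psi$, $\bar\psi\psi\bar\psi$), it meets one $R^{(\psi\bar\psi)}$ and one $R^{(\bar\psi\psi)}$. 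The resulting network is not a re-indexing of the YBE: one can check that on each side all three copies of $R$ have their open legs on the same side. You must use $R^{(\psi\bar\psi)}=[R^{(\bar\psi\psi)}]^{-1}$, i.e.\ Eq.~\eqref{eq:DUcondition}, to move one mixed crossing to the other side, which converts the relation into an end case.
\end{itemize}
Consequently, the $\psi\leftrightarrow\bar\psi$ symmetry, which swaps $R$ with $\Rmm$ and one-$\bar\psi$ cases with two-$\bar\psi$ cases, does not reduce the work to ``one $\bar\psi$ versus two.'' The genuinely different cases are end versus middle positions.

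Your pair-creation step is fine. With $\alpha^{(\bar\psi\psi)}=\alpha^{\prime(\psi\bar\psi)}=\mathds{1}$, each sliding identity is one of the two orders of the mixed inverse relation, and the two orders are equivalent in finite dimension.
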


We end this section by mentioning two  conjectures about unitary involutive $R$-matrices admitting pair creation: 
\begin{conjecture}\label{conj:nonsimpleR}
Let $R$ be a unitary involutive $R$-matrix. \\
(1). If there exists an invertible $\alpha$ satisfying Eq.~\eqref{eq:Ralpha_def-graphical}, then $R$ must be dual unitary;\\
(2). If $R$ is dual unitary, then we have $\thetaR\equiv \Tr_1[R]=\Tr_2[R]\in \DC[R]$ and $\thetaR^2=\mathds{1}$. 
\end{conjecture}
We expect these conjectures to be correct as they follow from natural expectations from physical and categorical considerations. Note that both conjectures are correct for unitary simple self-dual $R$-matrices, as shown by Thm.~\ref{thm:unitary_simple_self-dualR}. These conjectures are the major technical challenges for generalizing Thm.~\ref{thm:unitary_simple_self-dualR} to non-simple unitary $R$-matrices, or equivalently~(via the composite $\mathbf R$-matrix trick in Sec.~\ref{sec:mutual_para} and the direct sum decomposition Thm.~\ref{thm:directsum-decomposition}), to an $R$-paraparticle system with any number of particle types.

\section{Exact Solution of free paraparticles}\label{sec:solution}
In this section we present an important feature of our second formulation: it guarantees the solvability of general bilinear Hamiltonians
\begin{equation}\label{eq:u1_sym_H}
	\hat{H}=\sum_{1\leq i,j\leq N} h_{ij} \hat{e}_{ij} =\sum_{\substack{1\leq i,j\leq N\\1\leq a\leq m}} h_{ij}  \hat{\psi}^+_{i,a}\hat{\psi}^-_{j,a},
\end{equation} 
where the coefficient matrix $h_{ij}$ is required to be Hermitian $h_{ij}^*=h_{ji}$ so that $\hat{H}^\dagger=\hat{H}$. We will see that the solution boils down to finding the single-particle spectrum, which is obtained by diagonalizing the coefficient matrix $h_{ij}$. The whole many body spectrum is obtained by independently occupying  single-particle levels, each level satisfying the generalized exclusion principle discussed in Sec.~\ref{sec:exclusion_statistics_calc}.  Therefore, our construction extends the concept of free particle systems, from free fermions and bosons to free paraparticles.

The exact solution of free particle systems plays a fundamental role in our understanding of quantum many-body physics. First, free particle systems already display rich physics and are interesting in their own right. They exhibit interesting phases, such as the integer quantum Hall effect~\cite{thoulessQuantizedHallConductance1982}, topological insulators and superconductors~\cite{kitaevPeriodicTableTopological2009,TPSC-RMP}, and have interesting dynamical properties such as Anderson localization~\cite{andersonAbsenceDiffusionCertain1958}. Second, they are often the starting point for understanding interacting many body systems. Two of the most important approximations used in studying interacting systems, namely mean-field theory and perturbation theory, are often based on the solutions of free particles. Third, the technique of solving free particles contributes to the exact solutions of some strongly-interacting spin models~(via spin-fermion mapping), such as the 2D classical Ising model~\cite{Lieb1964Two-dimensional} or 1D quantum Ising model~\cite{pfeutyOnedimensionalIsingModel1970,dziarmaga2005dynamics}, the 1D quantum XY model~\cite{LSM1961Twosoluble}, and Kitaev's honeycomb models~\cite{kitaev2006anyons,Feng2007JWKitaev,Yao2007Exact,Mandal2009Exactly,Yao2011Fermionic,takagi2019concept,wangTopologicalCorrelationsThreedimensional2023,chapmanUnifiedGraphTheoreticFramework2023}. Finally, free particle solutions are also employed in a number of numerical techniques, such as the determinantal quantum Monte Carlo algorithm~\cite{whiteNumericalStudyTwodimensional1989}. 
We mention that the exact solution of free $R$-paraparticle systems have already played an important role in some recent works that apply $R$-paraparticle theory to study exotic quantum matter, such as the $R$-paraparticle analog of SYK model~\cite{liSpectralFormFactor2025,liNote$q2$Rparafermionic2025}, emergent $R$-paraparticles in interacting periodic chain~\cite{schurichtParastatisticsRevealedPeierls2025}, and the $R$-paraparticle analog of Luttinger liquid~\cite{salinelBosonization$R$paraparticleLuttinger2025}.

In Sec.~\ref{sec:u1_sym_solu} we present the exact solution to the family of $U(1)$ symmetric Hamiltonians in Eq.~\eqref{eq:u1_sym_H}, and in Sec.~\ref{sec:u1_breaking_solu} we extend the method to Hamiltonians containing the $U(1)$ breaking terms $\hat{e}^\pm_{ij}$ introduced in Sec.~\ref{sec:U1breaking}. 
\subsection{$U(1)$ symmetric case}\label{sec:u1_sym_solu}
One important property of the $R$-CRs in Eq.~\eqref{eq:fundamental_Rcommu_supress} is that they are invariant under $U(N)$ transformations of $\{\hat{\psi}^\pm_{i,a}\}$:
\begin{eqnarray}\label{eq:fourier_transform}
	\hat{\psi}^-_{i,a}&=& \sum^N_{k=1} U_{ki}^{*} \tilde{\psi}^-_{k,a},\nonumber\\
	\hat{\psi}^+_{i,a}&=& \sum^N_{k=1} U_{ki} \tilde{\psi}^+_{k,a},
\end{eqnarray}
where $U_{ki}$ is a $N\times N$ unitary matrix. For example, in translationally invariant systems it is useful to take $U_{ki}=\frac{1}{\sqrt{N}}e^{-i\vec{k}\cdot\vec{i}}$.  Inserting Eq.~\eqref{eq:fourier_transform} into Eq.~\eqref{eq:fundamental_Rcommu_supress}, we get
\begin{eqnarray}\label{eq:fundamental_Rcommu_kspace}
	\hat{\psi}^-_{k} \hat{\psi}^+_{p}&=&\hat{\psi}_{p}^+ \hat{\psi}_{k}^-\cdot\Rmp+\delta_{kp}\delta,\nonumber\\%
	\hat{\psi}^+_{k} \hat{\psi}^+_{p}&=&\hat{\psi}_{p}^+ \hat{\psi}_{k}^+ \cdot R,\nonumber\\
	\hat{\psi}^-_{k} \hat{\psi}^-_{p}&=&\hat{\psi}_{p}^- \hat{\psi}_{k}^-\cdot \Rmm. %
\end{eqnarray}
Therefore, the operators  $\{\hat{\psi}^\pm_{k,a}\}$ satisfy exactly the same CRs as $\{\hat{\psi}^\pm_{i,a}\}$. 
Notice that most of our discussions in Sec.~\ref{sec:second_quantization}, including Sec.~\ref{sec:bondLA} and Sec.~\ref{sec:state_space}, only assume the CRs in Eq.~\eqref{eq:fundamental_Rcommu_supress}, therefore, those results that hold for $\{\hat{\psi}^\pm_{i,a}\}$ also apply to $\{\hat{\psi}^\pm_{k,a}\}$. In particular,  the operators $\{\hat{\psi}^\pm_{k,a}\}$ lead to an equivalent basis for the same many particle state space, the  single-particle mode $k$ created by $\{\hat{\psi}^+_{k,a}\}^m_{a= 1}$ obeys the same exclusion statistics discussed in Sec.~\ref{sec:exclusion_statistics_calc}, and the bilinear operators $\hat{e}_{kp}=  \sum_a \hat{\psi}^+_{k,a}\hat{\psi}^-_{p,a}$   form an equivalent basis for the Lie algebra $\mathfrak{gl}_N$ discussed in Sec.~\ref{sec:bondLA}, with the same CRs as in  Eqs.~(\ref{eq:commu_Eab_psi_p}-\ref{eq:commu_Eab_Ecd}).

Inserting Eq.~\eqref{eq:fourier_transform} into Eq.~\eqref{eq:u1_sym_H}, the bilinear Hamiltonian transforms into
\begin{equation}\label{eq:u1_sym_H_kspace}
	\hat{H}=\sum_{1\leq k,p\leq N} h'_{kp} \hat{e}_{kp}=\sum_{\substack{1\leq k,p\leq N\\1\leq a\leq m}} h'_{kp} \hat{\psi}^+_{k,a}\hat{\psi}^-_{p,a},
\end{equation} 
where 
\begin{equation}\label{eq:coef_h_kp}
	h'_{kp}  =\sum_{1\leq i,j\leq N}U_{ki} h_{ij}U_{kj}^*=[UhU^\dagger]_{kp}.
\end{equation} 
We can therefore choose the unitary matrix $U$ to diagonalize the Hermitian coefficient matrix $h_{ij}$, with real eigenvalues $\{\epsilon_{k}\}^N_{k=1}$. The Hamiltonian then becomes 
\begin{equation}\label{eq:u1_sym_H_diag}
	\hat{H}=\sum^N_{k=1} \epsilon_{k} \hat{n}_{k}.
\end{equation} 
From the discussions in Sec.~\ref{sec:state_space} we know that the particle number operators $\{\hat{n}_k\}^N_{k=1}$ mutually commute and take independent eigenvalues. Their common eigenstates, as defined in Sec.~\ref{sec:state_space}, are $|{}^{\alpha_1}_{n_1},{}_{n_2}^{\alpha_2},\ldots, {}_{n_N}^{\alpha_N}\rangle$, with energy eigenvalues $E=\sum^N_{k=1} \epsilon_{k} n_{k}$, and $1\leq \alpha_j\leq d_{n_j}$ specifies the single-particle exclusion statistics.

We now calculate some physical observables at finite temperature $T$, and the corresponding zero temperature results can be obtained by taking the limit $T\to 0$. The partition function is a product of single mode partition functions in Eq.~\eqref{eq:single_mode_Z}
\begin{equation}\label{eq:part_func_Z}
	Z(\beta)\equiv \mathrm{Tr} [e^{-\beta \hat{H}}]=\prod_k z_R(e^{-\beta\epsilon_k}),
\end{equation}
where $z_R(x)$ is the Hilbert series defined in Eq.~\eqref{eq:single_mode_Z}. The free energy is
\begin{equation}\label{eq:free_energy}
	F(\beta)\equiv -\frac{1}{\beta}\ln Z(\beta)=-\frac{1}{\beta}\sum_k \ln z_R(e^{-\beta\epsilon_k}).
\end{equation}
The thermal expectation value of occupation number operator for mode $k$ is
\begin{equation}\label{eq:n_k_expectation}
	\langle \hat{n}_k\rangle_\beta\equiv \frac{\mathrm{Tr} [\hat{n}_k e^{-\beta \hat{H}}]}{ \mathrm{Tr} [e^{-\beta \hat{H}}]}=\frac{z'_R(e^{-\beta\epsilon_k})e^{-\beta\epsilon_k}}{z_R(e^{-\beta\epsilon_k})},
\end{equation}
and $\langle \hat{e}_{kp}\rangle_\beta=0$ for $k\neq p$. Fig.~\ref{fig:Tempdep} plots $\langle \hat{n}_k\rangle_\beta$ as a function of $\beta\epsilon_k$ for the $R$-matrices in examples~\ref{ex:1m} and \ref{ex:1m1} of Sec.~\ref{sec:YBE} with $m=5$, which shows a distinct finite temperature behavior of paraparticles compared to ordinary fermions and bosons. 
\begin{figure}
	\center{\includegraphics[width=0.7\linewidth]{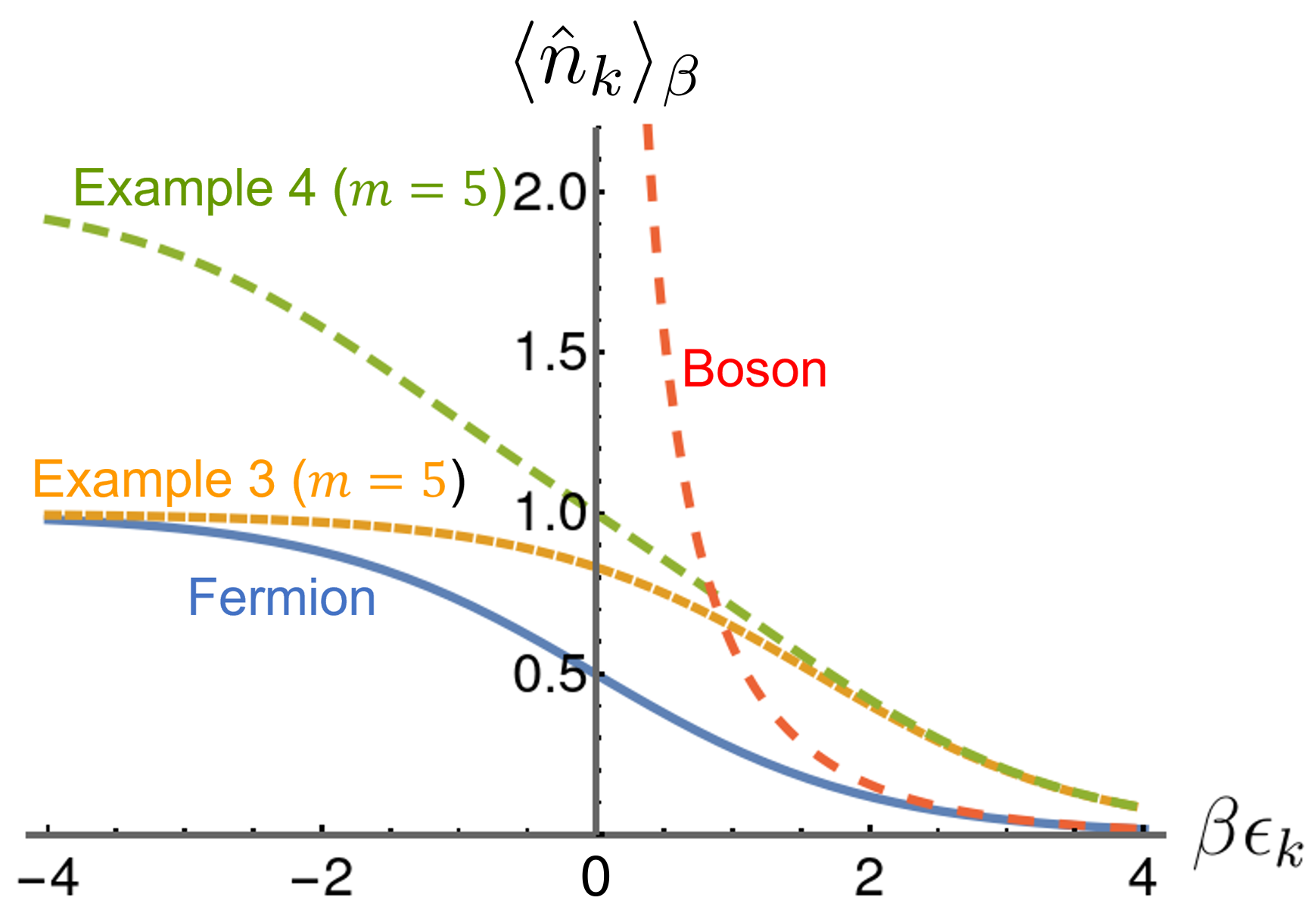}}
	\caption{\label{fig:Tempdep} The thermal expectation value of single mode occupation number $\langle \hat{n}_k\rangle_\beta$: comparison between paraparticles~(labeled by the $R$-matrices in examples~\ref{ex:1m} and \ref{ex:1m1} of Sec.~\ref{sec:YBE} with $m=5$) and ordinary fermions and bosons. }
\end{figure}

The thermal average for real space local operators $ \hat{e}_{ij}$ are obtained from Eq.~\eqref{eq:fourier_transform} and the result for $\langle \hat{e}_{kp}\rangle_\beta$
\begin{equation}
	\langle \hat{e}_{ij}\rangle_\beta=\sum_k U_{ki} U_{kj}^*\langle \hat{n}_k\rangle. 
\end{equation} 
Observables that are products of $\{\hat{e}_{ij}\}$~(or $\{\hat{e}_{kp}\}$) can be calculated using the Lie algebra relations in Eq.~\eqref{eq:commu_Eab_Ecd} and the results for  $\langle \hat{e}_{kp}\rangle_\beta$. For example, the thermal expectation value of quadratic products of  $\{\hat{e}_{ij}\}$~(e.g. $\hat{e}_{ij}\hat{e}_{ji}$ and $\hat{n}_i \hat{n}_j$) can always be written as linear combinations of $\langle \hat{e}_{kp}\hat{e}_{pk}\rangle_\beta$, which can be obtained as follows. First, we have
\begin{eqnarray}\label{eq:ekp_epk_1}
	\langle \hat{e}_{kp}\hat{e}_{pk}\rangle_\beta&=&\frac{1}{Z}\mathrm{Tr}[ \hat{e}_{kp}\hat{e}_{pk}e^{-\beta \hat{H}}]\nonumber\\
	&=&\frac{1}{Z}\mathrm{Tr}[ \hat{e}_{pk}\hat{e}^{-\beta \hat{H}}\hat{e}_{kp}]\nonumber\\
	&=&\langle \hat{e}_{pk}\hat{e}_{kp}\rangle_\beta e^{\beta(\epsilon_p-\epsilon_k)},
\end{eqnarray}
where in the last line we used $[\hat{H},\hat{e}_{kp}]=(\epsilon_k-\epsilon_p)\hat{e}_{kp}$. On the other hand, using Eq.~\eqref{eq:commu_Eab_Ecd}, we have 
\begin{eqnarray}\label{eq:ekp_epk_2}
	\langle \hat{e}_{kp}\hat{e}_{pk}\rangle_\beta-\langle \hat{e}_{pk}\hat{e}_{kp}\rangle_\beta&\equiv&\langle [\hat{e}_{kp},\hat{e}_{pk}]\rangle_\beta\nonumber\\
	&=&\langle \hat{n}_{k}\rangle_\beta-\langle \hat{n}_{p}\rangle_\beta.
\end{eqnarray}
Combining Eqs.~\eqref{eq:ekp_epk_1} and \eqref{eq:ekp_epk_2}, we get
\begin{eqnarray}\label{eq:ekp_epk}
	\langle \hat{e}_{kp}\hat{e}_{pk}\rangle_\beta=\frac{\langle \hat{n}_{k}\rangle_\beta-\langle \hat{n}_{p}\rangle_\beta}{1-e^{\beta(\epsilon_k-\epsilon_p)}}.
\end{eqnarray}
Higher-order products of $\{\hat{e}_{kp}\}$ can be obtained similarly. 

We can also calculate the unequal time correlators between local observables, for example
\begin{equation}\label{eq:unequal_t_correlator}
	\langle [\hat{n}_i(t),\hat{n}_j(0)]\rangle_\beta=\sum_{k,p} U_{ki}U_{pi}^* U_{kj}^* U_{pj} \langle \hat{n}_{k}-\hat{n}_{p}\rangle_\beta e^{it(\epsilon_k-\epsilon_p)}.
\end{equation}
Eq.~\eqref{eq:unequal_t_correlator} is obtained by expanding $\hat{n}_i(t)$ and $\hat{n}_j(0)$ as linear combinations of $\hat{e}_{kp}$ and then using Eq.~\eqref{eq:ekp_epk}, as well as the solution $\hat{e}_{kp}(t) = \hat{e}_{kp} e^{i (\epsilon_p-\epsilon_k) t}$ to the Heisenberg equations of motion.

An alternative method to calculate higher-order products of $\{\hat{e}_{kp}\}$ is to use the generalization of Wick's theorem to non-interacting paraparticles that we develop in Sec.~\ref{sec:wick_thm}. This allows straightforward, systematic computation of arbitrary products. 

\subsection{With $U(1)$ breaking terms}\label{sec:u1_breaking_solu}
In this section we extend the method above to the family of bilinear Hamiltonians involving $U(1)$ breaking terms $\hat{e}^\pm_{ij}$, as introduced in Sec.~\ref{sec:U1breaking}. The most general form of this family of Hamiltonians is 
\begin{eqnarray}\label{eq:u1_breaking_H}
	\hat{H}&=&\sum_{1\leq i,j\leq N} (M_{ij} \hat{e}_{ij}+P_{ij}\hat{e}^-_{ij}/2+\eta_\alpha N_{ij}\hat{e}^+_{ij}/2 )\\
	&=&\sum_{i,j,a}M_{ij}  \hat{\psi}^+_{i,a}\hat{\psi}^-_{j,a}+\sum_{i,j,a,b}\left(P_{ij}\alpha_{ba}\hat{\psi}^-_{i,a}\hat{\psi}^-_{j,b}/2\right.\nonumber\\
	&&\hspace{1.1in}{}-\left.\eta_\alpha N_{ij}\alpha'_{ab}\hat{\psi}^+_{i,a}\hat{\psi}^+_{j,b}/2\right)\nonumber\\
	&=&\!\!\sum_{i,j,a,b}\frac{\alpha'_{ab}}{2}\begin{pmatrix}
		\hat{\psi}^+_{i,a} & \bar{\psi}^-_{i,a}
	\end{pmatrix}\!\!\begin{pmatrix}
		M_{ij} & N_{ij}\\
		P_{ij} & -M_{ji}
	\end{pmatrix}\!\!	
	\begin{pmatrix}
		\bar{\psi}^-_{j,b} \\
		-\eta_\alpha \hat{\psi}^+_{j,b} 
	\end{pmatrix}%
        -C\nonumber
\end{eqnarray}
where the matrices $P_{ij},N_{ij}$ can be taken to satisfy $P^T=\eta_\alpha P, N^T=\eta_\alpha N$~(since the relations $\hat{e}^\pm_{ij}=\eta_\alpha \hat{e}^\pm_{ji}$ allows us to replace $P_{ij}$ by $\bar{P}_{ij}\equiv( P_{ij} + \eta_\alpha P_{ji} )/2$ without changing the Hamiltonian), and the constant  $C=\eta_\alpha \mathrm{Tr}[\alpha^T\alpha']\mathrm{Tr}[M]/2$ comes from commuting $\bar{\psi}^-_{i,a}$ with $\hat{\psi}^+_{j,b}$  using the first CR in Eq.~\eqref{eq:fundamental_Rcommu_NNC}. 
Note that hermiticity imposes additional constraints on the coefficient matrices $M,N,P$, which we discuss at the end of this section, and our treatment below should work for diagonalizable Hamiltonians in general, even including non-Hermitian ones.

We now introduce $2N\times 2N$ matrices $h$ and $s$ and a $2N$ dimensional vector $\phi_b$~($2N$ dimensional for each $b$) 
\begin{equation}\label{eq:def_hsphi}
	h\equiv \begin{pmatrix}
		M & N\\
		P & -M^T
	\end{pmatrix}, 
	s\equiv \begin{pmatrix}
		0 & \mathds{1}\\
		-\eta_\alpha \mathds{1} & 0
	\end{pmatrix}, 	
	\phi_{b} \equiv	\begin{pmatrix}
		\hat{\psi}^+_{b} \\
		\bar{\psi}^-_{b} 
	\end{pmatrix},
\end{equation}
which allows us to suppress the mode labels $i,j$ and write the last line of Eq.~\eqref{eq:u1_breaking_H} in  the more compact form
\begin{equation}\label{eq:H_NNC_compact}
	\hat{H}=\frac{1}{2}\sum_{a,b}\alpha'_{ab}\phi^T_{a}\cdot h\cdot s\cdot \phi_{b} -C,
\end{equation}
and the operators $\phi_b$ satisfy the CRs in Eq.~\eqref{eq:fundamental_Rcommu_NNC_compact}.

We now try to find a canonical transformation of the creation and annihilation operators that preserves their algebra~\eqref{eq:fundamental_Rcommu_NNC_compact} while diagonalizing the Hamiltonian in Eq.~\eqref{eq:H_NNC_compact}.
The matrices $h,s$ defined in Eq.~\eqref{eq:def_hsphi} satisfy 
\begin{equation}\label{def:so_sp}
	sh+h^Ts=0.
\end{equation}
The Lie algebra $\mathfrak{sp}_{2N}$~(for $\eta_\alpha=+1$) or $\mathfrak{so}_{2N}$~(for $\eta_\alpha=-1$) is defined by all such $2N\times 2N$ matrices $h$ satisfying Eq.~\eqref{def:so_sp}~\cite{humphreys_LA}. The corresponding Lie group Sp$(2N)$ or SO$(2N)$ is defined by all $2N\times 2N$ invertible matrices $F$ satisfying
\begin{equation}\label{def:SO_SP}
	F^TsF=s.
\end{equation}
Note that inserting $F=e^{ih}$ into Eq.~\eqref{def:SO_SP} for infinitesimal $h$ reproduces  Eq.~\eqref{def:so_sp}. Also note that due to $s^2=-\eta_\alpha \mathds{1}$, Eq.~\eqref{def:SO_SP} is equivalent to $FsF^T=s$~(to see this, note that $F^TsFs=s^2\propto\mathds{1}$ implies $ Fs F^Ts=s^2$, since every matrix commutes with its inverse). 

It is clear that the CR in Eq.~\eqref{eq:fundamental_Rcommu_NNC_compact} is invariant under the transformation 
$\tilde{\phi}_b= F^T \phi_b$
for any element $F$ of the Lie group~[i.e. for any $F$ satisfying Eq.~\eqref{def:SO_SP}]. Written in terms of $\tilde{\phi}_b$, the Hamiltonian $\hat{H}$ still has the same form in Eq.~\eqref{eq:H_NNC_compact}, with $\tilde{h}=F^{-1}hF$~(the adjoint action of the Lie group on the corresponding Lie algebra). 
We choose $F$ to transform $h$ to the diagonal form 
$\tilde{h}=\mathrm{diag}\{\epsilon_1,\ldots,\epsilon_N,-\epsilon_1,\ldots,-\epsilon_N\}$~(this can always be done if $h$ is a semisimple element~\cite{humphreys_LA} of the corresponding Lie algebra. 
In general, for a semisimple element $l$ of any semisimple Lie algebra $\mathfrak{l}$, there always exists an inner automorphism that transforms $l$ into an element of a given Cartan subalgebra~\cite{humphreys_LA}, the explicit derivation for the current case is provided below).  Then the Hamiltonian becomes 
\begin{equation}\label{eq:H_diag_NNC}
	\hat{H}=\sum_{k}\epsilon_{k} \tilde{n}_k+\frac{\eta_\alpha}{2}\mathrm{Tr}[\alpha^T\alpha']\left(\sum_k\epsilon_{k}-\mathrm{Tr}[M]\right),
\end{equation}
where $\tilde{n}_k=\sum_a\tilde{\psi}^+_{k,a}\tilde{\psi}^-_{k,a}$. Therefore the task of solving the many-body spectrum boils down to finding eigenvalues of the $2N\times 2N$ matrix $h$. Once this is done, Eq.~\eqref{eq:H_diag_NNC} tells us the many-body spectrum, and the energy eigenstates are simultaneous eigenstates of $\{\tilde{n}_k\}^N_{i=1}$, as described in Sec.~\ref{sec:state_space}. This basis of many-body eigenstates satisfy the same generalized exclusion principles described in Sec.~\ref{sec:exclusion_statistics_calc}, since the transformed operators $\tilde{\psi}^+_{k,a},\tilde{\psi}^-_{k,a}$ satisfy the same CRs in Eq.~\eqref{eq:fundamental_Rcommu_supress}. Therefore, the expressions in  Eqs.~(\ref{eq:part_func_Z}-\ref{eq:n_k_expectation}) for the partition function $Z(\beta)$, the free energy $F(\beta)$,  and the single-mode occupation number $\langle\tilde{n}_k\rangle_{\beta}$ are still valid~[up to a constant factor in $Z(\beta)$ and a constant zero point energy shift in $F(\beta)$].

For the purpose of calculating physical observables and correlation functions, it is helpful to use an explicit  parametrization of $F^T$ 
\begin{equation}F^T=\begin{pmatrix}
		U^* & V^*\\
		V & U
	\end{pmatrix},\end{equation} 
where  $U,V$ are $N\times N$ matrices satisfying $UU^\dagger-\eta_\alpha VV^\dagger=\mathds{1}$ and $UV^T=\eta_\alpha VU^T$. Then we can explicitly write down the transformation of the creation and annihilation operators by expanding  $\tilde{\phi}_a= F^T \phi_a$:
\begin{eqnarray}\label{eq:Bogoliubov-transform}
	\tilde{\psi}^+_{k,a}&=&U^*_{ki}\hat{\psi}^+_{i,a}+V^*_{ki}\bar{\psi}^-_{i,a},\nonumber\\
	\tilde{\bar{\psi}}^-_{k,a}&=&U_{ki}\bar{\psi}^-_{i,a}+V_{ki}\hat{\psi}^+_{i,a},
\end{eqnarray}
or the inverse transformation $s\phi_a=F\cdot s\cdot \tilde{\phi}_a$:
\begin{eqnarray}
	\bar{\psi}^-_{i,a}&=&U^*_{ki}\tilde{\bar{\psi}}^-_{k,a}-\eta_\alpha V_{ki}\tilde{\psi}^+_{k,a},\nonumber\\
	\hat{\psi}^+_{i,a}&=&-\eta_\alpha V^*_{ki}\tilde{\bar{\psi}}^-_{k,a}+U_{ki}\tilde{\psi}^+_{k,a},
\end{eqnarray}
which are generalizations of the Bogoliubov transformation. We can then calculate the thermal expectation values of all physical observables, by expanding $\hat{e}_{ij},\hat{e}^\pm_{ij}$ in terms of the eigenmodes $\tilde{e}_{kp},\tilde{e}^\pm_{kp}$. For example, we have
\begin{equation}
	\langle \hat{e}_{ij}\rangle_\beta=\sum_k (U_{ki}U_{kj}^*+V_{ki}^*V_{kj}\eta_\alpha) \langle\tilde{n}_k\rangle_{\beta}+[V^\dagger V]_{ij}\mathrm{Tr}[\alpha^T\alpha'].
\end{equation}
\begin{remark}
Although our treatment above in principle works for non-Hermitian Hamiltonians, for physical applications we often focus on the Hermitian case. Following the discussion in Sec.~\ref{para:herm_paircreation}, if $z_R(x)$ is finite, then we have $(\hat{e}_{ij})^\dagger=\hat{e}_{ji}$ and $(\hat{e}^\pm_{ij})^\dagger= \hat{e}^\mp_{ij}$, and therefore the hermiticity of $\hat{H}$ in Eq.~\eqref{eq:u1_breaking_H} requires 
\begin{equation}
M^\dagger=M,\quad P^\dagger=N. 
\end{equation}
In this case, the  single-particle Hamiltonian $h$ in Eq.~\eqref{eq:def_hsphi} is also Hermitian, implying that the entire spectrum is real, as expected.
 
If $R$ is unitary and $z_R(x)$ is infinite~(i.e. $\thetaR=+1$), then the hermiticity of $\hat{H}$ in Eq.~\eqref{eq:u1_breaking_H} requires instead
\begin{equation}
	M^\dagger=M,\quad P^\dagger=-N. 
\end{equation}
In this case, the  single-particle Hamiltonian $h$ in Eq.~\eqref{eq:def_hsphi} is not Hermitian, and the  single-particle energy may be complex. We emphasize that this is not an idiosyncrasy of paraparticles, as it already happens in the case of ordinary bosons. As a simple example, the single mode Hamiltonian $\hat{H}=\hat{b}^2+(\hat{b}^\dagger)^2$ is superficially Hermitian, but nevertheless has a purely imaginary spectrum. We do not discuss how to treat this subtle issue in the general case in this paper, but we will encounter this issue again in Part II when we generalize Klein-Gordon field theory to paraparticles with $R$-matrix quantization, and there we will explicitly show how to make the theory physical and consistent.
\end{remark}

\subsection{Wick's theorem for free paraparticles}\label{sec:wick_thm}
In this section we generalize Wick's theorem~\cite{wickEvaluationCollisionMatrix1950} to free paraparticles.
Wick's  theorem is a fundamental computational tool in quantum many-body physics~\cite{altland_simons_2010} and relativistic quantum field theories~\cite{peskin2018introduction,weinbergQuantumTheoryFields1995}, as 
it allows us to reduce all expectation values in a free particle theory to a linear combination of products of contractions. %
For simplicity we focus on self-dual $R$-paraparticles, i.e., we assume that there exists an invertible $\alpha$ satisfying Eq.~\eqref{eq:Ralpha_def-graphical}. Below we generalize the treatment in Ref.~\cite{molinari2017notes} to paraparticles. 
We will see that once we correctly generalize the notion of normal ordering and contraction to paraparticles, all the derivations and results for the original Wick's theorem generalize straightforwardly. 

In this section we use $\hat{A}_a$ to denote a generic paraparticle operator of the form 
\begin{equation}\label{eq:def:A_aWick}
	\hat{A}_a=\sum_i\left(f_{i}\hat{\psi}^+_{i,a}+g_{i}\bar{\psi}^-_{i,a}\right),
\end{equation}
where $f_{i}$ and $g_{i}$ are some coefficients, and $\bar{\psi}^-_{i,a}\equiv \sum_b\hat{\psi}^-_{i,b}\alpha_{ab}$. We also introduce the notation
\begin{equation}
	\hat{A}^+_a=\sum_i f_{i}\hat{\psi}^+_{i,a}, \quad \hat{A}^-_a=\sum_i g_{i}\bar{\psi}^-_{i,a},
\end{equation}
i.e., $\hat{A}^+_a$~($\hat{A}^-_a$) is the creation~(annihilation) part of $\hat{A}_a$. We define the normal ordering symbol as follows
\begin{definition}{(Normal ordering)}
	The normal ordering ${\sf N}$ of the operator product $\hat{A}_{1,a_1} \cdots \hat{A}_{n,a_n}$ is defined via its action on each component as
	\begin{eqnarray}\label{normalordering}
		{\sf N}[\hat{A}_{1,a_1}^\pm \cdots \hat{A}_{n,a_n}^\pm ]=\sum_{b_1,\ldots,b_n}&&\hspace{-0.17in}\hat{A}_{\sigma_1,b_1}^+\cdots \hat{A}_{\sigma_k,b_k}^+\hat{A}_{\sigma_{k+1},b_{k+1}}^- \nonumber\\
		&&\hspace{-0.3in}\times\cdots \hat{A}_{\sigma_n,b_n}^- [\rho(\bar{\sigma})]_{a_1,\ldots,a_n}^{b_1,\ldots,b_n}, 
	\end{eqnarray}
	where $\sigma\in S_n$ is a permutation %
    that moves creation operators to the left and annihilation operators to the right, $\bar{\sigma}=\sigma^{-1}$, and  $\rho$ is the representation of the symmetric group $S_n$ generated by the $R$-matrix, as defined in  Sec.~\ref{sec:YBE}.
	Eq.~\eqref{normalordering} is extended to ${\sf N}[\hat{A}_{1,a_1} \cdots \hat{A}_{n,a_n}]$ by linearity~\footnote{Here we remark that, rigorously speaking, ${\sf N}$ should be regarded as a linear map defined on the free algebra generated by $\psi_{i,a}^\pm$, and since ${\sf N}$ does not respect the $R$-CR~\eqref{eq:fundamental_Rcommu}, it does not descend to the quotient algebra, i.e., ${\sf N}$ is not a linear map defined on the second quantization algebra of $R$-paraparticles. }. 
\end{definition}
Note that the normal ordering symbol so defined satisfies the following symmetry property
\begin{align}\label{eq:NOsymmeetry_property}
	{\sf N}[\hat{A}_1\cdots \hat{A}_n] =  {\sf N}[\hat{A}_{\sigma_1}\cdots \hat{A}_{\sigma_n}]\cdot\rho(\bar{\sigma}),
\end{align}
where we suppress the paraparticle indices. 

As an example, we have
\begin{equation}\label{eq:NOexample}
	{\sf N}\left[~\adjincludegraphics[width=15ex,valign=c]{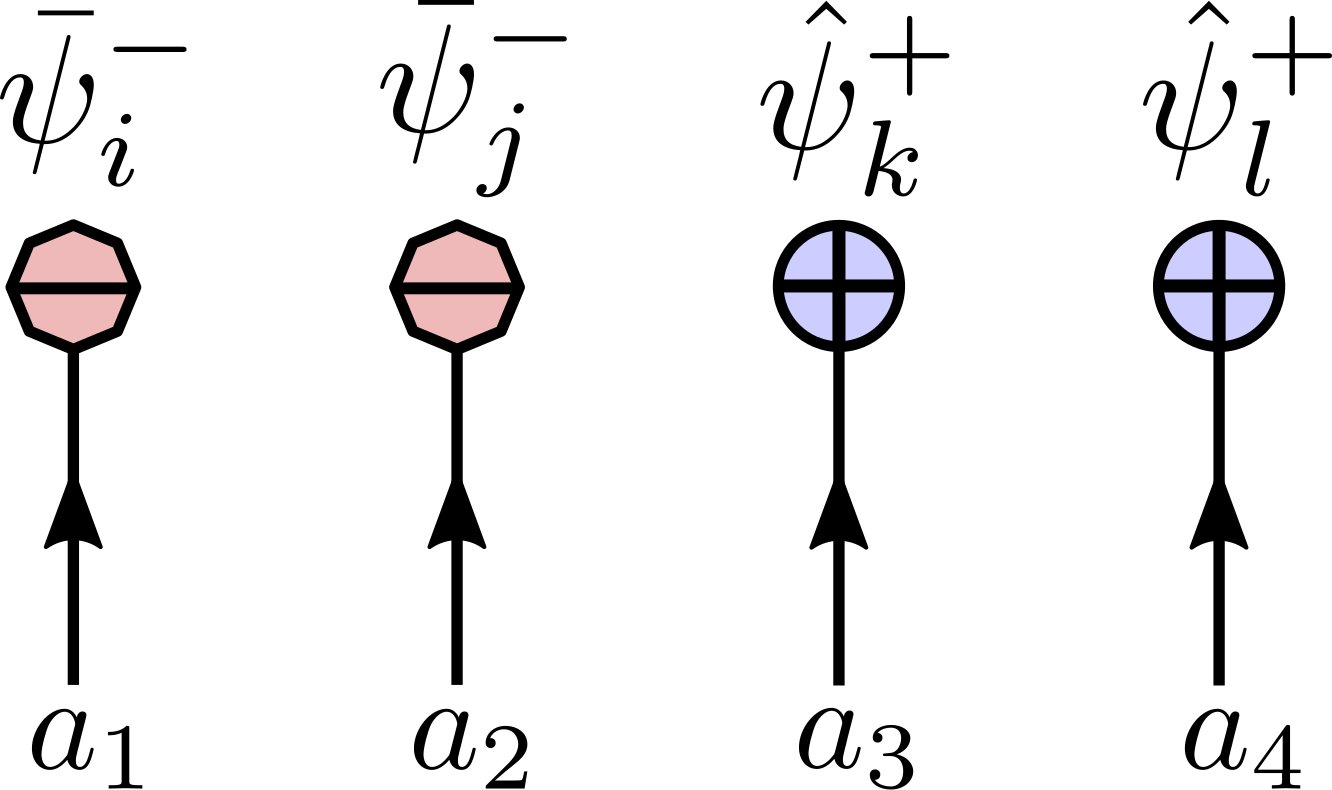}~\right]~=~
    \adjincludegraphics[width=15ex,valign=c]{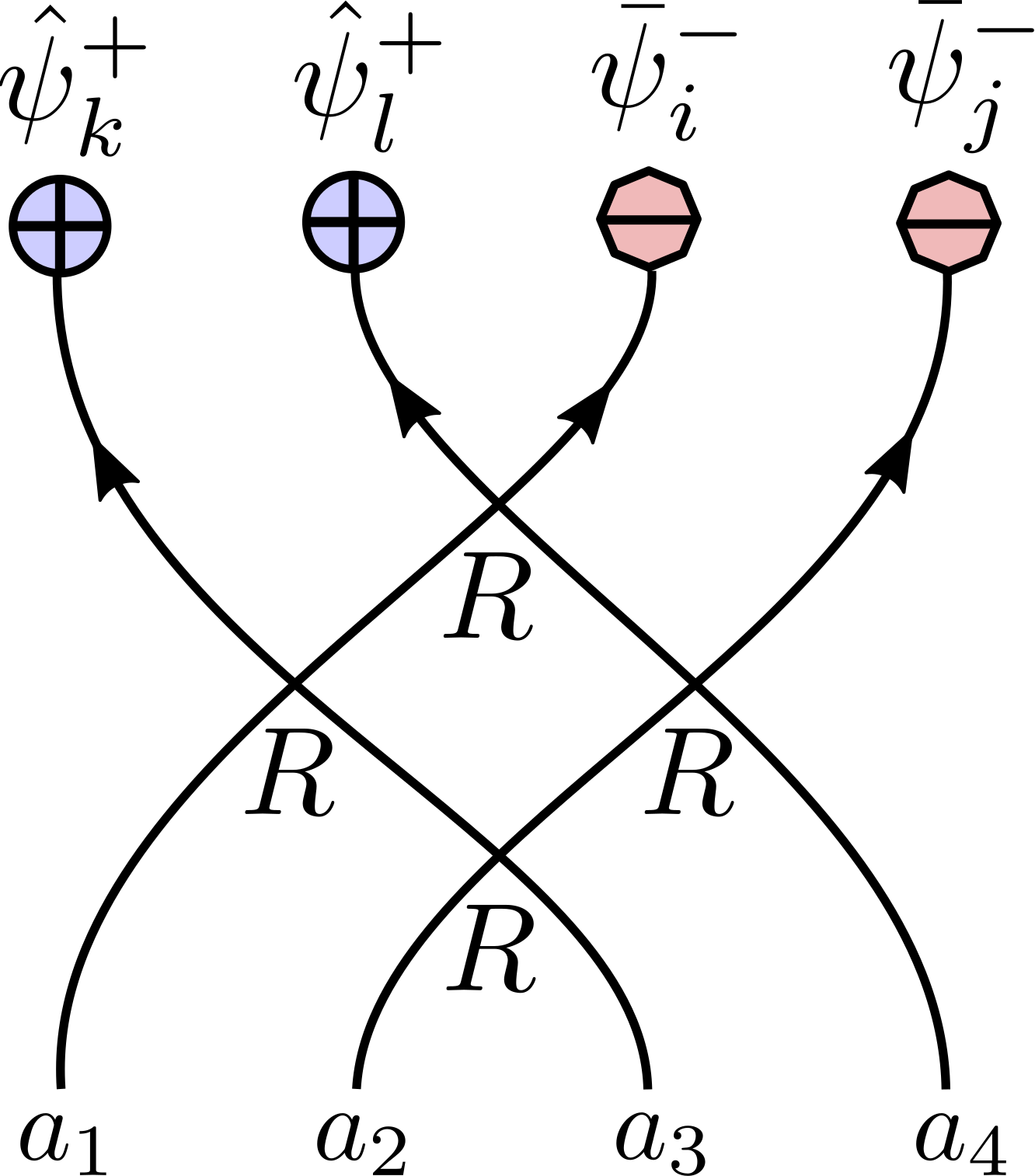}~,
\end{equation}
where in this section we use a labeled crossing to indicate an $R$-matrix 
\begin{equation}\label{eq:R-crossing}
	\adjincludegraphics[height=6ex,valign=c]{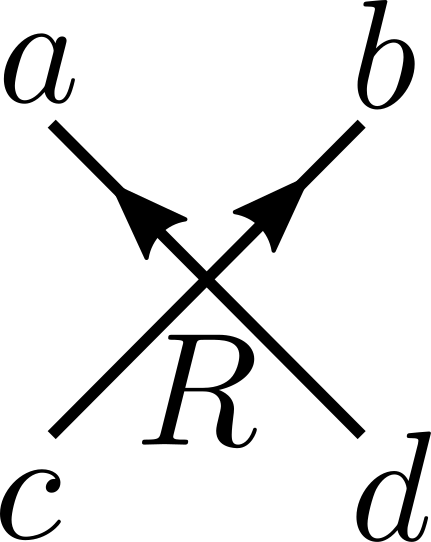}~=~
    \adjincludegraphics[height=6ex,valign=c]{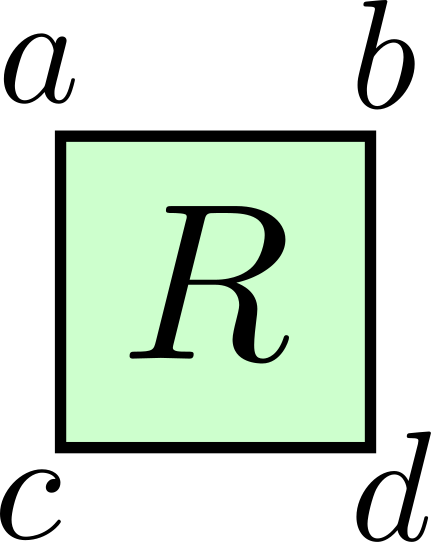}~.
\end{equation}
We now define contraction as follows
\begin{eqnarray}\label{eq:NOcontraction}
	\hat{A}_1\hat{A}_2 &=& (\hat{A}_1^++\hat{A}_1^-)(\hat{A}_2^++\hat{A}_2^-)\nonumber\\
	&=&{\sf N}[\hat{A}_1\hat{A}_2]  + \contraction{}{\hat{A}_1}{}{\hat{A}_2}\hat{A}_1\hat{A}_2,
\end{eqnarray}
where $\contraction{}{\hat{A}_1}{}{\hat{A}_2}\hat{A}_1\hat{A}_2$ (the contraction of $\hat{A}_2\hat{A}_1$) is the constant term obtained from moving $\hat{A}_1^-$ to the right of $\hat{A}_2^+$. %
More explicitly, we have
\begin{definition}{(Contraction)}
The contraction of two paraparticle operators is defined as%
\begin{eqnarray}
    \contraction{}{\hat{A}_1}{}{\hat{A}_2}\hat{A}_1\hat{A}_2&=&\hat{A}_1^-\hat{A}_2^+-\hat{A}_2^+\hat{A}_1^-\cdot R\nonumber\\	
    &=&\langle 0| \hat{A}_1\hat{A}_2| 0\rangle,
\end{eqnarray}
where $\ket{0}$ is the vacuum state annihilated by all $\hat\psi_{i,a}^-$ in Eq.~\eqref{eq:def:A_aWick}. 
	This is extended to non-adjacent contraction as
	\begin{eqnarray}
		\contraction{}{\hat{A}}{(\hat{A}_1\cdots \hat{A}_n)}{\hat{A}'}{\hat{A}}{(\hat{A}_1\cdots \hat{A}_n)}{\hat{A}^\prime}
		= \contraction{}{\hat{A}}{}{\hat{A}^\prime}{\hat{A}}{\hat{A}^\prime} (\hat{A}_{1}\cdots \hat{A}_{n}) \cdot \rho(\sigma),
	\end{eqnarray}
	where $\sigma$ is the corresponding permutation. 
\end{definition}
This definition is extended in an obvious way to more complicated cases involving intersecting contractions, such as
\begin{equation}
	\contraction{}{\hat{A}}{_i\cdots \hat{A}_k \cdots}{\hat{A}_j}
	\contraction[1.5ex]{\hat{A}_i\cdots}{\hat{A}}{_k\cdots \hat{A}_j\cdots }{\hat{A}_l}
	\hat{A}_i\cdots \hat{A}_k \cdots \hat{A}_j \cdots \hat{A}_l.
\end{equation}
For example, we have
\begin{eqnarray}
	\contraction{\hat{A}_1\hat{A}_2}{\hat{A}_3}{\hat{A}_4\hat{A}_5}{\hat{A}_6}
	\contraction[1.5ex]{}{\hat{A}_1}{\hat{A}_2\hat{A}_3\hat{A}_4}{\hat{A}_5}
	\hat{A}_1 \hat{A}_2 \hat{A}_3 \hat{A}_4 \hat{A}_5 \hat{A}_6&=&
    \adjincludegraphics[width=24ex,valign=c]{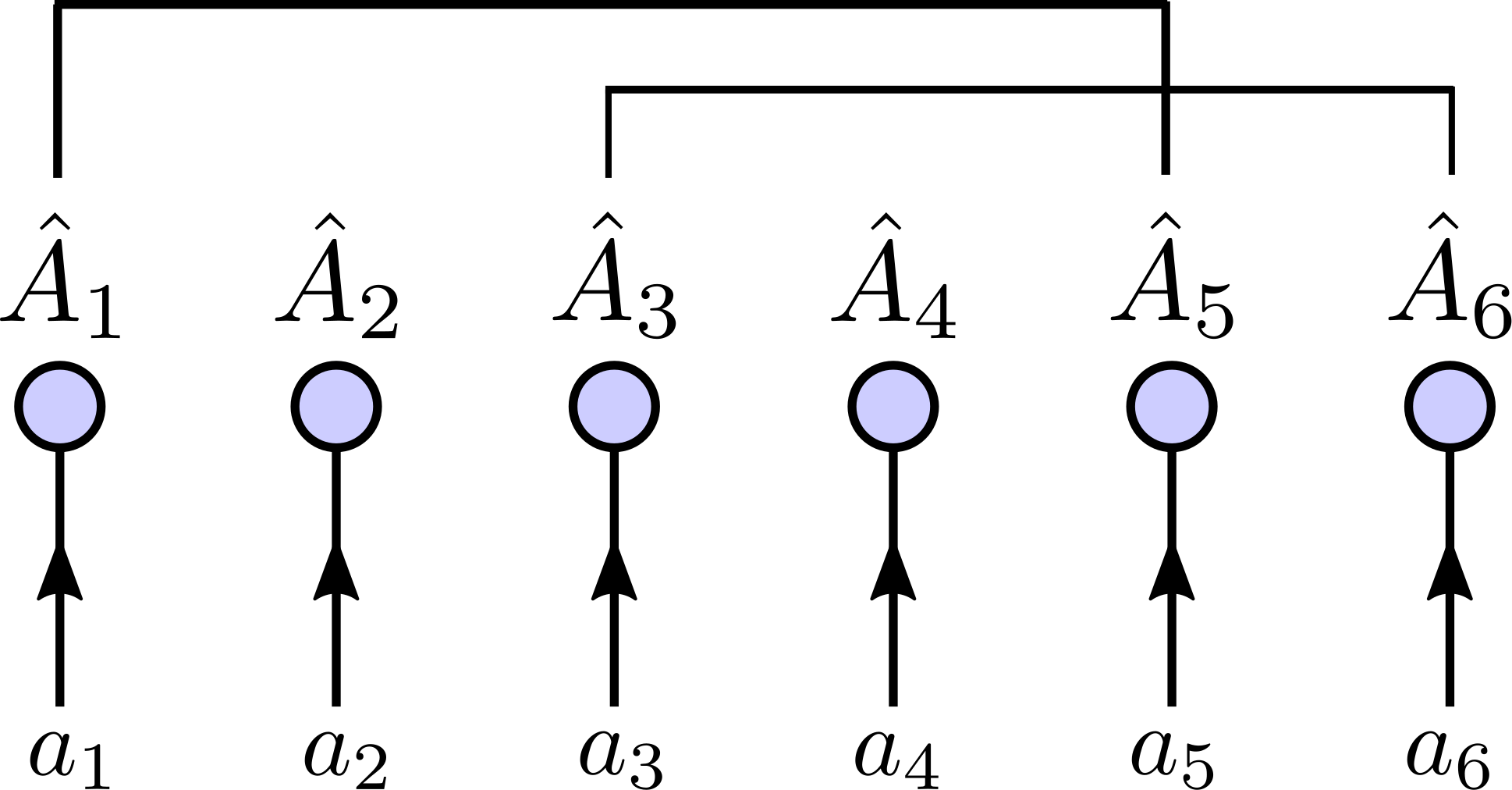}\nonumber\\
	&=&
    \adjincludegraphics[width=24ex,valign=c]{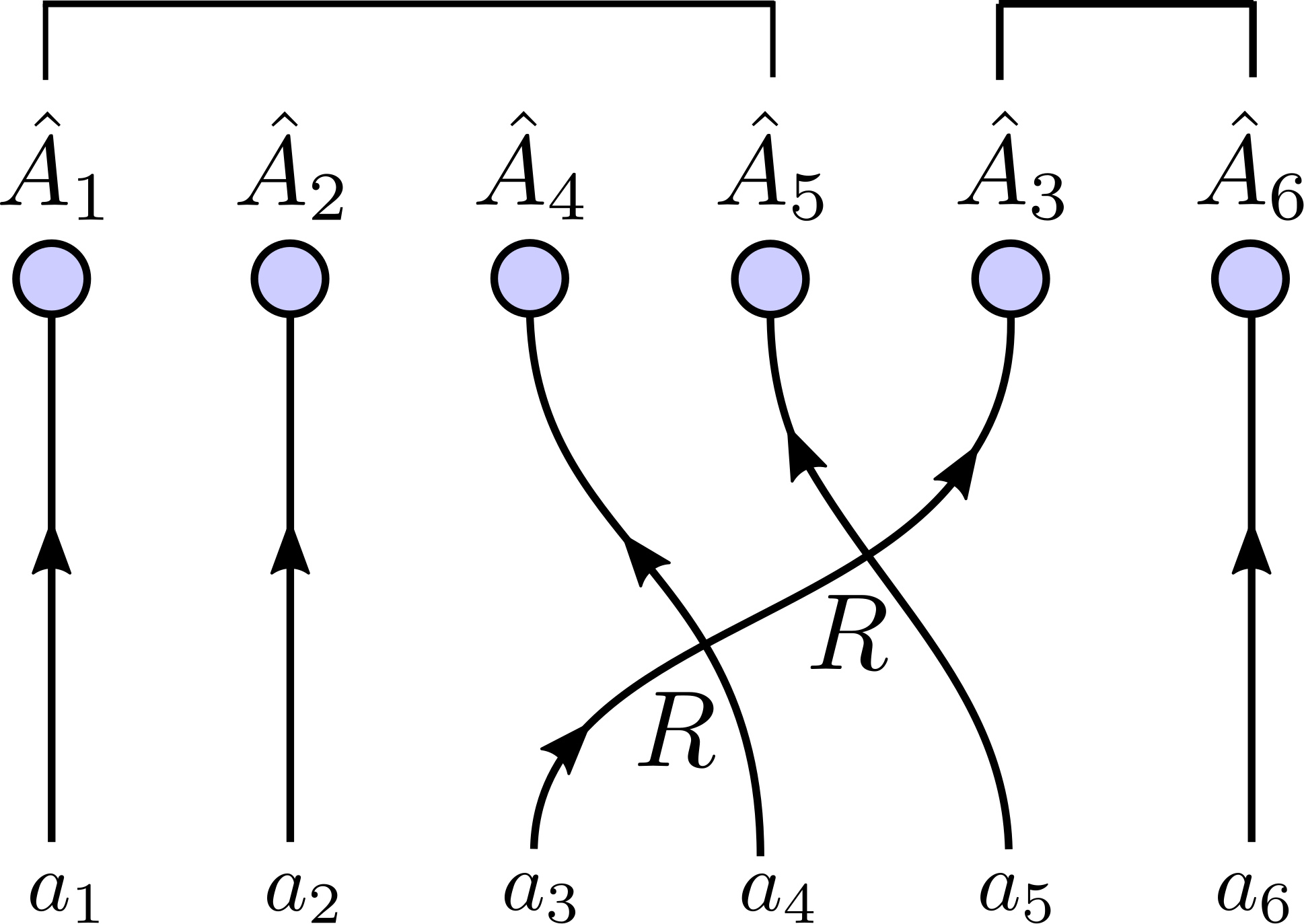}\nonumber\\
	&=&
    \adjincludegraphics[width=24ex,valign=c]{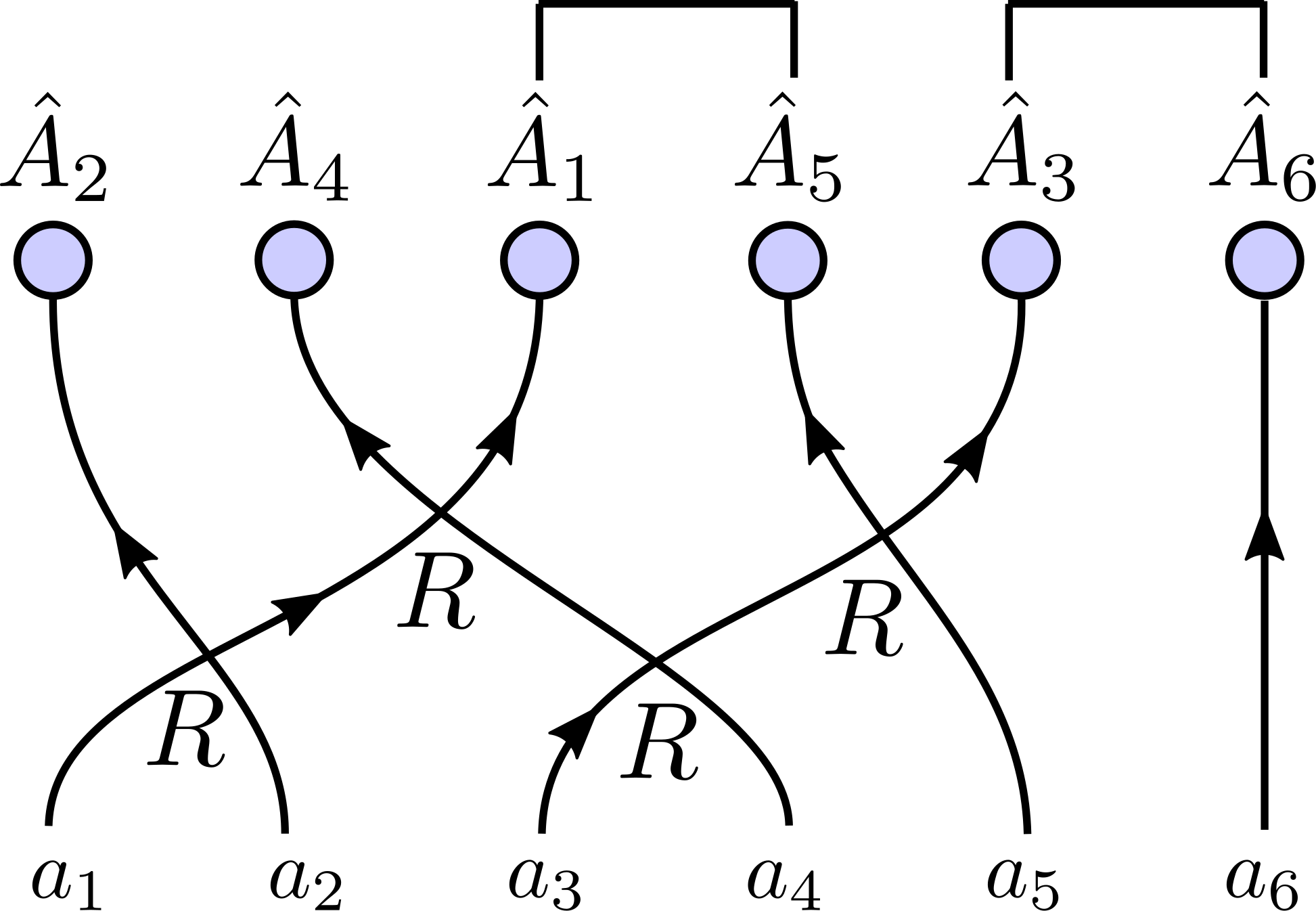}.
\end{eqnarray}

With these preparations, it is straightforward to show that all the derivations and results in Ref.~\cite{molinari2017notes} generalize in a simple way to paraparticles--indeed, most of the results look identical after we suppress the paraparticle indices. For example, we have
\begin{theorem}{(Static Wick's Theorem)}
	For any set of paraparticle operators $\hat{A}_1, \hat{A}_2, \cdots , \hat{A}_n$, where each $\hat{A}_{j,a}$ has the form in Eq.~\eqref{eq:def:A_aWick}, we have
	\begin{align}
		& \hat{A}_1 \hat{A}_2 \cdots \hat{A}_n = {\sf N}[\hat{A}_1 \cdots \hat{A}_n]  \label{WT} \\
		&+\sum_{(ij)} {\sf N}[\hat{A}_1\cdots \contraction{}{\hat{A}_i}{ \cdots }{\hat{A}_j} \hat{A}_i \cdots {A}_j
		\cdots  \hat{A}_n]\nonumber\\
		& +\sum_{(ij)(kl)}  {\sf N}[\hat{A}_1\cdots 
		\contraction{}{\hat{A}}{_i\cdots \hat{A}_k \cdots}{\hat{A}_j}
		\contraction[1.5ex]{\hat{A}_i\cdots}{\hat{A}}{_k\cdots \hat{A}_j\cdots }{\hat{A}_l}
		\hat{A}_i\cdots \hat{A}_k \cdots \hat{A}_j \cdots \hat{A}_l \cdots \hat{A}_n]  \nonumber \\
		& + \ldots             \nonumber
	\end{align}
	where $\ldots$ refers to all higher order contractions.
\end{theorem}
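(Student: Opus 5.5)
We prove the static Wick's theorem by induction on $n$, following Ref.~\cite{molinari2017notes}. The one genuinely new ingredient is that every reordering step carries an $R$-matrix; the symmetric group representation $\rho$ generated by $R$ (Sec.~\ref{sec:YBE}) keeps track of these factors consistently. For $n=1$ the statement is trivial, and $n=2$ is exactly the definition in Eq.~\eqref{eq:NOcontraction}.

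\textbf{Key lemma.} Assume the theorem holds for $\hat{A}_2\cdots\hat{A}_n$. Split $\hat{A}_1=\hat{A}_1^++\hat{A}_1^-$. For a normally ordered product ${\sf N}[\hat{A}_2\cdots\hat{A}_n]$ (or one with some contractions already removed) we will show
\begin{equation}
\hat{A}_1\,{\sf N}[\hat{A}_2\cdots\hat{A}_n]={\sf N}[\hat{A}_1\hat{A}_2\cdots\hat{A}_n]+\sum_{j\geq 2}{\sf N}[\contraction{}{\hat{A}_1}{\cdots}{\hat{A}_j}\hat{A}_1\cdots\hat{A}_j\cdots\hat{A}_n].
\end{equation}

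\textbf{The creation part.} $\hat{A}_1^+$ already sits at the left of every creation operator. Placing it there is consistent with Eq.~\eqref{normalordering}, because the permutation bringing position $1$ to the front is trivial.

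\textbf{The annihilation part.} For $\hat{A}_1^-$, we commute it rightward through the creation operators $\hat{A}_{\sigma_1}^+,\ldots,\hat{A}_{\sigma_k}^+$ one at a time. At each step we use the first relation of Eq.~\eqref{eq:fundamental_Rcommu_NNC}, $\bar\psi^-_i\hat\psi^+_j=\hat\psi^+_j\bar\psi^-_i\cdot R+\delta_{ij}\alpha$. This has the same $R$-orientation as the $\psi^+\psi^+$ and $\bar\psi^-\bar\psi^-$ relations, so each swap contributes one $R_{j,j+1}$.
\begin{itemize}
\item The term where $\hat{A}_1^-$ passes all the creation operators accumulates a product of $R$'s. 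Composed with the original $\rho(\bar\sigma)$, this product equals $\rho(\bar\sigma')$, where $\sigma'$ is the normal-ordering permutation of the enlarged product. This holds because $\rho$ is a homomorphism $S_n\to \mathrm{End}(\mathfrak{I}^{\otimes n})$, which uses $R^2=\mathds{1}$ and the YBE. Where $\hat{A}_1^-$ lands among the other annihilators is irrelevant, by the symmetry property Eq.~\eqref{eq:NOsymmeetry_property} together with the third relation of Eq.~\eqref{eq:fundamental_Rcommu_NNC}.
\item Each time $\hat{A}_1^-$ meets a creator $\hat{A}_j^+$, the inhomogeneous term produces the scalar contraction $\langle0|\hat{A}_1\hat{A}_j|0\rangle$. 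It is dressed by the $R$'s picked up while passing the creators that stand before $\hat{A}_j^+$. We will check that this dressing is exactly the $\rho(\sigma)$ prescribed by the definition of non-adjacent contraction, composed with the normal ordering of the remaining operators. This is equivalent to the diagrammatic statement that the contraction line can be slid to its final position, which is again guaranteed by the YBE.
\end{itemize}

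\textbf{Induction.} We then apply the lemma to every term of the inductive expansion of $\hat{A}_2\cdots\hat{A}_n$. Every contraction pattern of $\hat{A}_1\cdots\hat{A}_n$ arises exactly once: either $\hat{A}_1$ is left uncontracted, or it is contracted with some $\hat{A}_j$ that was uncontracted in the pattern for $\hat{A}_2\cdots\hat{A}_n$. This is the same combinatorics as in Ref.~\cite{molinari2017notes}.

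\textbf{Main obstacle.} The hardest part is the index bookkeeping in the key lemma. We must show that the $R$-strings generated by the commutations coincide with the $\rho(\bar\sigma)$ and $\rho(\sigma)$ factors in the definitions. This includes the case of intersecting contractions, where contraction lines cross and therefore pick up $R$-crossings. We would handle this graphically: each term becomes a tangle diagram built from crossings, caps ($\alpha$) and lines, with endpoints fixed. Since $\rho$ is a well-defined representation of $S_n$, two diagrams describing the same permutation of strands are equal, and we identify the two sides by matching their underlying permutations. For the contraction caps we use the compatibility of $\alpha$ with $R$ (Fig.~\ref{fig:Ralphaeqn}), which lets a crossing slide through a cap. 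The ordering of annihilators was already settled above by Eq.~\eqref{eq:NOsymmeetry_property}.
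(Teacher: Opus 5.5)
Your proposal is correct and is essentially the paper's argument. The paper just asserts that the inductive proof of Ref.~\cite{molinari2017notes} carries over once ${\sf N}$ and the contractions are dressed by $\rho$; your induction makes this explicit, multiplying by $\hat A_1$ on the left rather than the right and using the uniformly oriented relations of Eq.~\eqref{eq:fundamental_Rcommu_NNC}, the homomorphism property of $\rho$, and the sliding of crossings through $\alpha$. One phrase needs tightening: when you match diagrams containing caps, the ``underlying permutation'' must also record which leg of each cap is on the left, since a cap composed with a crossing of its own legs gives $\eta_\alpha\alpha$ rather than $\alpha$; in your derivation every contraction has the lower-labelled operator as its left leg on both sides, so no discrepancy arises.
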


We now move on to  computations involving quantum dynamics. Let $\hat{H}$ be any free paraparticle Hamiltonian of the generic form in Eq.~\eqref{eq:u1_breaking_H}, and let $\hat{A}_a(t)=e^{-i\hat{H}t}\hat{A}_a e^{i\hat{H}t}$ denote the time-evolved operator in the Heisenberg picture.  Since the Heisenberg equations for the $\{A_{ia}\}$ are  a linear set of coupled differential equations, the
$\hat{A}_a(t)$ still have the linear form in Eq.~\eqref{eq:def:A_aWick}, where the coefficients $f_i$ and $g_i$ now depend on time. In time-dependent perturbation theory, we invariably encounter time-ordered  products of field operators, 
so we define the time ordering symbol for paraparticles in the following way:
\begin{definition}{(Time ordering)}
	The time ordering symbol acting on the product of two time-evolved paraparticle operators is defined as 
	\begin{eqnarray}
		{\sf T}[\hat{A}_1(t_1)\hat{A}_2(t_2)]&=&\theta(t_1-t_2)\hat{A}_1(t_1)\hat{A}_2(t_2)\\
		&&{}+\theta(t_2-t_1)\hat{A}_2(t_2)\hat{A}_1(t_1)\cdot R,\nonumber
	\end{eqnarray}
	where $\theta(t_1-t_2)$ is the Heaviside step function,  and the definition is extended to arbitrary product of $n$ operators in similar way as Eq.~\eqref{normalordering}. ${\sf T}$ satisfies a similar symmetry property as that of ${\sf N}$ in Eq.~\eqref{eq:NOsymmeetry_property}.
\end{definition}
A direct computation using Eq.~\eqref{eq:NOcontraction} together with  the symmetry condition analogous to Eq.~\eqref{eq:NOsymmeetry_property} gives
\begin{align}\label{eq:TOcontraction}
	&{\sf T}\hat{A}_1(t_1)\hat{A}_2(t_2)%
	= {\sf N}[\hat{A}_1(t_1)\hat{A}_2(t_2)]+\overbrace{\hat{A}_1(t_1)\hat{A}_2(t_2)},
\end{align}
where the second term is a c-number~(called the time-ordered contraction), and is given by
\begin{eqnarray}
	\overbrace{\hat{A}_1(t_1)\hat{A}_2(t_2)} = \langle 0|{\sf T} \hat{A}_1(t_1)\hat{A}_2(t_2)|0\rangle. 
\end{eqnarray}
We now state Wick's theorem for time-ordered products of paraparticle operators, whose proof is almost identical to the case for ordinary particles after suppressing paraparticle indices:
\begin{theorem}{(Wick's theorem for time-ordered products)}
	For any set of time-evolved paraparticle operators $\hat{A}_1(t_1), \hat{A}_2(t_2), \cdots , \hat{A}_n(t_n)$,
	we have
	\begin{align}
		& {\sf T}[\hat{A}_1(t_1) \cdots \hat{A}_n (t_n)]= {\sf N}[\hat{A}_1(t_1) \cdots \hat{A}_n(t_n)]  \label{WTTO} \\
		&+\sum_{(ij)} {\sf N}[ \hat{A}_1(t_1)\cdots \overbrace{\hat{A}_i(t_i) \cdots \hat{A}_j(t_j)} \cdots  \hat{A}_n(t_n)]\nonumber\\
		& +\sum  {\sf N}[\;\cdots \text{\rm double {\sf T}-contractions} \cdots \;]  + \ldots\nonumber
	\end{align} 
	where $\ldots$ refers to all higher order contractions.
\end{theorem}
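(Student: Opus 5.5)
The plan is to reduce the time-ordered statement to the static Wick's theorem just proved, by induction on $n$, following the standard argument of Ref.~\cite{molinari2017notes} with the paraparticle indices carried along through the representation $\rho$ of $S_n$.

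\emph{Reduction to a fixed time ordering.} Since ${\sf T}$ obeys the same symmetry property as ${\sf N}$ in Eq.~\eqref{eq:NOsymmeetry_property}, and the right-hand side of Eq.~\eqref{WTTO} transforms covariantly under relabelling (each term picks up $\rho(\bar\sigma)$ when the operators are permuted), it suffices to treat the case $t_1>t_2>\cdots>t_n$. In that case ${\sf T}$ acts trivially and the left side is simply $\hat{A}_1(t_1)\cdots\hat{A}_n(t_n)$. Each $\hat{A}_j(t_j)$ is again of the linear form Eq.~\eqref{eq:def:A_aWick}, because the Heisenberg evolution under a bilinear Hamiltonian is linear (Sec.~\ref{sec:u1_breaking_solu}). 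The static Wick's theorem Eq.~\eqref{WT} therefore applies to this ordered product, and expresses it as a sum over all partial pairings of ${\sf N}$ with static contractions inserted.

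\emph{Identifying the contractions.} For $t_i>t_j$ with $i<j$, Eq.~\eqref{eq:TOcontraction} gives $\overbrace{\hat A_i\hat A_j}=\langle0|\hat A_i\hat A_j|0\rangle$, and this equals the static contraction by the definition of contraction. Every pair $(i,j)$ with $i<j$ appearing in the static expansion of the ordered product is time-ordered in this way. Hence the static contraction can be replaced term by term with the ${\sf T}$-contraction, and the $\rho(\sigma)$ factors attached to non-adjacent and intersecting contractions are identical in both expansions. This gives Eq.~\eqref{WTTO} for this ordering, and then for general orderings by the symmetry of the first step.

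\emph{Main obstacle.} The step I expect to be hardest is the covariance claim in the first step. I need that, for each pairing pattern, permuting the operator labels by $\sigma$ transforms
\[
{\sf N}[\cdots\overbrace{\hat A_i\cdots\hat A_j}\cdots]
\]
by exactly $\rho(\bar\sigma)$, with the pairing relabelled accordingly. I would prove this from three ingredients: the YBE together with $R^2=\mathds 1$, which makes $\rho$ a genuine $S_n$ representation so that the $R$-strings depend only on the permutation and not on the chosen word; Eq.~\eqref{eq:NOsymmeetry_property}; and the symmetry of a single ${\sf T}$-contraction, $\overbrace{\hat A_2\hat A_1}=\overbrace{\hat A_1\hat A_2}\cdot R$. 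The last identity follows from the definition of ${\sf T}$ on two operators. Graphically, all $R$-crossings can then be slid using the involutive YBE, exactly as in the diagrams of Eq.~\eqref{eq:NOexample}.
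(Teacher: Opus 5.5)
Your overall route matches the standard argument the paper alludes to: use the symmetry of ${\sf T}$ to reduce to $t_1>t_2>\cdots>t_n$, apply the static theorem, and observe that for $i<j$ the static contraction coincides with $\overbrace{\hat A_i\hat A_j}$. Your identification step is fine. The gap is in the covariance step you flagged, because it does not follow from the three ingredients you list. The YBE and $R^2=\mathds{1}$ only let you re-route crossings among operator strands. Once a pair has been replaced by a c-number tensor $\kappa_{ab}$, relabelling moves that cap relative to the other caps and the free strands. For instance, take the pairing $\{(1,2),(3,4)\}$ and let $\sigma$ be the block swap. The relabelled expansion gives $\overbrace{\hat A_3\hat A_4}\otimes\overbrace{\hat A_1\hat A_2}$ followed by the four-crossing string $\rho(\bar\sigma)$. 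To match this with $\overbrace{\hat A_1\hat A_2}\otimes\overbrace{\hat A_3\hat A_4}$, you must slide strands across both legs of a cap. That is not a YBE move, and it fails for general tensors. For Green's $R$ (Ex.~\ref{ex:Green}), a strand of index $e$ crossing both legs multiplies $\kappa_{ab}$ by $(-1)^{\delta_{ea}+\delta_{eb}}$, so only diagonal $\kappa$ pass through. For $R=\mp X$ every $\kappa$ passes, which is why the textbook proof never meets this issue.

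The missing input is the self-duality hypothesis of Sec.~\ref{sec:wick_thm}, which your proposal never uses.
\begin{itemize}
\item By Eq.~\eqref{eq:fundamental_Rcommu_NNC}, every static contraction equals $\hat A_1^-\hat A_2^+-\hat A_2^+\hat A_1^-\cdot R=(\sum_i g^{(1)}_i f^{(2)}_i)\,\alpha$.
\item A reversed-time ${\sf T}$-contraction is a multiple of $\alpha R=\eta_\alpha\alpha$, by Eq.~\eqref{eq:eta_R}.
\item The pull-through property of $\alpha$ comes from the relations in Fig.~\ref{fig:Ralphaeqn}. Equivalently, it comes from the fact that Eq.~\eqref{eq:fundamental_Rcommu_NNC_compact} holds with a single orientation of $R$.
\end{itemize}
You can derive the identity directly as follows. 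Take $\hat\phi_{K,e}$ on a mode outside the supports of $\hat A_1,\hat A_2$, enlarging the system if needed. Multiply $\hat A^-_{1,a}\hat A^+_{2,b}-\sum_{c,d}R^{cd}_{ab}\hat A^+_{2,c}\hat A^-_{1,d}=\kappa_{ab}$ on the right by $\hat\phi_{K,e}$. Braid $\hat\phi_K$ to the far left using Eq.~\eqref{eq:fundamental_Rcommu_NNC_compact}, and apply one YBE to the second term. Linear independence of $\{\hat\phi_{K,h}\}_h$ then gives $\kappa_{ab}\delta_{he}=\sum_{f,g,i}R^{fg}_{be}R^{hi}_{af}\kappa_{ig}$. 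Add this identity to your ingredients, together with the cap--cap block swap it implies via the YBE. Then the term-by-term covariance, and hence the theorem, go through. You should also restrict to pairwise distinct times, since ${\sf T}$ needs a convention at coincident times.
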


\section{Generalized hidden symmetries}\label{sec:generalized_symmetry}
In this section we investigate hidden symmetries of $R$-paraparticles, which are symmetries acting on the internal space of $R$-paraparticles that leave all physical observables invariant. This contrasts with ordinary symmetries, which leave invariant only particular classes of observables and Hamiltonians.  %
We will find that the $R$-matrix second quantization theory of paraparticles enjoys generalized hidden symmetries described by Hopf algebras, which are generalizations of traditional group-like symmetries~\cite{alvarez-gaumeDualityQuantumGroups1990,pasquierCommonStructuresFinite1990,Dijkgraaf1991QuasiHA,mackQuasiHopfQuantum1992,alexanderbaisQuantumSymmetriesDiscrete1992,frohlichQuantumGroupsQuantum1993}~(see also the recent literature in categorical and non-invertible symmetries~\cite{frohlichDualityDefectsRational2007,gaiottoGeneralizedGlobalSymmetries2015,Kong2020Algebraichigher,Ji2020Categorical,cordova2022snowmass,mcgreevyGeneralizedSymmetriesCondensed2023,schafer-namekiICTPLecturesNoninvertible2024}). %
 Such generalized symmetries are present even in systems of free paraparticles, which are not displayed by free fermions  and bosons. %
Generalized hidden symmetries of $R$-paraparticles provide us an important theoretical tool for studying the local indistinguishability of $R$-paraparticles, and naturally lead to a categorical description of $R$-paraparticles, as we show in later parts of this section.

Let us first warm up by describing hidden group symmetries of paraparticles.  Specifically, let $G$ be an arbitrary group.  
By a hidden symmetry, we mean that the Fock space carries a representation of $G$~[denoted by $\hat{\Theta}(g)$ for $g\in G$] that leaves invariant all physical observables and the fundamental CRs in Eq.~\eqref{eq:fundamental_Rcommu}. 
The invariance of all physical observables--that is, $[\hat{\Theta}(g),\hat{e}_{ij}]=0$ for all $g\in G$ and $\hat{e}_{ij}$ defined in Eq.~\eqref{eq:def_e_ab}--requires that the paraparticle operators transform under $G$ as follows~\footnote{Here we only consider $G$-actions that induce linear transformations on the paraparticle operators $\hat{\psi}^\pm_{i,a}$. Under this condition, using the fact that $\hat{e}_{ij}$ transform in the product of the representations of $\hat{\psi}^+_{i,a}$ and $\hat{\psi}^-_{i,a}$, it is not hard to show that the invariance of $\hat{e}_{ij}$  forces $\hat{\psi}^\pm_{i,a}$ to transform in the form of  Eq.~\eqref{eq:G-adjoint-action}.}
\begin{eqnarray}\label{eq:G-adjoint-action}
	\hat{\Theta}(g) \hat{\psi}_{i,a}^{+} \hat{\Theta}(g^{-1})&=&\sum_b \zetapsi_{b a}(g) \hat{\psi}_{i,b}^{+}, \nonumber\\
	\hat{\Theta}(g) \hat{\psi}_{i,a}^{-} \hat{\Theta}(g^{-1})&=&\sum_b  \bar{\zetapsi}_{b a}(g) \hat{\psi}_{i,b}^{-},
\end{eqnarray}
where $\zetapsi$ is a representation of $G$, and $\bar{\zetapsi}$ is its conjugate representation defined by $\bar{\zetapsi}(g)=\zetapsi(g^{-1})^T$ for all $g\in G$. 
Note that we use the same Greek letter $\psi$ to label the $G$-representation $\psi_{ba}(g)$ and the creation and annihilation operators $\hat{\psi}_{i,a}^{\pm}$ associated to the particle type $\psi$, as this will turn out convenient later on when we study hidden symmetry of a system of paraparticles involving different particle types transforming under different $G$-representations, and this abuse of notation will not cause any confusion. 
The LHS of Eq.~\eqref{eq:G-adjoint-action} defines a superoperator
\begin{equation}\label{def:adj-action-G}
	\mathrm{ad}_g(\hat{O})\equiv \hat{\Theta}(g)\hat{O}\hat{\Theta}(g^{-1}),
\end{equation} 
called the adjoint transformation of $g$ on the space of operators. To ensure that Eq.~\eqref{eq:G-adjoint-action} consistently defines a symmetry of the paraparticle system at the quantum level, we require that 
the linear transformation of $\hat{\psi}^\pm_{i,a}$ induced by $\mathrm{ad}_g$
leaves the fundamental CRs~\eqref{eq:fundamental_Rcommu} invariant for all $g\in G$, which puts the following consistency condition on  the $R$-matrix~\footnote{More formally, the second quantization algebra is of the form $\mathfrak{F}=\C[\{\hat{\psi}^\pm_{i,a}\}]/I$, where $\C[\{\hat{\psi}^\pm_{i,a}\}]$ is the free associative algebra with formal generators $\{\hat{\psi}^\pm_{i,a}\}$ and $I$ is the ideal generated by the fundamental CRs~\eqref{eq:fundamental_Rcommu}. 
The linear map $\varphi_g(\hat{\psi}^\pm_{i,a})$ defined by the RHS of Eq.~\eqref{eq:G-adjoint-action} uniquely extends to an algebra automorphism $\tilde{\varphi}_g$ of $\mathfrak{F}$ if and only if it is invertible and preserves the ideal $I$~(a basic fact in abstract algebra), and in the current situation the last condition is equivalent to Eq.~\eqref{eq:G-sym-R-condition}. Once we have an automorphism $\tilde{\varphi}_g$, we can use it to define the action of $\mathrm{ad}_g$ and then define the action of $\hat{\Theta}(g)$ on the Fock space. If by contrast, Eq.~\eqref{eq:G-sym-R-condition} is not satisfied, then there will not exist any $\hat{\Theta}(g)$ satisfying Eq.~\eqref{eq:G-adjoint-action}. 
}:
\begin{equation}\label{eq:G-sym-R-condition}
	R[\zetapsi(g)\otimes \zetapsi(g)]=[\zetapsi(g)\otimes \zetapsi(g)]R,\quad \forall g\in G.
\end{equation}
A graphical derivation of this in the more general setting of Hopf algebra symmetries is given in Eq.~\eqref{eq:Rpp-condition}.
This condition is trivial in the case of ordinary fermions and bosons, and also Ex.~\ref{ex:1m} of Tab.~\ref{tab:Hilbert_series}, but it becomes nontrivial for more general $R$-matrices. %

Finally, we also need to specify how the symmetry operation  $\hat{\Theta}(g)$ transforms the vacuum state $\ket{0}$. For simplicity, in this section we assume the vacuum state $\ket{0}$ to be unique. Then it follows that $\ket{0}$ must be invariant under  $G$~\footnote{More generally, $\ket{0}$ can transform under any 1-dimensional representation of $G$: $\hat{\Theta}(g)\ket{0}=\theta(g)\ket{0},~\forall g\in G$. However, we can always absorb $\theta(g)$ into a redefinition of $\hat{\Theta}(g)$ such that $\ket{0}$ is invariant under the redefined $\hat{\Theta}(g)$, which is also a representation of $G$ satisfying Eq.~\eqref{eq:G-adjoint-action} with $\zeta$ unchanged. Therefore we can always assume without loss of generality that $\theta(g)\equiv 1$. This argument generalizes to the Hopf algebra case as well. }
\begin{equation}\label{eq:G-invariant_vac}
	\hat{\Theta}(g)\ket{0}=\ket{0},\quad\forall g\in G. 
\end{equation}
Eqs.~\eqref{eq:G-adjoint-action} and \eqref{eq:G-invariant_vac} completely  define the action of $\hat{\Theta}(g)$ on the entire Fock space. 

In the following we generalize the above formalism to Hopf algebras, describing a more general class of hidden symmetries for $R$-paraparticles. We first present the formal algebraic description in Sec.~\ref{sec:HAsymmetry}, and then present a tensor network representation in Sec.~\ref{sec:HAsymmetry-TN}, and give some basic examples in Sec.~\ref{sec:HAsymexample}. Then in Sec.~\ref{sec:statistics-as-symmetry} we define a fundamental Hopf algebra hidden symmetry for each $R$-paraparticle system, which provides an alternative viewpoint for mutual parastatistics and the local observable algebra we studied back in Sec.~\ref{sec:second_quantization}. Finally in Secs.~\ref{sec:local-indist-revisit} and \ref{sec:cat-description} we use this hidden symmetry viewpoint to study local indistinguishability and formulate a categorical description of $R$-paraparticles. %

\subsection{Hopf algebra description}\label{sec:HAsymmetry} %
For basic definitions of Hopf algebras, we refer to the standard textbooks~\cite{Majid1995BookFoundationQG,klimyk1997book,kasselQuantumGroups1995}. For a quick introduction to the finite dimensional case and the tensor network representation, we refer to Ref.~\cite{molnar2022matrix} and the appendix of Ref.~\cite{wang2024hopf}. Here for the readers' convenience, we summarize in Tab.~\ref{tab:Group-HA-correspondence} the relevant concepts in Hopf algebras as generalizations of the more familiar concepts in group theory. 
\begin{table*}%
	\centering
	{\renewcommand{\arraystretch}{1.5}
		\begin{tabular}{|c|c|c|}
			\hline
			& Groups	& 	Hopf algebras   \\
			\hline
			Inverse & Inverse axiom & Antipode axiom \\
			& $g^{-1}g=gg^{-1}=1$ &$Sz_{(1)}z_{(2)}=z_{(1)}Sz_{(2)}=\epsilon(z)1$\\
			\hline
			Adjoint action & $\mathrm{ad}_g(\hat{O}):= \hat{\Theta}(g)\hat{O}\hat{\Theta}(g^{-1})$ &  $\mathrm{ad}_z(\hat{O}):= \hat{\Theta}(z_{(1)})\hat{O}\hat{\Theta}(Sz_{(2)})$\\
			& $\mathrm{ad}_g\circ \mathrm{ad}_h=\mathrm{ad}_{gh}$  & $\mathrm{ad}_x\circ \mathrm{ad}_y=\mathrm{ad}_{xy}$\\
			& $\mathrm{ad}_g(\hat{A}\hat{B})=\mathrm{ad}_g(\hat{A})\mathrm{ad}_g(\hat{B})$ & $\mathrm{ad}_z(\hat{A}\hat{B})=\mathrm{ad}_{z_{(1)}}(\hat{A})\mathrm{ad}_{z_{(2)}}(\hat{B})$\\ %
			\hline
			$\hat{O}$ is invariant under $\hat{\Theta}$ & $[\hat{\Theta}(g),\hat{O}]=0\Leftrightarrow \mathrm{ad}_g(\hat{O})=\hat{O}$ & $[\hat{\Theta}(z),\hat{O}]=0\Leftrightarrow\mathrm{ad}_z(\hat{O})=\epsilon(z)\hat{O}$\\
			\hline
			Trivial representation & $\epsilon(g)= 1,~\forall g$ & Counit $\epsilon(z)$\\ %
			\hline
			Dual representation & $\bar{\zeta}(g):=\zeta(g^{-1})^T$ & $\bar{\zeta}(z):=\zeta(Sz)^T$\\
			\hline
			Tensor product representation & $(\zeta\otimes\sigma)(g):=\zeta(g)\otimes\sigma(g)$ & $(\zeta\otimes\sigma)(z):=\zeta(z_{(1)})\otimes\sigma(z_{(2)})$ \\
			\hline
		\end{tabular}
	}
	\caption{\label{tab:Group-HA-correspondence}  Correspondence between various concepts in groups and their generalizations in Hopf algebras. Here all the equalities are assumed to hold for arbitrary  $g,h\in G$ and $x,y,z\in \calA$, and for all quantum operators $\hat{A},\hat{B},\hat{O}$. Note that to avoid clustering of parentheses in writing the action of antipode $S$,  in this paper we write $Sz$ to mean $S(z)$, and $Sxy$ is understood as $S(x)y$, i.e., $S$ is applied on the element appearing right after it.}
\end{table*}

Let $\calA$ be an arbitrary Hopf algebra describing the hidden symmetry of a paraparticle system. Similar to the group case, we assume that
the Fock space carries a representation of $\calA$~[denoted by $\hat{\Theta}(z)$ for $z\in \calA$] that leaves invariant all physical observables and the fundamental CRs in Eq.~\eqref{eq:fundamental_Rcommu}.
We achieve this by directly generalizing Eq.~\eqref{eq:G-adjoint-action} as follows
\begin{eqnarray}\label{eq:HA-adjoint-action}
	\hat{\Theta}[z_{(1)}] \hat{\psi}_{i,a}^{+} \hat{\Theta}[S z_{(2)}]&=&\sum_b \zetapsi_{ba}(z)\hat{\psi}^{+}_{i,b},\nonumber\\
	\hat{\Theta}[z_{(1)}] \hat{\psi}_{i,a}^{-} \hat{\Theta}[S z_{(2)}]&=& \sum_b\bar{\zetapsi}_{ba}(z) \hat{\psi}^{-}_{i,b},
\end{eqnarray}
where $\zetapsi$ is a representation of $\calA$ and $\bar{\zetapsi}$ is its dual representation defined as $\bar{\zetapsi}(z):=\zetapsi(Sz)^T$ for all $z\in \calA$, and we use Sweedler notation $\Delta(z)=z_{(1)}\otimes z_{(2)}$ for the comultiplication of $\calA$. 
The adjoint action in Eq.~\eqref{def:adj-action-G} is generalized as 
\begin{equation}\label{def:adj-action-HA}
	\mathrm{ad}_z(\hat{O}):= \hat{\Theta}(z_{(1)})\hat{O}\hat{\Theta}(Sz_{(2)}),
\end{equation} 
whose basic properties are summarized in Tab.~\ref{tab:Group-HA-correspondence}. 
Using the properties of $\mathrm{ad}_z$ along with Eq.~\eqref{eq:HA-adjoint-action}, it is straightforward to  show that $\hat{\Theta}(z)$ leaves all the physical observables invariant, i.e., 
\begin{equation}\label{eq:HA-eij-commute}
	[\hat{\Theta}(z),\hat{e}_{ij}]=0.
\end{equation}
Similarly, the U$(1)$-breaking terms $\hat{e}^\pm_{ij}$ in Eq.~\eqref{def:e_pm_ab} also commute with $\hat{\Theta}(z)$ if $\alpha$ and $\alpha'$ are invariant under the action of $\calA$ %
\begin{eqnarray}\label{eq:HAsymmetry_alpha-cond}
	[\bar{\zetapsi}(z_{(1)})\otimes\bar{\zetapsi}(z_{(2)})]  \alpha^T&=&\epsilon(z)\alpha^T,\nonumber\\
	{}[\zetapsi(z_{(1)})\otimes\zetapsi(z_{(2)})]  \alpha'&=&\epsilon(z)\alpha'.
\end{eqnarray}
Note that the two lines of Eq.~\eqref{eq:HAsymmetry_alpha-cond} are actually equivalent due to $\alpha'=\alpha^{-1}$. 

To ensure that Eq.~\eqref{eq:HA-adjoint-action} consistently defines a symmetry of the paraparticle system at the quantum level, we require that 
both sides of the fundamental CRs~\eqref{eq:fundamental_Rcommu} transform in the same way under the adjoint transformation $\mathrm{ad}_z$ for all $z\in \calA$~\footnote{More formally, we say that the paraparticle operator algebra defined by the fundamental CRs~\eqref{eq:fundamental_Rcommu} has the structure of an $\calA$-module algebra, where the module action is given by the adjoint transformation $\mathrm{ad}_z$.}, which puts the following condition on  the $R$-matrix:
\begin{equation}\label{eq:HA-sym-R-condition}
	R[\zetapsi(z_{(1)})\otimes \zetapsi(z_{(2)})]=[\zetapsi(z_{(1)})\otimes \zetapsi(z_{(2)})]R,
\end{equation}
which generalizes Eq.~\eqref{eq:G-sym-R-condition}. See Eqs.~(\ref{eq:HA-Rpp},\ref{eq:HA-Rpp-R}) for a graphical derivation of Eq.~\eqref{eq:HA-sym-R-condition}.

Finally, since the counit $\epsilon$ plays the role of the trivial representation of $\calA$, %
Eq.~\eqref{eq:G-invariant_vac} is generalized as
\begin{equation}\label{eq:HAaction-vac}
	\hat{\Theta}(z)\ket{0}=\epsilon(z)\ket{0},
\end{equation}
where $\epsilon$ is the counit of $\calA$. 
Eqs.~\eqref{eq:HA-adjoint-action} and (\ref{eq:HAaction-vac}) completely determine the action of $\calA$ on the Fock space
\begin{widetext}
\begin{align}\label{eq:HAaction-Fock}
    \hspace{-0.3in}\hat{\Theta}(z) \hat{\psi}_{i_1,a_1}^{+} \hat{\psi}_{i_2,a_2}^{+} \cdots \hat{\psi}_{i_n,a_n}^{+} \ket{0}  
	&= \hat{\Theta}(z_{(1)} \epsilon(z_{(2)})) \hat{\psi}_{i_1,a_1}^{+} \hat{\psi}_{i_2,a_2}^{+} \cdots \hat{\psi}_{i_n,a_n}^{+} \ket{0} \hspace{0.45in} \text{co-unit axiom} \\
    &= \hat{\Theta}(z_{(1)}) \hat{\psi}_{i_1,a_1}^{+} \hat{\psi}_{i_2,a_2}^{+} \cdots \hat{\psi}_{i_n,a_n}^{+} \epsilon(Sz_{(2)})\ket{0} \hspace{0.35in} \text{${\hat\Theta}$ is linear, $\epsilon(Sz) = \epsilon(z)$} \nonumber\\
    &= \hat{\Theta}(z_{(1)}) \hat{\psi}_{i_1,a_1}^{+} \hat{\psi}_{i_2,a_2}^{+} \cdots \hat{\psi}_{i_n,a_n}^{+} \hat{\Theta}(Sz_{(2)})\ket{0} \hspace{0.3in} \text{$\hat\Theta$ action on trivial rep.}  \nonumber\\
    &= \mathrm{ad}_z (\hat{\psi}_{i_1,a_1}^{+} \hat{\psi}_{i_2,a_2}^{+} \cdots \hat{\psi}_{i_n,a_n}^{+})\ket{0}\nonumber\\
	&= \mathrm{ad}_{z_{(1)}} (\hat{\psi}_{i_1,a_1}^{+}) \mathrm{ad}_{z_{(2)}}(\hat{\psi}_{i_2,a_2}^{+}) \cdots \mathrm{ad}_{z_{(n)}}(\hat{\psi}_{i_n,a_n}^{+})\ket{0}\hspace{0.3in} \text{Property of }\mathrm{ad}_{z}\nonumber\\
	&= \sum_{b_1,\ldots,b_n}\hat{\psi}_{i_1,b_1}^{+}  \cdots \hat{\psi}_{i_n,b_n}^{+}\ket{0} \zetapsi(z_{(1)})_{b_1 a_1} \cdots  \zetapsi(z_{(n)})_{b_n a_n}, \hspace{0.3in} \text{Eq.~\eqref{eq:HA-adjoint-action} } \nonumber
\end{align}
\end{widetext}
Since the Fock space is spanned by states of the form in Eq.~\eqref{eq:psi_statespace_general}, Eq.~\eqref{eq:HAaction-Fock} uniquely defines the action of $\hat{\Theta}(z)$ on the entire Fock space for all $z\in\calA$.

We finally remark that, while in most part of 
this section,
we focus on Hopf algebra hidden symmetry for simplicity, there are situations where we unavoidably need to work with bialgebra hidden symmetry in which there is no antipode. For example, if the $R$-matrix is not dual unitary~(e.g. Ex.~\ref{ex:1m}), then its fundamental hidden symmetry $\calA_R$~(to be introduced in Sec.~\ref{sec:statistics-as-symmetry}) is only a bialgebra. We will mention later in Remark~\ref{rmk:hiddenBAsymmetry} for how to generalize the above results to the bialgebra case. In general, we recommend avoid working with hidden bialgebra symmetries whenever possible, since the lack of antipode makes computations inconvenient.  

\subsection{Tensor network description}\label{sec:HAsymmetry-TN}
We now introduce an equivalent tensor network description of the Hopf algebra hidden symmetry defined above, which allows us to perform all the derivations in a graphical way. Following Refs.~\cite{molnar2022matrix,wang2024hopf}, let $v_{pq} \in \mathcal A$ be any corepresentation of $\calA$, satisfying
\begin{equation}\label{def:corep}
	\Delta(v_{pq})=\sum^d_{r=1} v_{pr}\otimes v_{rq},\quad \epsilon(v_{pq})=\delta_{pq},
\end{equation}
where $d$ is the dimension of the corepresentation $v_{pq}$. 
We define the following tensors%
\begin{eqnarray}\label{eq:zeta_tensor_def}
	\adjincludegraphics[height=7ex,valign=c]{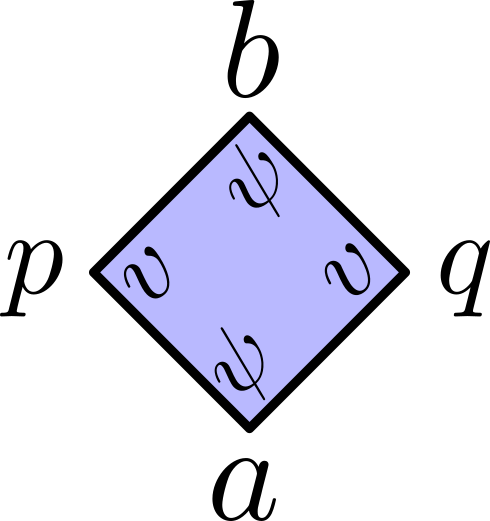}=\zetapsi_{ba}[v_{pq}],
    \quad
	\adjincludegraphics[height=7ex,valign=c]{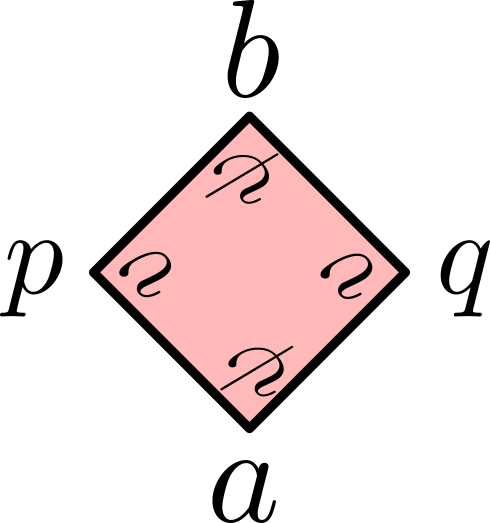}=\zetapsi_{ab}[S v_{pq}].
\end{eqnarray}
We also define the following operator-valued tensors
\begin{eqnarray}\label{def:Thetavpqgraphical}
	\adjincludegraphics[height=7ex,valign=c]{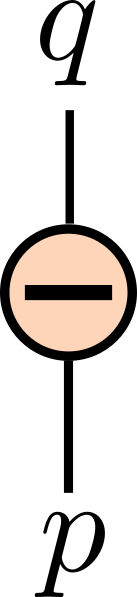}=\hat{\Theta}[v_{pq}],
    \quad
	\adjincludegraphics[height=7ex,valign=c]{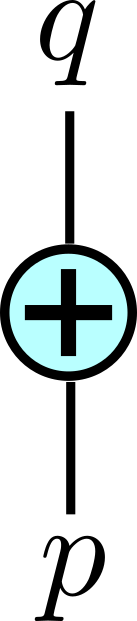}=\hat{\Theta}[S v_{qp}].
\end{eqnarray}
Then by setting $z=v_{pq}$ in the antipode axiom and applying the representations $\zetapsi$ or $\hat{\Theta}$ on both sides, we obtain the following inversion properties of the tensors
\begin{eqnarray}\label{eq:Antipode-graphical}
	&&\adjincludegraphics[height=5ex,valign=c]{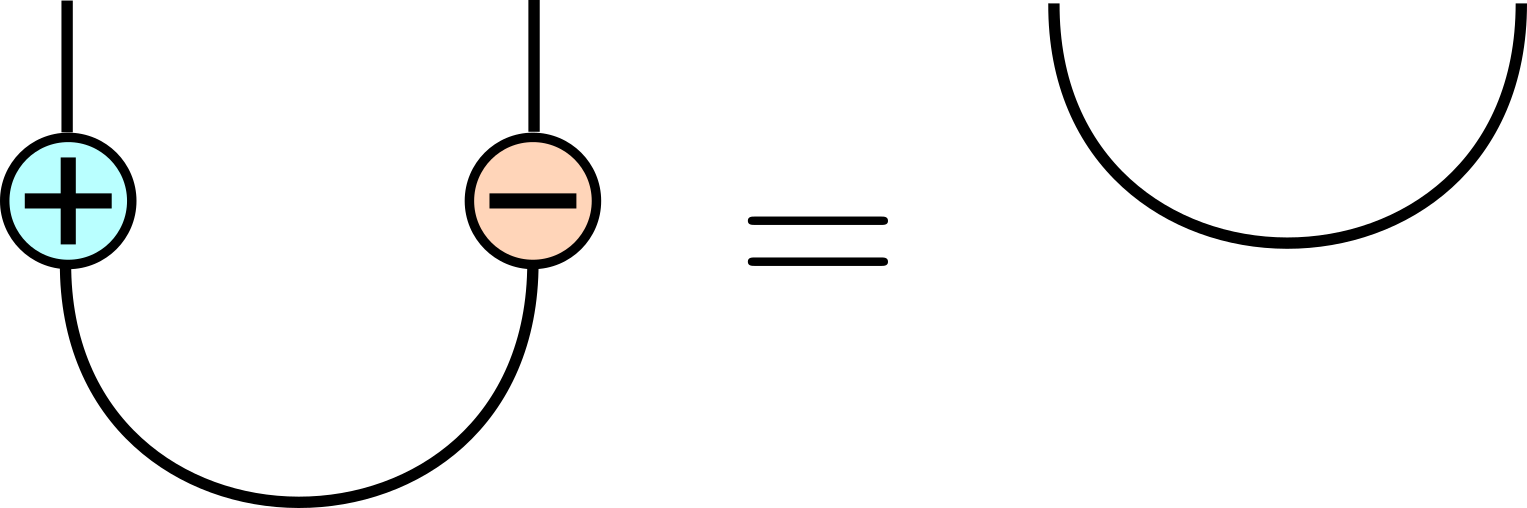},
    \quad
	\adjincludegraphics[height=5ex,valign=c]{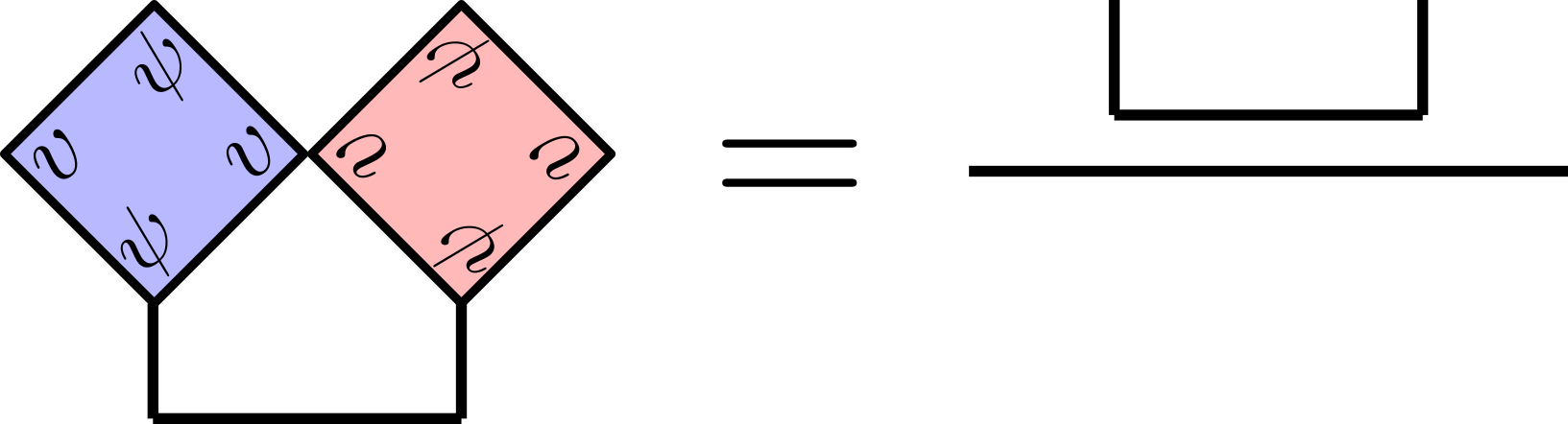},\nonumber\\
	&&
    \adjincludegraphics[height=5ex,valign=c]{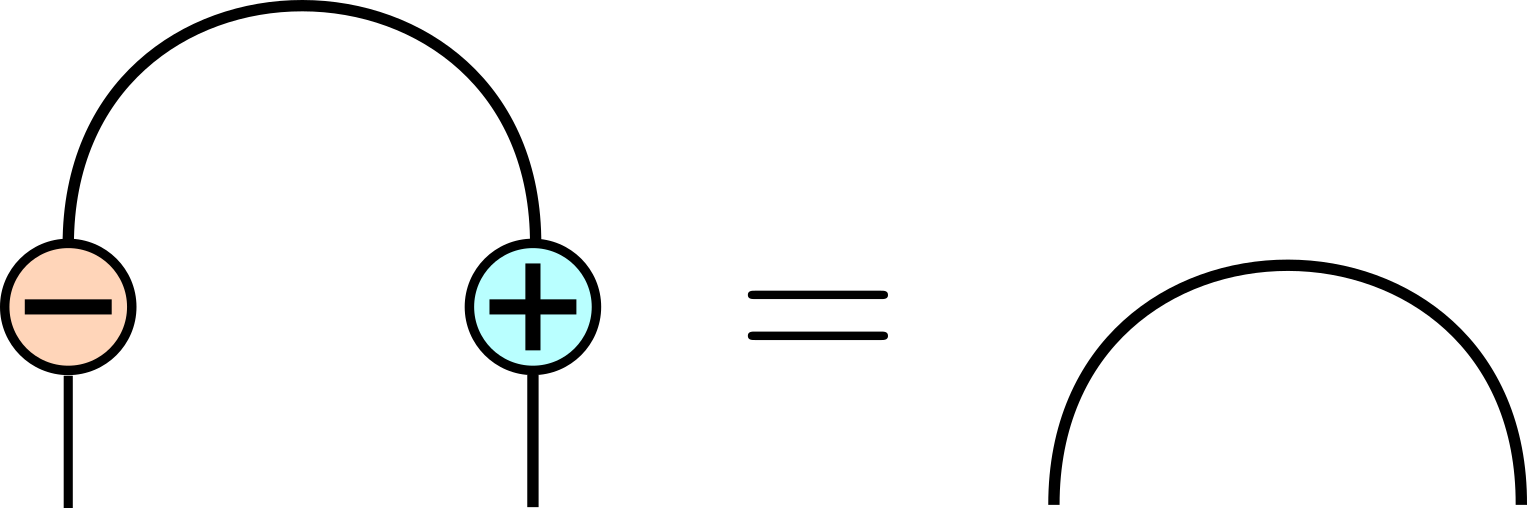},
    \quad
	\adjincludegraphics[height=5ex,valign=c]{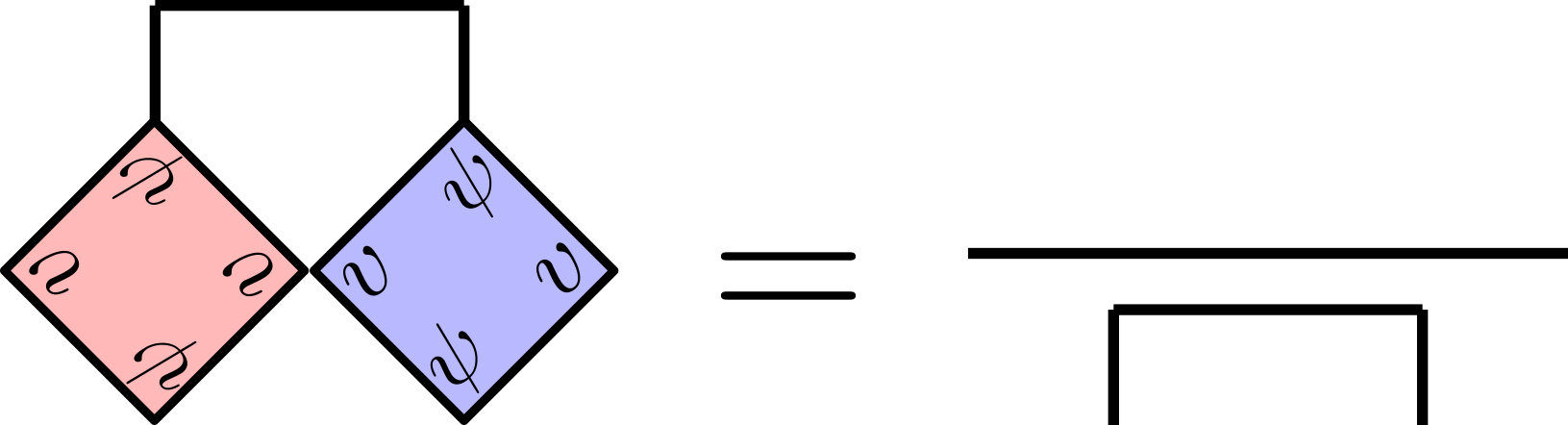}.
\end{eqnarray}
In a similar way, we can translate most of the results in the previous section by setting $z=v_{pq}$ and applying Eq.~\eqref{def:corep} along with other Hopf algebra axioms. For example,   
Eq.~\eqref{eq:HA-adjoint-action} can be rewritten as 
\begin{equation}\label{eq:Adj-action-graphical}
	\adjincludegraphics[height=10ex,valign=c]{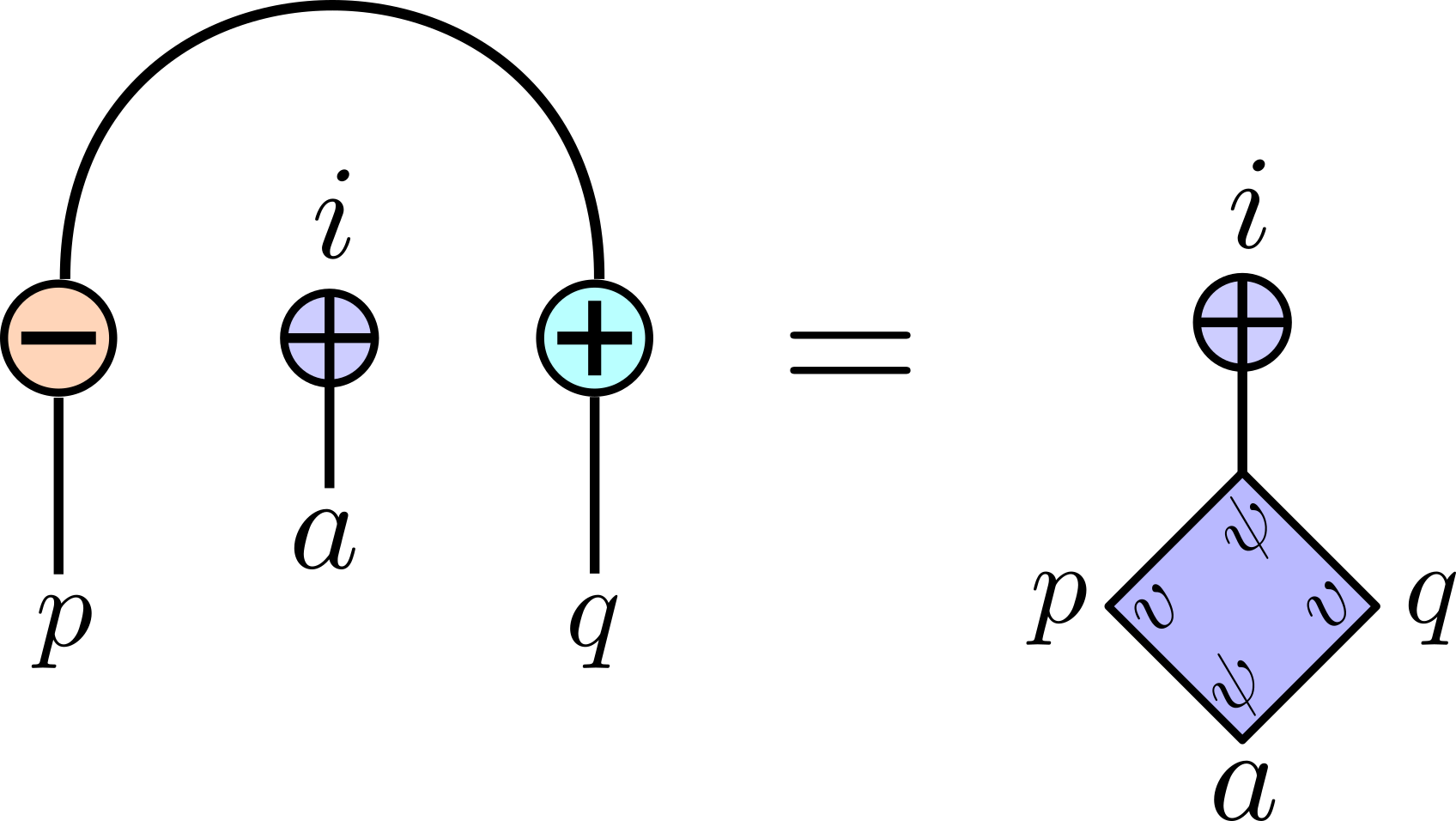},
    \quad  
    \adjincludegraphics[height=10ex,valign=c]{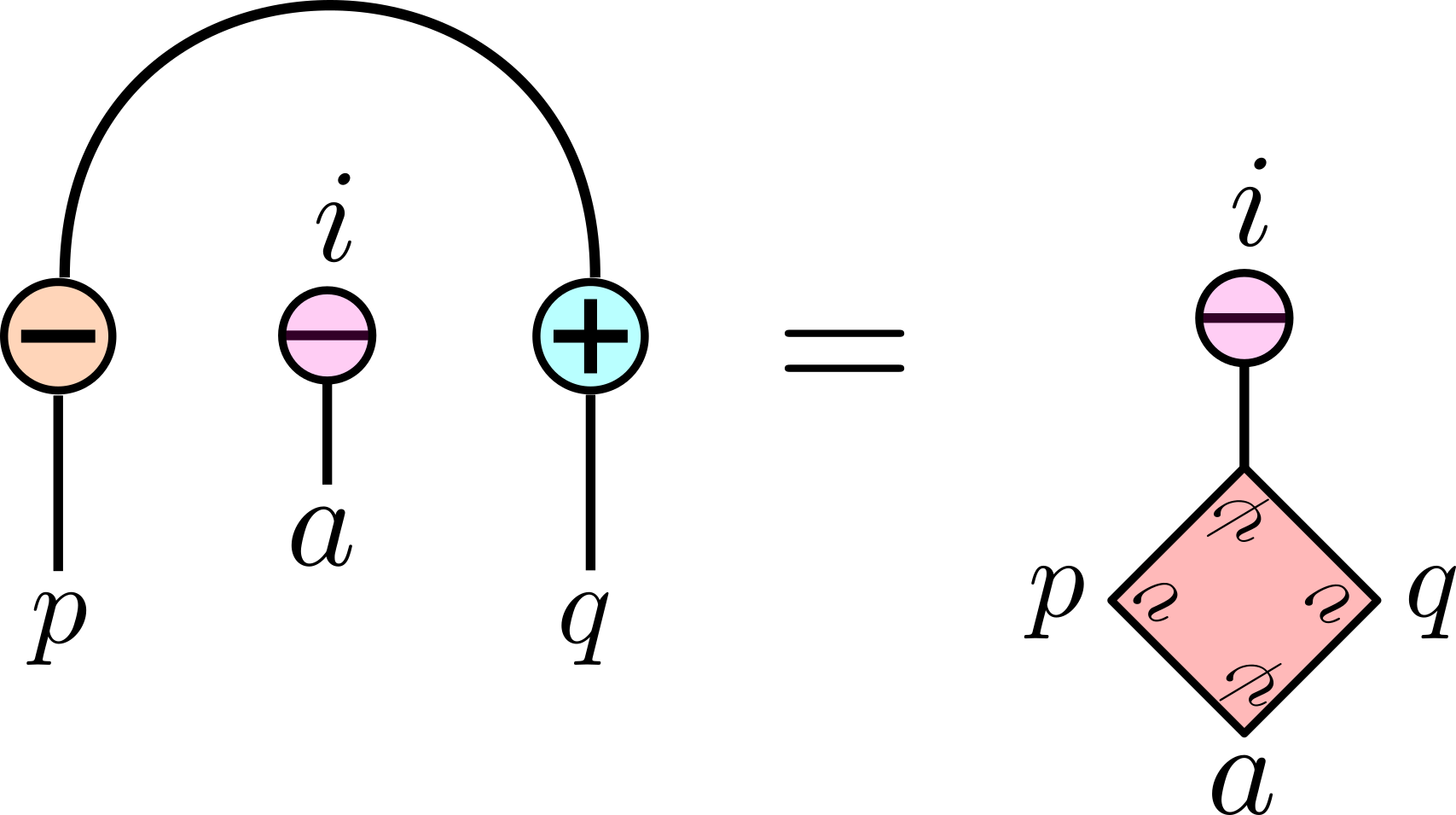}.
\end{equation}
We now show the graphical derivation of Eq.~\eqref{eq:HA-sym-R-condition}, which is the consistency condition obtained by applying $\mathrm{ad}_{v_{pq}}$ on both sides of the fundamental CRs~\eqref{eq:fundamental_Rcommu}. %
Let us begin with the second line of Eq.~\eqref{eq:fundamental_Rcommu}. 
The LHS gives us
\begin{eqnarray}\label{eq:HA-Rpp}
	\adjincludegraphics[height=9ex,valign=c]{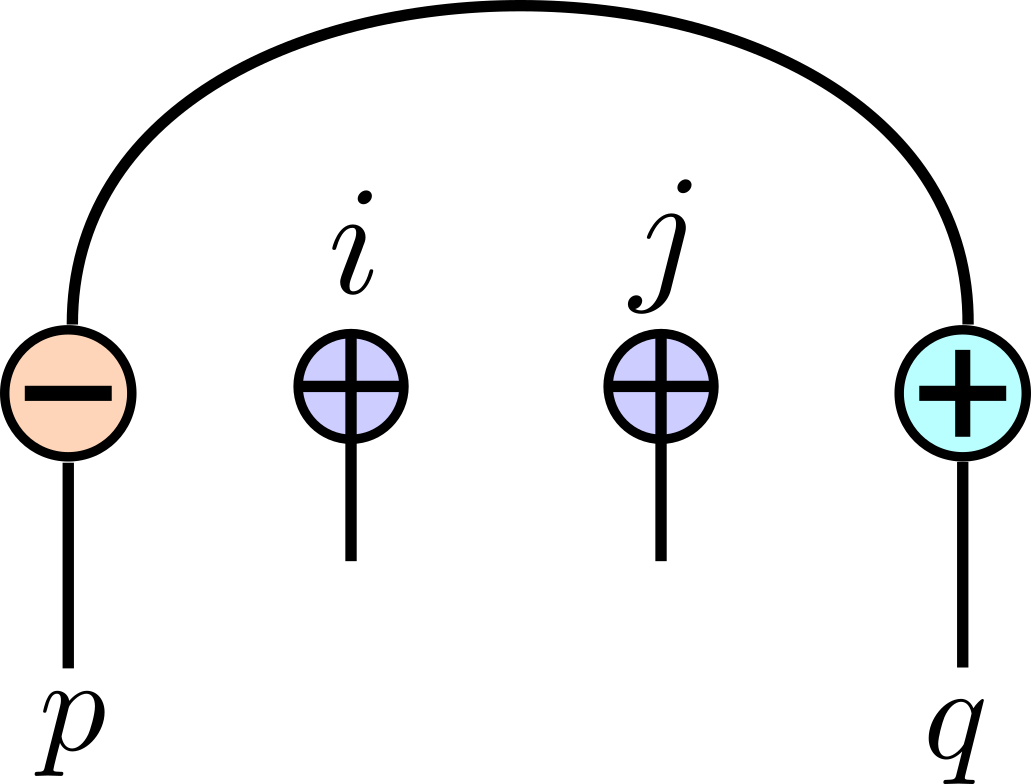}
	&=&
    \adjincludegraphics[height=10ex,valign=c]{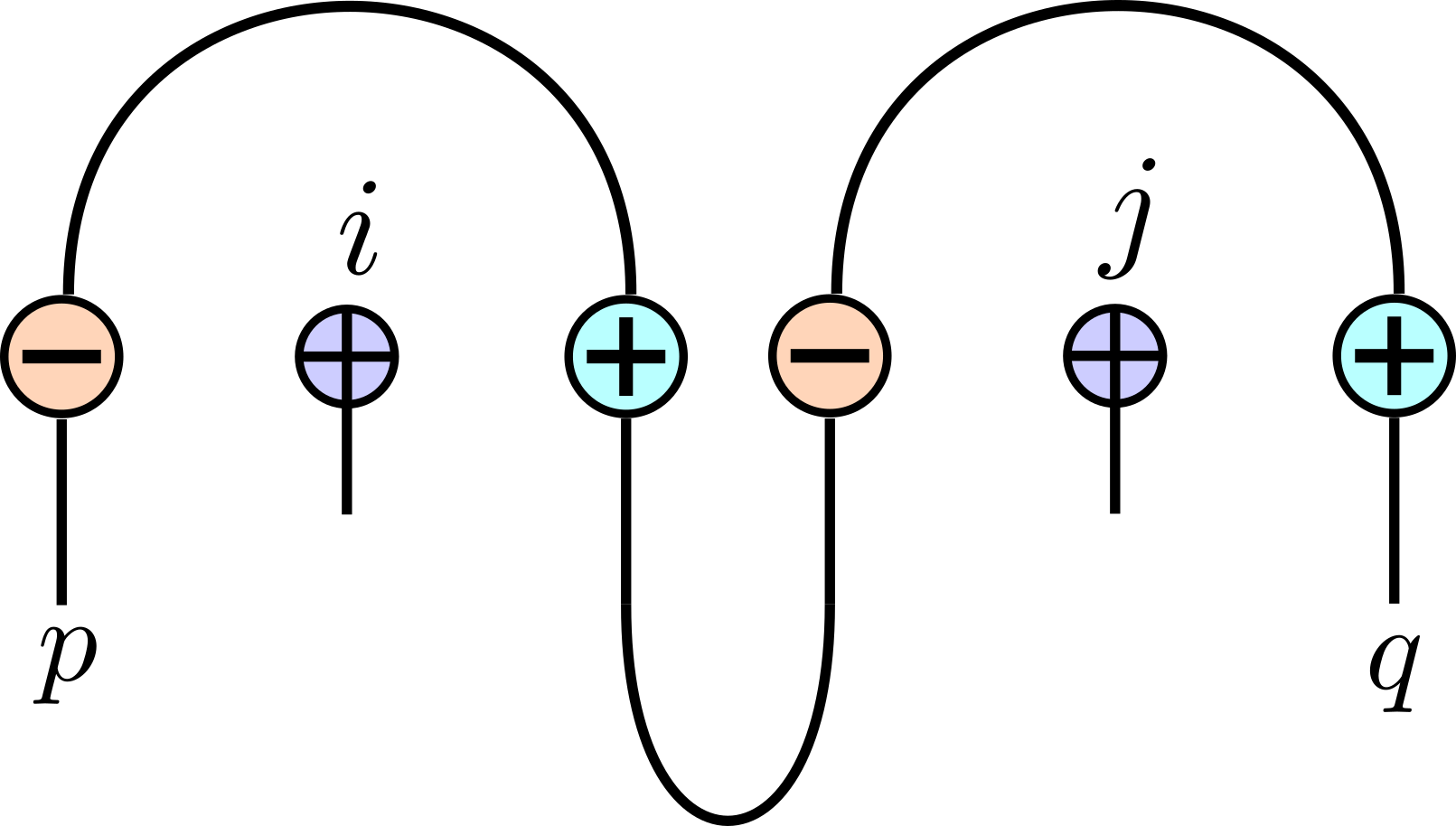}\nonumber\\
	&=&
    \adjincludegraphics[width=9ex,valign=c]{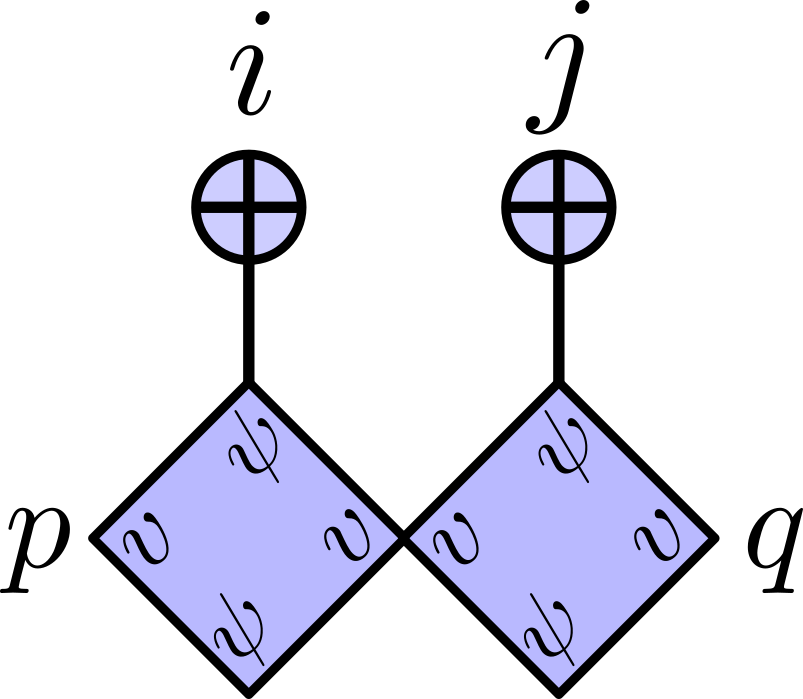}\nonumber\\
	&=&
    \adjincludegraphics[width=9ex,valign=c]{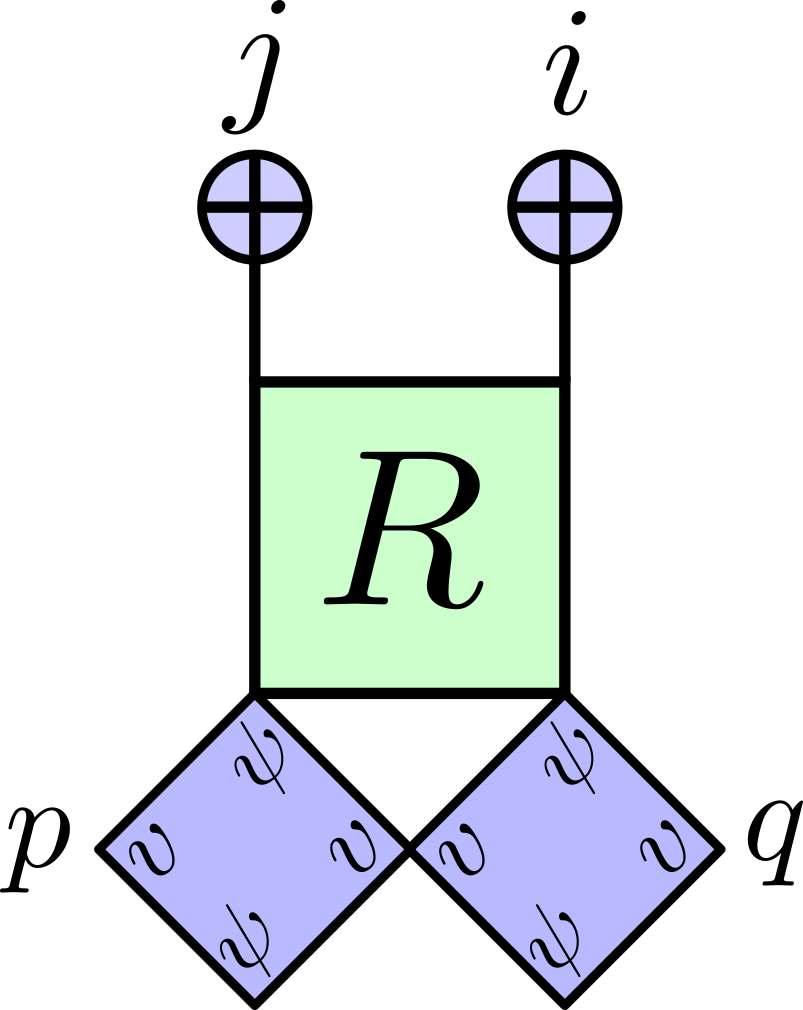}~,
\end{eqnarray}
where we used Eqs.~(\ref{eq:Antipode-graphical},\ref{eq:Adj-action-graphical}), while the RHS gives
\begin{equation}\label{eq:HA-Rpp-R}
	\adjincludegraphics[height=9ex,valign=c]{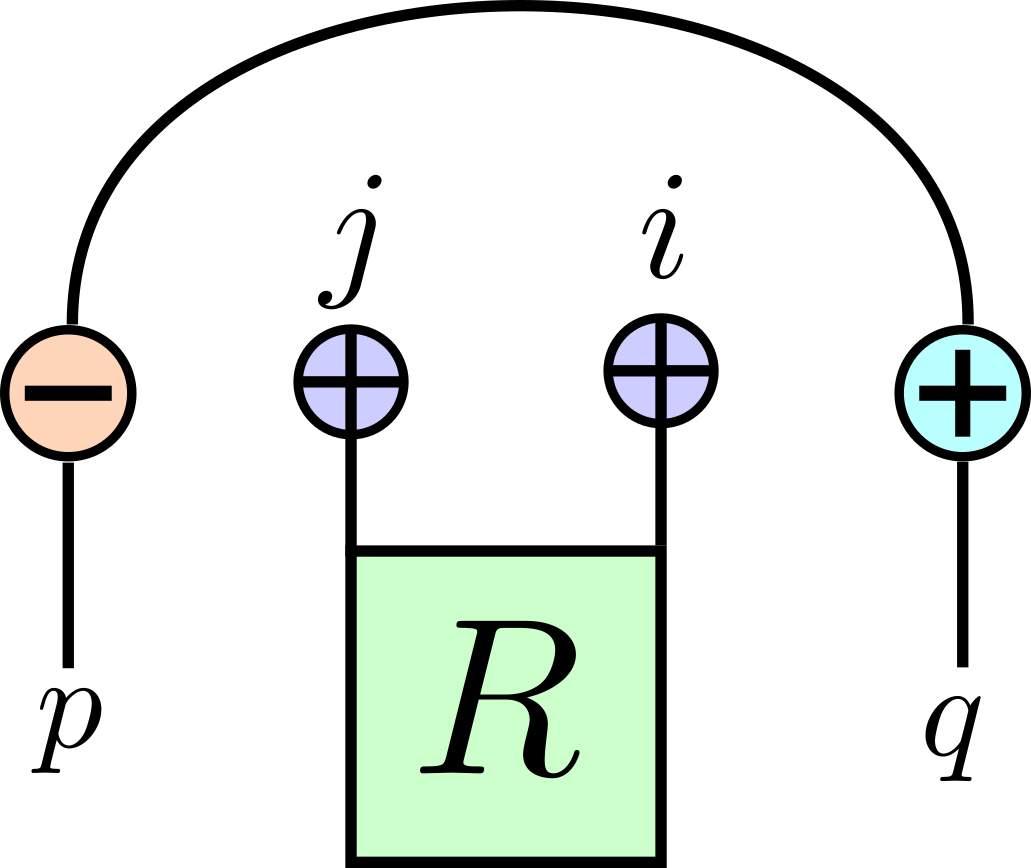}
    =
    \adjincludegraphics[height=10ex,valign=c]{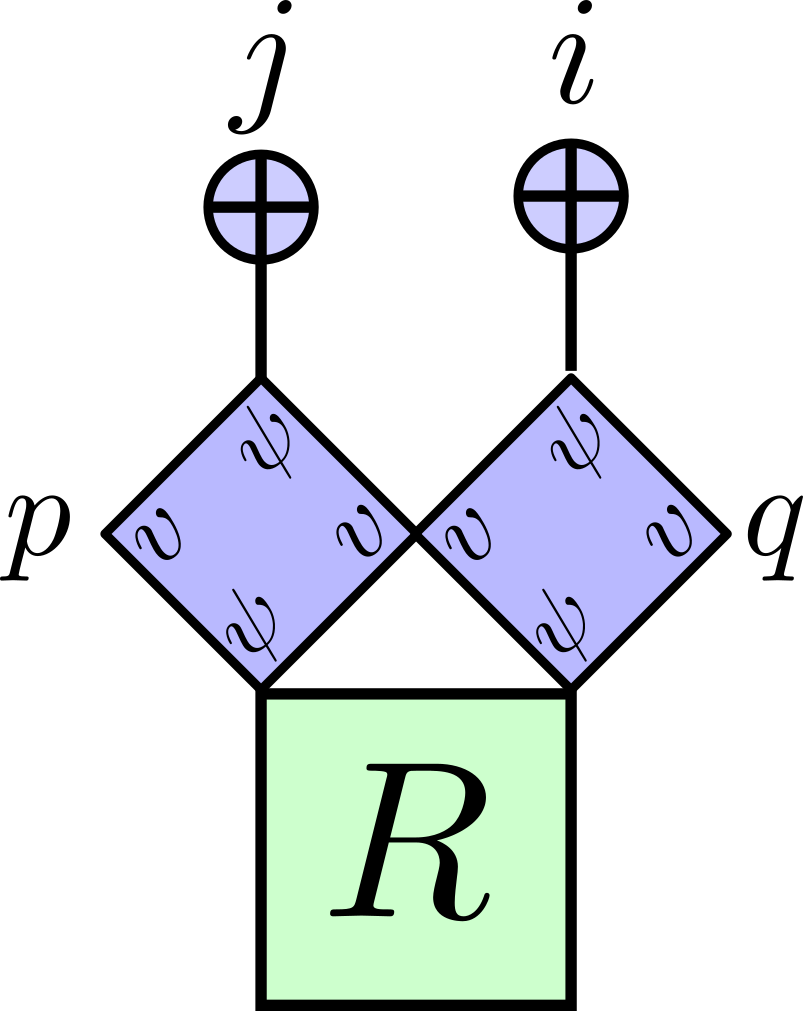}~.
\end{equation}
The consistency between Eq.~\eqref{eq:HA-Rpp} and Eq.~\eqref{eq:HA-Rpp-R} requires the following condition on the $R$-matrix
\begin{equation}\label{eq:Rpp-condition}
	\adjincludegraphics[height=7ex,valign=c]{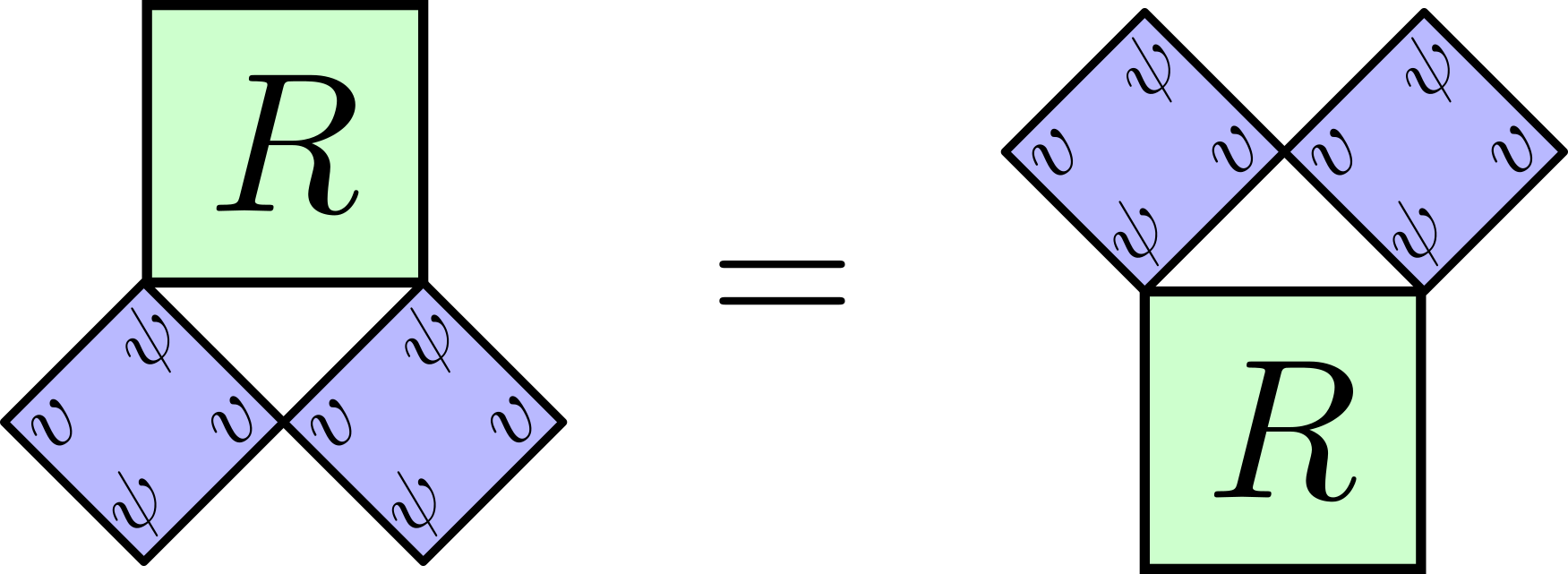}, 
\end{equation}
which is exactly the graphical version of Eq.~\eqref{eq:HA-sym-R-condition} with $z=v_{pq}$. Similarly, the invariance of the first and third lines of Eq.~\eqref{eq:fundamental_Rcommu} under $\mathrm{ad}_z$ %
requires the following two conditions on $R$
\begin{equation}
	\adjincludegraphics[height=7ex,valign=c]{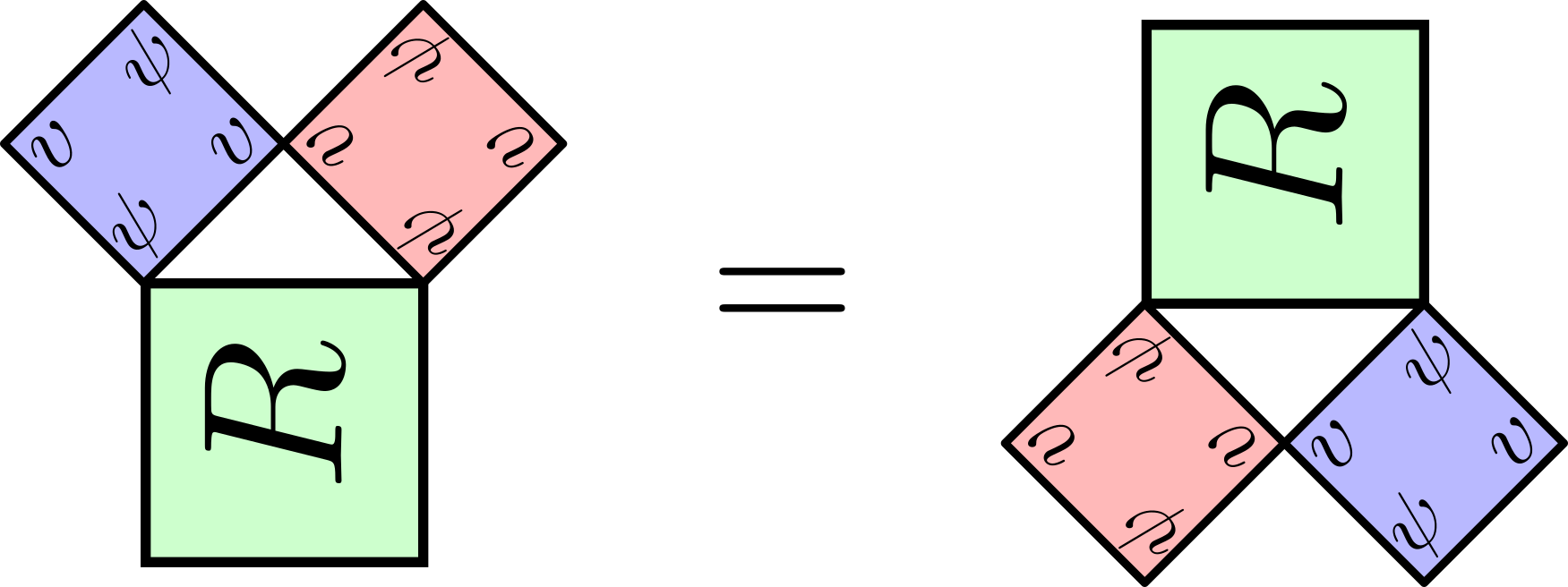},
    \quad \adjincludegraphics[height=7ex,valign=c]{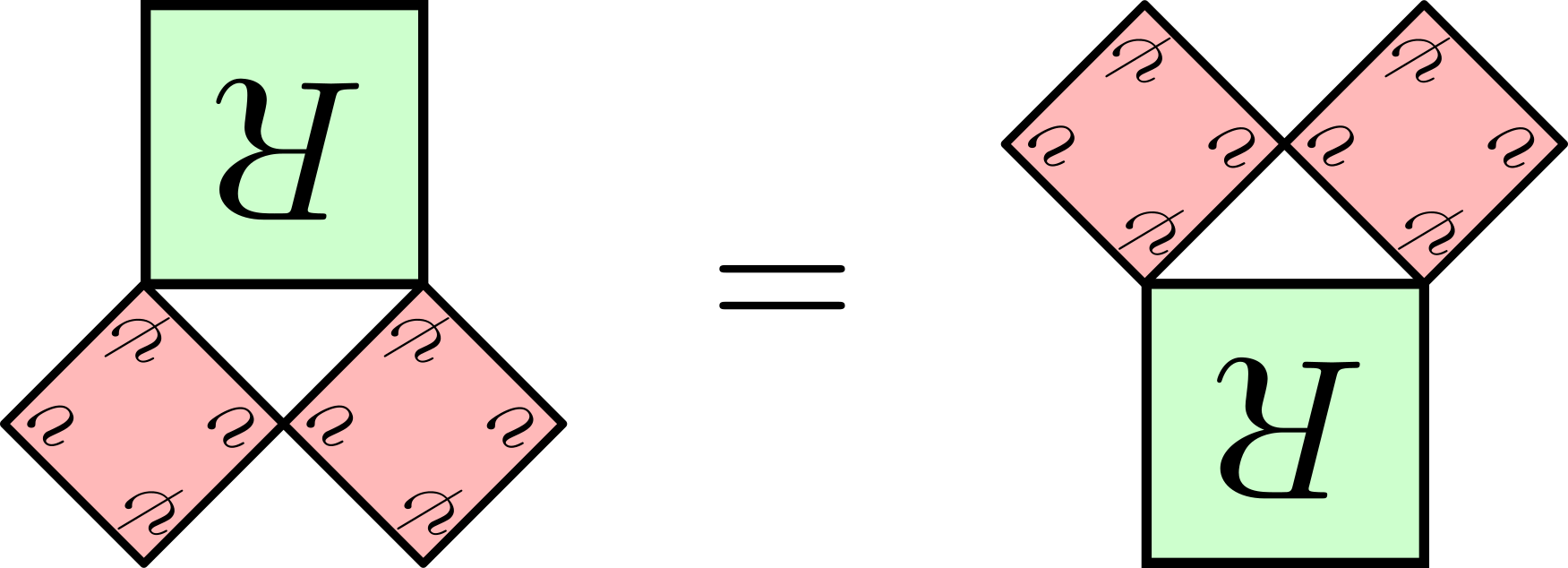},
\end{equation}
both of which are equivalent to Eq.~\eqref{eq:Rpp-condition} due to Eq.~\eqref{eq:Antipode-graphical}. Therefore the only consistency condition here is Eq.~\eqref{eq:Rpp-condition} for any corepresentation $v_{pq}$. 

Eq.~\eqref{eq:HAaction-Fock} can be  derived graphically as
\begin{eqnarray}\label{eq:vpqactionstatespace}
	\hat{\Theta}(v_{pq})\ket{\psi}%
	&=&
    \adjincludegraphics[height=9ex,valign=c]{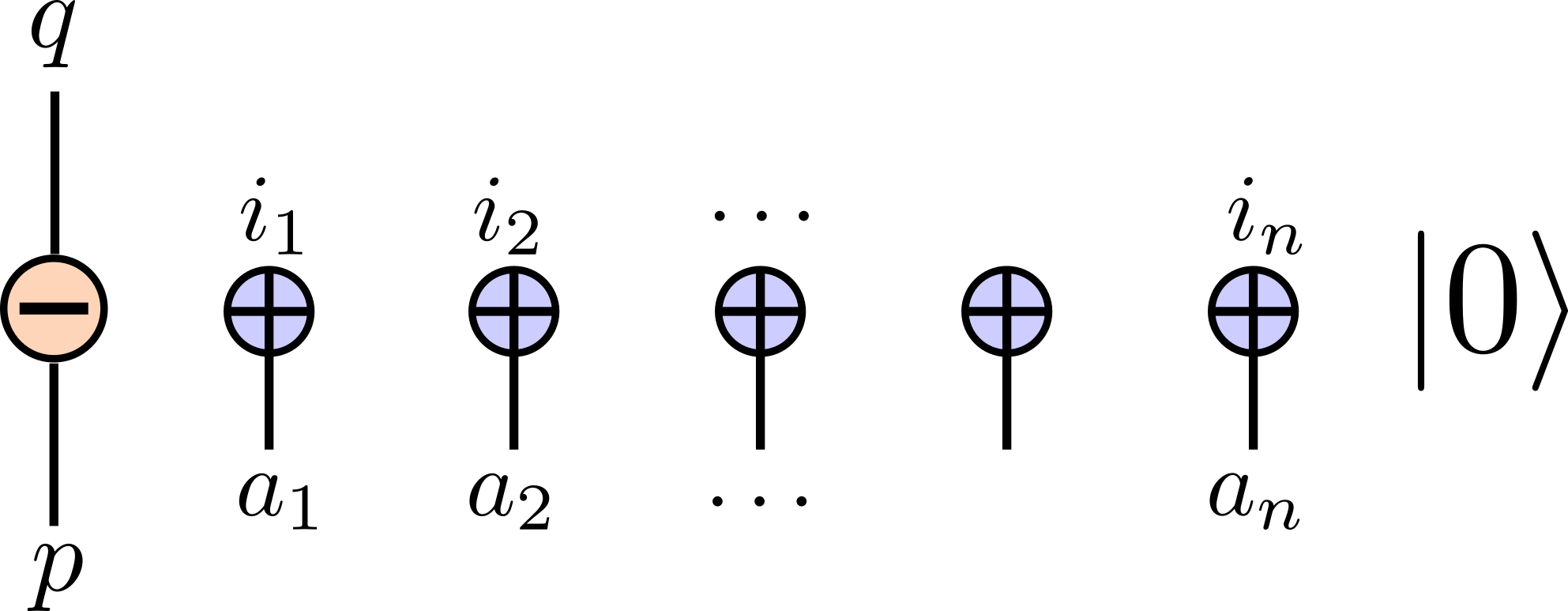}\\
	&=&
    \adjincludegraphics[height=11.5ex,valign=c]{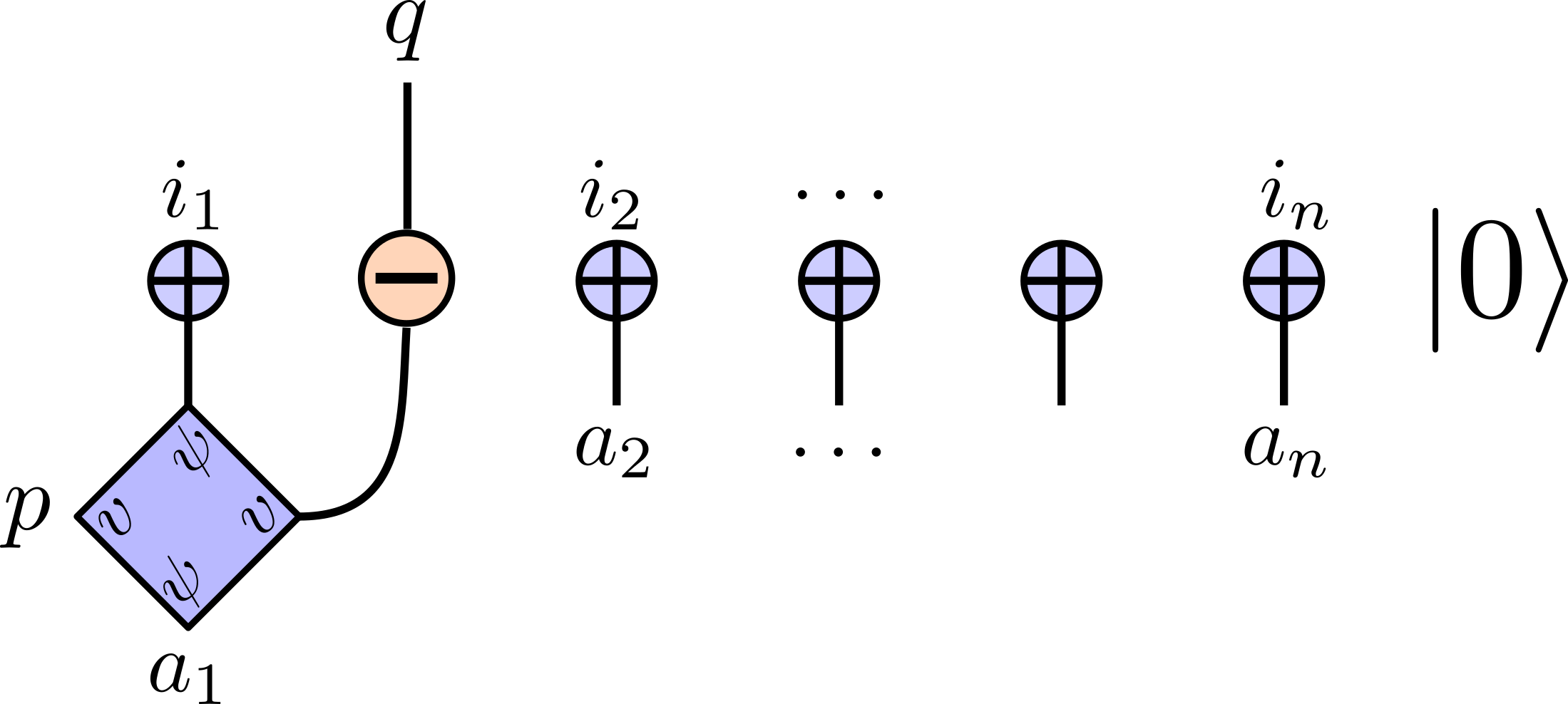}\nonumber\\
	&=&
    \adjincludegraphics[height=11.5ex,valign=c]{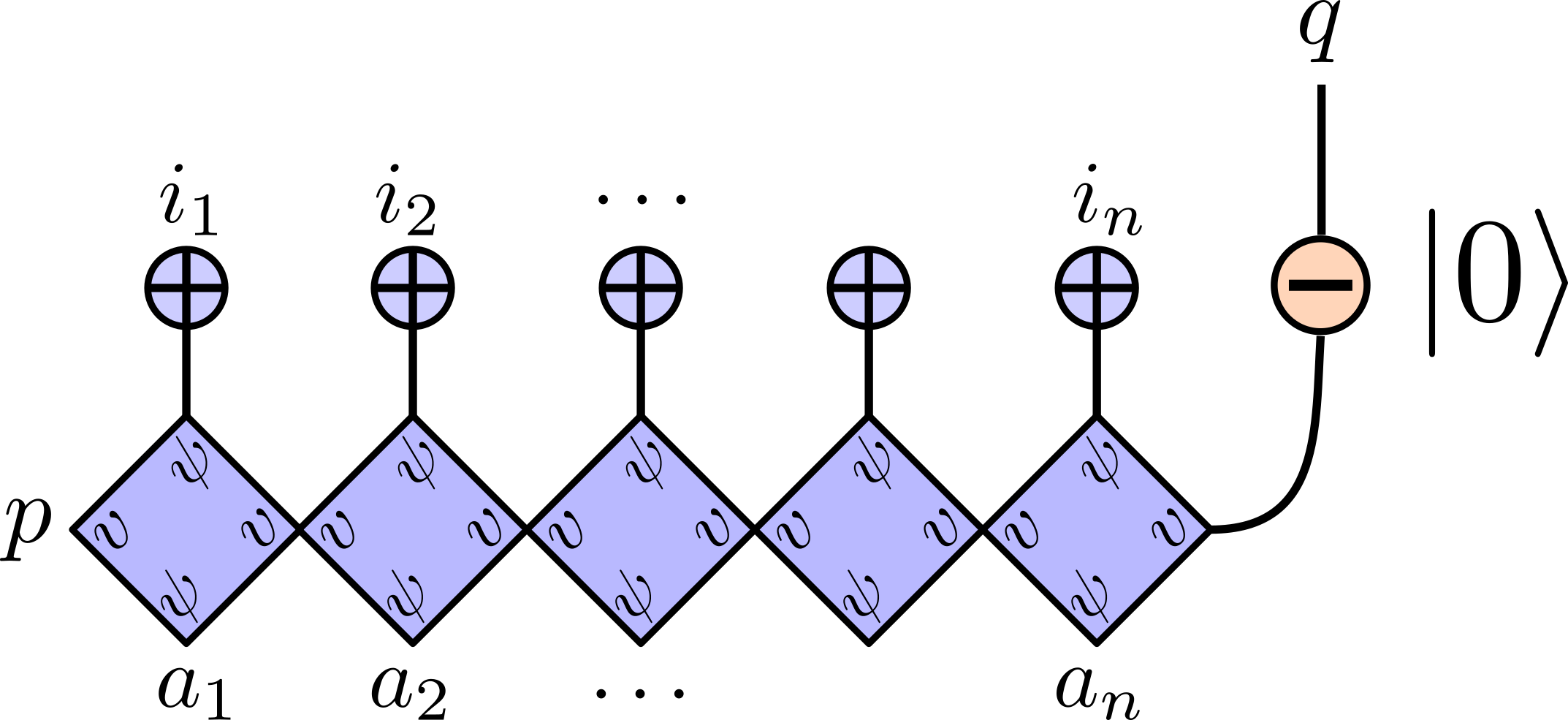}\nonumber\\
	&=&
    \adjincludegraphics[height=9ex,valign=c]{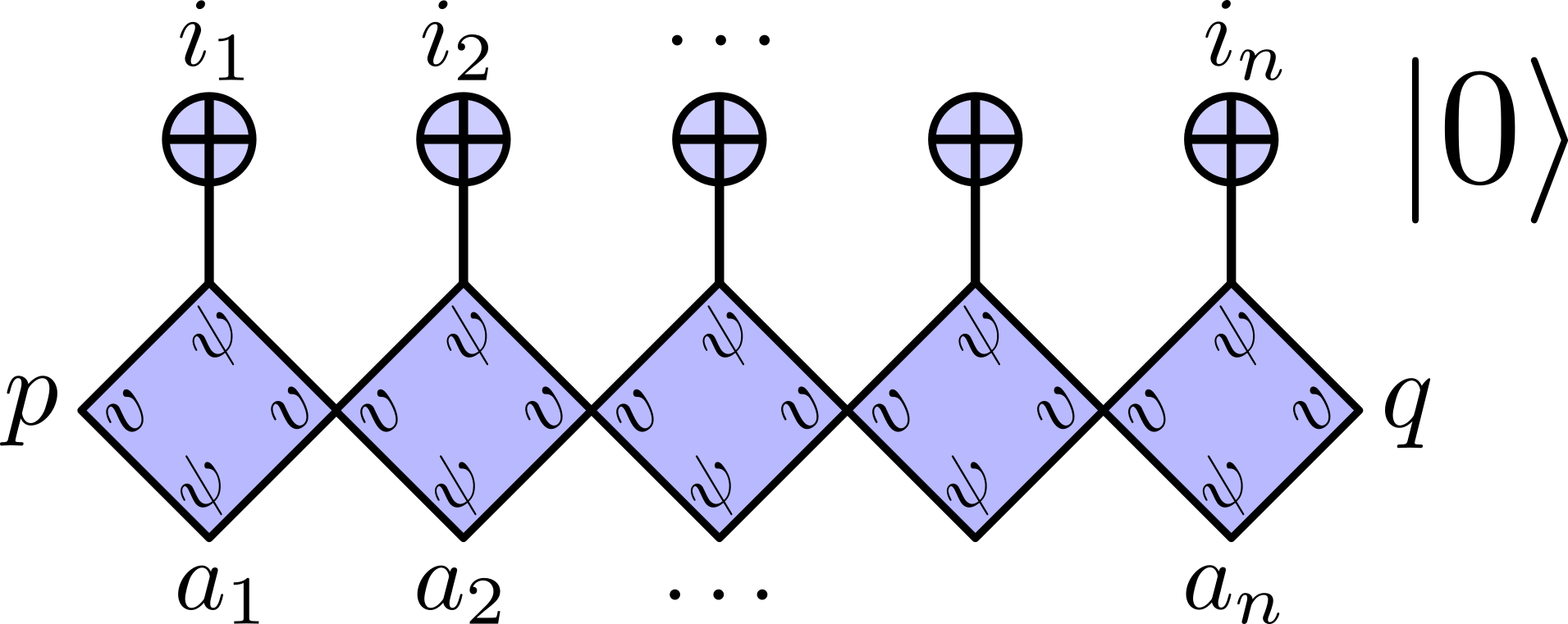}, \nonumber
\end{eqnarray}
where $\ket{\psi}=\hat{\psi}_{i_1,a_1}^{+} \cdots \hat{\psi}_{i_n,a_n}^{+}\ket{0}$, and we used a variant of Eq.~\eqref{eq:Adj-action-graphical} along with $	\hat{\Theta}(v_{pq})\ket{0}=\delta_{pq}\ket{0}$. 
Eq.~\eqref{eq:HA-eij-commute} is derived graphically as:
\begin{equation}\label{eq:HA-eij}
	\adjincludegraphics[height=8ex,valign=c]{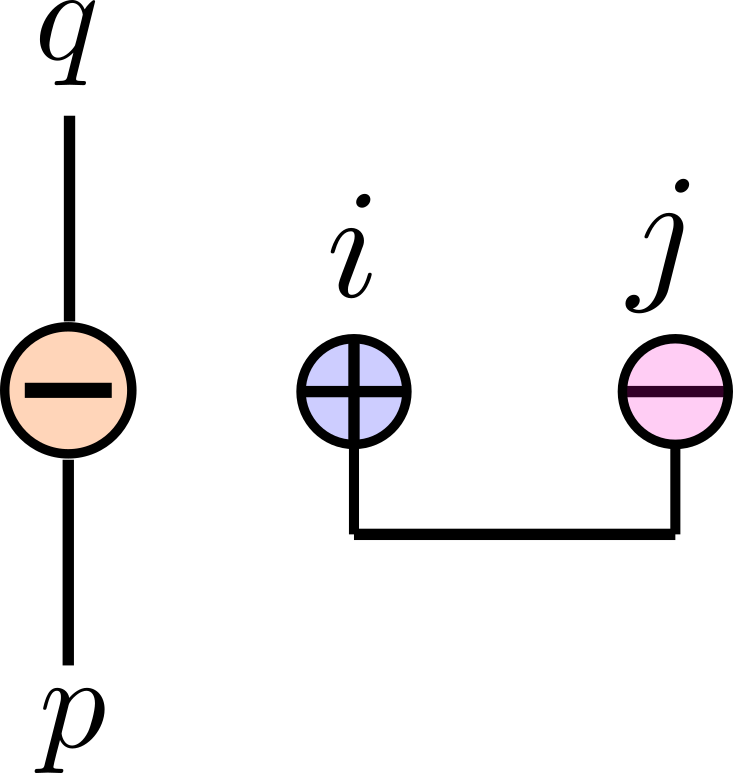}
	~=~
    \adjincludegraphics[height=9ex,valign=c]{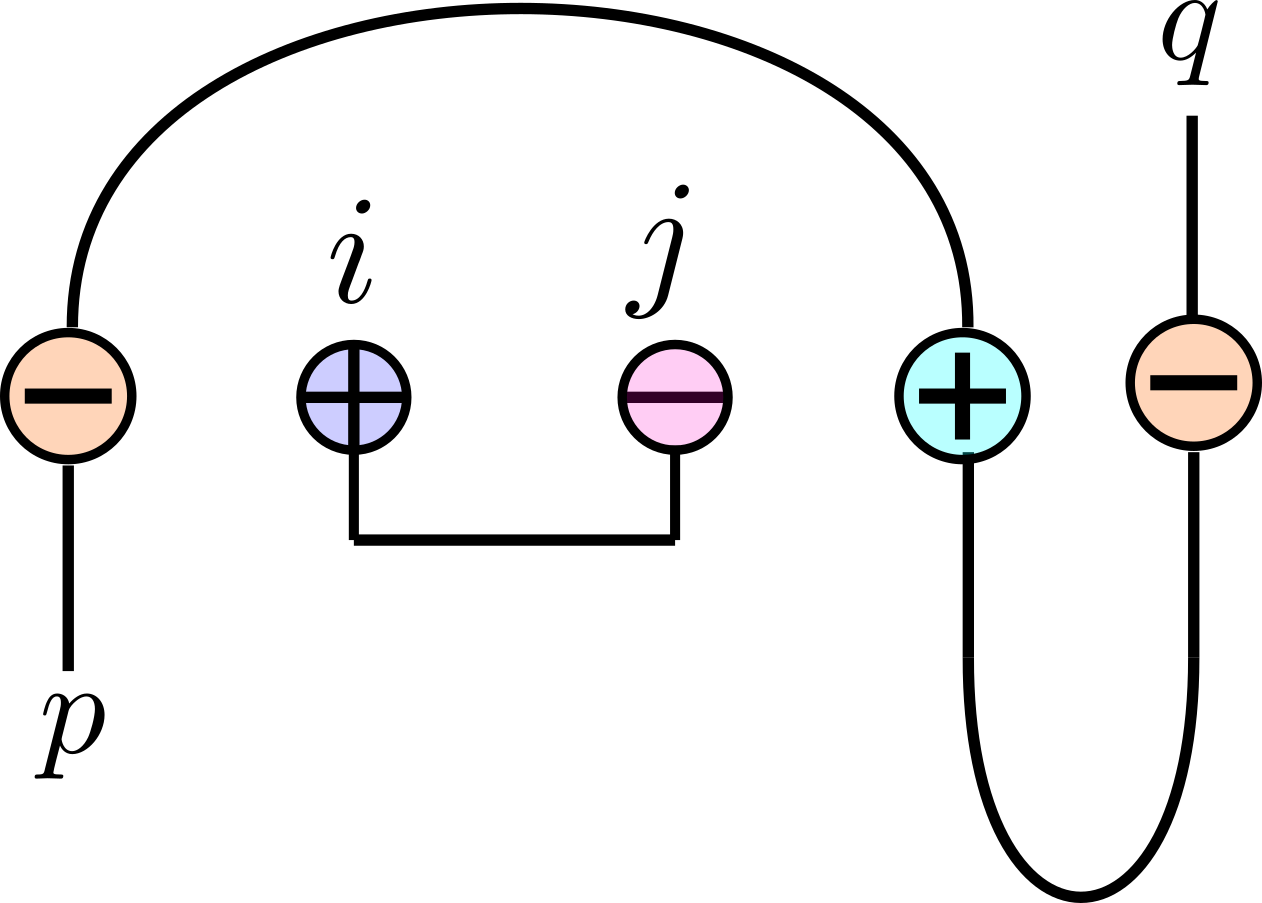}
	~=
    \adjincludegraphics[height=9ex,valign=c]{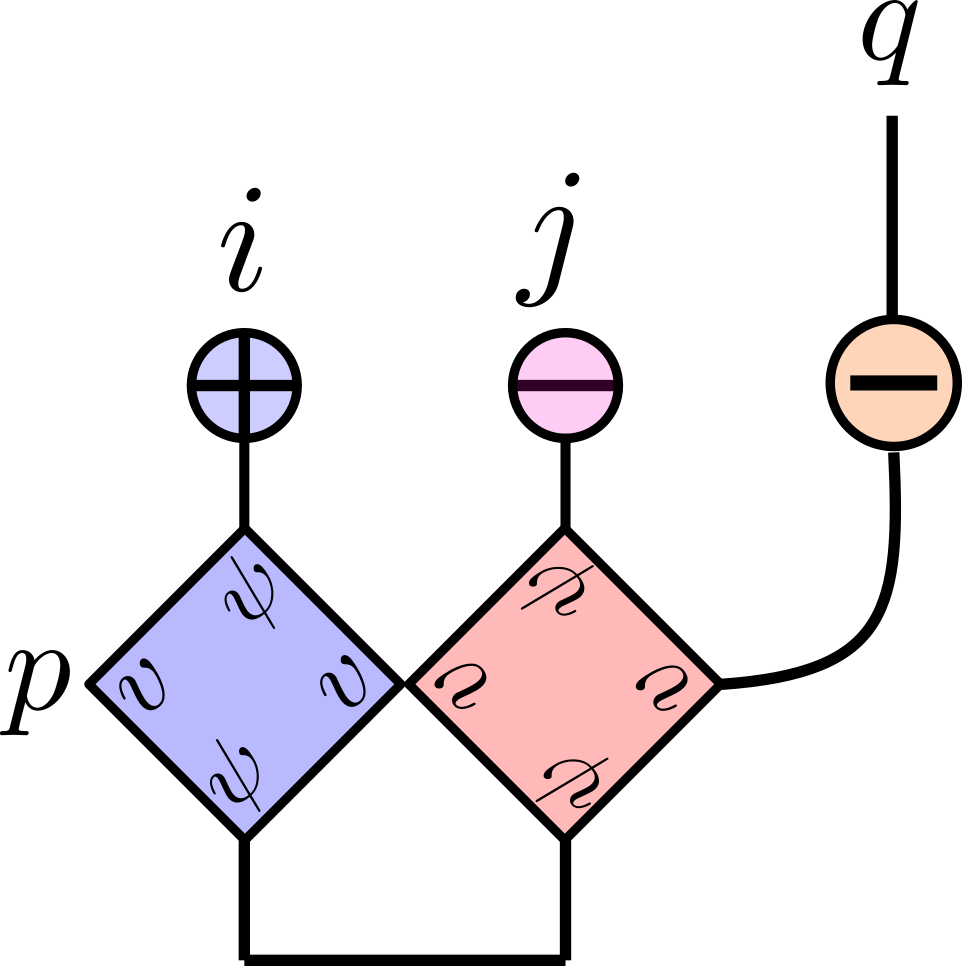}
	~=~
    \adjincludegraphics[height=8ex,valign=c]{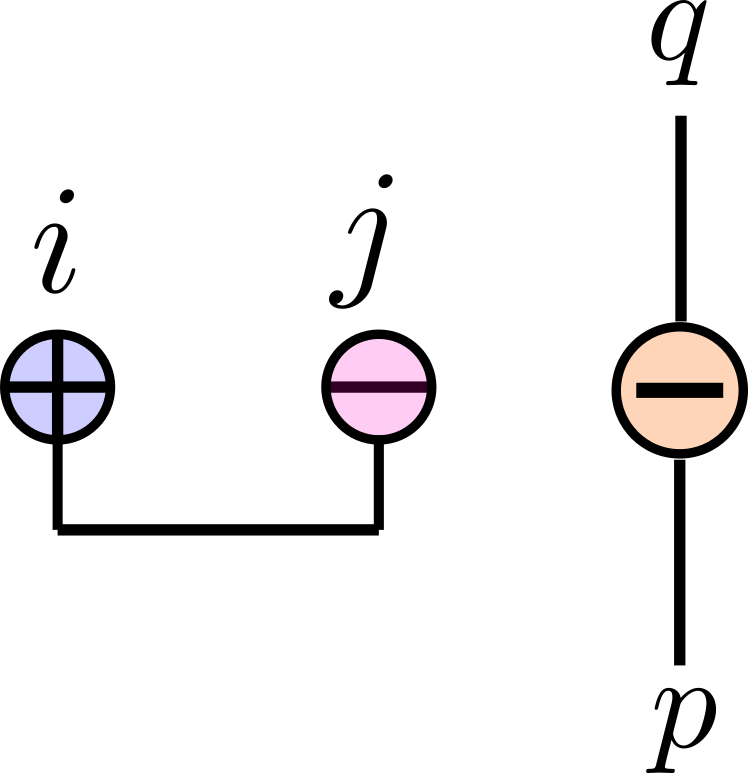}~,
\end{equation}
where we used Eqs.~(\ref{eq:Antipode-graphical}, \ref{eq:Adj-action-graphical}). Eq.~\eqref{eq:HAsymmetry_alpha-cond} is rewritten as
\begin{equation}\label{eq:HA-alpha-cond}
	\adjincludegraphics[width=18ex,valign=c]{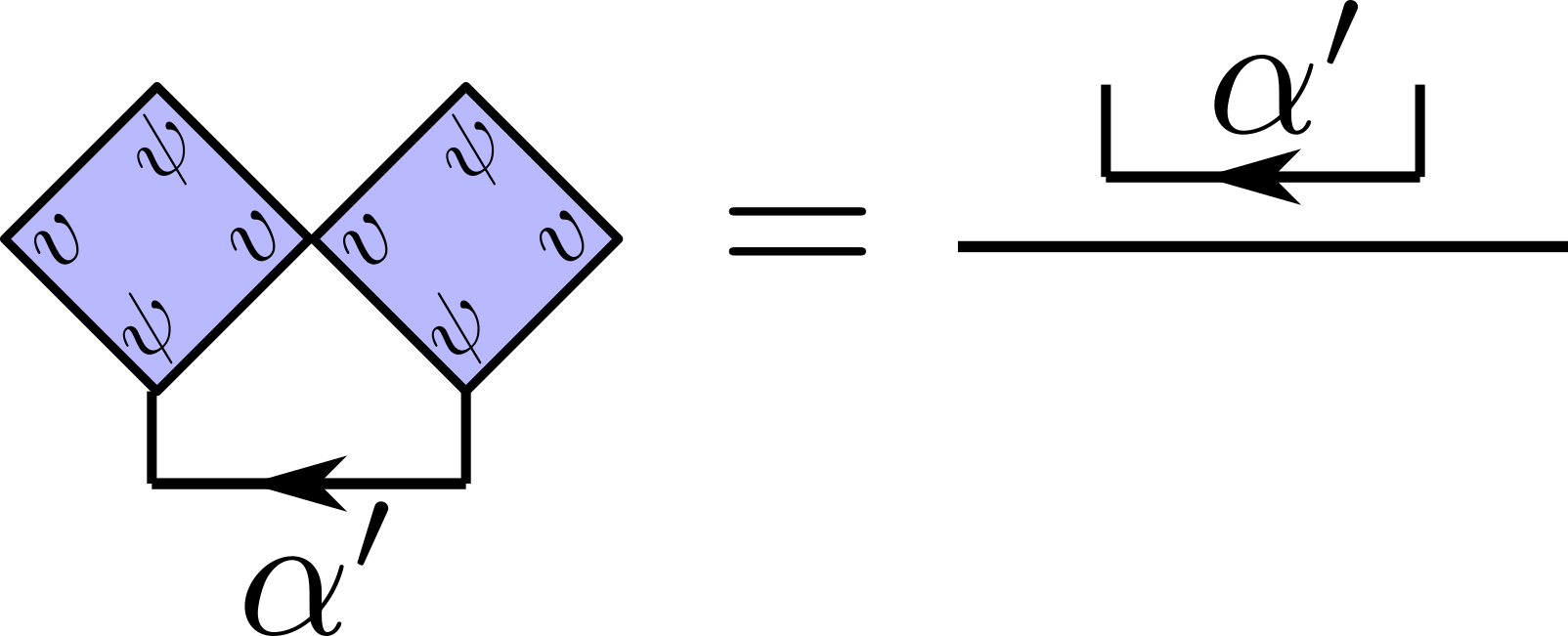}~,
    \quad	 \adjincludegraphics[width=18ex,valign=c]{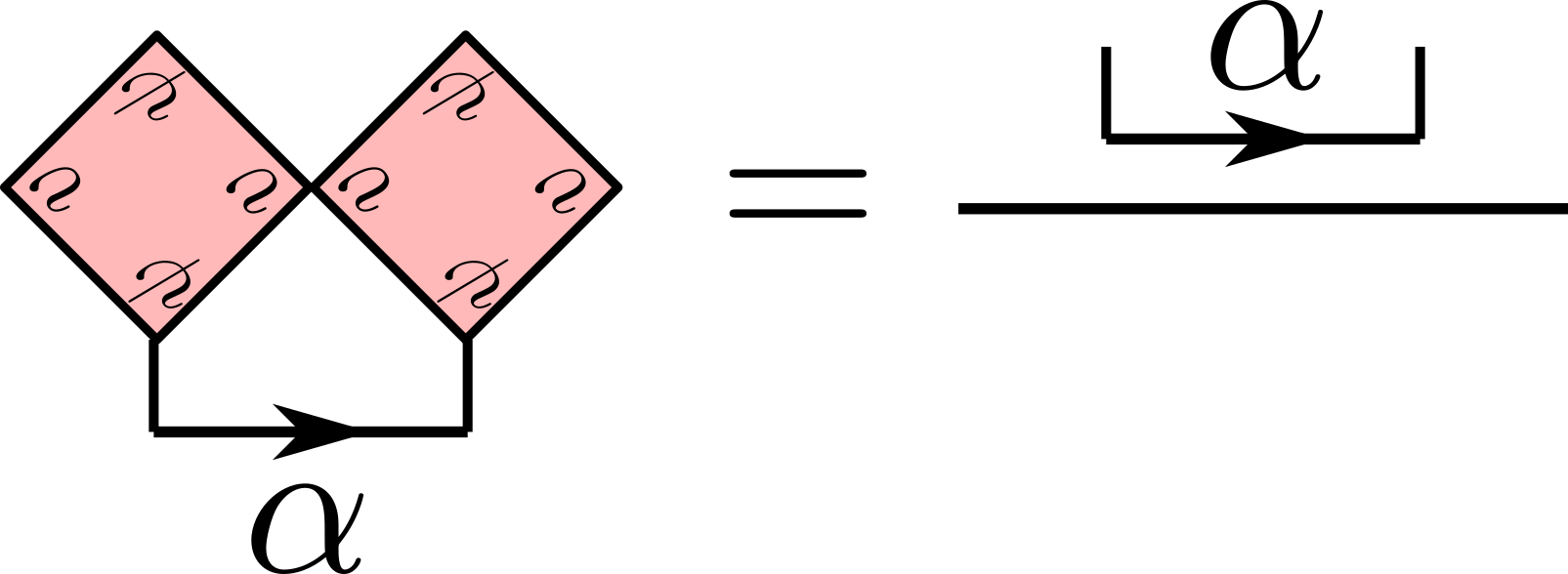}~.
\end{equation}

\begin{remark}\label{rmk:unitary-hidden-symmetry}
	If $R$ is unitary, then it is natural to assume that $\calA$ is a $\C^*$-Hopf algebra, and $\zetapsi$ is a $*$-representation of $\calA$, i.e. $\zetapsi(z)^\dagger=\zetapsi(z^*) ~ \forall z\in \calA$, and $v_{pq}$ is a unitary corepresentation, i.e., $S v_{pq}=v_{qp}^*$. It is then straightforward to show that under this condition $\bar{\zetapsi}$ and $\hat{\Theta}$ are also $*$-representations, and the two lines of Eq.~\eqref{eq:HA-adjoint-action} are equivalent by taking Hermitian conjugate. 
    Indeed, the Hermitian conjugate of the first line of Eq.~\eqref{eq:HA-adjoint-action} gives
    \begin{eqnarray}\label{eq:HA-adj-action-conjugation}
     \hat{\Theta}[(S z_{(2)})^*]\hat{\psi}_{i,a}^{-} \hat{\Theta}[(z_{(1)})^*]
     &=&\sum_b\zetapsi^*_{ba}(z) \hat{\psi}^{-}_{i,b}\nonumber\\
     &=&\sum_b\zetapsi_{ab}(z^*) \hat{\psi}^{-}_{i,b}.
    \end{eqnarray}
    We now consider the substitution $z=(Sy)^*$, and 
    use the simplification~(and that $S$ is a coalgebra antihomomorphism)
    \begin{equation}
    [S(Sy_{(1)})^*\otimes (Sy_{(2)})^*]^*=y_{(1)}\otimes Sy_{(2)},
    \end{equation}
    which is derived from the general identity $*\circ S\circ *\circ S=\id$  valid for all Hopf $*$-algebras~\cite{klimyk1997book}.
    Eq.~\eqref{eq:HA-adj-action-conjugation} now becomes
      \begin{eqnarray}
	\hat{\Theta}[y_{(1)}] \hat{\psi}_{i,a}^{-} \hat{\Theta}[S y_{(2)}]&=& \sum_b\zetapsi_{ab}(Sy) \hat{\psi}^{-}_{i,b},
    \end{eqnarray}
    which is equivalent to the second line of Eq.~\eqref{eq:HA-adjoint-action}.
    A caveat, however, is that $\bar{\zetapsi}(z)$ is generally not the complex conjugate of $\zetapsi(z)$, although this is true in the case of group algebras.  
    Moreover,  Eq.~\eqref{eq:Antipode-graphical} implies that the tensor $\adjincludegraphics[height=6ex,valign=c]{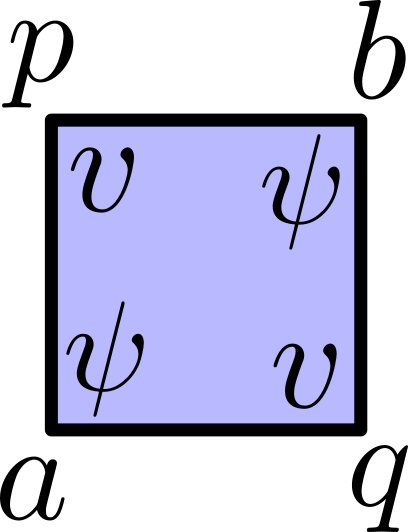}$ is unitary when viewed as a matrix~(where $a,q$ are input indices). If $\calA$ is in addition finite-dimensional, Ref.~\cite{wang2024hopf} shows that this tensor is always a dual unitary tensor. 
	
	Without the $\C^*$-condition on $\calA$, it may happen that the symmetry transformation in Eq.~\eqref{eq:HA-adjoint-action} is not consistent with the Hermitian conjugation of the paraparticle operators $ (\hat{\psi}_{i,a}^{-})^\dagger=\hat{\psi}_{i,a}^{+}$. We do not discuss this possibility in this paper. 
\end{remark}

We finally mention a basic fact that we will use later in Sec.~\ref{sec:local-indist-revisit}: 
\begin{equation}\label{eq:Thetavpqonbra}
\bra{0}\adjincludegraphics[height=7ex,valign=c]{Figures/GeneralizedSymmetry-1/HAaction_tensors-2-1.png}=\delta_{pq}\bra{0}.
\end{equation}
To show this, we show that for any state $\ket{\Psi}$, we have
\begin{equation}\label{eq:bravac}
(\bra{0}\adjincludegraphics[height=7ex,valign=c]{Figures/GeneralizedSymmetry-1/HAaction_tensors-2-1.png}-\delta_{pq}\bra{0})\ket{\Psi}=0.
\end{equation}
Indeed, for $\ket{\Psi}=\ket{0}$, Eq.~\eqref{eq:bravac} follows from the action of $\hat{\Theta}(v_{pq})$ on the vacuum, Eq.~\eqref{eq:HAaction-vac}. For $\ket{\Psi}=\hat{\psi}_{i_1,a_1}^+\ldots \hat{\psi}_{i_n,a_n}^+\ket{0}$ with $n\geq 1$, we note from Eq.~\eqref{eq:vpqactionstatespace} that the action of $\hat{\Theta}(v_{pq})$ preserves total particle number, hence
$\braket{0|\hat{\Theta}(v_{pq})|\Psi}=\braket{0|\Psi}=0$, concluding the proof. %

\subsection{Examples}\label{sec:HAsymexample}
We have seen that the classification of generalized hidden symmetries of a given paraparticle model   boils down to finding solutions to the tensor equation~\eqref{eq:Rpp-condition}~[and also Eq.~\eqref{eq:HA-alpha-cond} in case there are pair-creation terms]. Below we discuss a few specific examples for the $R$-matrices given in Tab.~\ref{tab:Hilbert_series}.
\paragraph{Ordinary fermions and bosons}
For the $R=\pm X$ %
describing ordinary fermions or  bosons~(where $X$ is the swap tensor), 
it is clear from Eq.~\eqref{eq:G-sym-R-condition} that any $m$-dimensional representation of any group $G$ defines a hidden symmetry of the system. 
Below we prove that any $\C^*$-Hopf-algebra hidden symmetry of a system of free fermions/bosons defined in Sec.~\ref{sec:HAsymmetry} is essentially equivalent to a group-like symmetry. 

Let $M$ be the algebra generated by the set of matrices $\{W^{b}_a|1\leq a,b\leq m\}$, where
$[W^{b}_a]_{pq}=
\adjincludegraphics[height=6ex,valign=c]{Figures/GeneralizedSymmetry-1/HAaction_tensors-1-sq.png}$. According to Remark~\ref{rmk:unitary-hidden-symmetry}, 
$\adjincludegraphics[height=6ex,valign=c]{Figures/GeneralizedSymmetry-1/HAaction_tensors-1-sq.png}$ is unitary as a matrix, therefore, $M$ is semisimple~\footnote{Here $M$ can be viewed as a generalization of the algebra $\Dse[R]$ in Definition~\ref{def:simpleR-matrix} to the four-index tensor $W$, and semisimplicity of $M$ follows from the fact that its defining representation can be decomposed as a direct sum of irreducibles, the proof of which is almost identical to the proof of the first statement of Thm.~\ref{thm:directsum-decomposition}, which only uses the unitarity of $R$.
}. 
Meanwhile, with $R=\pm X$, Eq.~\eqref{eq:Rpp-condition} implies that $M$ is Abelian. Therefore, $M$ is the direct sum of 1-dimensional algebras.  %
Without loss of generality, we can assume that the tensor $\adjincludegraphics[height=6ex,valign=c]{Figures/GeneralizedSymmetry-1/HAaction_tensors-1-sq.png}$ is already diagonal in the $\searrow$ direction, i.e.,
\begin{equation}
\adjincludegraphics[height=6ex,valign=c]{Figures/GeneralizedSymmetry-1/HAaction_tensors-1-sq.png}=[U_q]^b_a \delta_{p,q},
\end{equation}
where $U_q$ is unitary for each $q=1,2,\ldots,d$. Let $G$ be the subgroup of U$(m)$ generated by $\{U_q\}^d_{q=1}$, and let $\zetapsi_0$ be the defining $m$-dimensional representation of $G$. Then $(G,\zetapsi_0)$ essentially describes the same hidden symmetry as $(\calA,\zetapsi)$ since they transform the paraparticle operators in the same way. 

Therefore, ordinary free particle systems can only display ordinary group-like symmetries. In the following we show that non-trivial paraparticle systems usually exhibit 
non-trivial Hopf algebra symmetries~(even in free paraparticle systems). 
\begin{remark}
	Systems of interacting fermions and bosons can display emergent generalized symmetries that are not present at the level of elementary field operators. Examples are topologically-ordered systems of interacting fermions and bosons, which display generalized emergent symmetry described by~(higher) fusion category theory~\cite{kong2014braided,Kong2020Algebraichigher}. 	
\end{remark}

\paragraph{Ex.~\ref{ex:1m} in Tab.~\ref{tab:Hilbert_series}}
Since the $R$-matrix in Ex.~\ref{ex:1m} is proportional to the identity matrix, Eq.~\eqref{eq:G-sym-R-condition} is trivially satisfied. Therefore any $m$-dimensional representation $\zetapsi$ of any Hopf algebra $\calA$ defines a generalized hidden symmetry of this type of paraparticles. 

\paragraph{$R$-matrices from quasitriangular Hopf algebras}
If the $R$-matrix is constructed from the universal $\mathcal{R}$-matrix of a quasitriangular Hopf algebra $\calA$, then $\calA$ itself defines a hidden symmetry of the paraparticle system.  Specifically, let $\calA$ be a quasitriangular Hopf algebra with universal $\mathcal{R}$-matrix $\mathcal{R}$, and let $\zetapsi$ be a representation of $\calA$. Define an $R$-matrix as
\begin{equation}\label{eq:RmatFromQHA}
	R=X[(\zetapsi\otimes \zetapsi)\mathcal{R}],%
\end{equation}
where $X$ is the swap tensor. Such an $R$ always satisfies the YBE~(see App.~\ref{app:RMatTHA}). Assume in addition that $R$ is involutive~(this is always satisfied if $\calA$ is triangular; if $\calA$ is only quasitriangular, then $R$ is involutive only for a subset of representations of $\calA$).  Then Eq.~\eqref{eq:HA-sym-R-condition} is always satisfied due to the following property of the universal $\mathcal{R}$-matrix:
\begin{equation}
	\tau[\Delta%
    (z)]=\mathcal{R} \Delta(z) \mathcal{R}^{-1},\quad\forall z\in \calA,
\end{equation}
where $\tau:\calA\otimes \calA\to\calA\otimes \calA$ swaps the tensor factors,  i.e., $\tau(y\otimes z)=z\otimes y$ for all $y,z\in\calA$. 

To give more specific examples, each of the $R$-matrices in Ex.~\ref{ex:decoupled}, Ex.~\ref{ex:Green}, Ex.~\ref{ex:setth}, and Ex.~\ref{ex:setth-ext} of Tab.~\ref{tab:Hilbert_series} can be constructed from a finite dimensional triangular Hopf algebra $\calA$, so $\calA$ itself is a hidden symmetry for the corresponding $R$-paraparticle system. %
A caveat, however, is that different quasitriangular Hopf algebras can give the same $R$-matrix via Eq.~\eqref{eq:RmatFromQHA}. Therefore, for a given Hopf algebra $R$-matrix, the underlying generalized symmetry algebra $\calA$ is generally not unique. In particular, all the $R$-paraparticles in Tab.~\ref{tab:Hilbert_series} have infinite dimensional Hopf algebra symmetries as well. 

\subsection{Mutual parastatistics as a hidden symmetry}\label{sec:statistics-as-symmetry}
In this section, we provide an alternative way to understand the theory of local observables in Sec.~\ref{sec:locality} and the theory of mutual parastatistics in Sec.~\ref{sec:mutual_para} 
using the viewpoint of Hopf algebra hidden symmetry, and %
later we use this viewpoint to study local indistinguishability of $R$-paraparticles and formulate the categorical description of $R$-parastatistics.  
Comparing Eq.~\eqref{eq:Rpp-condition} with Eq.~\eqref{eq:Rpsiphiconstraint}, we notice that 
for a given paraparticle $\psi$ described by an $R$-matrix $R$, any type of paraparticle $\varphi$ in the same system defines a hidden symmetry transformation of $\hat{\psi}^\pm_{i,a}$ via
\begin{eqnarray}\label{eq:zeta_tensor_def-mutual}
	\adjincludegraphics[height=7ex,valign=c]{Figures/GeneralizedSymmetry-1/HAaction_tensors-1-sq.png}=
    \zetapsi_{ba}[v_{pq}]=
    \adjincludegraphics[height=6ex,valign=c]{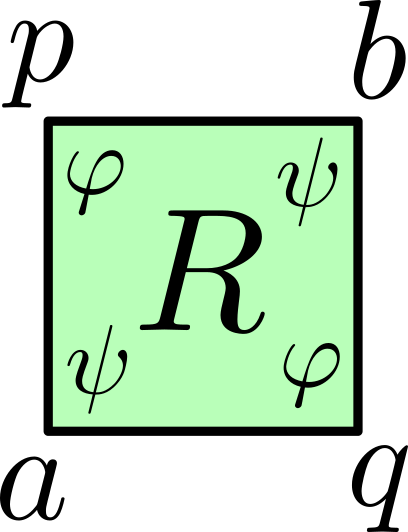},
\end{eqnarray}
Indeed, this is not totally surprising because we can rewrite Eq.~\eqref{eq:Adj-action-graphical} as 
\begin{equation}\label{eq:Adj-action-graphical-2}
\adjincludegraphics[width=6ex,valign=c]{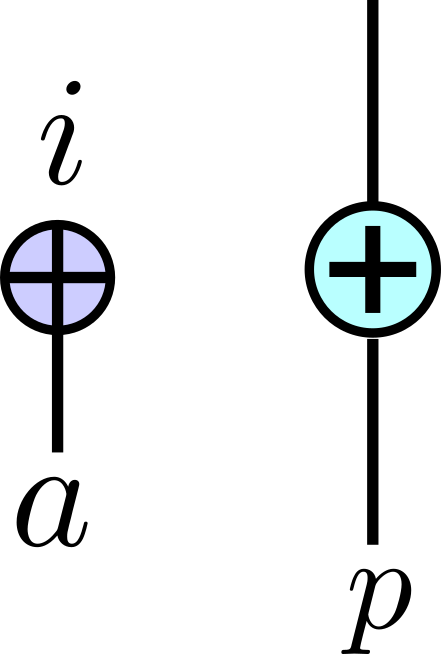}=
\adjincludegraphics[width=6ex,valign=c]{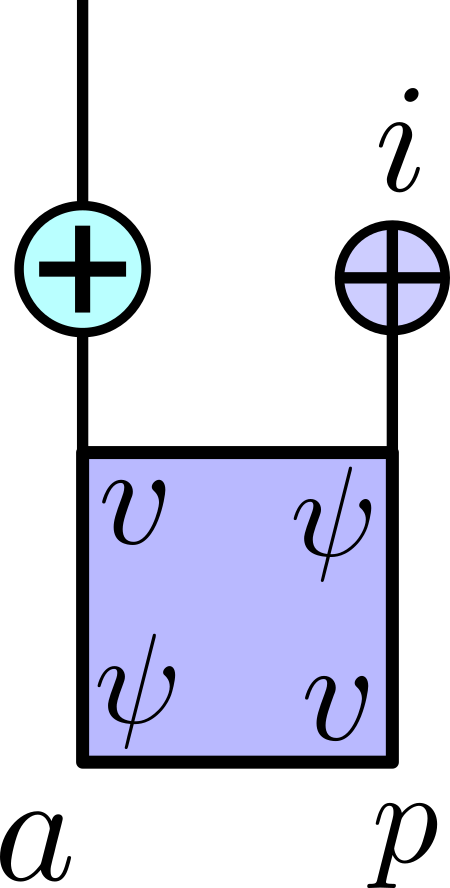},
\end{equation}
which makes it similar to the CR between paraparticle operators $\hat{\psi}^+_{i,a}$ and $\hat{\varphi}^+_{j,b}$ in Fig.~\ref{fig:RMQA-rela}. We now give a formal result supporting this viewpoint:

\begin{restatable}{thm}{basicHA} \label{thm:basicHA}
Let $\frakF$ be an $R$-paraparticle system, 
let $\T$ be the set of particle types and let  $\mathbf{R}$ be the composite $R$-matrix defined in Eq.~\eqref{eq:compositeR-def}. 
Then $\frakF$ has a hidden bialgebra symmetry, denoted by $\calA_\mathbf{R}$, with a representation $\psi$ and a corepresentation $v^\psi$ associated to each particle type $\psi\in\T$, such that for any $\psi,\varphi\in \T$, we have %
\begin{eqnarray}\label{eq:zeta_tensor_def-basicHA}
\adjincludegraphics[height=7ex,valign=c]{Figures/GeneralizedSymmetry-1/HAaction_tensors-1-sq.png}=\zetapsi_{ba}[v^\varphi_{pq}]=
\adjincludegraphics[height=6ex,valign=c]{Figures/GeneralizedSymmetry-1/Mutual-R-nc},
\end{eqnarray}
and $\Rep(\calA_\mathbf{R})$ is generated by $\T$ as a monoidal $\C$-linear additive category.  
\end{restatable}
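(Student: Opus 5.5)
The natural route is a Faddeev--Reshetikhin--Takhtajan (FRT) type construction. I would define $\calA_\mathbf{R}$ as the bialgebra generated by symbols $T^{A}_{B}$, one for each pair of composite internal indices $A=(\psi,a)$, $B=(\varphi,b)$. These are imposed to satisfy the RTT relations
\[
\mathbf{R}\,T_1T_2=T_1T_2\,\mathbf{R},
\]
with coproduct $\Delta(T^A_B)=\sum_C T^A_C\otimes T^C_B$ and counit $\epsilon(T^A_B)=\delta_{AB}$. The RTT ideal is a bi-ideal, so this is a bialgebra (standard FRT). To match the block structure of $\mathbf{R}$, I would in addition quotient by the off-diagonal blocks $T^{(\psi,a)}_{(\varphi,b)}$ with $\psi\neq\varphi$. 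Each particle type then has its own corepresentation $v^\varphi_{pq}:=T^{(\varphi,p)}_{(\varphi,q)}$.

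Next I would define the representation $\psi$ of $\calA_\mathbf{R}$ on $\C^{m_\psi}$ by
\[
\psi_{ba}[v^\varphi_{pq}]:=[R^{(\psi\varphi)}]\text{-entry as in Eq.~\eqref{eq:zeta_tensor_def-basicHA}},
\]
extended multiplicatively. Well-definedness amounts to checking that the RTT relations are sent to valid identities. With the mutual $R$-matrices in the slots of $T$, the RTT relation becomes exactly the generalized YBE~\eqref{eq:YBEgraphical-rela} in the triple $(\psi,\varphi,\chi)$. This is the same computation that shows $\mathbf{R}$ satisfies Eq.~\eqref{eq:YBE-composite}. Equivalently, the representation on the composite space $\C^{\sum m_\psi}$ is $T\mapsto$ the appropriate partial transpose/rotation of $\mathbf{R}$, and the YBE for $\mathbf{R}$ gives the RTT compatibility.

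The hidden-symmetry property then reduces to the criterion of Sec.~\ref{sec:HAsymmetry-TN}. A bialgebra acts as a hidden symmetry iff Eq.~\eqref{eq:Rpp-condition} holds for each corepresentation, i.e.\ each representation $\psi$ intertwines with $R^{(\psi\psi)}$, and more generally with all $R^{(\psi\chi)}$. Substituting the definition, this is once more the YBE~\eqref{eq:YBEgraphical-rela}. I would then define $\hat\Theta$ on the Fock space via Eq.~\eqref{eq:HAaction-Fock}, with the vacuum transforming by $\epsilon$. For a bialgebra this requires the antipode-free version, Remark~\ref{rmk:hiddenBAsymmetry}: act only on creation operators, and check well-definedness against the $\hat\psi^+\hat\psi^+$ relation and the ideal generated by $\hat\psi^-\hat\psi^+$ using the intertwining property.

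Finally, for $\Rep(\calA_\mathbf{R})$ being generated by $\T$, I would use the FRT universality property. Any finite-dimensional representation of $\calA_\mathbf{R}$ is controlled by the comodule category of the generating coalgebra $\mathrm{span}\{T^A_B\}$ and its tensor powers. Dually, $\calA_\mathbf{R}$ is spanned by products of matrix coefficients of $\bigoplus_\psi\psi$ and their tensor products, so every object is a subquotient/summand of sums of tensor products of the $\psi$'s.

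I expect this last step to be the main obstacle, specifically making precise in which sense the category is generated. With $\calA_\mathbf{R}$ infinite-dimensional and possibly non-semisimple, one likely needs to restrict to the subcategory generated under $\otimes$, $\oplus$ and subobjects/quotients, and to argue faithfulness of the pairing between $\calA_\mathbf{R}$ and the matrix coefficients $\psi(\cdot)$.
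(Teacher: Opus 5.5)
The FRT construction breaks down at the last step. The earlier steps are sound: under Eq.~\eqref{eq:zeta_tensor_def-basicHA} the RTT relations become the generalized Yang--Baxter equation, the off-diagonal-block ideal is a bi-ideal, and the Fock action is well defined. One caveat there: without an antipode, the $\hat\psi^-$ line of Eq.~\eqref{eq:BA-adjoint-action} does not follow from Eq.~\eqref{eq:Rpp-condition} and needs its own Yang--Baxter-type check on generators.

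The genuine gap is the final assertion, and it is not a matter of making things precise: for your FRT bialgebra $A(\mathbf{R})$ it is false. FRT gives the universal bialgebra making $\mathbf{R}$ a comodule map, which is far larger than what the theorem needs.

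Take one particle type with $R=-X$ (ordinary fermions).
\begin{itemize}
\item The RTT relation reduces to $T^B_CT^A_D=T^A_DT^B_C$, so $A(\mathbf{R})=\C[T^A_B]=\mathcal{O}(M_m)$ with the matrix coproduct.
\item Every $g\in M_m(\C)$ gives a one-dimensional module $\C_g$, with $\C_g\otimes\C_h\cong\C_{gh}$.
\item Your $\psi$ sends $v_{pq}\mapsto-\delta_{pq}\mathds{1}_m$ (this is just $\hat\psi^+_{i,a}\hat\psi^+_{j,b}=-\hat\psi^+_{j,b}\hat\psi^+_{i,a}$), so $\psi\cong\C_{-\mathds{1}}^{\oplus m}$.
\item Everything built from $\psi$ by tensor products, sums, summands, or even subquotients is semisimple with composition factors $\C_{\pm\mathds{1}}$. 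So, e.g., $\C_{2\cdot\mathds{1}}$ is never reached.
\end{itemize}

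Your ``dual'' argument conflates modules with comodules. $A(\mathbf{R})$ is spanned by products of the $T^A_B$, which are matrix coefficients of the corepresentation $v$. FRT universality therefore only gives the standard fact that comodules are generated by $v$. Matrix coefficients of the modules $\psi$ live in $A(\mathbf{R})^*$, and nothing makes $A(\mathbf{R})\to\prod_n\End(\psi^{\otimes n})$ injective; in the example its image is two-dimensional.

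The missing idea is to pass to the faithful quotient, which is exactly how the paper defines $\calA_\mathbf{R}$. It is the subalgebra of $\End(V)$ generated by the operators $\hat\Theta(v_{pq})$ of Eq.~\eqref{eq:vpqactionstatespace}, where $V$ is the Fock space with infinitely many modes. Equivalently, it is $A(\mathbf{R})/K$, with $K$ the kernel of the Fock action.

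The substantive step is then to show that $\Delta$ and $\epsilon$ descend, i.e.\ that $K$ is a bi-ideal. The paper does this with Lemma~\ref{lemma:algebrahomext}. Placing the particles of the second tensor factor on fresh modes gives an isomorphism $V_{i_1\ldots i_n}\otimes V_{j_1\ldots j_k}\cong V_{i_1\ldots i_nj'_1\ldots j'_k}$. This isomorphism intertwines $\sum_r\hat\Theta(v_{pr})\otimes\hat\Theta(v_{rq})$ with $\hat\Theta(v_{pq})$, so every relation holding on $V$ also holds on $V\otimes V$. The vacuum plays the same role for $\epsilon$.

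In your language: distinct-mode sectors are copies of $\psi^{\otimes n}$, and coinciding-mode sectors are quotients of them. Hence $K=\bigcap_n\ker\bigl(\psi^{\otimes n}\circ\Delta^{(n-1)}\bigr)$, and coassociativity gives $\Delta(K)\subseteq K\otimes A(\mathbf{R})+A(\mathbf{R})\otimes K$.

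Your RTT relations, block quotient, Eq.~\eqref{eq:zeta_tensor_def-basicHA} and the Fock action all pass to $A(\mathbf{R})/K$. The generation statement rests on this faithfulness on the Fock space. In the fermion example the quotient is spanned by $1$ and $(-1)^{\hat n}$, i.e.\ it is $\C[\Z_2]$, whose representations are indeed generated by $\psi$.
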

In simple terms, the statement that $\Rep(\calA_\mathbf{R})$ is generated by $\T$  as a monoidal $\C$-linear additive category means that all representations of $\calA_\mathbf{R}$ can be obtained by recursively taking tensor product of representations in $\T$ and decomposing the result into indecomposable ones, as we describe more explicitly in Sec.~\ref{sec:cat-description}. 
The proof of Thm.~\ref{thm:basicHA} is given in App.~\ref{app:basichopfsymmetry}, where $\calA_\mathbf{R}$ is constructed explicitly.  
We call $\calA_\mathbf{R}$ the fundamental bialgebra symmetry associated to $\mathbf{R}$. 
\begin{remark}\label{rmk:DUHopf}
Under the condition of Thm.~\ref{thm:basicHA},
if $\mathbf{R}$ is dual unitary, then $\calA_\mathbf{R}$ can be extended to a Hopf algebra hidden symmetry, which we call the fundamental Hopf algebra hidden symmetry. %
If, in addition, $\mathbf{R}$ is weakly equivalent to $\pm  X$, then $\calA_\mathbf{R}$ is finite dimensional. Since we do not use these facts in this paper in an essential way, we omit the proof, but in App.~\ref{app:basichopfsymmetry} we briefly explain the intuition why dual unitarity of $\mathbf R$ leads to an antipode on $\calA_\mathbf{R}$.
\end{remark}

\begin{remark}\label{rmk:hiddenBAsymmetry}
For a  hidden bialgebra symmetry $\calA$, some of the equations in Sec.~\ref{sec:generalized_symmetry} need to be rewritten in a form that does not involve the antipode. For example, Eq.~\eqref{eq:HA-adjoint-action} should be written as 
\begin{eqnarray}\label{eq:BA-adjoint-action}
	\hat{\Theta}(z) \hat{\psi}_{i,a}^{+} &=&\sum_b \hat{\psi}^{+}_{i,b}\zetapsi_{ba}(z_{(1)})\hat{\Theta}[ z_{(2)}],\nonumber\\
	\hat{\psi}_{i,a}^{-} \hat{\Theta}(z)&=& \sum_b\hat{\Theta}[ z_{(2)}]\zetapsi_{ab}(z_{(1)}) \hat{\psi}^{-}_{i,b},
\end{eqnarray}
with tensor network representation
\begin{equation}\label{eq:Adj-action-graphical-BA}
\adjincludegraphics[width=6ex,valign=c]{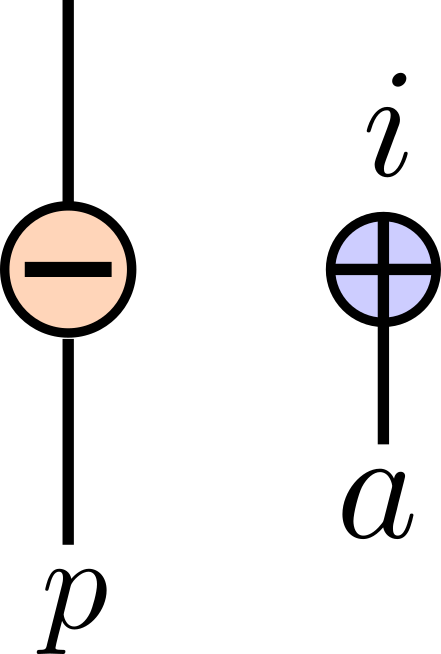}=
\adjincludegraphics[width=6ex,valign=c]{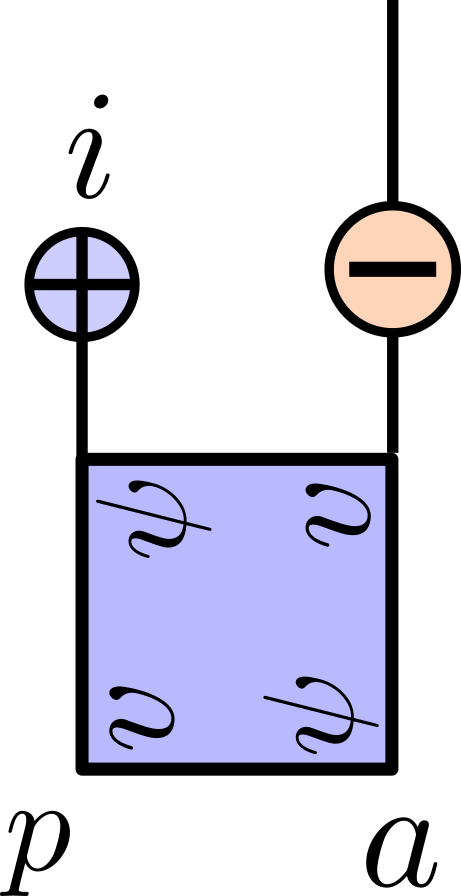},
\quad 
\adjincludegraphics[width=6ex,valign=c]{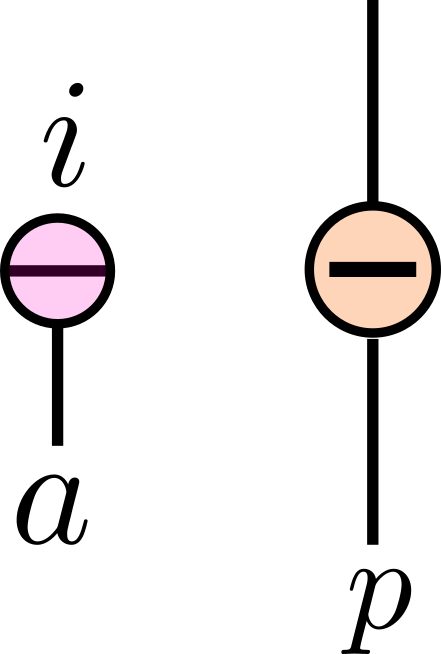}=
\adjincludegraphics[width=6ex,valign=c]{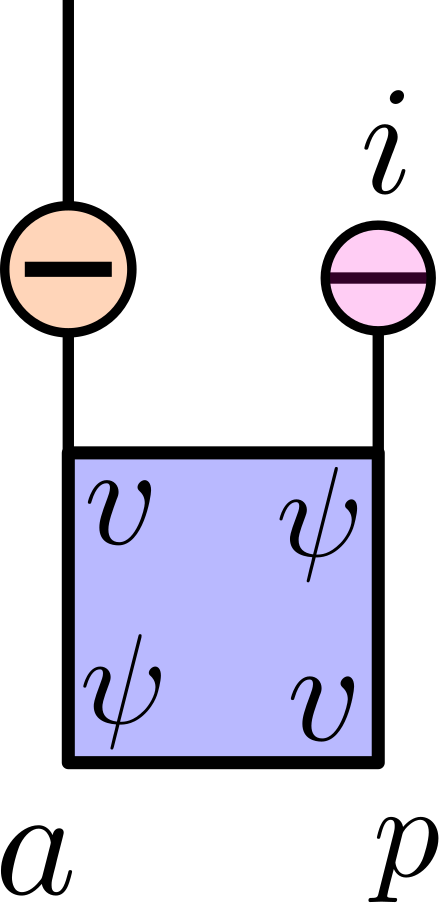},
\end{equation}
in place of Eq.~\eqref{eq:Adj-action-graphical}. Other results in Sec.~\ref{sec:generalized_symmetry} generalize in a similar way. 
Eq.~\eqref{eq:BA-adjoint-action} is enough to guarantee the $\calA$-invariance of $\hat{e}_{ij}$ in Eq.~\eqref{eq:HA-eij-commute}, 
since we have 
\begin{equation}
\hat{\Theta}(z)\hat{e}_{ij}=\sum_{a,b}\hat{\psi}^+_{i,b}\zetapsi_{ba}(z_{(1)})\hat{\Theta}(z_{(2)})\hat{\psi}^-_{j,a}=\hat{e}_{ij}\hat{\Theta}(z).
\end{equation}
\end{remark}

Viewing  parastatistics as a hidden symmetry transformation is useful because it leads to an equivalent~(but conceptually more convenient) definition of general local observables, as we note in a simple yet fundamental fact:
\begin{fact}\label{fact:locality-and-A-invariance}
Let $\calA$ be the fundamental bialgebra symmetry of a paraparticle system. 
Then in the Definition~\ref{def:general_LO} of local observables~(see also the last paragraph in Sec.~\ref{sec:mutual_para}), Eq.~\eqref{eq:generalLOcondition} is equivalent to 
\begin{equation}\label{eq:generalLO-HAsyn}
[\hat{O},\hat{\Theta}(v_{pq}^\varphi)]=0,~\forall p,q,\varphi.
\end{equation}
Therefore, a local observable supported on region $S$ is exactly an $\calA$-invariant Hermitian element of $\mathfrak{F}_S$. More formally, we can write $\mathfrak{A}_S=\mathfrak{F}_S^\calA$, %
where $\mathfrak{F}_S$ is the algebra generated by $\{\hat{\psi}^\pm_{i,a}|i\in S,1\leq a\leq m,\psi\in\T\}$~(roughly, the algebra of quantum fields on $S$), and the notation $\mathfrak{F}_S^\calA$ means the subalgebra of $\calA$-invariant elements of $\mathfrak{F}_S$. 
\end{fact}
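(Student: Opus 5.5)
The plan is to prove the two directions of the equivalence between Eq.~\eqref{eq:generalLOcondition} and Eq.~\eqref{eq:generalLO-HAsyn} separately, working with the bialgebra form of the adjoint action, Eq.~\eqref{eq:BA-adjoint-action}, so that no antipode is required. Everything will be phrased through the composite $\mathbf{R}$, which packages all particle types $\varphi\in\T$ into the single field $\hat{\Psi}^\pm_{i,A}$. The key input is Thm.~\ref{thm:basicHA}, which identifies the tensor $\zetapsi_{ba}[v^\varphi_{pq}]$ with the mutual $R$-matrix $R^{(\psi\varphi)}$. With this identification, Eq.~\eqref{eq:Adj-action-graphical-BA} has exactly the same shape as the mutual commutation relation between $\hat{\psi}^+_{i,a}$ and $\hat{\varphi}^+_{j,b}$ in Fig.~\ref{fig:RMQA-rela}. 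Informally, $\hat{\Theta}(v^\varphi_{pq})$ behaves like a ``$\varphi$-particle sitting at a point outside every region'', with $p,q$ playing the role of its internal indices.

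For the direction Eq.~\eqref{eq:generalLOcondition}$\Rightarrow$Eq.~\eqref{eq:generalLO-HAsyn}, I would realize $\hat{\Theta}(v^\varphi_{pq})$ concretely. The idea is to adjoin an auxiliary mode $j_0\notin S$ of type $\varphi$, for example by enlarging $\bar S$ if necessary. Let $\hat{O}\in\FA_S$ commute with all $\hat{\varphi}^\pm_{j_0,\cdot}$. Then $\hat{O}$ commutes with the operator $\hat{\varphi}^+_{j_0,q}(\cdot)\hat{\varphi}^-_{j_0,p}$ restricted to the relevant subspace. More precisely, I will use the explicit Fock action Eq.~\eqref{eq:vpqactionstatespace} of $\hat{\Theta}(v^\varphi_{pq})$: on a basis state it inserts an $R^{(\psi\varphi)}$-string across all creation operators. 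I will show that this string coincides with the matrix element $\langle 0|\hat{\varphi}^-_{j_0,p}(\cdots)\hat{\varphi}^+_{j_0,q}|0\rangle$ in the auxiliary mode. That matrix element is obtained by commuting $\hat{\varphi}^+_{j_0,q}$ through $\hat{\psi}^+_{i,a}$ using the second line of Eq.~\eqref{eq:fundamental_Rcommu-rela}. Since $\hat{O}$ commutes with $\hat{\varphi}^\pm_{j_0}$, and $\hat{O}$ preserves the vacuum component in the auxiliary mode, $\hat{O}$ commutes with this compressed operator. This gives $[\hat{O},\hat{\Theta}(v^\varphi_{pq})]=0$.

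For the converse direction, I first note that, by Eq.~\eqref{eq:BA-adjoint-action}, every $\hat{\Theta}(z)$ intertwines field operators in a manner determined purely by $R$. Given $\hat{O}\in\mathfrak{F}_S$ commuting with all $\hat{\Theta}(v^\varphi_{pq})$, I would compute $\hat{O}\hat{\varphi}^+_{j,b}$ for $j\notin S$. Writing $\hat{O}$ as a sum of monomials in $\hat{\psi}^\pm_{i,a}$ with $i\in S$, moving $\hat{\varphi}^+_{j,b}$ to the left produces, monomial by monomial, exactly the same $R$-strings that $\hat{\Theta}(v^\varphi)$ produces. Schematically, $\hat{O}\hat{\varphi}^+_{j,b}=\sum_{c}\hat{\varphi}^+_{j,c}\,\mathrm{ad}$-type$(\hat{O})_{cb}$. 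Here the coefficient operator $\mathrm{ad}(\hat{O})_{cb}$ equals the operator obtained from $\hat{O}$ by the action of $v^\varphi_{cb}$. $\calA$-invariance, in the form $\mathrm{ad}_{v_{cb}}(\hat{O})=\delta_{cb}\hat{O}$, then gives $\hat{O}\hat{\varphi}^+_{j,b}=\hat{\varphi}^+_{j,b}\hat{O}$. The annihilation case is handled the same way using the second relation in Eq.~\eqref{eq:BA-adjoint-action}, or by Hermitian conjugation when $\hat{O}$ is Hermitian.

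The main obstacle is making ``$\mathrm{ad}_{v_{pq}}$'' meaningful for a general element of $\mathfrak{F}_S$ without an antipode, and showing that $\calA$-invariance is equivalent to $\mathrm{ad}_{v_{pq}}(\hat{O})=\delta_{pq}\hat{O}$ in the bialgebra setting. I would handle this by defining $\mathrm{ad}_{v_{pq}}$ on monomials via the $R$-string formula. I would then check that $\hat{\Theta}(v_{pq})\hat{O}=\sum_r \mathrm{ad}_{v_{pr}}(\hat{O})\hat{\Theta}(v_{rq})$, which follows by iterating Eq.~\eqref{eq:BA-adjoint-action} together with $\Delta(v_{pq})=\sum_r v_{pr}\otimes v_{rq}$. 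Finally, I would evaluate this identity on $\bra{0}$ using Eq.~\eqref{eq:Thetavpqonbra} to extract $\mathrm{ad}_{v_{pq}}(\hat{O})=\delta_{pq}\hat{O}$ from the commutation relation.
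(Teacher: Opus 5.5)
Your strategy is the one the paper has in mind. The paper gives no separate proof of this Fact; it only compares Eq.~\eqref{eq:Adj-action-graphical-2} with Fig.~\ref{fig:RMQA-rela}, i.e.\ it reads $\hat\Theta(v^\varphi_{pq})$ as a $\varphi$-particle parked at a distant site. Your forward direction works once it is stated precisely. The Fock space is not a tensor product of mutually commuting mode factors, so replace the ``matrix element in the auxiliary mode'' by an operator identity. Up to the index convention fixed by Eq.~\eqref{eq:zeta_tensor_def-basicHA}, it reads $\hat\Theta(v^\varphi_{pq})\hat Y\ket{0}=\hat\varphi^-_{j_0,p}\hat Y\hat\varphi^+_{j_0,q}\ket{0}$. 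It holds for any $\hat Y$ built from fields at sites other than $j_0$: normal-order $\hat Y$, and note that its annihilation parts kill both $\ket{0}$ and $\hat\varphi^+_{j_0,q}\ket{0}$. Take $\hat Y=\hat O\hat X$ and use $[\hat O,\hat\varphi^-_{j_0,p}]=0$. This gives $[\hat O,\hat\Theta(v^\varphi_{pq})]=0$ on every state that does not occupy $j_0$. Choose $j_0\in\bar{S}$ unoccupied in the given state (the paper assumes infinitely many modes). Do not adjoin a new mode, since $\hat O$ is not assumed to commute with its fields.

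The converse has a genuine gap at its last step.
\begin{itemize}
\item Evaluating $\hat\Theta(v_{pq})\hat O=\sum_r\mathrm{ad}_{v_{pr}}(\hat O)\hat\Theta(v_{rq})$ on $\bra{0}$ gives $\delta_{pq}\bra{0}\hat O=\sum_r\bra{0}\mathrm{ad}_{v_{pr}}(\hat O)\hat\Theta(v_{rq})$. This is true for every $\hat O$, so it carries no information about invariance.
\item Using invariance first and then evaluating on $\ket{0}$ gives only $\mathrm{ad}_{v_{pq}}(\hat O)\ket{0}=\delta_{pq}\hat O\ket{0}$, i.e.\ $[\hat O,\hat\varphi^+_{j,b}]\ket{0}=0$. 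That is not an operator identity.
\end{itemize}
The missing ingredient is what the antipode would supply in the Hopf case: right-invertibility of the operator matrix $\mathbb{T}=(\hat\Theta(v_{pq}))_{p,q}$ on $\C^m\otimes V$. Given that, $(\mathds{1}\otimes\hat O)\mathbb{T}=\mathbb{A}(\hat O)\mathbb{T}$ with $\mathbb{A}(\hat O)_{pr}=\mathrm{ad}_{v_{pr}}(\hat O)$ forces $\mathrm{ad}_{v_{pr}}(\hat O)=\delta_{pr}\hat O$. Right-invertibility holds without dual unitarity. Reorder $\hat\varphi^+_{j_0,s}\hat X\ket{0}=\sum_q\hat Z_q\hat\varphi^+_{j_0,q}\ket{0}$ using the second line of Eq.~\eqref{eq:fundamental_Rcommu-rela}. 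The identity above then gives $\sum_q\hat\Theta(v_{rq})\hat Z_q\ket{0}=\delta_{rs}\hat X\ket{0}$.

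Two smaller points.
\begin{itemize}
\item Iterating Eq.~\eqref{eq:BA-adjoint-action} does not give your intertwining identity on monomials containing $\hat\psi^-$, because its second line moves $\hat\Theta$ leftward. Get the rightward rule from the same auxiliary-mode identity: it is just the first line of Eq.~\eqref{eq:fundamental_Rcommu-rela} applied to $\hat\psi^-_{i,a}$ and $\hat\varphi^+_{j_0}$. This also identifies $\mathrm{ad}_{v_{pq}}(\hat O)$ with the coefficient $\hat O_{pq}$ in $\hat O\hat\varphi^+_{j,q}=\sum_p\hat\varphi^+_{j,p}\hat O_{pq}$, and shows it is well defined on $\mathfrak{F}_S$.
\item Hermitian conjugation covers $\hat\varphi^-_j$ only for Hermitian $\hat O$ and unitary $\mathbf{R}$, whereas $\mathfrak{A}_S=\mathfrak{F}_S^{\calA}$ is a statement about algebras. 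Instead, observe that moving $\hat\varphi^-_{j,a}$ rightward through $\hat O$ produces the same coefficients $\hat O_{aq}$; compare $\hat\varphi^-_{j,a}\hat O\hat\varphi^+_{j,q}$ on states not occupying $j$. Hence $\hat O_{pq}=\delta_{pq}\hat O$ yields both commutators.
\end{itemize}
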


\begin{remark}
Here we remark that there may be a chicken-egg dilemma about whether the definition of local observables should come before or after hidden symmetries. In this paper we first defined local observables  in Sec.~\ref{sec:locality} and then define hidden symmetry as those linear transformations of $\hat{\psi}^{\pm}_{i,a}$ that leave all local observables invariant. An alternative approach is to first specify a hidden symmetry $\calA$ and then construct the local observable algebra $\mathfrak{A}$ simply as the algebra of $\calA$-invariant local observables. We expect the first approach to be more general.  %
\end{remark}

We finally mention that in 2D topological phases, %
$R$-paraparticles generally have mutual anyonic statistics with other particle types
in the same system, including genuine anyons~(those excitations with $R^2\neq \mathds{1}$). Although our second quantization framework cannot include creation/annihilation operators for genuine anyons, we can treat the mutual anyonic statistics between paraparticles and anyons as a Hopf algebra hidden symmetry transformation on paraparticle operators.   Specifically, consider a 2D topological phase described by a braided fusion category $\calC=\Rep(\calA)$, where $\calA$ is a quasitriangular Hopf algebra with universal $R$-matrix $\calR$. 
Then it follows from the basic axioms of quasitriangular Hopf algebras
\begin{equation}\label{def:QtHA}
		\left(\Delta\otimes \id \right) \mathcal{R}=\mathcal{R}_{13} \mathcal{R}_{23},%
\end{equation}
that every representation $\varphi$ of $\calA$ gives corepresentation  $v_{ij}$ of $\calA$ via
\begin{equation}\label{eq:corep_from_rep}
v_{pq}=(\id\otimes \varphi_{pq})\calR.
\end{equation}
Inserting Eq.~\eqref{eq:corep_from_rep} into Eq.~\eqref{eq:zeta_tensor_def}, we obtain
\begin{eqnarray}\label{def:Urhovbialg-2}
	\psi_{ab}(v_{pq})=(\psi_{ab}\otimes \varphi_{pq})\calR\equiv [R^{(\psi\varphi)}]^{pa}_{bq}\equiv\adjincludegraphics[width=6ex,valign=c]{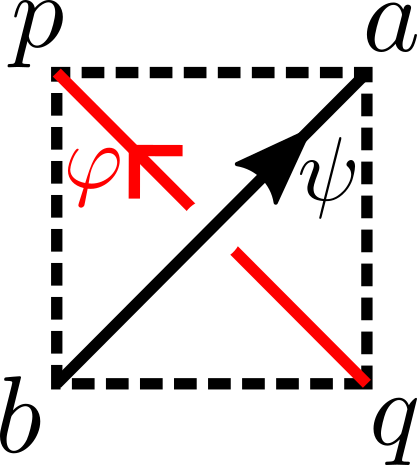}
\end{eqnarray}
where $R^{(\psi\varphi)}$ is exactly the matrix form of the braiding natural isomorphism $R^{(\psi\varphi)}:\psi\otimes \varphi\to \varphi\otimes \psi$ of the braided fusion category $\Rep(\calA)$. Physically, $R^{(\psi\varphi)}$ gives the braiding statistics between anyons $\psi$ and $\varphi$. Furthermore, it is a basic fact that if $\calR$ is a universal $\calR$-matrix of $\calA$, then so is $\calR_{21}^{-1}$. If one replaces $\calR$ by $\calR_{21}^{-1}$ in Eq.~\eqref{eq:corep_from_rep} then Eq.~\eqref{def:Urhovbialg-2} would produce 
\begin{eqnarray}\label{def:Urhovbialg-3}
[{R^{(\varphi\psi)}}^{-1}]^{pa}_{bq}\equiv\adjincludegraphics[width=6ex,valign=c]{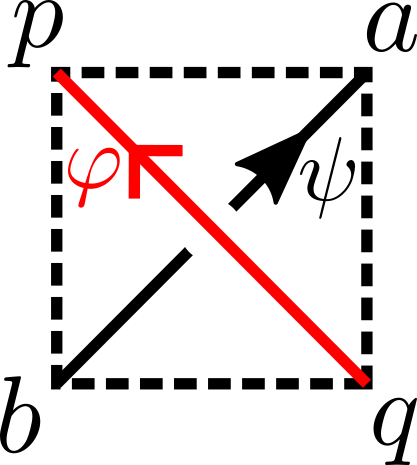},
\end{eqnarray}
which is different from Eq.~\eqref{def:Urhovbialg-2} if $\psi$ and $\varphi$ have mutual anyonic statistics, and we use upper and lower crossing to distinguish clockwise and counterclockwise braiding. In summary, each type of anyon in the physical system defines a hidden symmetry transformation on the $R$-paraparticle system via mutual statistics. 

\subsection{The local indistinguishability criterion revisited}\label{sec:local-indist-revisit}
We now use hidden symmetry as a tool to answer the important physical question about which type of $R$-paraparticles satisfy the local indistinguishability criterion we introduced back in Sec.~\ref{sec:indistcrit}, and we will eventually see that local indistinguishability imposes a nontrivial condition on the $R$-matrix. 

\subsubsection{Equivalence between simplicity and local indistinguishability}\label{sec:simplevsindist}
We first introduce the notion of a simple $R$-paraparticle:
\begin{definition}\label{def:simplicity}
Let $R$ be a unitary involutive $R$-matrix, and let $\psi$ be the associated type of $R$-paraparticle. Let $\FA$ be an algebra of local observables.  %
We say that $\psi$ is simple if for any distinct set of points $\{i_1,i_2,\ldots,i_n\}$,
\begin{equation}\label{eq:TPdegenerate}
	\!\braket{0;i_1^{b_1} \ldots i_n^{b_n}|\hat{O}|0;i_1^{a_1} \ldots i_n^{a_n}}=C^{O}_{i_1\ldots i_n}\!\prod_{j=1}^n \delta_{a_j b_j}
\end{equation}
for any local observable $\hat{O}\in \FA$ whose spatial support contains no more than one point %
of $\{i_1,i_2,\ldots,i_n\}$, 
where $C^O_{i_1\ldots i_n}$ is a constant independent of $a_1,\ldots,a_n$, and  %
$\ket{0;i_1^{a_1} \ldots i_n^{a_n}}=\hat{\psi}^+_{i_1,a_1}\ldots\hat{\psi}^+_{i_n,a_n}\ket{0}$. %
\end{definition}
Physically, being simple means that $\psi$ is stable against any local perturbation. 
We emphasize that the definition of simplicity and local indistinguishability both depend on the algebra of local observables $\FA$, which is considered as a part of the definition of a paraparticle theory.
Here  elements of $\FA$ are required to be local operators in the sense of Definition~\ref{def:general_LO}; however, $\FA$ need not include all possible local operators. 
We note the following useful fact: %
\begin{proposition}\label{prop:simplen=1case}
Under the conditions of Definition~\ref{def:simplicity}, if Eq.~\eqref{eq:TPdegenerate} holds for the case $n=1$ for all $\hat{O}\in \FA$, then it holds for all positive integers $n$. %
\end{proposition}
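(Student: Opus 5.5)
The plan is to reduce the $n$-particle matrix element to a one-particle matrix element by commuting creation operators through $\hat{O}$. Fix distinct points $i_1,\ldots,i_n$ and a local observable $\hat{O}\in\FA$ supported on a region $S$ containing at most one of these points.

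\emph{Case 1: $S$ contains none of the $i_j$.} By the locality condition Eq.~\eqref{eq:generalLOcondition}, $\hat{O}$ commutes with every $\hat{\psi}^\pm_{i_j,a}$. Hence
\[
\braket{0;i_1^{b_1}\ldots i_n^{b_n}|\hat{O}|0;i_1^{a_1}\ldots i_n^{a_n}}=\bra{0}\hat{\psi}^-_{i_n,b_n}\cdots\hat{\psi}^-_{i_1,b_1}\hat{\psi}^+_{i_1,a_1}\cdots\hat{\psi}^+_{i_n,a_n}\hat{O}\ket{0}.
\]
Since $\hat{O}$ commutes with $\hat{n}_{i_j}$ for all $j$ (it commutes with all $\hat{\psi}^\pm_{i_j,a}$), we can insert the projector onto the sector with no particles at the $i_j$. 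Using the CRs Eq.~\eqref{eq:fundamental_Rcommu} to move the annihilators right, together with orthonormality Eq.~\eqref{eq:full_basis_orthonormal}, the element factorizes as $\prod_j\delta_{a_jb_j}$ times a constant built from $\hat{O}$ acting on the complement. That constant is independent of the $a$'s.

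\emph{Case 2: $S$ contains exactly one point, say $i_k$.} First use the second line of Eq.~\eqref{eq:fundamental_Rcommu} to reorder both ket and bra so that $\hat{\psi}^+_{i_k}$ appears first. This expresses each side as $\hat{\psi}^+_{i_k,c}\prod_{j\neq k}\hat{\psi}^+_{i_j,c_j}\ket{0}$ contracted with a product of $R$-matrices, i.e.\ the unitary $\rho(\sigma)$ of the $S_n$ representation. Then $\hat{O}$ commutes with all $\hat{\psi}^\pm_{i_j}$ for $j\neq k$. Moving those annihilators and creators past $\hat{O}$, and using that all mode labels are distinct so the cross CRs only permute internal indices, reduces the inner object to
\[
\braket{0;i_k^{d}|\hat{O}|0;i_k^{c}}\,\prod_{j\neq k}\delta_{c_jd_j}.
\]
Note that $\hat{O}$ commutes with $\hat{n}_{i_j}$ for $j\neq k$, and that the vacuum sector at those points is preserved.

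By the $n=1$ hypothesis, $\braket{0;i_k^{d}|\hat{O}|0;i_k^{c}}=C\,\delta_{cd}$. The inner object is then $C$ times the identity on internal indices. Conjugating the identity by the unitary $\rho(\sigma)$ and $\rho(\sigma)^\dagger$ gives the identity again, which yields $C\prod_j\delta_{a_jb_j}$.

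The main obstacle is the careful bookkeeping in Case 2. One has to justify that commuting $\hat{\psi}^-_{i_j}$ (for $j\neq k$) through $\hat{\psi}^+_{i_k}$ and $\hat{O}$ produces no extra terms, and that the resulting internal-index tensor is exactly an $R$-string that cancels against its adjoint. I would handle this by working with the unitarity of $R$ and the fact that $\hat{\psi}^+=(\hat{\psi}^-)^\dagger$, writing the reordered bra as the adjoint of the reordered ket.
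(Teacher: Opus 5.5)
Your strategy matches the paper's. Locality lets $\hat O$ pass operators at points outside its support. Annihilators contract against creators at distinct modes via Eq.~\eqref{eq:fundamental_Rcommu} with no extra terms. A point inside the support is handled by reordering with the second CR, and the reordering tensor cancels against its adjoint by unitarity of $R$.

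The one real problem is the direction of your reordering in Case~2. Putting $\hat\psi^+_{i_k,c}$ leftmost in the ket, so that $\hat\psi^-_{i_k,d}$ is rightmost in the bra, gives
\[
\bra{0}\Big(\prod_{j\neq k}\hat\psi^-_{i_j,d_j}\Big)\hat\psi^-_{i_k,d}\,\hat O\,\hat\psi^+_{i_k,c}\Big(\prod_{j\neq k}\hat\psi^+_{i_j,c_j}\Big)\ket{0},
\]
with the bra product in reversed order. Here $\hat O$ is flanked exactly by the two operators at $i_k$, and nothing at $i_j$, $j\neq k$, is adjacent to it. To ``move those annihilators and creators past $\hat O$'' you must first cross $\hat\psi^\pm_{i_k}$. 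These cross-mode CRs act on the internal indices by a general $R$-matrix, not by a permutation; the latter holds only for set-theoretic examples such as Ex.~\ref{ex:setth}. Done correctly, this step just undoes your reordering and introduces a second $R$-string. That string cancels only after you substitute the $n=1$ hypothesis $\kappa=C\mathds{1}$ and then use unitarity. The intermediate identity you display, $\braket{0;i_k^d|\hat O|0;i_k^c}\prod_{j\neq k}\delta_{c_jd_j}$, would otherwise need $\kappa^O\in\DC[R]$ (cf.\ Lemma~\ref{lemma:kappainDCR}), which your argument does not supply.

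The fix is the paper's ordering. Bring $\hat\psi^+_{i_k}$ to the rightmost position in the ket, next to $\ket{0}$, and $\hat\psi^-_{i_k}$ to the leftmost position in the bra. Then $\hat O$ sits directly against the annihilators $\hat\psi^-_{i_j}$, $j\neq k$, and commutes past them by Eq.~\eqref{eq:generalLOcondition}. They contract against the creators exactly as in your Case~1, leaving
\[
\braket{0|\hat\psi^-_{i_k,d}\hat O\hat\psi^+_{i_k,c}|0}\prod_{j\neq k}\delta_{c_jd_j}=C\,\delta_{cd}\prod_{j\neq k}\delta_{c_jd_j}.
\]
Conjugation by the unitary $\rho(\sigma)$ then gives $C\prod_j\delta_{a_jb_j}$.

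Organized this way, your two cases collapse into the paper's two. Case~(i) is simply ``$S$ avoids $i_1,\dots,i_{n-1}$''; it needs no reordering, because $i_n$ is already adjacent to the vacuum. Case~(ii) is reduced to case~(i) by the reordering.
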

\begin{proof}
For any $\hat{O}\in \FA$, let $S$ be its spatial support, %
which contains no more than one point of $\{i_1,i_2,\ldots,i_n\}$. We treat the following two cases separately:\\
(i) if $S\cap\{i_1,\ldots,i_{n-1}\}=\emptyset$, then
$\hat{O}$ commutes with $\hat{\psi}^-_{i_{n-1},b_{n-1}}\ldots\hat{\psi}^-_{i_1,b_1}$ due to 
Definition~\ref{def:general_LO}, and the LHS of Eq.~\ref{eq:TPdegenerate} becomes
\begin{eqnarray}\label{eq:TPdegenerate-proof}
    &&\bra{0}\hat{\psi}^-_{i_{n},b_{n}}\hat{O}\hat{\psi}^-_{i_{n-1},b_{n-1}}\ldots\hat{\psi}^-_{i_1,b_1}\hat{\psi}^+_{i_1,a_1}\ldots \hat{\psi}^+_{i_n,a_n}|0\rangle\nonumber\\
    &=&\braket{0|\hat{\psi}^-_{i_{n},b_{n}}\hat{O}\hat{\psi}^+_{i_n,a_n}|0}\!\prod_{j=1}^{n-1} \delta_{a_j b_j},
\end{eqnarray}
where in the second line we use Eq.~\eqref{eq:fundamental_Rcommu} and the fact that $\{i_1,i_2,\ldots,i_n\}$ are distinct. This proves Eq.~\eqref{eq:TPdegenerate} for all $n$  for this case, assuming it is true for $n=1$;\\ %
(ii) if $S$ contains $i_k$ for some $k\in\{1,\ldots,n-1\}$, then in evaluating the LHS of Eq.~\ref{eq:TPdegenerate} we first use the fundamental CRs Eq.~\eqref{eq:fundamental_Rcommu} to move $\hat{\psi}^+_{i_k,a_k}$ to the rightmost position, and move $\hat{\psi}^-_{i_k,b_k}$ to the leftmost position. Then, since $S\cap \{i_1,\ldots,i_{k-1},i_{k+1},\ldots,i_n\}=\emptyset$, one can apply the result of case (i) and prove Eq.~\eqref{eq:TPdegenerate} for this case as well. 
\end{proof}

The following lemma plays a central role in this section, and is a nice application of hidden symmetry:
\begin{lemma}\label{lemma:kappainDCR}
Let $\calA$ be any bialgebra hidden symmetry %
of a paraparticle system. 
We use $\End_\calA(\psi)$ to denote the space of $m\times m$ matrices  commuting with the $\calA$-action, i.e., 
\begin{equation}
 \End_\calA(\psi)\equiv\{\kappa\in M_m(\C)|\zetapsi(z)\kappa=\kappa\zetapsi(z),~\forall z\in\calA\}.
\end{equation}
For any %
$\calA$-invariant observable $\hat{O}$
and any $i,j$, define
\begin{equation}
\kappa^O_{ab}\equiv \braket{0|\hat{\psi}^-_{i,a}\hat{O}\hat{\psi}^+_{j,b}|0}.   
\end{equation}
Then we have $\kappa^O\in \End_{\calA}(\psi)$. %

\end{lemma}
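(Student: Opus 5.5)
The plan is to compute the matrix $\zetapsi(z)\kappa^O$ and $\kappa^O\zetapsi(z)$ as two different evaluations of the single vacuum expectation value
\[
\braket{0|\hat{\psi}^-_{i,a}\,\hat{\Theta}(z)\,\hat{O}\,\hat{\psi}^+_{j,b}|0},
\]
using the bialgebra form of the adjoint action, Eq.~\eqref{eq:BA-adjoint-action}, together with the vacuum conditions. The only inputs are that $\hat{O}$ is $\calA$-invariant, so $\hat{\Theta}(z)\hat{O}=\hat{O}\hat{\Theta}(z)$ for all $z$, and the counit properties $\hat{\Theta}(z)\ket{0}=\epsilon(z)\ket{0}$ (Eq.~\eqref{eq:HAaction-vac}) and $\bra{0}\hat{\Theta}(z)=\epsilon(z)\bra{0}$. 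The latter is Eq.~\eqref{eq:Thetavpqonbra} for $z=v_{pq}$. For general $z$ it follows by the same argument: $\hat{\Theta}(z)$ preserves particle number by Eq.~\eqref{eq:HAaction-Fock}, so $\braket{0|\hat{\Theta}(z)|\Psi}=\epsilon(z)\braket{0|\Psi}$ on every basis state.

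\textbf{First evaluation: move $\hat{\Theta}(z)$ to the left.} Use the second line of Eq.~\eqref{eq:BA-adjoint-action}, read in the direction $\hat{\psi}^-_{i,a}\hat{\Theta}(z)=\sum_c\hat{\Theta}(z_{(2)})\zetapsi_{ac}(z_{(1)})\hat{\psi}^-_{i,c}$. Then apply $\bra{0}\hat{\Theta}(z_{(2)})=\epsilon(z_{(2)})\bra{0}$ and the counit axiom $z_{(1)}\epsilon(z_{(2)})=z$. This gives $\sum_c\zetapsi_{ac}(z)\kappa^O_{cb}=[\zetapsi(z)\kappa^O]_{ab}$.

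\textbf{Second evaluation: move $\hat{\Theta}(z)$ to the right.} Commute it through $\hat{O}$ by invariance. Then apply the first line of Eq.~\eqref{eq:BA-adjoint-action} to get $\hat{\Theta}(z)\hat{\psi}^+_{j,b}=\sum_c\hat{\psi}^+_{j,c}\zetapsi_{cb}(z_{(1)})\hat{\Theta}(z_{(2)})$. Finally hit $\ket{0}$ with $\hat{\Theta}(z_{(2)})$, producing $\epsilon(z_{(2)})$, and use the counit axiom again. This gives $\sum_c\kappa^O_{ac}\zetapsi_{cb}(z)=[\kappa^O\zetapsi(z)]_{ab}$.

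Equating the two evaluations gives $\zetapsi(z)\kappa^O=\kappa^O\zetapsi(z)$ for all $z\in\calA$, that is, $\kappa^O\in\End_\calA(\psi)$.

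\textbf{Main obstacle.} The delicate point is to get the index placement in the second line of Eq.~\eqref{eq:BA-adjoint-action} right. It must produce $\zetapsi(z)$ acting from the left on the first index of $\kappa^O$, not its transpose or the dual representation $\bar{\zetapsi}$. I would check this against the Hopf case: there Eq.~\eqref{eq:HA-adjoint-action} with $\bar{\zetapsi}(z)=\zetapsi(Sz)^T$, combined with the antipode axiom, should reproduce the same relation. A second point to check is that invariance of $\hat{O}$ means exact commutation with $\hat{\Theta}(z)$. This follows from Fact~\ref{fact:locality-and-A-invariance}, or from the invariance criterion in Tab.~\ref{tab:Group-HA-correspondence} in the Hopf case.
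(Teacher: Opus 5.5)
Your proof is correct and follows the paper's argument. The paper runs the same chain of equalities through the identical middle expression $\braket{0|\hat{\psi}^-_{i,a}\hat{\Theta}(z)\hat{O}\hat{\psi}^+_{j,c}|0}$, using Eq.~\eqref{eq:BA-adjoint-action}, the invariance $[\hat{\Theta}(z),\hat{O}]=0$, and the vacuum/counit identities on both sides. The paper's last step relies implicitly on $\bra{0}\hat{\Theta}(z)=\epsilon(z)\bra{0}$ for general $z$, not just $v_{pq}$, and your particle-number argument justifies this explicitly.
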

\begin{proof}
For any $z\in \calA$, we have 
\begin{align}\label{eq:kappaOproof}
\sum_b\kappa^O_{ab}\zetapsi_{bc}(z)&=\sum_b\braket{0|\hat{\psi}^-_{i,a}\hat{O}\hat{\psi}^+_{j,b}|0}\zetapsi_{bc}(z)\nonumber\\
&=\sum_b\braket{0|\hat{\psi}^-_{i,a}\hat{O}\hat{\psi}^+_{j,b}\zetapsi_{bc}(z_{(1)})\hat{\Theta}(z_{(2)})|0}\nonumber\\
&=\braket{0|\hat{\psi}^-_{i,a}\hat{O}\hat{\Theta}(z)\hat{\psi}^+_{j,c}|0}\nonumber\\
&=\braket{0|\hat{\psi}^-_{i,a}\hat{\Theta}(z)\hat{O}\hat{\psi}^+_{j,c}|0}\nonumber\\
&=\sum_b\braket{0|\hat{\Theta}(z_{(2)})\zetapsi_{ab}(z_{(1)})\hat{\psi}^-_{i,b}\hat{O}\hat{\psi}^+_{j,c}|0}\nonumber\\
&=\sum_b\zetapsi_{ab}(z)\kappa^O_{bc}, %
\end{align}
where in the second line we use Eq.~\eqref{eq:HAaction-vac} and the counit axiom, 
in the third and the fifth lines we use Eq.~\eqref{eq:BA-adjoint-action}~[which is a variant of Eq.~\eqref{eq:HA-adjoint-action}], in the fourth line we use $[\hat{\Theta}(z),\hat{O}]=0$, and in the last line we use Eq.~\eqref{eq:Thetavpqonbra}. Therefore, $\kappa^O\in \End_\calA(\psi)$. 
\end{proof}
We note that if $\calA$ is the fundamental bialgebra hidden symmetry defined in Thm.~\ref{thm:basicHA}, then taking $z=v_{pq}^\psi$ in Eq.~\eqref{eq:kappaOproof} and recalling that 
$\adjincludegraphics[height=3ex,valign=c]{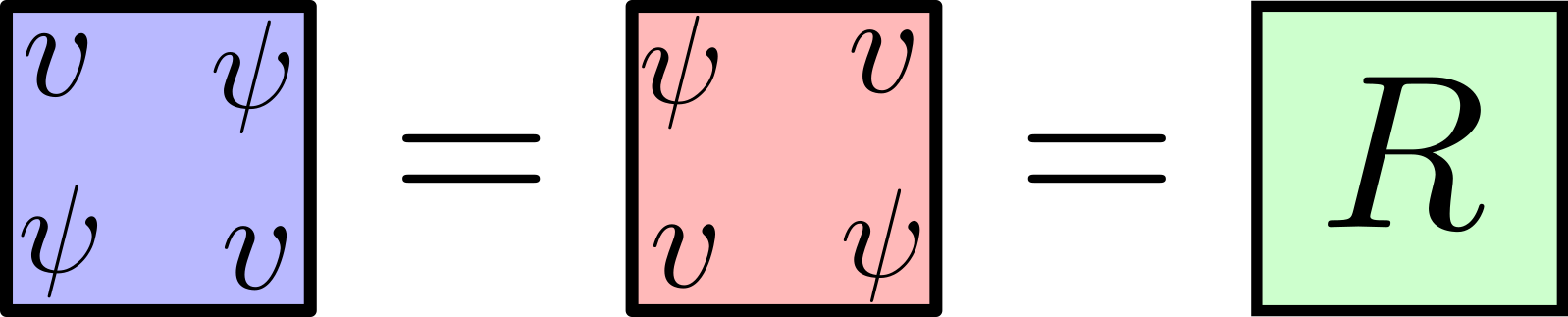}$~[which follows from Eq.~\eqref{eq:zeta_tensor_def-mutual}], 
the above lemma implies that $\kappa^O_{ab}$ satisfies Eq.~\eqref{eq:kappaDC}, i.e., $\kappa^O\in\DC[R]$. 
\begin{corollary}\label{cor:simplemodule-simpletype}
Let $\calA$ be a hidden bialgebra symmetry of a paraparticle system, and let $\FA$ be any $\calA$-invariant local observable algebra. 
If a paraparticle $\psi$ is simple as an $\calA$-module, then it is simple with respect to $\FA$.
\end{corollary}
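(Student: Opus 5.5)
The plan is to reduce to the single-particle case via Proposition~\ref{prop:simplen=1case}, and then apply Lemma~\ref{lemma:kappainDCR} together with Schur's lemma.

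First, since every element of $\FA$ is $\calA$-invariant, Proposition~\ref{prop:simplen=1case} says it suffices to establish Eq.~\eqref{eq:TPdegenerate} for $n=1$. That is, for every $\hat{O}\in\FA$ and every point $i$ we must show
\begin{equation}
\braket{0|\hat{\psi}^-_{i,b}\hat{O}\hat{\psi}^+_{i,a}|0}=C^O_i\,\delta_{ab}.
\end{equation}
Here we use $\ket{0;i^a}^\dagger=\bra{0}\hat{\psi}^-_{i,a}$, which holds for unitary $R$.

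Second, set $\kappa^O_{ba}\equiv\braket{0|\hat{\psi}^-_{i,b}\hat{O}\hat{\psi}^+_{i,a}|0}$. This is exactly the matrix appearing in Lemma~\ref{lemma:kappainDCR} with $j=i$. Since $\hat{O}$ is $\calA$-invariant, i.e.\ $[\hat{\Theta}(z),\hat{O}]=0$ for all $z$, the lemma gives $\kappa^O\in\End_\calA(\psi)$.

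Third, we use that $\psi$ is a simple $\calA$-module. Schur's lemma over $\C$ then gives $\End_\calA(\psi)=\C\mathds{1}_m$, which requires the module to be finite-dimensional ($m<\infty$) so that an eigenvalue of $\kappa^O$ exists. Hence $\kappa^O=C^O_i\mathds{1}_m$. This is precisely the $n=1$ case, and the corollary follows.

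The main point to double-check is the hypothesis of Lemma~\ref{lemma:kappainDCR}: it requires $\hat{O}$ to be $\calA$-invariant, which is exactly the assumption that $\FA$ is an $\calA$-invariant algebra. A second check is that Proposition~\ref{prop:simplen=1case} only uses locality (Definition~\ref{def:general_LO}) and the commutation relations, and not any further property of $\FA$, so the reduction to $n=1$ is legitimate. Neither check looks like a real obstacle.
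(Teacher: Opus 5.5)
Your proposal is correct and follows the same route as the paper's proof. Lemma~\ref{lemma:kappainDCR} places $\kappa^O$ in $\End_\calA(\psi)$, and Schur's lemma forces $\kappa^O\propto\mathds{1}$, which is valid because $\psi$ is finite-dimensional. Proposition~\ref{prop:simplen=1case} then extends the $n=1$ case to all $n$. One small point: the reduction to $n=1$ does not itself use the $\calA$-invariance of $\FA$. That hypothesis enters only through Lemma~\ref{lemma:kappainDCR}.
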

\begin{proof}
For any $\calA$-invariant local observable $\hat{O}\in \FA$, Lemma~\ref{lemma:kappainDCR} implies $\kappa^O\in \End_\calA(\psi)$. 
If $\zetapsi$ is a simple $\calA$-module, then Schur's lemma implies $\End_\calA(\zetapsi)=\C\mathds{1}$, 
hence $\kappa^O_{ab}\propto\delta_{ab}$, and Proposition~\ref{prop:simplen=1case} implies that $\psi$ is simple with respect to $\FA$. 
\end{proof}

\begin{theorem}\label{thm:eqv-simple-indist}
A paraparticle $\psi$ is simple with respect to a local observable algebra $\FA$ if and only if it satisfies the local indistinguishability criterion with respect to $\FA$.  
\end{theorem}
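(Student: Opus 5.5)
We prove the two directions separately. Throughout we work with states $\ket{\Psi}=\sum_A \Psi^A\ket{0;i_1^{a_1}\ldots i_n^{a_n}}$ supported on a fixed set of distinct positions, and local observables $\hat{O}\in\FA$ whose support meets at most one of these points. Note first that local indistinguishability in the sense of Definition~\ref{def:weak_indistinguishable_particles} only concerns expectation values of observables in $\FA$, while simplicity (Definition~\ref{def:simplicity}) concerns matrix elements between basis states of fixed internal labels. The bridge between them is polarization: Hermitian $\hat{O}$ and arbitrary internal wavefunctions $\Psi^A$ together determine all the matrix elements $\braket{0;\ldots b\ldots|\hat{O}|0;\ldots a\ldots}$. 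This is because $\FA$ is closed under adjoint, and each element is a combination of Hermitian ones.

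\emph{Simple $\Rightarrow$ indistinguishable.} We would compute $\braket{\Psi|\hat{E}^\dagger_{ij}\hat{O}\hat{E}_{ij}|\Psi}$ using Eq.~\eqref{eq:braiding_derivation}. The exchange maps $\ket{\Psi}$ to a state on the same positions with internal wavefunction $\Psi'=R_{ij}\Psi$, where $R_{ij}$ is the unitary built from $R$. For an observable supported near at most one particle, Eq.~\eqref{eq:TPdegenerate} gives $\braket{\Phi|\hat O|\Phi}=C^O_{i_1\ldots i_n}\sum_A|\Phi^A|^2$ for any internal wavefunction $\Phi$. Since $R$ is unitary, $\sum_A|\Psi'^A|^2=\sum_A|\Psi^A|^2$, and the two expectation values coincide. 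Observables supported far from all particles are handled by the same formula, which is the case where the support contains none of the points. A subtlety is that the positions after the exchange are the same set, so the constant $C^O$ is the same on both sides. We would also check that the paper's notion of ``local observable'' in Definition~\ref{def:weak_indistinguishable_particles} is meant to include only observables of this restricted support, i.e.\ touching at most one particle, which is implicit in the setup.

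\emph{Indistinguishable $\Rightarrow$ simple.} By Proposition~\ref{prop:simplen=1case} it suffices to treat $n=1$, i.e.\ to show $\kappa^O_{ab}=\braket{0|\hat\psi^-_{i,a}\hat O\hat\psi^+_{i,b}|0}\propto\delta_{ab}$ for each $\hat O\in\FA$ near $i$. Here indistinguishability of a single particle is vacuous, so we use two particles at $i$ and a distant $j$, together with $\hat{O}$ supported near $i$ only. Using the locality condition Eq.~\eqref{eq:generalLOcondition} and the CRs as in Eq.~\eqref{eq:TPdegenerate-proof}, we would show that the two-particle matrix element factorizes as $\braket{0;i^{b}j^{d}|\hat O|0;i^{a}j^{c}}=\kappa^O_{ba}\delta_{dc}$, or the same with an $R$-twist depending on the ordering. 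The exchanged state is $\sum R^{b'a'}_{ab}\ket{0;i^{b'}j^{a'}}$, so indistinguishability for all $\Psi$ yields the operator identity $R^\dagger(\kappa\otimes 1)R=\kappa\otimes 1$, in the appropriate tensor slots. Taking a partial trace over the second slot then gives a linear identity expressing that $\kappa$ lies in the commutant $\DC[R]$. This is consistent with Lemma~\ref{lemma:kappainDCR}.

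The main obstacle is concluding $\kappa\propto\mathds 1$ from this. Commutation with $R$ alone need not force scalars, as the trivial $R=\pm X$ shows, where flavor is locally measurable; there indeed $\kappa=\mathrm{diag}$ passes the exchange test, so the criterion must be applied more strongly. We would therefore use the full freedom of Definition~\ref{def:weak_indistinguishable_particles}: compare the particle at $i$ with one at $j$ carried to $i$ by the exchange. Equality of expectations for all $\Psi$ then forces $\kappa^O$ at site $i$ to equal, up to the constant $\sum|\Psi|^2$, the value of $\hat{O}$ computed on the other particle's internal state, and hence to be independent of the internal state. Concretely, we would take $\Psi$ to be a product state $u\otimes w$. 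The exchanged expectation $\braket{w|\kappa|w}$, or its $R$-transported version, must equal $\braket{u|\kappa|u}$ for all unit vectors $u,w$. Hence $\braket{u|\kappa|u}$ is constant on the unit sphere, and by polarization $\kappa\propto\mathds 1$. Making this product-state step rigorous for general $R$, where $R$ may entangle $u$ and $w$, is the delicate point; there we would use unitarity of $R$ and a partial trace to reduce to the scalar statement.
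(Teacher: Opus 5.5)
The forward direction matches the paper. So does your reduction of the converse to $n=1$ using a second particle at a site $j$ outside the support of $\hat O$. The identity you extract from local indistinguishability, $R^\dagger(\kappa^O\otimes\mathds{1})R=\kappa^O\otimes\mathds{1}$, is also correct. The gap is the final step from this identity to $\kappa^O\propto\mathds{1}$.

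The missing idea is that $\kappa^O\in\DC[R]$ is an \emph{independent} input that comes from locality of $\hat O$. This is Lemma~\ref{lemma:kappainDCR}: $\hat O$ commutes with $\hat\psi^\pm_{j,b}$ for $j$ outside its support. It does not follow from indistinguishability. Your claim that a partial trace of $[R,\kappa\otimes\mathds{1}]=0$ places $\kappa$ in $\DC[R]$ is false; it only gives $[\Tr_2 R,\kappa]=0$.

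The diagonal-commutant relation $(\mathds{1}\otimes\kappa)R=R(\kappa\otimes\mathds{1})$, together with $R^2=\mathds{1}$, gives the exact operator identity $R^\dagger(\kappa\otimes\mathds{1})R=\mathds{1}\otimes\kappa$. In words, after the exchange $\hat O$ acts on the \emph{other} particle's internal index. Combining this with your indistinguishability identity gives $\kappa\otimes\mathds{1}=\mathds{1}\otimes\kappa$, hence $\kappa^O\propto\mathds{1}$. This is exactly the paper's argument.

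It is also the ``$R$-transported version'' you were reaching for in your product-state sketch. Because it holds as an operator identity for every $\Psi_{ab}$, the worry that $R$ entangles $u$ and $w$ never arises. Definition~\ref{def:weak_indistinguishable_particles} offers no ``stronger'' use beyond the identity you already had.

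Your sanity check is also wrong, and it is what sent you looking in the wrong place. For $R=\pm X$ one has $X(\kappa\otimes\mathds{1})X=\mathds{1}\otimes\kappa$. So a diagonal non-scalar $\kappa$ \emph{fails} the exchange test: exchange carries the flavor label from $i$ to $j$, which is precisely why flavored fermions are not locally indistinguishable.

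The correct example where $[R,\kappa\otimes\mathds{1}]=0$ fails to force scalars is $R=\pm\mathds{1}$, where every $\kappa$ commutes with $R$. There, however, locality already gives $\DC[R]=\C\mathds{1}$. So neither identity suffices on its own, and the proof must combine both. In particular, ``unitarity of $R$ and a partial trace'' applied to the indistinguishability identity alone cannot work. As written, your proposal never combines the two, and the conclusion $\kappa^O\propto\mathds{1}$ is left unproven.
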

\begin{proof}
If $\psi$ is simple, then Eq.~\eqref{eq:TPdegenerate} along with the exchange relation in Eq.~\eqref{eq:braiding_derivation} imply that $\psi$ satisfies the local indistinguishability condition in Eq.~\eqref{eq:local_indist}. 

 If $\psi$ satisfies local indistinguishability, for any local observable $\hat{O}\in\FA$ and any position $i$, let $\kappa^O_{a b}=\braket{0;i^{b} |\hat{O}|0; i^{a}}$. Below we show that $\kappa^O_{a b} \propto \delta_{ab}$, and then the simplicity of $\psi$ follows from Proposition~\ref{prop:simplen=1case}. From Lemma~\ref{lemma:kappainDCR}, we know that $\kappa^O_{ab}\in \DC[R]$. Take any $j\notin \{i\}\cup S$, where $S$ is the support of $\hat{O}$. Consider the state 
 \begin{equation}
\ket{\Psi}=\sum_{a,b}\Psi_{ab}\ket{0;i^aj^b},
 \end{equation}
 where $\Psi_{ab}$ is any normalized wavefunction. 
 Inserting $\ket{\Psi}$ into the local indistinguishability condition in Eq.~\eqref{eq:local_indist} and using Eq.~\eqref{eq:TPdegenerate-proof}, we get 
\begin{equation}\label{eq:local_indist-proof}
\adjincludegraphics[width=4ex,valign=c]{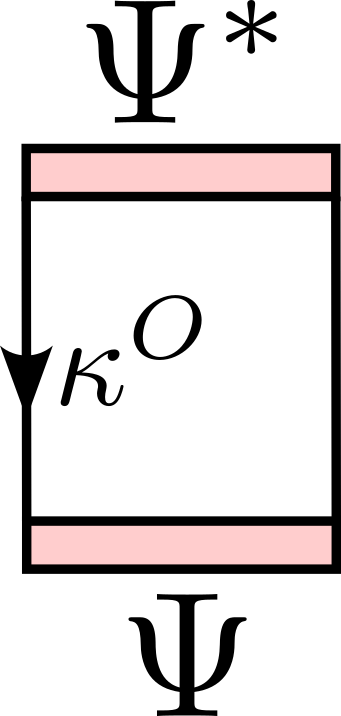}=\braket{\Psi|\hat{O}|\Psi}=\braket{\Psi|\hat{E}^\dagger_{ij}\hat{O}\hat{E}_{ij}|\Psi} =
\adjincludegraphics[width=4ex,valign=c]{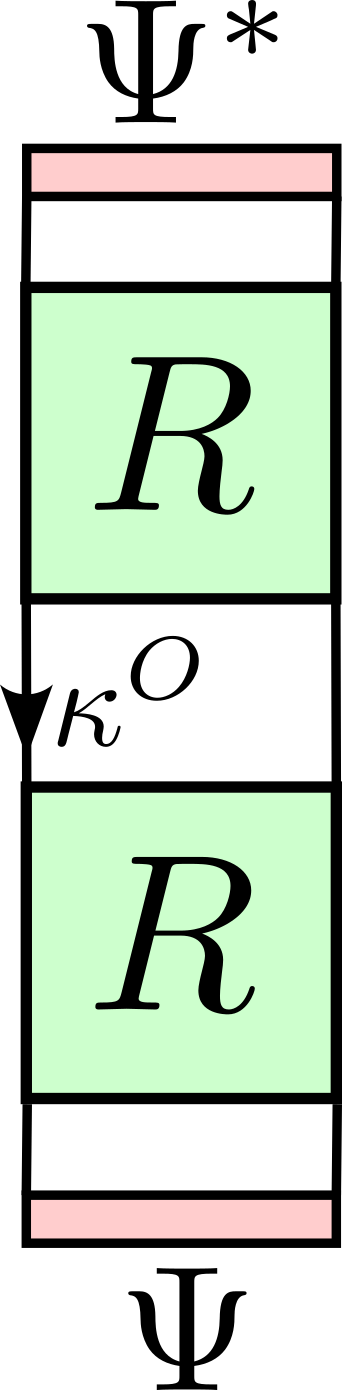}=
\adjincludegraphics[width=4ex,valign=c]{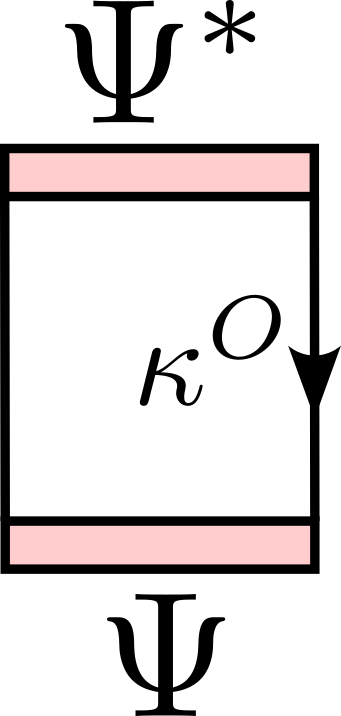}, 
\end{equation}
where we use 
$\adjincludegraphics[width=4ex,valign=c]{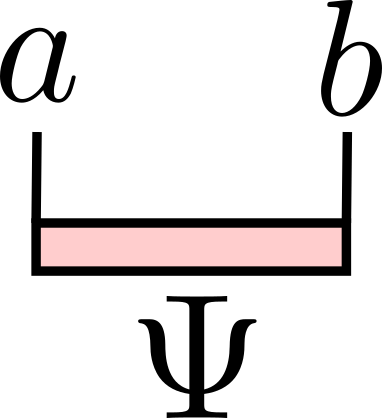}$ to denote $\Psi_{ab}$ and in the last step we used $\kappa^O_{ab}\in \DC[R]$. Local indistinguishability requires that Eq.~\eqref{eq:local_indist-proof} holds for any $\Psi_{ab}$, which is possible only if $\kappa^O\otimes \mathds{1}=\mathds{1}\otimes \kappa^O$, i.e., $\braket{0;i^{b} |\hat{O}|0; i^{a}}=\kappa^O_{ab}\propto\delta_{ab}$. From Proposition~\ref{prop:simplen=1case} we know that $\psi$ is simple.

\end{proof}

\subsubsection{Local indistinguishability condition on the $R$-matrix}\label{sec:hiddensym-localindist}
With Thm.~\ref{thm:eqv-simple-indist}, local indistinguishability of an $R$-paraparticle is equivalent to simplicity of its particle type. 
Our next goal is to translate the simplicity condition to an explicit condition on the $R$-matrix. We begin with a definition:  
\begin{definition}\label{def:simpleR-matrix}
Let $R$ be an involutive $R$-matrix with quantum dimension $m$. Define two sets of $m\times m$ matrices $\{R^{(ad)}_{\swarrow}|1\leq a,d\leq m\}$ and $\{R^{(bc)}_{\searrow}|1\leq b,c\leq m\}$ as follows
\begin{eqnarray}
[R^{(ad)}_{\swarrow}]_{bc}=R^{ab}_{cd}=[R^{(bc)}_{\searrow}]_{ad}.
\end{eqnarray}
Let $\Dsw(R)$ be the matrix algebra generated by $\{R^{(ad)}_{\swarrow}|1\leq a,d\leq m\}$, and similarly let $\Dse(R)$ be the matrix algebra generated by $\{R^{(bc)}_{\searrow}|1\leq b,c\leq m\}$. 
\end{definition}
Let $\DCsw[R]$ be the commutant of $\Dsw(R)$, and similarly, let $\DCse[R]$ be the commutant of $\Dse(R)$. It is clear from definition that $\DCsw[R]$ is exactly the space of $\kappa$ satisfying the first relation in Eq.~\eqref{eq:kappaDC}, and similarly $\DCse[R]$ is the space of $\kappa$ satisfying the second relation in Eq.~\eqref{eq:kappaDC}. Since these two relations are equivalent for involutive $R$-matrices, we have $\DCsw[R]=\DCse[R]$, and we denote them simply by $\DC[R]$, called the diagonal commutant of $R$.

\begin{theorem}\label{thm:directsum-decomposition}
Let $R$ be any unitary involutive $R$-matrix, and let $V$ be the internal space.  Then $V$ has a direct sum decomposition
\begin{equation}\label{eq:directsumdecomposeV}
    V=\bigoplus_{\psi\in\T} V^{(\psi)},
\end{equation}
where $\T$ is some finite indexing set, such that \\
(1) each direct summand $V^{(\psi)}$ is an irrep of both $\Dsw[R]$ and $\Dse[R]$~(i.e., has no proper invariant subspace under the action of $\Dsw[R]$, and similarly for $\Dse[R]$);\\
(2) for any $\psi,\varphi\in\T$, we have
\begin{equation}
    R(V^{(\psi)}\otimes V^{(\varphi)})\equiv R^{(\psi\varphi)}(V^{(\psi)}\otimes V^{(\varphi)})=V^{(\varphi)}\otimes V^{(\psi)},
\end{equation}
where $R^{(\psi\varphi)}$ is the restriction of $R$ to $V^{(\psi)}\otimes V^{(\varphi)}$;\\
(3) For any paraparticle system based on $R$, let $\FA$ be the algebra of all local observables. Then
each $\psi\in\T$ is a simple particle type with respect to $\FA$, and $R^{(\psi\varphi)}$ is the mutual $R$-matrix between $\psi$ and $\varphi$. 
\end{theorem}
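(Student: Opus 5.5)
The plan is to build the decomposition from the representation theory of the finite-dimensional matrix algebra $\Dsw[R]$. Unitarity of $R$ gives closure under adjoints: since $R^\dagger=R^{-1}=R$ for a unitary involutive $R$, we have $\overline{R^{ab}_{cd}}=R^{cd}_{ab}$, hence $[R^{(ad)}_{\swarrow}]^\dagger=R^{(da)}_{\swarrow}$. So $\Dsw[R]$ is a $*$-subalgebra of $M_m(\C)$, and likewise $\Dse[R]$, with $(R^{(bc)}_{\searrow})^\dagger=R^{(cb)}_{\searrow}$. A finite-dimensional $*$-algebra of matrices is semisimple, and $V$ splits orthogonally into irreducible $\Dsw[R]$-invariant subspaces. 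Its commutant $\DCsw[R]=\DC[R]$ is also a $*$-algebra.

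To get a decomposition that serves both algebras at once, I will work with the commutant rather than the algebras themselves. Because $\DCsw[R]=\DCse[R]=\DC[R]$, a subspace is invariant under $\Dsw[R]$ if and only if its orthogonal projector lies in $\DC[R]$, by the double commutant theorem, and the same characterization holds for $\Dse[R]$. I will therefore pick a maximal family of mutually orthogonal minimal projections $p_\psi$ in the $*$-algebra $\DC[R]$ summing to $\mathds{1}$, and set $V^{(\psi)}=p_\psi V$. Minimality of $p_\psi$ in $\DC[R]$ means $V^{(\psi)}$ contains no proper subspace whose projector lies in $\DC[R]$. Since invariant subspaces of either algebra correspond exactly to projectors in $\DC[R]$, $V^{(\psi)}$ is irreducible under both $\Dsw[R]$ and $\Dse[R]$, which proves (1).

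For (2), the relation $\kappa\in\DC[R]$ unpacks via Eq.~\eqref{eq:kappaDC} to an intertwining identity of the form $R(\kappa\otimes\mathds{1})=(\mathds{1}\otimes\kappa)R$, up to the precise index placement. Applying it with $\kappa=p_\psi$ shows that $R$ maps $p_\psi V\otimes V$ into $V\otimes p_\psi V$. The second relation in Eq.~\eqref{eq:kappaDC} gives the companion statement $R(\mathds{1}\otimes p_\varphi)=(p_\varphi\otimes\mathds{1})R$. Combining the two, $R(V^{(\psi)}\otimes V^{(\varphi)})\subseteq V^{(\varphi)}\otimes V^{(\psi)}$. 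Because $R$ is invertible and the summands fill out $V\otimes V$, a dimension count turns this inclusion into the equality claimed. Restricting $R$ then exhibits it, up to a basis change by $V$, as the composite $\mathbf{R}$ of Eq.~\eqref{eq:compositeR-def} built from the blocks $R^{(\psi\varphi)}$. Consequently Eq.~\eqref{eq:fundamental_Rcommu} reproduces Eq.~\eqref{eq:fundamental_Rcommu-rela}, which identifies $R^{(\psi\varphi)}$ as the mutual $R$-matrix between $\psi$ and $\varphi$.

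The main obstacle is (3), the simplicity of each $\psi$ with respect to the algebra $\FA$ of all local observables. By Proposition~\ref{prop:simplen=1case} it suffices to treat the case $n=1$. For any local $\hat O$, Lemma~\ref{lemma:kappainDCR}, applied to the fundamental symmetry as in the remark following it, gives $\kappa^O\in\DC[R]$. Restricting to the $\psi$ block, $p_\psi\kappa^O p_\psi$ lies in the corner algebra $p_\psi\DC[R]p_\psi$. Since $p_\psi$ is a minimal projection, this corner is $\C p_\psi$, so $\kappa^O$ is proportional to the identity on $V^{(\psi)}$, as needed. The delicate point is making sure the relevant symmetry is the one generated by the full $\mathbf{R}$, so that all of $\FA$ is $\calA$-invariant through Fact~\ref{fact:locality-and-A-invariance}. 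Once that is in place, the minimality of $p_\psi$ does the rest.
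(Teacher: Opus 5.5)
\paragraph{The gap: the adjoint identity is false, and (1) depends on it.} Your identity $[R^{(ad)}_{\swarrow}]^\dagger=R^{(da)}_{\swarrow}$ does not hold. From $\overline{R^{ab}_{cd}}=R^{cd}_{ab}$ you get $([R^{(ad)}_{\swarrow}]^\dagger)_{bc}=\overline{R^{ac}_{bd}}=R^{bd}_{ac}=[R^{(da)}_{\searrow}]_{bc}$. So the adjoint of a $\swarrow$-slice is a $\searrow$-slice, and the correct statement is $\Dsw[R]^\dagger=\Dse[R]$. Your version would require $XRX=R$. Ex.~\ref{ex:setth-ext} with $m\geq 3$ is a counterexample: all its $\swarrow$-slices lower the internal index by one, while their adjoints raise it.

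The correct identity, together with $\DCsw[R]=\DCse[R]$, still shows that $\DC[R]$ is a $*$-algebra. So your construction of the minimal projections $p_\psi$ goes through. Your proofs of (2) and (3) also survive, because they only use $p_\psi\in\DC[R]$, $\kappa^O\in\DC[R]$, and $p_\psi\DC[R]p_\psi=\C p_\psi$.

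What breaks is the claim that every $\Dsw[R]$-invariant subspace $W$ has $P_W\in\DC[R]$. That needs $W^\perp$ to be $\Dsw[R]$-invariant as well. From $\Dsw[R]^\dagger=\Dse[R]$ you only learn that $W^\perp$ is $\Dse[R]$-invariant. The double commutant theorem cannot help, since it presupposes the self-adjointness that is in question. Consequently, minimality of $p_\psi$ only rules out proper subspaces of $V^{(\psi)}$ that are invariant under $\Dsw[R]$ and $\Dse[R]$ simultaneously. That is weaker than the irreducibility under each algebra separately that (1) asserts.

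\paragraph{The fix: use unitarity of $R$ on $V\otimes V$.} The missing ingredient is unitarity of $R$ as an operator on $V\otimes V$, not just entrywise Hermiticity.
\begin{itemize}
\item A subspace $W$ is $\Dsw[R]$-invariant if and only if $R(W\otimes V)\subseteq V\otimes W$.
\item Invertibility of $R$ and a dimension count upgrade this inclusion to equality.
\item Since $R$ is unitary, it maps $(W\otimes V)^\perp=W^\perp\otimes V$ onto $(V\otimes W)^\perp=V\otimes W^\perp$.
\item Hence $W^\perp$ is again $\Dsw[R]$-invariant, and $P_W\in\Dsw[R]'=\DC[R]$.
\end{itemize}
The same argument with the tensor factors swapped handles $\Dse[R]$: $W$ is $\Dse[R]$-invariant if and only if $R(V\otimes W)\subseteq W\otimes V$. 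With this inserted, any subspace of $V^{(\psi)}$ invariant under either algebra has its projector in $\DC[R]$ below $p_\psi$, and minimality then gives (1). This orthogonal-complement argument is exactly what the paper's proof of (1) rests on. It also shows a posteriori that $\Dsw[R]$ is reductive, so the self-adjointness you wanted is true, just not for the reason you gave.

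\paragraph{Comparison with the paper's route.} Once this step is in place, your route is a legitimate alternative to the paper's recursive splitting and its use of Schur's lemma for the $\calA_R$-module. You work instead with minimal projections of $\DC[R]$ and the corner algebra. A small advantage is that in your version (3) no longer depends on (1).
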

\begin{proof}
We first prove (1). 
If $\Dsw[R]$ has no proper invariant subspace, then we can simply take $\T=\{\psi\}$ and the conclusion of (1) and (2) follows trivially. Otherwise,  
 $\Dsw[R]$ has a proper invariant subspace $V_1$. 
Saying that $V_1$ is invariant under $\Dsw[R]$ is equivalent to saying that 
$R(V_1\otimes V)\subseteq V\otimes V_1$,
and, since $R$ is unitary, this is equivalent to $R(V_1\otimes V)=V\otimes V_1$.
Taking orthogonal complement and using unitarity of $R$, we get $R(V_1^\perp\otimes V)=V\otimes V_1^\perp$, which is equivalent to saying that $V^\perp_1$ is also invariant under $\Dsw[R]$.
We thus have $V=V_1\oplus V_1^\perp$. %
We can then use the same method recursively on  $V_1$ and $V_1^\perp$ to decompose each of them into a direct sum of subspaces, in which each summand is invariant under $\Dsw[R]$ and has no proper invariant subspaces,  eventually obtaining Eq.~\eqref{eq:directsumdecomposeV} satisfying (1). 

To prove (2), for each $\psi\in\T$,
let $p^{(\psi)}$ be the projector onto $V^{(\psi)}$, %
and we have $p^{(\psi)}\in\DC[R]$.  
Since $(\Dsw[R])'=\DC[R]=(\Dse[R])'$, $V^{(\psi)}$ is also invariant under $\Dse[R]$.  Then for any $\psi,\varphi\in\T$, we have  
$R(V^{(\psi)}\otimes V^{(\varphi)})\subseteq V^{(\varphi)}\otimes V^{(\psi)}$; but since $R$ is unitary, this is equivalent to  $R(V^{(\psi)}\otimes V^{(\varphi)})= V^{(\varphi)}\otimes V^{(\psi)}$,
and we use $R^{(\psi\varphi)}$ to denote the restriction of $R$ to $V^{(\psi)}\otimes V^{(\varphi)}$, hence proving (2). 

To prove (3), we insert the direct sum decomposition of $V$ and $R$ into the second quantization algebra in Eq.~\eqref{eq:fundamental_Rcommu}, and we obtain Eq.~\eqref{eq:fundamental_Rcommu-rela}, where each $\psi\in\T$ becomes a particle type in the system with its own creation/annihilation operators $\hat{\psi}^{\pm}_{i,a}$. Consider the fundamental bialgebra hidden symmetry $\calA_R$ defined in App.~\ref{app:basichopfsymmetry}, and we also use $\psi$ to denote its defining representation on $\psi$. From Eq.~\eqref{eq:BA-adjoint-action} we see that saying that $V^{(\psi)}$ is irreducible under $\Dse[R]$ is equivalent to saying that $\psi$ is an irrep of 
$\calA_R$. Applying Corollary~\ref{cor:simplemodule-simpletype} to the $\calA_R$-invariant local observable algebra $\FA$, we conclude that $\psi$ is a simple particle type. 
\end{proof}
\begin{corollary}
We say that $R$ is simple if both $\Dsw(R)$ and  $\Dse(R)$ are simple as associative algebras, i.e., if both are equal to $M_m(\C)$, the algebra of $m\times m$
matrices. For a unitary $R$-matrix, simplicity is equivalent to indecomposability, and also equivalent to $\DC[R]=\C\mathds{1}$. 

Hence from Thm.~\ref{thm:eqv-simple-indist} and Thm.~\ref{thm:directsum-decomposition}, for a paraparticle system generated by a single particle type $\psi$, local indistinguishability of $\psi$, simplicity of $\psi$, and simplicity of its $R$-matrix are all equivalent. 
\end{corollary}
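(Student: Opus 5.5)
The plan is to prove the chain of equivalences for unitary $R$: $\Dsw(R)=\Dse(R)=M_m(\C)$ $\Leftrightarrow$ $\DC[R]=\C\mathds{1}$ $\Leftrightarrow$ $R$ indecomposable. The final sentence then follows by combining this with Thm.~\ref{thm:eqv-simple-indist} and part (3) of Thm.~\ref{thm:directsum-decomposition}.

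\textbf{Step 1: simplicity $\Leftrightarrow$ $\DC[R]=\C\mathds{1}$.} The direction ($\Rightarrow$) is immediate: $\DC[R]$ is the commutant of $\Dsw(R)=M_m(\C)$, hence scalars. For ($\Leftarrow$), I use unitarity. The argument in the proof of Thm.~\ref{thm:directsum-decomposition}(1) shows that the orthogonal complement of a $\Dsw$-invariant subspace is again $\Dsw$-invariant, so the defining representation of $\Dsw(R)$ on $V$ is completely reducible. Equivalently, $\Dsw(R)$ is closed under adjoints up to the relevant unitarity manipulations, and hence is semisimple. A semisimple matrix algebra whose commutant is $\C\mathds{1}$ acts irreducibly on $V$, so by Burnside's theorem (equivalently, the double commutant theorem) it equals $M_m(\C)$. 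Since $\DCsw[R]=\DCse[R]=\DC[R]$, the same argument gives $\Dse(R)=M_m(\C)$.

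\textbf{Step 2: $\DC[R]=\C\mathds{1}$ $\Leftrightarrow$ indecomposable.} Suppose $R\cong R_1\oplus_S R_2$ with both summands of dimension at least one. After the unitary basis change $V\otimes V$, the projector $p_1$ onto the first block satisfies Eq.~\eqref{eq:kappaDC}, because of the block structure in Eq.~\eqref{eq:compositeR-def}. It is not a scalar, so $\DC[R]\neq\C\mathds{1}$. Conversely, if $\DC[R]\neq\C\mathds{1}$, then $\Dsw(R)$ is reducible. Thm.~\ref{thm:directsum-decomposition} then gives a decomposition $V=\bigoplus_\psi V^{(\psi)}$ into at least two summands, with $R(V^{(\psi)}\otimes V^{(\varphi)})=V^{(\varphi)}\otimes V^{(\psi)}$. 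Grouping one summand against the rest yields precisely the form of Eq.~\eqref{eq:compositeR-def}, so $R$ is decomposable.

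\textbf{Step 3: the last statement.} For a system generated by one particle type $\psi$, Thm.~\ref{thm:eqv-simple-indist} identifies local indistinguishability with simplicity of $\psi$. It remains to show that $\psi$ is simple (with respect to the full local algebra $\FA$) if and only if $R$ is simple.
\begin{itemize}
\item If $R$ is simple, then $\DC[R]=\C\mathds{1}$. Lemma~\ref{lemma:kappainDCR} applied to the fundamental symmetry forces $\kappa^O\in\DC[R]$, so $\kappa^O\propto\mathds{1}$, and Proposition~\ref{prop:simplen=1case} gives simplicity of $\psi$.
\item If $R$ is not simple, take a nonscalar projector $p\in\DC[R]$. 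The operator $\hat e^{p}_{ii}$ from Eq.~\eqref{eq:def_e_ijkappa} is local, since $p$ satisfies Eq.~\eqref{eq:kappaDC}. Its matrix elements $\braket{0;i^b|\hat e^p_{ii}|0;i^a}=p_{ba}$ are not proportional to $\delta_{ab}$, so $\psi$ is not simple.
\end{itemize}

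The main obstacle is Step 1 ($\Leftarrow$). I must justify semisimplicity of $\Dsw(R)$ from unitarity alone. My first attempt will be to show that the set $\{R^{(ad)}_\swarrow\}$ is closed under $\dagger$ up to linear combinations built from $R^\dagger=R^{-1}=R$, which would make $\Dsw(R)$ a $*$-algebra. If that fails, I will rely on complete reducibility of the module as sketched above.
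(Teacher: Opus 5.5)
Your proposal is correct and follows essentially the route the paper has in mind. In both, unitarity gives complete reducibility exactly as in the proof of Thm.~\ref{thm:directsum-decomposition}(1), Burnside's theorem and the block projectors $p^{(\psi)}\in\DC[R]$ tie indecomposability to $\DC[R]=\C\mathds{1}$, and Lemma~\ref{lemma:kappainDCR} with Prop.~\ref{prop:simplen=1case} (and the local operator $\hat e^{\kappa}_{ij}$ for the converse) handles the particle-level statement. Two small points: the aside about $\Dsw(R)$ being closed under adjoints is unnecessary, since complete reducibility alone suffices, and the nonscalar Hermitian projector you need in the last step is exactly one of the $p^{(\psi)}$ from Thm.~\ref{thm:directsum-decomposition}.
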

A practical way to decide if a unitary $R$-matrix is simple is to directly compute $\DC[R]$. %
Examples of unitary simple $R$-matrices include Exs.~\ref{ex:1m},~\ref{ex:setth}, and \ref{ex:setth-ext} in Tab.~\ref{tab:Hilbert_series}, where a direct computation gives $\DC[R]=\C \mathds{1}$.  %
By contrast, the $R$-matrix in Ex.~\ref{ex:Green}  can be decomposed as a direct sum of $m$ fermions with mutual bosonic statistics.  

We emphasize again that the definition of simplicity depends on the algebra of local observables $\FA$, and an $R$-paraparticle with a non-simple $R$-matrix may become simple if we introduce some extra hidden symmetry $\calA$. In this case  we can use the following theorem to decide if the $R$-paraparticle is simple with respect to the subalgebra of $\calA$-invariant observables: 
\begin{theorem}{}
Let $\calA$ be any bialgebra hidden symmetry
of a paraparticle system, and let $\mathfrak{A}=\mathfrak{A}_0^\calA$ be the algebra of $\calA$-invariant local observables. 
Then a paraparticle $\psi$ is simple with respect to $\FA$ if and only if $\DC[R]\cap \End_\calA[\psi]
=\C \mathds{1}$. 
\end{theorem}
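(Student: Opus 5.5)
By Proposition~\ref{prop:simplen=1case}, simplicity of $\psi$ with respect to $\FA=\FA_0^\calA$ is equivalent to the $n=1$ statement. That statement says $\kappa^O_{ab}=\braket{0|\hat{\psi}^-_{i,a}\hat{O}\hat{\psi}^+_{i,b}|0}$ is proportional to $\delta_{ab}$ for every $\hat{O}\in\FA$ and every $i$. I will therefore show two things: the set of matrices $\{\kappa^O : \hat{O}\in\FA\}$ is contained in $\DC[R]\cap\End_\calA[\psi]$, and, conversely, every element of $\DC[R]\cap\End_\calA[\psi]$ arises as such a $\kappa^O$. The theorem then follows from the equivalence
\[
\{\kappa^O\}\subseteq\C\mathds{1}\iff \DC[R]\cap\End_\calA[\psi]=\C\mathds{1}.
\]

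For the inclusion, take $\hat{O}\in\FA$. Since $\hat{O}$ is $\calA$-invariant, Lemma~\ref{lemma:kappainDCR} gives $\kappa^O\in\End_\calA(\psi)$. Since $\hat{O}$ is also a local observable in the sense of Definition~\ref{def:general_LO}, it is invariant under the fundamental bialgebra symmetry $\calA_R$ by Fact~\ref{fact:locality-and-A-invariance}. Applying Lemma~\ref{lemma:kappainDCR} a second time, now with $\calA_R$ and $z=v^\psi_{pq}$ (the remark following that lemma), gives $\kappa^O\in\DC[R]$. This proves the ``if'' direction: if the intersection is $\C\mathds{1}$, every $\kappa^O$ is scalar, so $\psi$ is simple.

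For the ``only if'' direction, suppose some $\kappa\in\DC[R]\cap\End_\calA[\psi]$ is not a multiple of $\mathds{1}$. I will realize it as $\kappa^O$ for an explicit observable, using the quadratic operator $\hat{e}^\kappa_{ii}=\sum_{a,b}\kappa_{ab}\hat{\psi}^+_{i,a}\hat{\psi}^-_{i,b}$ from Eq.~\eqref{eq:def_e_ijkappa}. Because $\kappa\in\DC[R]$, Eq.~\eqref{eq:kappaDC} holds, so $\hat{e}^\kappa_{ii}$ is local on $\{i\}$. Because $\kappa\in\End_\calA[\psi]$, applying Eq.~\eqref{eq:BA-adjoint-action} to both factors shows $\hat{\Theta}(z)\hat{e}^\kappa_{ii}=\hat{e}^\kappa_{ii}\hat{\Theta}(z)$; this is the same computation as in Remark~\ref{rmk:hiddenBAsymmetry}, with $\kappa$ inserted and commuted through $\zetapsi(z_{(1)})$. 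Since $\DC[R]$ and $\End_\calA[\psi]$ are unitary-closed $*$-algebras when $R$ is unitary (Remark~\ref{rmk:unitary-hidden-symmetry}), I may replace $\kappa$ by its Hermitian or anti-Hermitian part and assume $\kappa$ is a non-scalar Hermitian matrix. Then $\hat{e}^\kappa_{ii}$ is Hermitian and belongs to $\FA$. By the first line of Eq.~\eqref{eq:fundamental_Rcommu},
\[
\braket{0|\hat{\psi}^-_{i,a}\hat{e}^\kappa_{ii}\hat{\psi}^+_{i,b}|0}=\kappa_{ab},
\]
which is not proportional to $\delta_{ab}$, so $\psi$ is not simple.

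The step I expect to be delicate is the claim $\hat{e}^\kappa_{ii}\in\FA$. It requires reading $\FA_0$ as the algebra of all local observables, so that $\hat{e}^\kappa_{ii}$ qualifies as an element of $\FA_0$ and hence of $\FA_0^\calA$. It also requires checking the $*$-closure of $\DC[R]\cap\End_\calA[\psi]$ used in the Hermitian reduction. For $\DC[R]$ this follows from $\DC[R]$ being the commutant of the $*$-closed algebra $\Dsw(R)$ when $R$ is unitary. For $\End_\calA[\psi]$ it follows from $\zetapsi$ being a $*$-representation.
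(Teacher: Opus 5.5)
Your proof is correct and follows the paper's argument. For the ``if'' direction you use Lemma~\ref{lemma:kappainDCR} together with the remark after it, and for the ``only if'' direction you exhibit the quadratic operator $\hat{e}^\kappa$. Two small differences are worth noting: taking $\hat{e}^\kappa_{ii}$ rather than the paper's $\hat{e}^\kappa_{ij}$ fits the same-site $n=1$ condition of the simplicity definition more exactly, and your Hermitian reduction is unnecessary because $\mathfrak{A}_0^\calA$ is taken to be an algebra (using it would require a $*$-structure on $\calA$ that the statement does not assume).
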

\begin{proof}
To prove the if direction, note that for any local observable $\hat{O}$, we always have $\kappa^O\in \DC[R]$, see the remark below Lemma~\ref{lemma:kappainDCR}. If $\hat{O}$ is $\calA$-invariant, then Lemma~\ref{lemma:kappainDCR} shows that  $\kappa^O\in \End_\calA[\psi]$.
Therefore 
$\kappa^O\in \DC[R]\cap \End_\calA[\psi]
=\C \mathds{1}$. Since this applies to any $\calA$-invariant observable $\hat{O}$,  Prop.~\ref{prop:simplen=1case} implies that $\psi$ is simple with respect to $\FA$.  

For the only if direction, 
assume there exists some nontrivial $\kappa\in 
\DC[R]\cap \End_\calA[\psi]$
not proportional to identity matrix. Then the operator $\hat{e}^\kappa_{ij}$ is local since $\kappa\in\DC[R]$~[see the argument below Eq.~\eqref{eq:def_e_ijkappa}], and is $\calA$-invariant, since $[\kappa,\zetapsi(z)]=0$ for any $z\in\calA$. 
Then a straightforward computation shows that $ \braket{0|\hat{\psi}^-_{i,a}\hat{e}^\kappa_{ij}\hat{\psi}^+_{j,b}|0}=\kappa_{ab}$, which is not proportional to $\delta_{ab}$, hence by Proposition~\ref{prop:simplen=1case}, $\psi$ is not simple with respect to $\FA$. 
\end{proof}

For example, for the $R$-matrix in Ex.~\ref{ex:Green}, the associated $R$-paraparticle becomes simple if we add a suitable hidden group symmetry. The choice of such hidden symmetry may not be unique, and some simple examples can be found in App.G.2 of Ref.~\cite{wang2025secret}. For example, when $m=3$, we can demand local observables to be invariant under an $A_4$ hidden symmetry~(i.e., take $\calA=\C[A_4]$ to be the group algebra), where the paraparticle transforms in the 3-dimensional irrep of $A_4$~[see Eq.~(G19) of Ref.~\cite{wang2025secret}], and hence  $\End_\calA[\psi]=\C\mathds{1}$ and $\psi$ is simple with respect to the $\calA$-invariant local observable algebra.

\subsection{The categorical description of $R$-paraparticles}\label{sec:cat-description}
In this section we show that the generalized hidden symmetry viewpoint directly leads to a categorical description of the universal topological properties of $R$-parastatistics---
those properties that are stable against smooth deformations~(local perturbations), %
such as particle fusion, exchange statistics, topological twist factor, and Frobenius-Schur indicator. Although these topological properties are also encoded in the second quantization formalism, the latter contains too much information about microscopic details of the system, making it inconvenient for describing universal properties. By contrast, the categorical formulation focuses on the universal properties
and is more convenient in situations where microscopic details are not important.  

The categorical formulation also highlights the analogy to the theory of anyons, and helps us understand the relation to DHR no-go theorem~\cite{doplicher1971local,doplicher1974local}. It is by now a  well-established fact that the universal topological properties of anyons that emerge in 2D topological phases are described by unitary braided fusion categories~\cite{kitaev2006anyons,Kong2020Algebraichigher,simon2020topological_protobook}, a mathematical structure that describes the fusion, braiding, and antiparticles of an anyon model. In this section we will see that $R$-paraparticles are naturally described by symmetric tensor categories~\cite{TenCat_EGNO} that are not necessarily rigid, while by contrast, under the assumptions of DHR no-go theorem, superselection sectors are always described by a rigid symmetric tensor category. 

We begin by sketching the main results. 
Let $\calA$ be the fundamental bialgebra symmetry of this system, as defined in Thm.~\ref{thm:basicHA}, 
and let $\FA=\mathfrak{F}^\calA$ be the algebra of $\calA$-invariant local observables. Below we show that the symmetric tensor category $\calC=\Rep(\calA)$ controls all universal topological properties of the $R$-paraparticles in the following way:
\begin{enumerate}
\item A simple object in $\calC$ corresponds to a simple particle type  in the system, satisfying the local indistinguishability criterion;
\item For simple particle types $\psi,\varphi\in \calC$, their tensor product representation $\psi\otimes\varphi$ describes the fusion of particles $\psi$ and $\varphi$;
\item Exchange symmetry is described by the  $R$-matrix; %
\item For a simple $\psi\in\calC$, if the dual representation $\bar{\psi}\in \calC$ exists, then $\bar \psi$ corresponds to the antiparticle of $\psi$.
\end{enumerate}
The tensor  category $\calC$ is called rigid if the dual representation $\bar \psi$ exists for all $\psi\in \calC$, and this happens whenever $\calA$ is a Hopf algebra. Physically, this means that every particle type in the system has a corresponding antiparticle type, and this happens for all examples in Tab.~\ref{tab:Hilbert_series} except Ex.~\ref{ex:1m}. For $R=-\mathds{1}$ in Ex.~\ref{ex:1m}, $\calA$ is only a bialgebra, and the corresponding $R$-paraparticle does not have an antiparticle, as we have already seen in Sec.~\ref{sec:pair-creation-AP}. 

Since the construction of the fundamental hidden symmetry $\calA$ from a given $R$-matrix is a bit complicated and implicit in general~(see App.~\ref{app:basichopfsymmetry}), it turns out convenient in practice to have a more explicit description of the category $\calC$ directly in terms of the $R$-matrix.  
To begin, consider an arbitrary unitary involutive $R$-matrix $\mathbf{R}$, which can be either simple or composite in the sense of  Definition~\ref{def:simpleR-matrix}, and let $\Psi$ be the associated particle type, with creation and annihilation operators denoted by $\hat{\Psi}_{i,A}^\pm$ satisfying the CRs in Eq.~\eqref{eq:fundamental_Rcommu-rela-full}. 
Below we describe a procedure to construct a symmetric tensor category $\calC$ from this given $\mathbf{R}$, which involves recursively constructing all other particle types in $\calC$ by fusing known particle types and decomposing the fusion product, and we call $\Psi$ a generating object of $\calC$. This procedure is analogous to constructing new representations of a group or Lie algebra by taking tensor product of known representations and decomposing the result into irreducibles, e.g., constructing all irreps of $SU(2)$ from the spin-$1/2$ representation. We begin by defining objects and morphisms of $\calC$. 
\begin{definition}\label{def:object}{(Object)}
Any object $\psi\in\calC$ is labeled by a four-index tensor %
\begin{equation}\label{eq:4-index-tensor_obj}	%
\mathbf{R}^{Bb}_{aA}=\adjincludegraphics[height=7ex,valign=c]{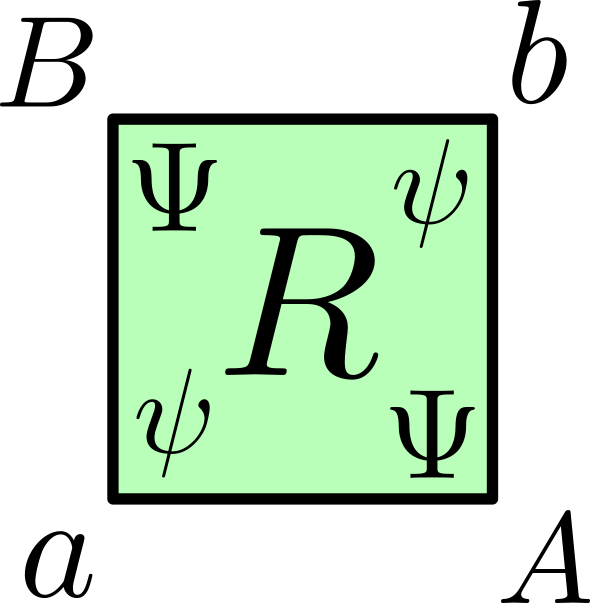}, 
\end{equation}
describing the exchange statistics of $\psi$ with the generating object $\Psi$: 
\begin{eqnarray}\label{eq:definingCR_object}
\hat{\psi}^+_{i,a}\hat{\Psi}^+_{j,A}=\sum_{B,b} \hat{\Psi}^+_{j,B}  \hat{\psi}^+_{i,b}\mathbf{R}^{Bb}_{aA}.%
\end{eqnarray}
\end{definition}

\begin{definition}\label{def:morphisms}{(Morphisms)}
Let $\psi,\varphi$ be objects.  A morphism $f: \psi\to\varphi$ is a $m_\varphi\times m_\psi$-dimensional matrix, denoted by an arrow $f_{ab}=\adjincludegraphics[height=5ex,valign=c]{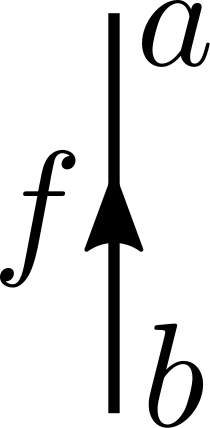}$, satisfying 
\begin{equation}\label{eq:def_morphisms}
	\adjincludegraphics[height=9ex,valign=c]{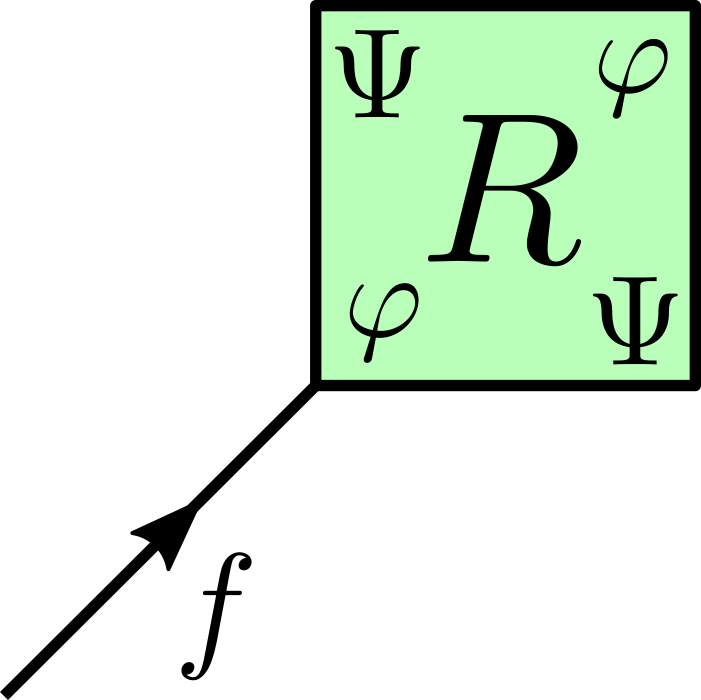}~~=~~
    \adjincludegraphics[height=9ex,valign=c]{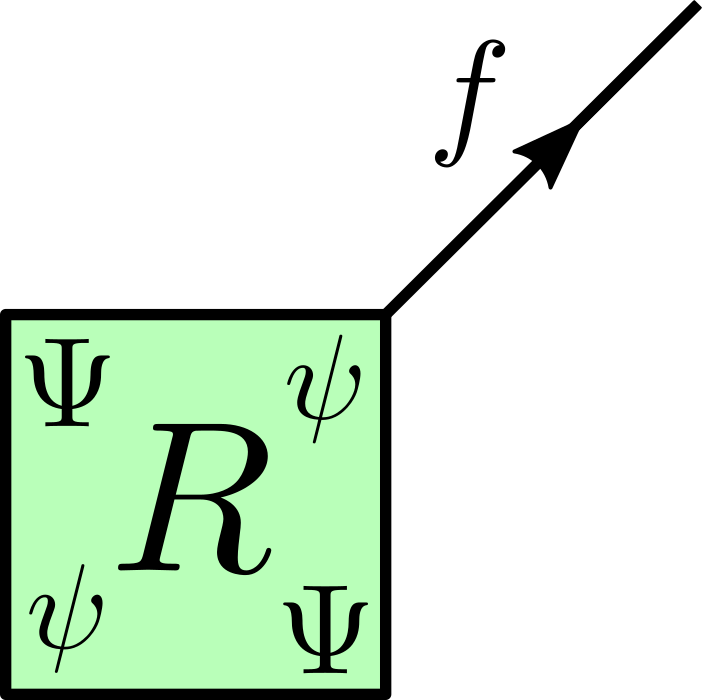}~~.
\end{equation}
It is clear that all such morphisms form a vector space, which we denote by $\Hom_\calC(\psi,\varphi)$. 
For each object $\psi$, the $m_\psi\times m_\psi$-dimensional identity matrix is a morphism, called the identity morphism, denoted as $\id_{\psi}\in\End_{\calC}(\psi)\equiv\Hom_\calC(\psi,\psi)$.
The morphism $f$ is called a monomorphism/epimorphism/isomorphism if the matrix $f_{ab}$ is injective/surjective/bijective, respectively. Two objects are isomorphic if there is an isomorphism $f\in \Hom_\calC(\psi,\varphi)$. A particle type is an isomorphism class of objects in $\calC$. 
\end{definition}

\begin{remark}
A particle type $\psi$ defines a superselection sector of quantum states $S_\psi=\{\hat{O}\hat{\psi}^+_{i}\ket{0}| i\in\Lambda, \hat{O}\in \FA\}$, where $\Lambda$ is the underlying lattice. 
Here we omit the internal index $a$ in $\hat{O}\hat{\psi}^+_{i}\ket{0}$, to indicate that it denotes the space of quantum states 
spanned by $\{\hat{O}\hat{\psi}^+_{i,a}\ket{0}\}_{a=1}^m$,
which are locally indistinguishable if $\psi$ is simple, %
and is hence treated as a whole. 

For example, if all particles in $\calC$ have antiparticles, then the superselection sector $S_\psi$ includes quantum states like $\hat{\psi}^+_{i}\ket{0},\hat{e}^{+}_{jk}\hat{\psi}^+_{i}\ket{0},\hat{e}^{+}_{jk}\hat{e}^{+}_{lp}\hat{\psi}^+_{i}\ket{0},\ldots$, where $\hat{e}^+_{jk}$ is the pair-creation operator defined in Eq.~\eqref{eq:e_pm_ab-psipsib}. By construction, $S_\psi$ is invariant under any local unitary transformations of the form $e^{i t\hat{O}}$ for any local observable $\hat{O}\in\FA$, hence the name ``superselection sector''. This connects our definition of  particle type to the %
literature of topologically-ordered systems~\cite{kitaev2006anyons,kong2014braided,kong2022invitation}, and algebraic quantum field theories~\cite{doplicher1971local,doplicher1974local,haag2012book}, i.e. a particle type is a superselection sector of quantum states invariant under local operations~(equivalently,  a module over the local observable algebra $\FA$). 
\end{remark}

\begin{definition}\label{def:direct_sum_cat}{(Subobjects and direct sum)}
An object $\varphi\in\calC$ is called a subobject of $\Phi\in\calC$ if there is a monomorphism $\iota\in \Hom_\calC(\varphi,\Phi)$. 
Let $\T$ be any finite set of objects. A direct sum of objects in $\T$ is an object $\Phi$ equipped with epimorphisms $\pi_\psi\in\Hom(\Phi,\psi)$ and monomorphisms $\iota_\psi\in\Hom(\psi,\Phi)$ for each $\psi\in \T$, such that 
\begin{eqnarray}\label{eq:direct_sum_def}
\pi_\varphi\circ \iota_\psi=\delta_{\psi\varphi}\id_{\psi},\quad \sum_{\psi\in\mathcal{T}}\iota_\psi\circ \pi_\varphi =\id_\Psi.
\end{eqnarray}
In this case, each $\psi\in \T$ is a subobject of $\Psi$.  %
\end{definition}
It is not hard to show that the explicit description of direct sum in Sec.~\ref{sec:mutual_para} agrees with this categorical definition. 

\begin{definition}\label{def:TensorProduct}{(Tensor product)}
Given any objects $\psi,\varphi\in\calC$, we define their tensor product $\psi\otimes \varphi$ as the object labeled by the four-index tensor
\begin{eqnarray}\label{eq:def_TensorProductObj}
	\adjincludegraphics[height=6ex,valign=c]{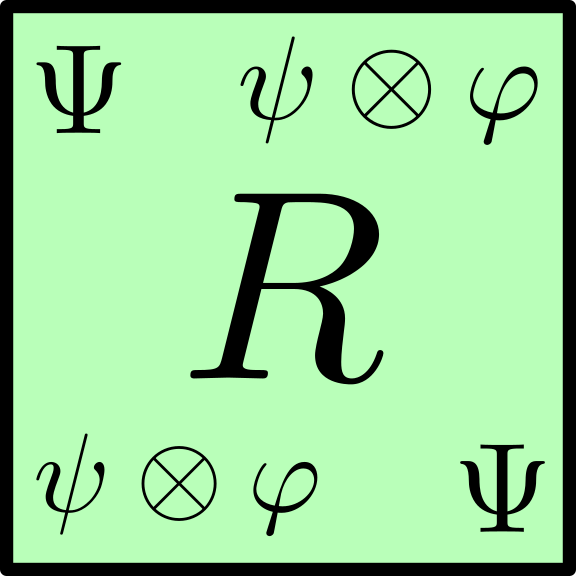}~&=&~
    \adjincludegraphics[height=9ex,valign=c]{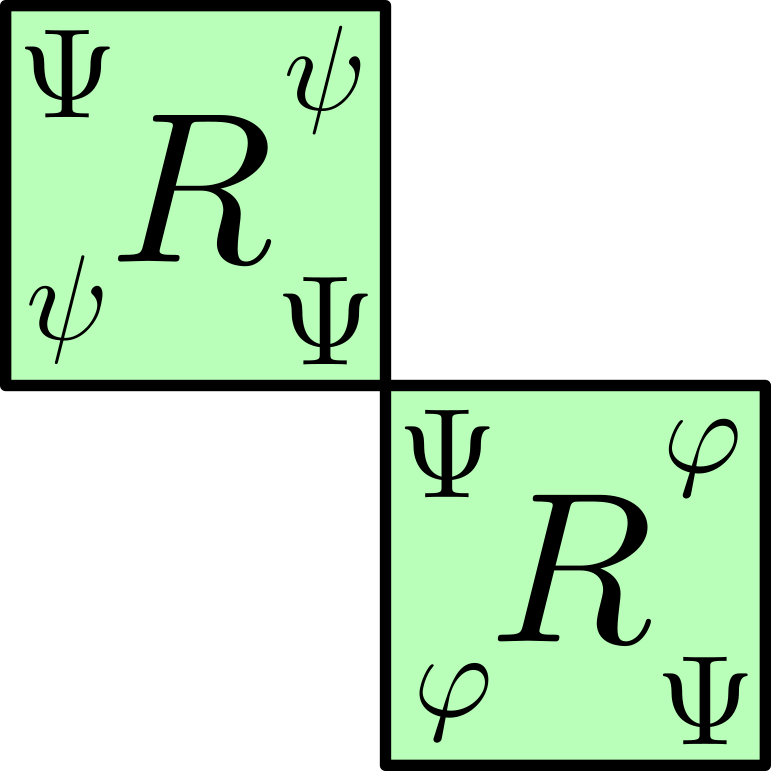}~,%
\end{eqnarray}
describing the exchange statistics of $\hat{\psi}^+_a\hat{\varphi}^+_b$ with the generating object $\Psi$.  
\end{definition}
The tensor product defined this way is associative
\begin{equation}\label{eq:trivial_associator}
(\psi\otimes\varphi)\otimes \rho=\psi\otimes(\varphi\otimes \rho),
\end{equation}
since it is derived from the associative operator product. 
For convenience, we also define the unit object, denoted simply by $1$, as the object labeled by the tensor
\begin{equation}\label{eq:unitObj}
	\adjincludegraphics[height=5ex,valign=c]{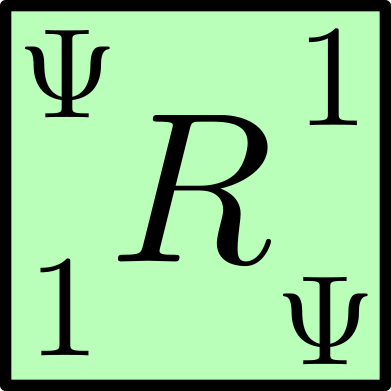}~~=~~
	\adjincludegraphics[height=5ex,valign=c]{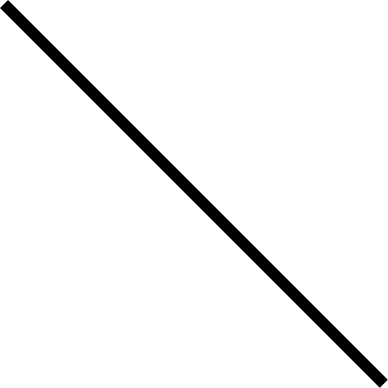}.
\end{equation}
It is then apparent that $1\otimes \psi\cong\psi\otimes 1\cong\psi$ for any object $\psi$, giving the category $\calC$ the structure of a monoidal category. 

As we mentioned before, starting from a given generating object $\Psi$, we can 
recursively construct all particle types in $\calC$ by fusing existing particle types and decomposing the fusion product. The fusion morphisms play an important role in this procedure, which generalizes Clebsch-Gordan coefficients in Lie algebra representation theory:
\begin{definition}(Fusion morphisms)
Let $\psi,\varphi,\gamma$ be simple objects. By specializing the definition of morphisms in Eq.~\eqref{eq:def_morphisms} to the tensor product object defined in Eq.~\eqref{eq:def_TensorProductObj}, we see that $\Hom(\gamma,\psi\otimes \varphi)$~(called the fusion space) is the linear space of all linear maps from $\gamma$ to $\psi\otimes \varphi$ satisfying 
\begin{equation}\label{eq:fusionmorphism-decomposition}
	\adjincludegraphics[height=11ex,valign=c]{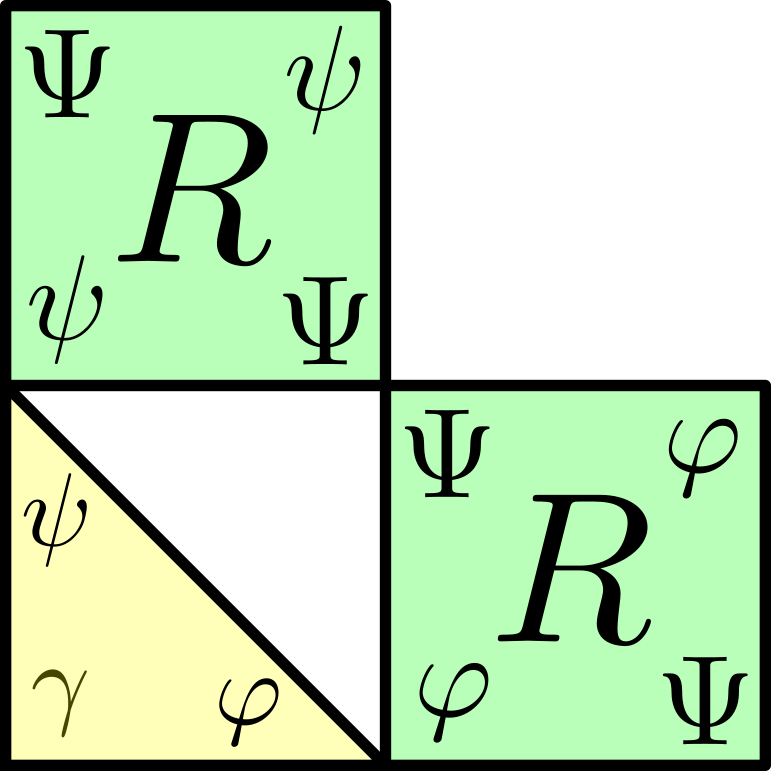}~~=~~
    \adjincludegraphics[height=11ex,valign=c]{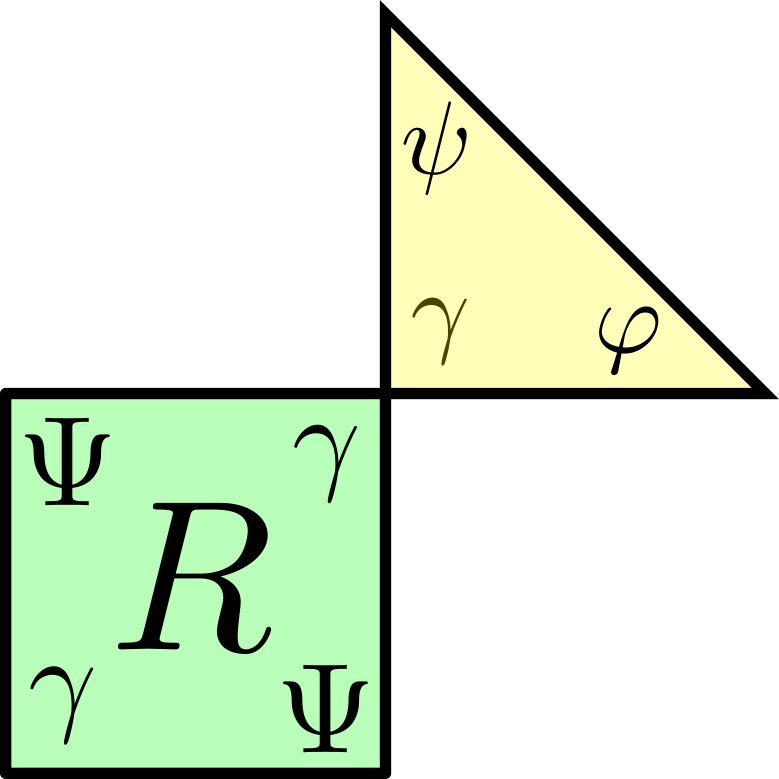}~~. 
\end{equation}
\end{definition}
The space $\Hom(\psi\otimes \varphi,\gamma)$ is defined similarly. For convenience, we choose a basis for the space $\Hom(\gamma,\psi\otimes \varphi)$ and use the corresponding dual basis for $\Hom(\psi\otimes \varphi,\gamma)$, i.e., we have
\begin{equation}\label{eq:orthonormal_fusion_basis}
	\sum_\gamma\adjincludegraphics[height=8ex,valign=c]{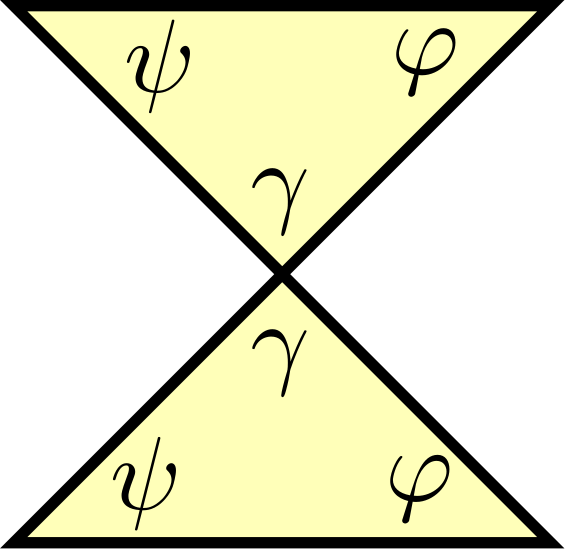}~=~
    \adjincludegraphics[height=8ex,valign=c]{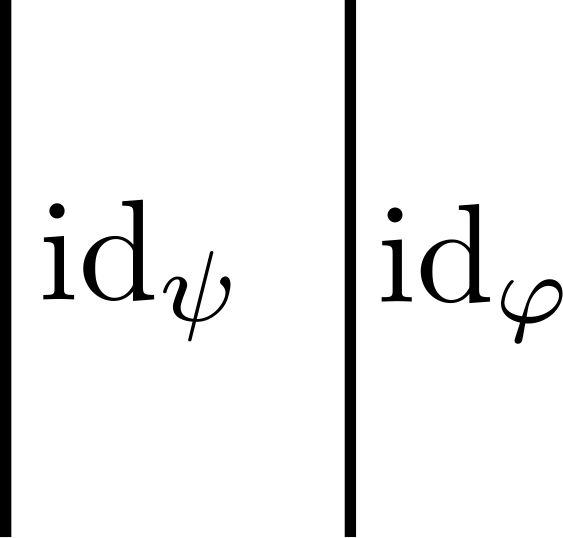}~,
    \quad
	\adjincludegraphics[height=8ex,valign=c]{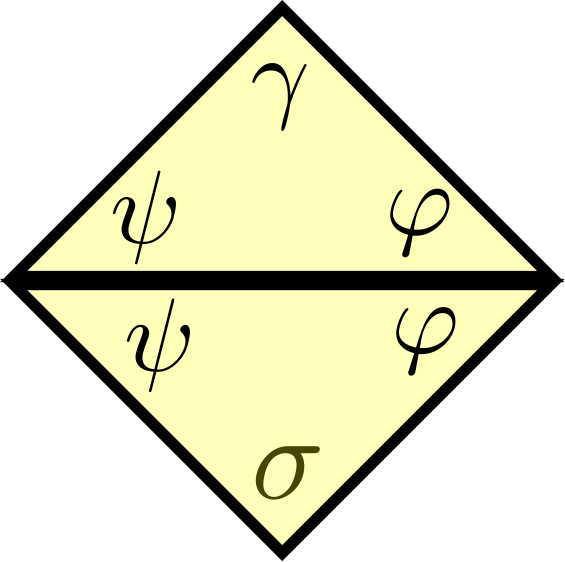}=
    \delta_{\gamma\sigma}\adjincludegraphics[height=8ex,valign=c]{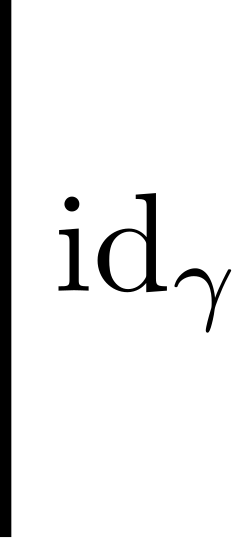}~, 
\end{equation}
where $\gamma$ and $\sigma$ runs over all isomorphic classes of subobjects of $\psi\otimes \varphi$. 
Note that if $R$ is unitary, then the dual basis elements can always be obtained as the Hermitian conjugate of the corresponding basis elements, and Eq.~\eqref{eq:orthonormal_fusion_basis} simply expresses the orthonormality of the basis. 
According to Definition~\ref{def:direct_sum_cat}, the fusion morphisms decompose the tensor product $\psi\otimes\varphi$ into direct sum of simple objects. Note that Eq.~\eqref{eq:fusionmorphism-decomposition} gives a method to construct new particle types $\gamma$ in the system out of existing particle types $\psi,\varphi$ by decomposing the tensor product $\psi\otimes\varphi$, and doing this recursively we can in principle construct all distinct particle types in the system.   

\begin{definition}(Braiding morphisms)
For any objects $\psi, \varphi$, the mutual $R$-matrix satisfies
\begin{equation}\label{eq:def_braiding_morphism}
	\adjincludegraphics[height=12ex,valign=c]{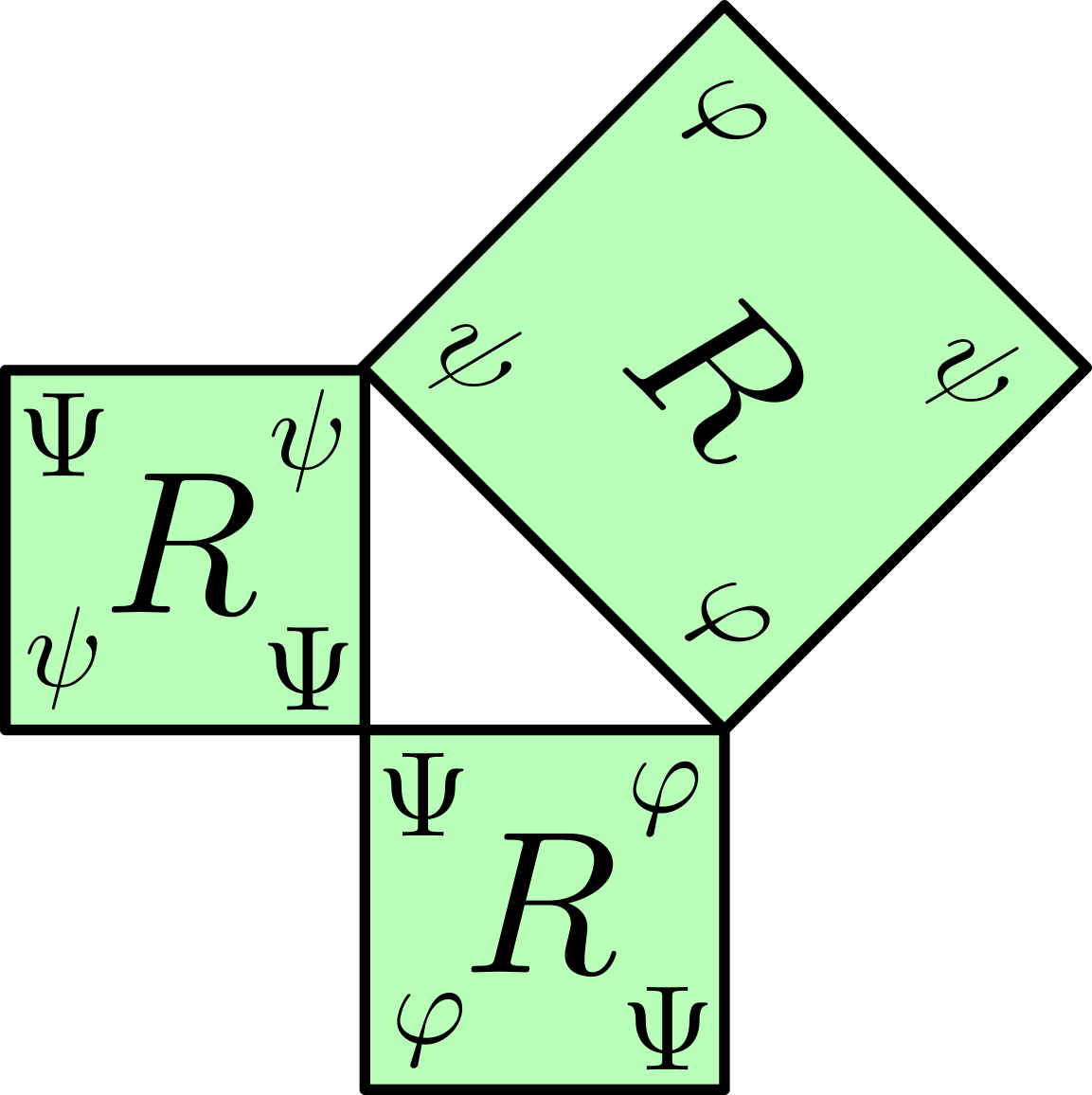}~=~
    \adjincludegraphics[height=12ex,valign=c]{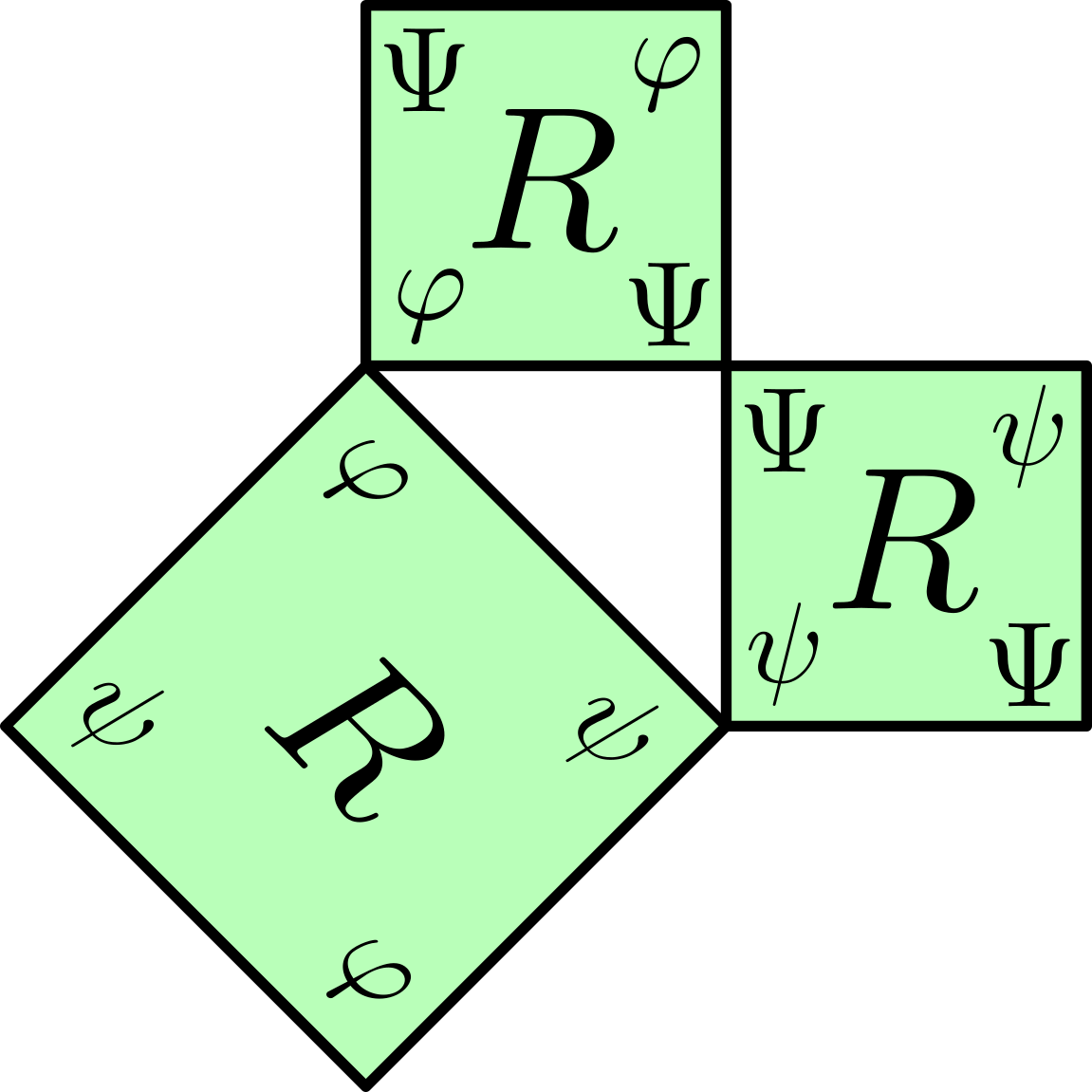}~, 	
\end{equation}
which is the YBE for mutual $R$-matrices in Eq.~\eqref{eq:YBEgraphical-rela},  
hence defines an isomorphism $R^{(\psi\varphi)}\in \Hom(\psi\otimes \varphi,\varphi\otimes \psi)$, called the braiding morphism. Since the mutual $R$-matrices satisfy $R^{(\psi\varphi)}R^{(\varphi\psi)}=\mathds{1}$, 
they define a symmetric braiding on the monoidal category $\mathcal{C}$, making it a 
$\C$-linear symmetric monoidal category. 
We note that Eq.~\eqref{eq:fusionmorphism-decomposition} along with Eq.~\eqref{eq:def_braiding_morphism} gives a method to construct new $R$-matrices out of existing ones, 
as one can take $\varphi=\psi=\gamma$ to be the new particle type generated by the aforementioned recursive algorithm~[via Eq.~\eqref{eq:fusionmorphism-decomposition}]. 
\end{definition}
\begin{remark}
For readers' convenience, we give the definition of the $F$ and $R$ symbols which are widely used in physics literature~\cite{levin2005string,kitaev2006anyons,simon2020topological_protobook}. These data can all be extracted from the fusion morphisms. The $F$-symbol is defined by \begin{equation}\label{def:Fsymbol}
	\adjincludegraphics[height=8ex,valign=c]{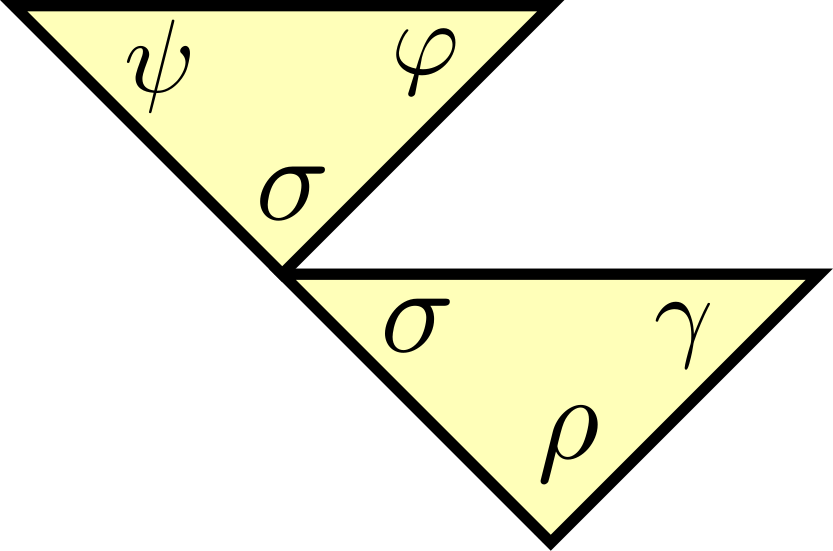}=
    \sum_\tau [F^{\psi\varphi\gamma}_\rho]_{\sigma\tau}\adjincludegraphics[height=8ex,valign=c]{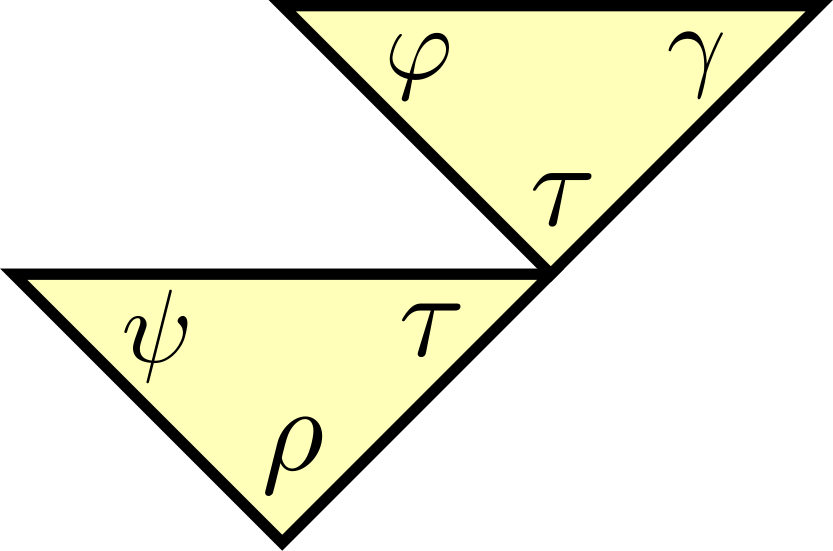}~~, 
\end{equation}
which is analogous to the $6j$-symbol in Lie algebra representation theory. The braiding $R$-symbol is related to the mutual $R$-matrix via
\begin{equation}\label{def:Rsymbol}
	\adjincludegraphics[width=7ex,valign=c]{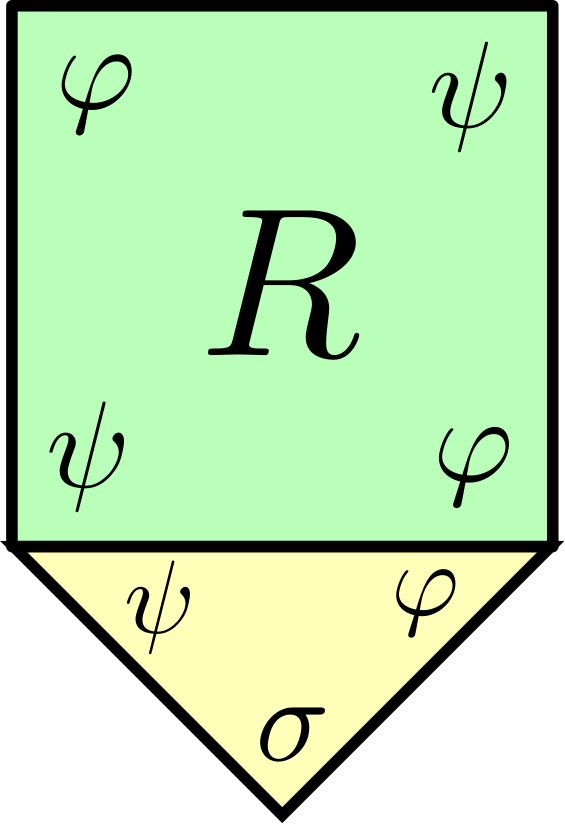}
    =
    R^{\psi\varphi}_\sigma~ 
    \adjincludegraphics[width=7ex,valign=c]{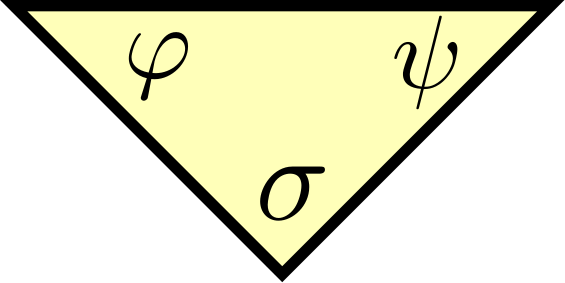}~.	
\end{equation}
One can verify that they satisfy the pentagon and hexagon equations of symmetric tensor categories. 
Since we do not use them in this paper, we omit the detailed derivations. 
\end{remark}

\begin{definition}(Dual objects)
For any object $\psi\in\calC$,   
its dual object is an object $\bar\psi\in\calC$ equipped with
$\alpha^{(\bar{\psi}\psi)}$ and $\alpha^{\prime(\psi\bar{\psi})}$
satisfying Eqs.~\eqref{eq:evcoevdef} and \eqref{eq:snakelemma} in Sec.~\ref{sec:pair-creation-AP}. Note that Eq.~\eqref{eq:evcoevdef} is equivalent to saying that $\alpha^{(\bar{\psi}\psi)}\in\Hom_\calC(\bar\psi\otimes\psi,1)$ and $\alpha^{\prime(\psi\bar{\psi})}\in \Hom_\calC(1,\psi\otimes\bar\psi)$, and are called, respectively, evaluation and coevaluation morphisms in category theory. The monoidal category $\calC$ is called rigid if every object in $\calC$ has a dual object.
\end{definition}
This completes the definition and construction of $\calC$ from a given $R$-matrix $\mathbf{R}$. Below we give a few examples from Tab.~\ref{tab:Hilbert_series}.

\begin{example}(Ex.~\ref{ex:Green} in Tab.~\ref{tab:Hilbert_series})
 It is clear from the fundamental CRs~\eqref{eq:fundamental_Rcommu} that this (decomposable) $R$-matrix describes $m$-flavors of fermions that are mutually bosonic. We use $\psi_a$ to denote the simple particle type created by $\hat\psi^+_{j,a}$.   There are $2^m$ simple objects in total, denoted as $\psi^{n_1}_1\psi^{n_2}_2\ldots \psi^{n_m}_m$, with $n_a\in\{0,1\}$ for $a=1,2,\ldots,m$, and the fusion rules are induced by $\psi_a\times\psi_a=1$.  
Hence this system is described by the SFC $\mathcal{C}=\mathrm{sRep}(Z_2)^{\boxtimes m}$, the $m$-fold Deligne tensor product of $\mathrm{sRep}(Z_2)$.
\end{example}

\begin{example}(Infinite non-rigid symmetric tensor category from Ex.~\ref{ex:1m})
    For $R=-\mathds{1}_{m}\otimes \mathds{1}_{m}$, $\mathcal{C}_{R}$ contains infinitely many simple objects.  %
Each simple object is indexed by a non-negative integer $n$, denoted by $\psi^{(n)}$, 
describing the particle type created by the operator
\begin{equation}
\hat{\psi}^{(n)+}_{I,A}=\hat{\psi}^+_{i_1,a_1}\hat{\psi}^+_{i_2,a_2}\ldots \hat{\psi}^+_{i_n,a_n},
\end{equation}
where $I$ collectively labels  the positions %
$i_1,i_2,\ldots,i_n$ that are assumed to be mutually different~(otherwise $\hat{\psi}^{(n)+}_{I,A}$ is zero), and $A$ collectively labels the internal states  %
$a_1,a_2,\ldots,a_n$.
The unit object is $\psi^{(0)}=1$.  %
The fusion rules are generated from the product of creation operators, and in this case we have
\begin{equation}
\hat{\psi}^{(n)+}_{I,A}\hat{\psi}^{(l)+}_{J,B}=\hat{\psi}^{(n+l)+}_{IJ,AB}, \text{ if } I\cap J=0,
\end{equation}
where $IJ$ denotes the concatenation of $I$ and $J$ and similarly for $AB$. This gives the fusion rule
\begin{equation}
\psi^{(n)}\otimes\psi^{(l)}=\psi^{(n+l)}.
\end{equation}
The fusion $F$-symbol is trivial, and the $R$-symbol is given by $R^{n,l}_{n+l}=(-1)^{nl}$. Note that in this category $\psi^{(n)}$ does not have an  antiparticle for $n\geq 1$. 
\end{example}

\begin{example}(Symmetric fusion categories from Ex.~\ref{ex:setth} and Ex.~\ref{ex:setth-ext})
Since both are Hopf algebra $R$-matrices, the algorithm in this section returns  symmetric fusion categories. For  Ex.~\ref{ex:setth}, we get $\mathcal{C}=\mathrm{sRep}(D_4\ltimes Z_2^{\times 3})$ with 16 simple objects, while for Ex.~\ref{ex:setth-ext}, we get $\mathcal{C}=\mathrm{sRep}(A_4\times Z_3)$ with 12 simple objects. These categories are described more explicitly in Ref.~\cite{wang2025secret}. 
\end{example}

\section{Summary of the first part}\label{sec:summary_part1}
In this paper we developed the theoretical foundation for $R$-parastatistics.  
While part of the paper reviews and systematizes earlier results in Ref.~\cite{wang2023para}, the central goal of the present work is to 
develop a general theory of local observables beyond the earlier theory, based on which we investigate 
what aspects of $R$-paraparticles are physically observable, what aspects remain locally hidden, and how this structure leads to physical phenomena beyond ordinary fermions and bosons. 

For the foundational part, we  detailed the basics of the $R$-parastatistics theory based on the second quantization framework proposed in Ref.~\cite{wang2023para}, the construction of the Fock space, how the second quantization algebra leads to the nontrivial exclusion and exchange statistics of the $R$-paraparticles, and how the Lie algebra of the bilinear operators lead to the exact solvability of the free paraparticle Hamiltonians. This provides a solid ground on which the later extensions are formulated. We also proposed the  notion of weak equivalence between $R$-paraparticles, which will play an important role in understanding the condensed matter realizations and the relation to no-go theorems~\cite{doplicher1971local,doplicher1974local}. 

The main extension of the local observable theory proceeds in three directions. First, we introduced local observables capable of distinguishing particle types, constructed from the bilinear local operators $\hat{e}^{\kappa}_{ij}$ in Eq.~\eqref{eq:def_e_ijkappa}. They play a  fundamental role in the discussion of local indistinguishability of paraparticles~(which was not  discussed in depth in Ref.~\cite{wang2023para}), and  naturally leads to the concept of mutual parastatistics--nontrivial exchange statistics between different types of $R$-paraparticles, which can be physically observed in the exchange experiments discussed in Sec.~\ref{sec:observe_parastatistics}.  

Second, we described observables localized near special point defects that can probe otherwise hidden internal indices, thereby giving an operational meaning to the exchange $R$-matrix and an experimental scheme to observe $R$-parastatistics. 
Using this general framework, we discussed two unique physical consequences and applications of $R$-paraparticles--noise-robust secret communication and entanglement generation over  long distance. We also proposed a conjectured upper bound on the maximal capacity of information transfer and entanglement generation of a given type of $R$-paraparticles, which has important implications on the realizability of beyond-SFC $R$-paraparticles in higher dimensional condensed matter systems. 

Third, we introduced local pair-creation and pair-annihilation observables, which leads to the antiparticle theory of $R$-paraparticles. We proved a classification theorem for unitary $R$-matrices that admit local pair-creation, where important physical  concepts such as topological twist factor and Frobenius-Schur indicators emerge naturally, and connect to the topological order literature. Using this framework, we then exactly solved general bilinear paraparticle Hamiltonians with pair creation and annihilation terms, which leads to the $R$-paraparticle analog of Bogoliubov-de Gennes mean-field theory. Several familiar concepts and results, such as particle-hole symmetry, self-dual Majorana particles, and Wick's theorem for Gaussian ground states, also have their $R$-paraparticle analog.
These results will also play a fundamental role in  formulating relativistic quantum field theories of $R$-paraparticles, which we do in the second part of the series. 

Finally, we identified generalized hidden symmetries acting on the internal indices of $R$-paraparticles while preserving the local observable algebra. These symmetries, naturally described by Hopf algebraic structures, provide a systematic way to study universal topological properties of $R$-paraparticles.
Using hidden symmetry as a tool, we formulated a simple condition on the $R$-matrix, under which we prove that the associated type of $R$-paraparticle satisfies the local indistinguishability criterion we introduced back in Sec.~\ref{sec:indistcrit}. The category of representations of the Hopf algebra hidden symmetry, naturally a symmetric tensor category, describes fusion and exchange symmetry of $R$-paraparticles. Such symmetric tensor categories are more general than those encountered in 3D  topological order, in that here antiparticles do not always exist~(failure of rigidity), and there may be infinitely many simple objects.  

The formalism we developed in this paper will provide a solid ground for studying the realization of $R$-paraparticles in nature, which will be the subject of the second part of this series. Specifically, in part II, we will study emergent $R$-paraparticles in condensed matter systems in all spatial dimensions, and formulate relativistic quantum field theories of $R$-paraparticles. We will also clarify the relation to no-go theorems, and discuss several open questions related to $R$-paraparticles.

\acknowledgments
We thank Alexei Kitaev, Xiao-Gang Wen, J. Ignacio Cirac, Meng Cheng, Cenke Xu, Zhi-Hao Zhang, Liang Kong,  Yu-An Chen, Norbert Schuch, Chong Wang, Dominic Else, Sung-Sik Lee, Tao Shi,  Xiaoqi Sun, Nicol\'{a}s Medina S\'{a}nchez, and Borivoje Daki\'{c} for discussions. 
Part of this work was carried out while ZW was at the Max Planck Institute of Quantum Optics, supported by 
the Munich Center for Quantum Science and Technology~(MCQST), funded by the Deutsche Forschungsgemeinschaft~(DFG) under Germany’s Excellence Strategy~(EXC2111-390814868). 
ZW's research at Perimeter Institute is supported in part by the Government of Canada through the Department of
Innovation, Science and Economic Development and by the Province of Ontario through the Ministry of Colleges and Universities.
KH acknowledge support from the National Science Foundation~(PHY-1848304), the Office of Naval Research~(N00014-20-1-2695), and the W. M. Keck
Foundation~(Grant No. 995764). 

\appendix
\section{Involutive $R$-matrices and triangular Hopf algebras}\label{appen:other_Rmat}
\subsection{Constructing $R$-matrices from  triangular Hopf algebras}\label{app:RMatTHA} %
A common way to construct solutions to the constant Yang-Baxter equation~\eqref{eq:YBE} is through quasitriangular Hopf algebras~\cite{drinfeld1986quantum,Majid1990,Majid1995BookFoundationQG,klimyk1997book,kasselQuantumGroups1995}. 
A quasitriangular Hopf algebra $\calA$ is a Hopf algebra equipped with a universal $R$-matrix $\mathcal{R}\in \calA\otimes \calA$ satisfying the algebraic Yang-Baxter equation~(along with some other axioms we omit here)
\begin{equation}\label{eq:algebraicYBE}
\calR_{12}\calR_{13}\calR_{23}=\calR_{23}\calR_{13}\calR_{12},
\end{equation}
where we use the notation introduced in Definition~\ref{def:abstractTNindexing}, and $\calA$ is called triangular~\cite{drinfeld1986quantum,Majid1995BookFoundationQG,etingof1998THAconstruction,etingof2000semisimpleTHAclassification,TenCat_EGNO} if $\calR$ satisfies in addition
\begin{equation}\label{eq:algebraic-involutive}
\calR_{21}\calR=1\otimes 1,
\end{equation}
where $1$ is the unit of $\calA$. %
To construct an $R$-matrix, let $\zetapsi$ be a representation of $\calA$, and define
\begin{equation}\label{eq:RmatFromTHA-0}
R=X[(\zetapsi\otimes \zetapsi)\mathcal{R}],%
\end{equation}
where $X$ is the swap matrix. Then it is straightforward to verify that Eq.~\eqref{eq:algebraic-involutive} and Eq.~\eqref{eq:algebraicYBE} leads to the first and the second line of Eq.~\eqref{eq:YBE}, respectively. This method of constructing $R$-matrices is available in the Mathematica package for $R$-parastatistics, where several examples of triangular Hopf algebras are given. A closely related method to construct involutive $R$-matrices is using a special class of symmetric fusion categories, %
see, e.g., App.G of Ref.~\cite{wang2025secret}, which is equivalent to the above technique due to the relationship between triangular Hopf algebras and symmetric fusion categories~\cite{etingof2000semisimpleTHAclassification,TenCat_EGNO}. 

\subsection{Fundamental Hopf symmetry of an $R$-paraparticle system}\label{app:basichopfsymmetry}
In the following we prove Thm.~\ref{thm:basicHA} on the existence of a basic bialgebra symmetry of an $R$-paraparticle system. The basic idea is simple: we construct the basic bialgebra symmetry  $\calA$ as the algebra generated by $\{v_{pq}|1\leq p,q\leq m\}$, where we define $v_{pq}$ directly via its action on the state space, using Eq.~\eqref{eq:vpqactionstatespace},
with $\adjincludegraphics[height=3ex,valign=c]{Figures/GeneralizedSymmetry-1/kappaOproof-last}$. 
This defines the algebra $\calA$. We then use Eq.~\eqref{def:corep} to define the coalgebra structure.  The tricky part is to show that $\Delta:\calA\to\calA\otimes \calA$ and $\epsilon:\calA\to \C$ are algebra homomorphisms. We begin with the following lemma as a preparation: 
\begin{lemma}\label{lemma:algebrahomext}
Let $S$ be a finite set, $V,W$ be vector spaces~(potentially infinite dimensional), and suppose we are given maps $f:S\to \End(V)$ and $g:S\to \End(W)$. Let $\calA_f$ be the subalgebra of $\End(V)$ generated by the image of $f$,  and similarly for $\calA_g$.  
Suppose the left $\calA_f$-module $V$ and the left $\calA_g$-module $W$ have direct sum decompositions 
\begin{align}
    V\cong \bigoplus_{i\in I}V_i,\quad W\cong \bigoplus_{j\in J}W_j,
\end{align}
for some indexing sets $I$ and $J$, such that for any $j\in J$, there exists some $i\in I$ with a vector space isomorphism $\gamma: W_j\xrightarrow{\sim} V_i$ such that 
\begin{equation}\label{eq:gammafg}
\gamma [g(s).w]=f(s).\gamma(w),\quad \forall s\in S,~w\in W_j. 
\end{equation}
Then there exists a unique algebra homomorphism $\Gamma:\calA_f\to \calA_g$ satisfying
$\Gamma[f(s)]=g(s)$ for all $s\in S$.
\end{lemma}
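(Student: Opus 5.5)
Uniqueness is immediate, since $\calA_f$ is generated as an algebra by $f(S)$. Any algebra homomorphism is therefore determined by its values on generators. Existence reduces to showing that the assignment $f(s)\mapsto g(s)$ extends consistently. Concretely, let $\C\langle S\rangle$ be the free associative algebra on $S$. It comes with the two surjections $\tilde f:\C\langle S\rangle\to\calA_f$ and $\tilde g:\C\langle S\rangle\to\calA_g$ extending $f$ and $g$. The homomorphism $\Gamma$ exists iff $\ker\tilde f\subseteq\ker\tilde g$, in which case $\Gamma$ is the induced map $\calA_f\cong\C\langle S\rangle/\ker\tilde f\to\calA_g$. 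So I will prove this kernel inclusion.

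Take $x\in\ker\tilde f$, i.e. $\tilde f(x)$ acts as zero on $V$. I want $\tilde g(x)=0$ on $W$. Since $W\cong\bigoplus_j W_j$ as $\calA_g$-modules, it suffices to show $\tilde g(x)$ annihilates each summand $W_j$. Here I use that each $W_j$ is $\calA_g$-invariant, so $\tilde g(x)$ acts on it, and that every vector of $W$ is a finite sum of components. Fix $j$ and the given $i$ and $\gamma:W_j\to V_i$. The intertwining relation \eqref{eq:gammafg} holds on generators. By induction on word length and linearity it extends to $\gamma[\tilde g(x).w]=\tilde f(x).\gamma(w)$ for all $x\in\C\langle S\rangle$ and $w\in W_j$. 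The inductive step is $\gamma[g(s)g(u).w]=f(s).\gamma[g(u).w]$, which is valid because $g(u).w\in W_j$ by invariance. Now $\tilde f(x)$ is zero on $V$, hence on $V_i\subseteq V$. This gives $\gamma[\tilde g(x).w]=0$, and injectivity of $\gamma$ gives $\tilde g(x).w=0$.

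The main subtlety is bookkeeping rather than difficulty. One must read the decompositions as decompositions into submodules, so that $V_i$ and $W_j$ are invariant and the summands really are modules for $\calA_f$ and $\calA_g$ in which the generators act. One must also make sure that the isomorphism symbols $V\cong\bigoplus V_i$ are transported so that "$\tilde f(x)=0$ on $V$" implies "zero on $V_i$". I would state explicitly that we identify $V$ with the internal direct sum via the given module isomorphism. Infinite dimensionality causes no problem, since direct sums consist of finitely supported elements. Finally, multiplicativity of $\Gamma$ follows automatically from its construction as an induced quotient map, and $\Gamma[f(s)]=g(s)$ by construction.
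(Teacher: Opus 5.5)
Your proposal is correct and follows essentially the same route as the paper's proof. Both extend $f,g$ to the free algebra $\C\langle S\rangle$, upgrade the intertwining relation from generators to all of $\C\langle S\rangle$, deduce $\ker\tilde f\subseteq\ker\tilde g$, take $\Gamma$ as the induced quotient map, and get uniqueness from generation by $f(S)$. You also make explicit the steps the paper leaves implicit — invariance of the summands, the induction on word length, and injectivity of $\gamma$ — which are exactly what justify the kernel inclusion.
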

\begin{proof}
Let $\C\braket{S}$ be the free algebra generated by $S$. Then the map $f:S\to \End(V)$ uniquely extends to an algebra homomorphism $\tilde{f}:\C\braket{S} \to \End(V) $, and similarly  $g:S\to \End(V)$ uniquely extends to an algebra homomorphism $\tilde{g}:\C\braket{S} \to \End(W) $. We then have $\calA_f=\mathrm{im}(\tilde{f})\cong \C\braket{S}/\ker{(\tilde{f})}$, where $\ker{(\tilde{f})}$ is a two-sided ideal of $\C\braket{S}$ containing all relations satisfied by $\{f(s)|s\in S\}$, and similarly  $\calA_g=\mathrm{im}(\tilde{g})\cong \C\braket{S}/\ker{(\tilde{g})}$. Eq.~\eqref{eq:gammafg} is equivalent to 
\begin{equation}\label{eq:gammafg-1}
\gamma [\tilde g(a).w]=\tilde{f}(a).\gamma(w),\quad \forall a\in \C\braket{S},~w\in W_j.
\end{equation}
This then implies that $\tilde{g}(\ker(\tilde{f}))=0$, i.e., $\ker(\tilde{f})\subseteq\ker(\tilde{g})$, hence $\calA_g$ is a quotient algebra of $\calA_f$, and the canonical projection map $\Gamma:\calA_f\to \calA_g$ satisfies
$\Gamma\circ  \tilde{f}=\tilde{g}$, and the conclusion follows by restricting the relation $\Gamma\circ  \tilde{f}=\tilde{g}$ to $S$. Finally, $\Gamma$ is unique since an algebra homomorphism is uniquely determined by its action on generators. 
\end{proof}
We now use this lemma to prove Thm.~\ref{thm:basicHA}. 
\basicHA*
\begin{proof}
Without loss of generality, we assume that the paraparticle system has an infinite number of modes, i.e., the index $i$ in $\hat{\psi}_{i,a}$ takes value in an infinite set. This is a harmless assumption since hidden symmetry only acts on the internal space of paraparticles, %
and a finite paraparticle system can always be embedded as a subsystem of an infinite paraparticle system, such that any hidden symmetry of the latter becomes a hidden symmetry of the former upon restricting to the subsystem. Below we first prove the theorem for a paraparticle system with one single particle type $\psi$ with $R$-matrix $R$~(i.e., the case $|\T|=1$),  and later we generalize to systems with any number of particle types using the trick in Sec.~\ref{sec:mutual_para}, by taking $\psi$ to be the composite particle type $\Psi=\bigoplus_{\varphi\in \T}\varphi$.

We use $V$ to denote the Fock space of the system, and let  $S=\{v_{pq}|1\leq p,q\leq m\}$, where each $v_{pq}$ is an abstract symbol,  %
and we define $f:S\to\End(V)$ via Eq.~\eqref{eq:vpqactionstatespace},
with $f(v_{pq})\equiv \hat{\Theta}(v_{pq})$ and  $\adjincludegraphics[height=3ex,valign=c]{Figures/GeneralizedSymmetry-1/kappaOproof-last}$. We define $\calA\equiv \calA_f$ as the subalgebra of $\End(V)$ generated by $\mathrm{im}(f)$, as in Lemma~\ref{lemma:algebrahomext}.
We then use Eq.~\eqref{def:corep} to define  $g:S\to \End(W)$ with  $W\equiv V\otimes V$ via $g(v_{pq})=%
\sum^m_{r=1} \hat\Theta (v_{pr})\otimes \hat\Theta (v_{rq})$.

Under the action of $\hat{\Theta}({v}_{pq})$, the Fock space $V$ decomposes as
\begin{equation}\label{eq:decomposeV}
V=\bigoplus_{n\geq 0}\bigoplus_{i_1,\ldots,i_n\in I}V_{i_1,\ldots,i_n},
\end{equation}
where $I$ is the infinite indexing set of modes, and each $V_{i_1,\ldots,i_n}$ is an invariant subspace~(not necessarily irreducible), spanned by all states of the form $\ket{\psi}=\hat{\psi}_{i_1,a_1}\ldots \hat{\psi}_{i_n,a_n}\ket{0}$,  with the action of $\hat{\Theta}({v}_{pq}) $ given by Eq.~\eqref{eq:vpqactionstatespace}. 
Eq.~\eqref{eq:decomposeV} induces a decomposition of $W\equiv V\otimes V$ under the action of $g(v_{pq})$, where each invariant submodule is of the form $V_{i_1,\ldots,i_n}\otimes V_{j_1,\ldots,j_k}$.

Since $I$ is infinite, we can choose $j'_1,\ldots,j'_k\in I$ such that $
|\{j'_1,\ldots,j'_k\}|=|\{j_1,\ldots,j_k\}|$ and $\{j'_1,\ldots,j'_k\}\cap\{i_1,\ldots,i_n\}=\emptyset$. We define a linear map 
$\gamma:V_{i_1,\ldots,i_n}\otimes V_{j_1,\ldots,j_k}\xrightarrow{} V_{i_1,\ldots,i_n,j'_1,\ldots,j'_k}$ as follows: for any  
$\ket{\psi}=\hat{\psi}_{i_1,a_1}\ldots \hat{\psi}_{i_n,a_n}\ket{0}\in V_{i_1,\ldots,i_n}$ and any $\ket{\varphi}=\hat{\psi}_{j_1,b_1}\ldots \hat{\psi}_{j_k,b_k}\ket{0}\in V_{j_1,\ldots,j_k}$, we define 
$\gamma(\ket{\psi}\otimes \ket{\varphi})= \ket{\Psi}$, where $\ket{\Psi}=\hat{\psi}_{i_1,a_1}\ldots \hat{\psi}_{i_n,a_n}\hat{\psi}_{j'_1,b_1}\ldots \hat{\psi}_{j'_k,b_k}\ket{0}\in V_{i_1,\ldots,i_n,j'_1,\ldots,j'_k}$. Since $\{j'_1,\ldots,j'_k\}\cap\{i_1,\ldots,i_n\}=\emptyset$, 
from our earlier discussion on the structure of the Fock space in Sec.~\ref{sec:state_space} and exclusion statistics in Sec.~\ref{sec:exclusion_statistics_calc}, we see that $\gamma$ is a vector space isomorphism~(physically, different modes are statistically independent).

From the action of $g(v_{pq})$ on $\ket{\psi}\otimes \ket{\varphi}$ 
\begin{equation}\label{eq:Deltavpqactionstatespace}
g({v}_{pq})(\ket{\psi}\otimes \ket{\varphi})=
\adjincludegraphics[height=9ex,valign=c]{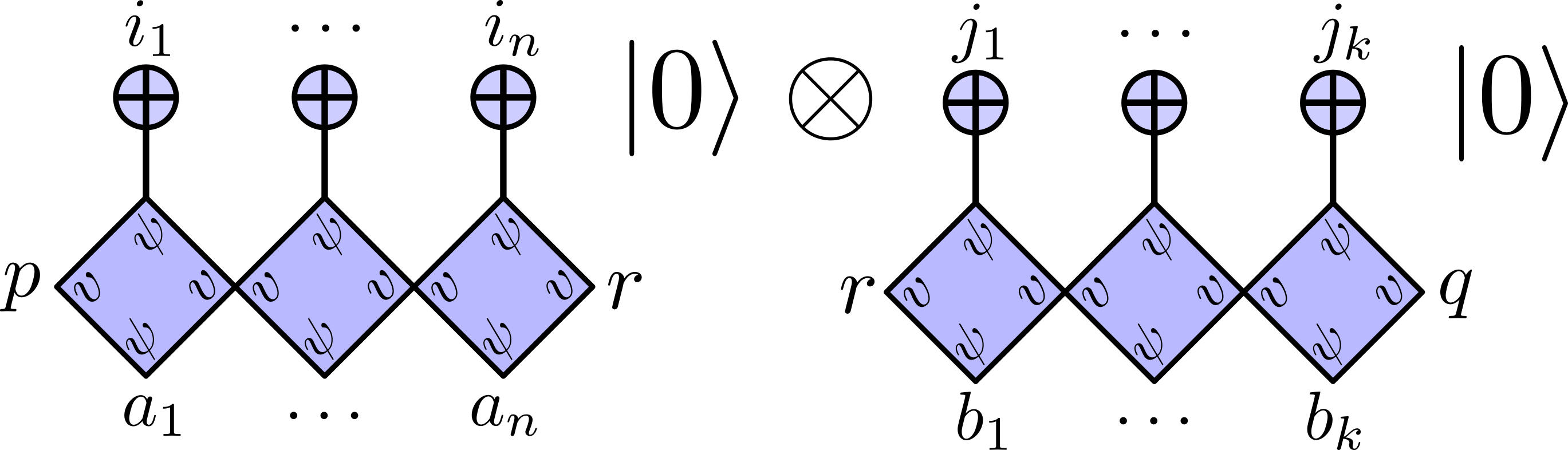},
\end{equation}
and the action of $f(v_{pq})\equiv \hat{\Theta}(v_{pq})$ on $\ket{\Psi}$ 
\begin{equation}\label{eq:vpqactionstatespace2}
\hat{\Theta}({v}_{pq})\ket{\Psi}=\adjincludegraphics[height=9ex,valign=c]{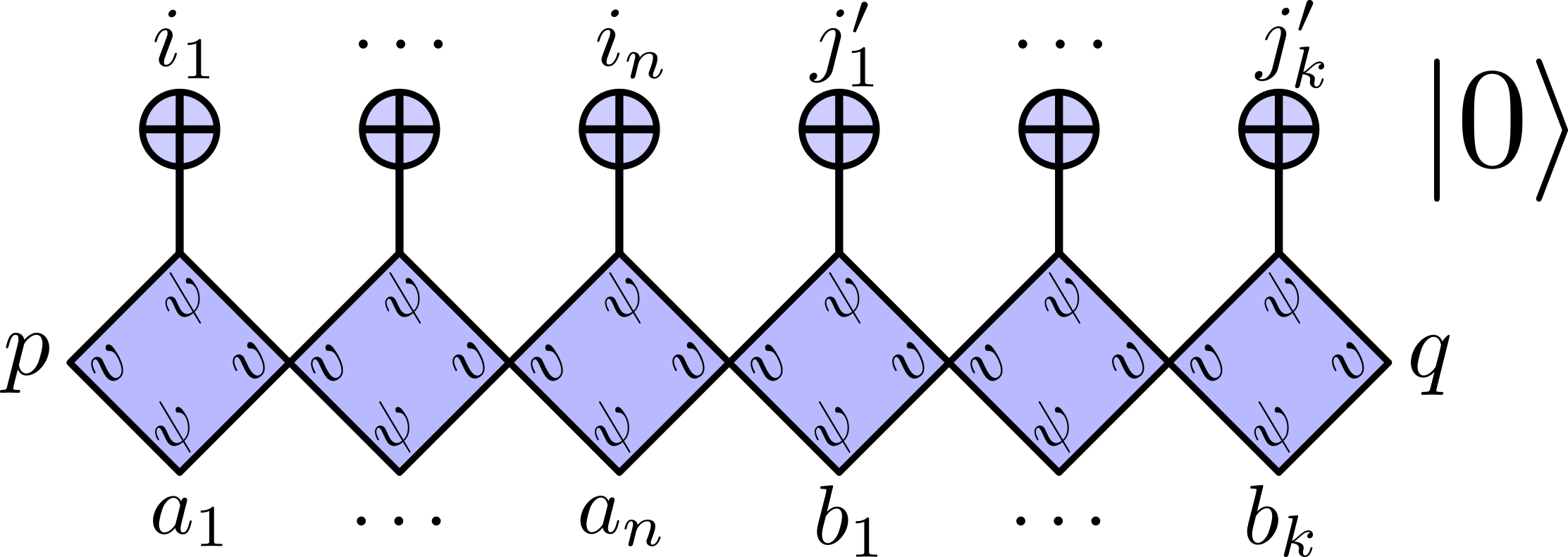},
\end{equation}
we see that $\gamma$ satisfies %
\begin{equation}
\gamma[g({v}_{pq})(\ket{\psi}\otimes \ket{\varphi})]=\hat\Theta({v}_{pq})\gamma(\ket{\psi}\otimes \ket{\varphi}),
\end{equation}
which is Eq.~\eqref{eq:gammafg} in the current situation. 
By Lemma~\ref{lemma:algebrahomext}~ (with $\mathcal{A}_f = \mathcal{A}$), there exists an algebra homomorphism $\Gamma : \mathcal{A} \to \mathcal{A}_g$ satisfying $\Gamma[\hat\Theta(v_{pq})] = g(v_{pq}) = \sum_r \hat\Theta(v_{pr}) \otimes \hat\Theta(v_{rq})$ for all $p,q$. Since every generator $g(v_{pq})$ already lies in $\mathcal{A} \otimes \mathcal{A} \subseteq \mathrm{End}(V \otimes V)$, we have $\mathcal{A}_g \subseteq \mathcal{A} \otimes \mathcal{A}$, and setting $\Delta := \Gamma$ shows that $\Delta : \mathcal{A} \to \mathcal{A} \otimes \mathcal{A}$ is a well-defined algebra homomorphism. This completes the construction of comultiplication $\Delta$. 

The counit of $\calA$ is constructed in exactly the same way: we still take $V$ in Lemma~\ref{lemma:algebrahomext} to be the Fock space, and take $f(v_{pq})=\hat\Theta(v_{pq})\in \End(V)$, but this time we take $W=\C$ and $g(v_{pq})=\delta_{pq}\in \End(W)\cong \C$. Let $V_\emptyset$ be the $n=0$ component in the RHS of Eq.~\eqref{eq:decomposeV}, a 1-dimensional subspace spanned by the vacuum $\ket{0}$, and we define a vector space isomorphism $\gamma: \C\xrightarrow{\sim} V_\emptyset $  by $\gamma(1)=\ket{0}$. Then $\gamma$ satisfies Eq.~\eqref{eq:gammafg} in Lemma~\ref{lemma:algebrahomext}, as it translates into
$\delta_{pq}\ket{0}=\hat\Theta(v_{pq})\ket{0}$, which is the $n=0$ special case of Eq.~\eqref{eq:vpqactionstatespace} that we use to define the action of $\hat\Theta(v_{pq})$. Hence the condition of Lemma~\ref{lemma:algebrahomext} is satisfied again, and we get an algebra homomorphism $\epsilon:\calA\to \C$ satisfying $\epsilon(v_{pq})=\delta_{pq}$, which we take as the counit for $\calA$. 
 
By now we get algebra homomorphisms $\Delta:\calA\to\calA\otimes \calA$ and $\epsilon:\calA\to \C$ satisfying Eq.~\eqref{def:corep}. It is then straightforward to check that $(\calA,\Delta,\epsilon)$ satisfies all axioms of a bialgebra. 

We now define the representation $\psi$ and the corepresentation $v^{\psi}$ associated to $\psi$. We set $v^{\psi}=v$ to be the defining corepresentation of $\calA$. To define $\psi$, let $V_i$ be any one particle subspace appearing in the decomposition in Eq.~\eqref{eq:decomposeV}.
Since the action of $\hat{\Theta}(v_{pq})$ on $V_i$ is independent of $i$, as we see in Eq.~\eqref{eq:vpqactionstatespace},  all $V_i$ for different $i$ are isomorphic as $\calA$-modules, and we set $\psi$ to be the isomorphism class of $\calA$-modules represented by any $V_i$.   
Then Eq.~\eqref{eq:zeta_tensor_def-basicHA}
follows from the definition of $\hat{\Theta}(v_{pq})$ and $\psi$. 
We now verify Eq.~\eqref{eq:HA-sym-R-condition}:
\begin{equation}\label{eq:HA-sym-R-condition-1}
	R[\zetapsi(z_{(1)})\otimes \zetapsi(z_{(2)})]=[\zetapsi(z_{(1)})\otimes \zetapsi(z_{(2)})]R.
\end{equation}
It is straightforward to see that if Eq.~\eqref{eq:HA-sym-R-condition-1} is satisfied for $z=z_1$ and $z=z_2$, then it is satisfied for $z=z_1 z_2$, since $(\psi\otimes \psi)\circ\Delta$ is an algebra homomorphism, so it remains to verify Eq.~\eqref{eq:HA-sym-R-condition-1} is satisfied by the generators $v_{pq}$ of $\calA$. 
But for $z=v_{pq}$, Eq.~\eqref{eq:HA-sym-R-condition-1} simply becomes the YBE for $R$~[due to Eq.~\eqref{eq:zeta_tensor_def-basicHA}]. Hence Eq.~\eqref{eq:HA-sym-R-condition-1} is satisfied for all $z\in\calA$. 
Using a similar trick, and using the definition of $\hat\Theta$ and $\psi$, we can verify Eq.~\eqref{eq:BA-adjoint-action}. 
This completes the proof of the first statement of the theorem~(bialgebra case) for the case of a single particle type $\psi$.

To generalize to a paraparticle system with any number of particle types, apply the construction above to the  composite particle type $\Psi=\bigoplus_{\varphi\in \T}\varphi$ defined in Sec.~\ref{sec:mutual_para}. Then it is straightforward to see that the representation $\Psi$ and corepresentation $v^\Psi$ decompose as a direct sum of subrepresentations: $\Psi=\bigoplus_{\varphi\in \T}\varphi$, and $v^\Psi=\bigoplus_{\varphi\in T}v^\varphi$, such that  Eq.~\eqref{eq:zeta_tensor_def-basicHA} is satisfied. Thus $\calA_\mathbf{R}=\calA$ is the hidden bialgebra symmetry we are looking for. 

Since $\calA_{\mathbf R}$ is defined via its action on the Fock space $V$, i.e., as a subalgebra of $\End(V)$, $V$ contains all indecomposable representations of $\calA_{\mathbf R}$, and from Eq.~\eqref{eq:vpqactionstatespace} it is clear that
$\Rep(\calA_\mathbf{R})$ is generated by $\Psi$~(hence also by $\T$) as a monoidal $\C$-linear additive category.  
\end{proof}
We now briefly explain the intuition why $\calA_\mathbf{R}$ can be extended to a Hopf algebra hidden symmetry if $\mathbf{R}$ is dual unitary, as we mentioned in 
Remark~\ref{rmk:DUHopf}.
Indeed, if $R$ is dual unitary, then we let $\calA$ be the algebra generated by $\{v_{pq},\bar{v}_{pq}|1\leq p,q\leq m\}$, where $v_{pq}$ is same as above, and we define $\bar{v}_{pq}$ via its action on the state space
\begin{eqnarray}\label{eq:barvpqactionstatespace}
\hat{\Theta}(\bar{v}_{pq})\ket{\psi}=\adjincludegraphics[height=9ex,valign=c]{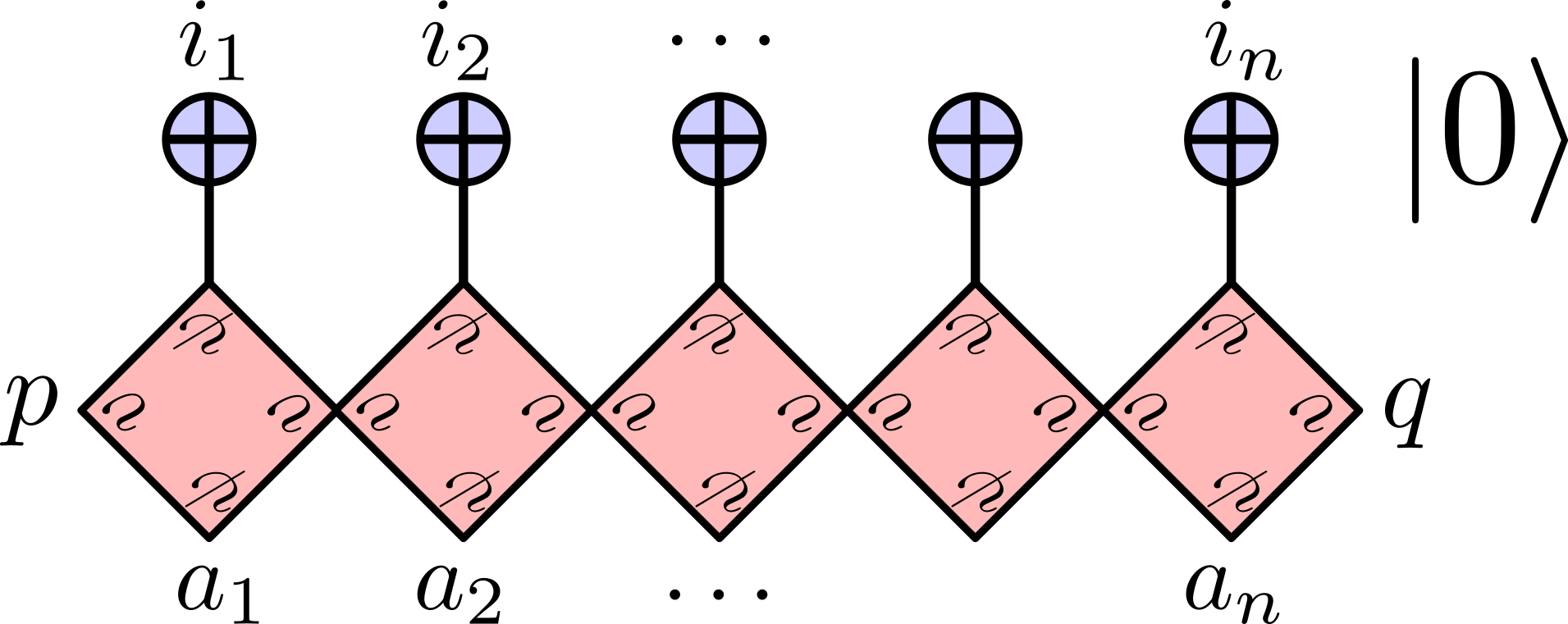},
\end{eqnarray}
with $\adjincludegraphics[height=3ex,valign=c]{Figures/GeneralizedSymmetry-1/kappaOproof-last}$ as before. 
Then by applying Lemma~\ref{lemma:algebrahomext}~(using a similar trick as above), 
we can define algebra homomorphisms $\Delta:\calA\to\calA\otimes\calA$,$\epsilon:\calA\to\C$, and antihomomorphism $S:\calA\to\calA$ via their action on the generators as follows
\begin{align}\label{eq:HAactionGen}
\Delta(v_{pq})&=\sum^d_{r=1} v_{pr}\otimes v_{rq},\quad \epsilon(v_{pq})=\delta_{pq},\nonumber\\
\Delta(\bar v_{pq})&=\sum^d_{r=1} \bar v_{pr}\otimes \bar v_{rq},\quad \epsilon(\bar v_{pq})=\delta_{pq},\nonumber\\
S(v_{pq})&=\bar{v}_{qp},\quad S(\bar{v}_{pq})=v_{qp},
\end{align}
It is then straightforward to verify that $(\calA,\Delta,\epsilon,S)$ satisfy all axioms of 
a Hopf algebra, and indeed define a hidden symmetry for the paraparticle system. Importantly, the antipode axiom
\begin{equation}
   S(z_{(1)})z_{(2)}=z_{(1)}S(z_{(2)})=\epsilon(z)1 
\end{equation}
and the assignment in Eq.~\eqref{eq:HAactionGen} forces $R$ to be dual unitary. Indeed, taking $z=v_{pq}$ in the antipode axiom 
forces Eq.~\eqref{eq:Antipode-graphical}~($R^2=1$ in the current case), while $z=\bar v_{pq}$ in the antipode axiom forces dual unitarity condition in Eq.~\eqref{eq:DUcondition}. 

\section{Classification of unitary $R$-matrices with pair creation}\label{sec:unitary-R-with-alpha}
In this appendix we prove  theorem~\ref{thm:unitary_simple_self-dualR} that reveals several important properties of unitary $R$-matrices of self-dual simple paraparticles, %
which defines their topological twist factor $\thetaR$ and Frobenius-Schur indicator $\nu$. It also shows that the Hilbert series of this class of paraparticles must be trivial. %

We first recall a classification theorem of unitary involutive $R$-matrices in terms of their Hilbert series, proved in Ref.~\cite{LECHNER2019106769}: 
\begin{theorem}{(Eq.~5.7 of Ref.~\cite{LECHNER2019106769})}\label{thm:classification_Herm_R}
	The Hilbert series of any unitary involutive $R$-matrix must be of the form
	\begin{equation}\label{eq:classification_Herm_R}
		z_R(x)=\frac{\prod^l_{j=1}(1+b_j x)}{\prod^k_{i=1}(1-a_i x)},
	\end{equation}
	where $k$ and $l$ are non-negative integers, and $\{a_i\}^k_{i=1},\{b_j\}^l_{j=1}$ are positive integers. Note that $z_R(x)$ is finite if and only if $k=0$.
\end{theorem}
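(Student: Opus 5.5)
This result is cited from Ref.~\cite{LECHNER2019106769}, so the plan below only reconstructs its argument; in the paper we would simply invoke it.

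The approach is to replace the unitary involutive $R$ by a Hermitian idempotent and classify the resulting representations of $S_n$. Since $R$ is unitary with $R^2=\mathds{1}$, it is Hermitian. Its spectral projections $P_\pm=(\mathds{1}\pm R)/2$ are therefore orthogonal, so each $\rho_{R,n}$ is a unitary representation of $S_n$ on $\mathfrak{I}^{\otimes n}$. By Proposition~\ref{prop:dn-char-rel} and its refinement to all irreducible characters, the whole family $\{\rho_{R,n}\}_n$ is encoded by the multiplicities $m_\lambda$ of the Specht modules $S^\lambda$, indexed by Young diagrams $\lambda\vdash n$. In this language $d_n$ is the multiplicity of the trivial representation, i.e.\ of the one-row diagram $\lambda=(n)$. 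The Hilbert series is then the generating function of these one-row multiplicities.

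The first key step uses locality of the representation: the $\rho_{R,n}$ are built from a single $R$ acting on neighbouring factors. From this we want the character sequence to be multiplicative. Concretely, $\chi_{R,n}$ restricted to $S_k\times S_{n-k}$ factorizes as $\chi_{R,k}\chi_{R,n-k}$. Hence the induced-character generating function $\sum_n \chi_{R,n}$, viewed as a symmetric function via the Frobenius characteristic map, is an exponential-type multiplicative object: a ``character of the infinite symmetric group'' $S_\infty$.

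The second step is to show this character is positive-definite and extremal (a factor character). The unitarity of $R$, specifically positivity of the trace on $\mathfrak{I}^{\otimes n}$ normalized by $m^n$, will be used here. One then appeals to Thoma's theorem, which classifies such characters by parameters $(\alpha_i,\beta_j)$ with $\sum\alpha_i+\sum\beta_j\le1$. In Frobenius form Thoma's theorem gives
\begin{equation*}
\sum_n h_n\text{-coefficient}\;\sim\;\prod_j\frac{1+\beta_j x}{1}\Big/\prod_i(1-\alpha_i x)\quad\text{(rescaled by } m).
\end{equation*}
Rescaling by $m$ converts $d_n$ into the specialization of $h_n$ at these Thoma parameters.

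The main obstacle is integrality. The step I expect to be hardest is showing that after rescaling, the Thoma parameters $m\alpha_i$, $m\beta_j$ are positive integers and that only finitely many are nonzero. I would try to get this from finite-dimensionality. Since $\dim \mathfrak{I}^{\otimes n}=m^n$, the multiplicities $m_\lambda$ are nonnegative integers for all $\lambda$. Write $d_\lambda=s_\lambda(m\alpha;m\beta)$ as a supersymmetric Schur function. Integrality for every hook-shaped $\lambda$, combined with the rationality of $z_R$ (polynomial growth of hook multiplicities), should then force the parameters to be integers. It should also force finitely many of them, via the Berele--Regev hook theorem: nonvanishing of $s_\lambda$ only in a $(k,l)$-hook.

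The final step translates the supersymmetric specialization of $\sum h_n x^n$ into $\prod_j(1+b_jx)/\prod_i(1-a_ix)$ with positive integers $a_i,b_j$. Finiteness of $z_R$ is then equivalent to $k=0$, since otherwise the geometric factors give infinitely many nonzero $d_n$.
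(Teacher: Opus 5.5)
The paper gives no proof of this statement; it imports it from the cited reference. Your framework is the natural one: a multiplicative, hence extremal, character of $S_\infty$, then Thoma's theorem, then multiplicities $s_\lambda(a;b)$ with $a=m\alpha$, $b=m\beta$, so that $d_n=h_n(a;b)$. But the step you flag as hardest is a genuine gap, and the mechanism you propose cannot close it, because unitarity is never actually used in your argument.

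Positive-definiteness of the normalized character does not need unitarity. Each $\rho_{R,n}$ is a finite-dimensional representation of the finite group $S_n$, hence unitarizable, so $m^{-n}\Tr\rho_{R,n}$ is positive definite for \emph{every} involutive $R$. Multiplicativity and integrality of the multiplicities are likewise insensitive to unitarity. Yet the statement is false without unitarity. For the non-unitary $R$ of Ex.~\ref{ex:1m1} with $m=3$, one has $z_R(x)=1+3x+x^2=(1+b_1x)(1+b_2x)$ with $b_{1,2}=(3\pm\sqrt5)/2$. In that example every multiplicity $s_\lambda(\emptyset;b_1,b_2)=s_{\lambda'}(b_1,b_2)$ is a non-negative integer, since $b_1+b_2=3$ and $b_1b_2=1$. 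Moreover $z_R$ is rational and the nonzero $s_\lambda$ lie in a $(0,2)$-hook. Nevertheless the rescaled Thoma parameters are irrational. In general, integer coefficients plus rationality only give, via Fatou's lemma applied to $z_R$ and $z_{-R}(x)=1/z_R(-x)$, that the $a_i,b_j$ are algebraic integers.

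What is missing is a step in which $R=R^\dagger$ forces the rescaled parameters to be rational; Fatou's lemma then makes them integers. One place where hermiticity genuinely matters is positivity of weighted traces of the symmetrizer, as in Lemma~\ref{lemma:OmegaLemma-Ralpha}. There $\Tr[\Omega^{\otimes n}P_n]=\binom{A}{n}\ge0$ for all $n$ forces $A\in\mathbb{Z}_{>0}$. This holds only because $P_n$ is an orthogonal projector; for non-unitary $R$ it is merely idempotent.

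Three smaller points:
\begin{itemize}
\item ``Polynomial growth of hook multiplicities'' does not justify rationality of $z_R$.
\item Invoking Berele--Regev to obtain finitely many parameters presupposes what you want. The $(k,l)$-hook condition is a consequence of having $k+l$ parameters, and you give no independent reason why the multiplicities of $\rho_{R,n}$ vanish outside a fixed hook. Finiteness does follow by comparing Thoma's cycle formula with $\Tr(R_{12}R_{23}\cdots R_{n-1,n})=\Tr\,\mathcal{T}^{n-1}(\mathds{1})$, where $\mathcal{T}(X)=\Tr_1[(X\otimes\mathds{1})R]$ acts on the finite-dimensional space $\End(\mathfrak{I})$.
\item You must also exclude the factor $e^{\gamma x}$ in Thoma's generating function, which your displayed formula silently drops.
\end{itemize}
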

Notice that any Hilbert series of the form in Eq.~\eqref{eq:classification_Herm_R} can be realized by the following $R$-matrix
\begin{equation}
R=(\boxplus^k_{i=1}\mathds{1}_{a_i})\boxplus [\boxplus^l_{j=1}(-\mathds{1}_{b_j})],
\end{equation}
where $\boxplus$ is the direct sum construction of $R$-matrices defined in Sec.~\ref{sec:mutual_para}. 

We now prove a few lemmas which we will use in the proof of Thm.~\ref{thm:unitary_simple_self-dualR}. 
\begin{lemma}\label{lemma:alphaunitary}
	Let $R$ be a unitary involutive $R$-matrix such that $z_R(x)$ is finite 
    and there exists an invertible $\alpha$ satisfying  Eq.~\eqref{eq:Ralpha_def-graphical}, Eq.~\eqref{eq:eta_R} and 
	$\alpha^*\alpha=\nu \mathds{1}$, where $\nu=\pm 1$. Then $\alpha$ is unitary and we have $\nu=-\eta_\alpha$ and $z_R(x)=(1+x)^{m}$ where $m$ is the quantum dimension of the paraparticle.
\end{lemma}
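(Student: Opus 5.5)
Unitarity of $\alpha$ is quick. From $\alpha^*\alpha=\nu\mathds{1}$ we get $\alpha^{-1}=\nu\alpha^*$. It therefore suffices to show $\alpha^*=\nu\alpha^\dagger$ up to normalization, i.e.\ $\alpha^T=\nu\alpha$, so that $\alpha^\dagger\alpha=\mathds{1}$. I would obtain this from the consistency of the pair operators with Hermitian conjugation. For unitary $R$ we have $\hat{\psi}^+_{i,a}=(\hat{\psi}^-_{i,a})^\dagger$. Taking the Hermitian conjugate of Eq.~\eqref{eq:Ralpha_def-graphical} (using $R^\dagger=R^{-1}=R$ in index-permuted form) shows that $\alpha^{*}$, suitably transposed, satisfies the defining relation of $\alpha'$, Eq.~\eqref{eq:Ralpha_prime}. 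If the solution space of Eq.~\eqref{eq:Ralpha_prime} is one-dimensional, this gives $\alpha'\propto\alpha^{*}$ (or $\alpha^\dagger$). Combined with $\alpha'=\alpha^{-1}=\nu\alpha^*$, that yields unitarity after fixing the scale.

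If one-dimensionality is not available at this stage, I would instead argue directly with positivity. The operator $\hat{e}^-_{ij}$ for $i\neq j$ acting on $\hat{e}^+_{ij}\ket{0}$ computes $\Tr(\alpha'\alpha^T)$-type norms. Comparing $\|\hat e^+_{ij}\ket0\|^2$ with the commutator formula in Eq.~\eqref{eq:commu_Eab_Ecd_sosp} forces $\alpha^\dagger\alpha\propto\mathds 1$ with positive constant. This is the step I expect to be the main obstacle: extracting exactly the transpose/conjugate relation without assuming simplicity of $R$.

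The sign $\nu=-\eta_\alpha$ and the Hilbert series then follow from Eq.~\eqref{eq:Tralphaalpha'}, which holds whenever $z_R(x)$ is finite. Substituting $\alpha'=\alpha^{-1}=\nu\alpha^*$ gives $\Tr(\alpha^T\alpha')=\nu\Tr(\alpha^T\alpha^*)=\nu\sum_{ab}|\alpha_{ab}|^2=\nu m$, using unitarity of $\alpha$. Eq.~\eqref{eq:Tralphaalpha'} then reads $\nu m=-\eta_\alpha n_{\max}$. Since $n_{\max}\ge 1$ (because $d_1=m\ge1$) and $m\geq1$, signs force $\nu=-\eta_\alpha$ and $n_{\max}=m$.

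Finally, by Thm.~\ref{thm:classification_Herm_R} with finite $z_R$, we have $z_R(x)=\prod_{j=1}^l(1+b_jx)$ with positive integers $b_j$. Its degree is $n_{\max}=l$. Its linear coefficient is $d_1=m=\sum_j b_j$. From $l=m=\sum_j b_j$ and $b_j\ge1$ we conclude that all $b_j=1$, hence $z_R(x)=(1+x)^m$.
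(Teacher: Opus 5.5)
The gap is the unitarity of $\alpha$, which you put first and try to derive independently; neither of your two routes establishes it.

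\emph{Your first route.} One-dimensionality of the solution space needs simplicity of $R$, which is not assumed. Even granting it, conjugation produces $\alpha^*$, not $\alpha^\dagger$. Indeed $(\hat e^-_{ij})^\dagger=\sum_{a,b}\alpha^*_{ba}\hat\psi^+_{j,b}\hat\psi^+_{i,a}$, and after relabeling $i\leftrightarrow j$ this is a local operator $\sum_{a,b}\beta_{ab}\hat\psi^+_{i,a}\hat\psi^+_{j,b}$ with $\beta=\alpha^*$, untransposed. Uniqueness would therefore only give $\alpha^*\propto\alpha'=\alpha^{-1}$, i.e.\ $\alpha^*\alpha\propto\mathds{1}$. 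That is your hypothesis (and is how the paper proves part (2) of Thm.~\ref{thm:unitary_simple_self-dualR}), not the needed $\alpha^T\propto\alpha$.

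\emph{Your fallback route.} It is tautological. With $\alpha'=\nu\alpha^*$ one has $(\hat e^+_{ij})^\dagger=-\nu\eta_\alpha\hat e^-_{ij}$. The direct norm $\|\hat e^+_{ij}\ket{0}\|^2=\Tr(\alpha'^\dagger\alpha')$ and the commutator value $\nu\Tr(\alpha'\alpha^T)$ from Eq.~\eqref{eq:commu_Eab_Ecd_sosp} both reduce identically to $\Tr(\alpha^\dagger\alpha)$. A scalar identity of this kind cannot force the matrix identity $\alpha^\dagger\alpha\propto\mathds{1}$.

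\emph{The missing idea.} Reverse the order and use an equality-case argument. Eq.~\eqref{eq:Tralphaalpha'} does not need unitarity. Substituting $\alpha'=\nu\alpha^*$ gives $\nu\Tr(\alpha^\dagger\alpha)=-\eta_\alpha n_{\max}$, so positivity alone gives $\nu=-\eta_\alpha$ and $\Tr(\alpha^\dagger\alpha)=n_{\max}$.

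Next apply Cauchy--Schwarz to the Hilbert--Schmidt pairing of $\alpha$ with $\alpha^T$. This gives $\Tr(\alpha^\dagger\alpha)\ge|\Tr(\alpha^\dagger\alpha^T)|=|\Tr(\alpha^*\alpha)|=m$. Your own classification argument gives $n_{\max}=l\le\sum_j b_j=m$. Hence $n_{\max}=m$, all $b_j=1$, and $z_R(x)=(1+x)^m$.

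Cauchy--Schwarz is then saturated, which forces $\alpha^T=\lambda\alpha$. Transposing twice gives $\lambda=\pm1$. Then $\alpha^\dagger\alpha=\lambda\nu\mathds{1}$ is positive definite, so $\lambda\nu=1$ and $\alpha$ is unitary.

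This is the paper's proof: unitarity is the output of the equality case, not an input. Your later steps (the sign, $n_{\max}=m$, and $l=m=\sum_j b_j\Rightarrow b_j=1$) are correct. However, they rest on unitarity as written; they should instead be run on $\Tr(\alpha^\dagger\alpha)=n_{\max}$ together with the Cauchy--Schwarz bound above.
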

\begin{proof}

Under the conditions of the lemma, Eq.~\eqref{eq:Tralphaalpha'} holds. 
Using $\alpha'=\alpha^{-1}=\alpha^*/\nu$, Eq.~\eqref{eq:Tralphaalpha'} becomes 
\begin{equation}\label{eq:Tralphaalphadagger}
\Tr(\alpha\alpha^\dagger)=-\nu\eta_\alpha n_\mathrm{max}. %
\end{equation}
In the following we use Eq.~\eqref{eq:Tralphaalphadagger} to prove that $n_\mathrm{max}\geq m$.
	
	Consider the matrix version of Cauchy-Schwarz inequality: for any nonzero $m\times m$ matrices $A$ and $B$, we have
	\begin{equation}\label{eq:matrixCSIneq}
		\Tr(A^\dagger A)\Tr(B^\dagger B)\geq [\Tr(A^\dagger B)]^2,
	\end{equation}
	where the equality holds if and only if $A=\lambda B$ for some $\lambda\in\C$. Taking $A=\alpha$ and $B=\alpha^T$ in Eq.~\eqref{eq:matrixCSIneq}, we obtain
	\begin{equation}\label{eq:matrixCSIneq-alpha}
		\Tr(\alpha^\dagger \alpha)\geq |\Tr(\alpha^*\alpha)|=|\nu| |\text{Tr}(\mathds{1})|=m.
	\end{equation}
	 Inserting Eq.~\eqref{eq:Tralphaalphadagger} into Eq.~\eqref{eq:matrixCSIneq-alpha}, 
     we have 
\begin{equation}
-\nu \eta_\alpha n_{\text{max}} \ge m.
\label{eq:eta-alpha-n-m}
\end{equation}
 Noting that $m$ and $n_\text{max}$ are positive, this implies $\nu \eta_\alpha<0$; therefore, since $\nu, \eta_\alpha\in\{-1,1\}$, we have $\nu \eta_\alpha=-1$. Then Eq.~\eqref{eq:eta-alpha-n-m} gives $n_{\text{max}}\geq m$.

	On the other hand, it is not hard to see that the classification theorem~(Theorem~\ref{thm:classification_Herm_R}) for unitary $R$-matrices implies that $n_\mathrm{max}\leq m$ if $z_R(x)$ is finite~\footnote{Indeed, $n_\mathrm{max}\leq m$ holds for any involutive $R$-matrix with a finite Hilbert series, even without the unitarity condition. 
    Ref.~\cite{davydov2000totally} proves that for any involutive  $R$-matrix~(or more generally, any Heckian $R$-matrix), the Hilbert series $z_R(x)$ must be a so-called totally-positive sequence. 
    Using the classification theory of totally-positive sequences~\cite{Edrei1953TotallyPositiveSequence} it is straightforward to show that $n_\mathrm{max}\leq m$ if $z_R(x)$ is finite.}~(in which case we have $k=0$), since the theorem implies that 
    $m = \sum_{j=1}^l b_j\geq l=n_{\text{max}}$. 
	Therefore, we must have $n_\mathrm{max}= m$ and $z_R(x)=(1+x)^m$. Furthermore, since the inequality in Eq.~\eqref{eq:matrixCSIneq-alpha} becomes an equality, we have $\alpha^T=\lambda\alpha$ for some $\lambda\in\C$. 
    Note that together with $\alpha^* \alpha=\nu \mathds 1$, this implies %
\begin{equation}
\alpha^\dagger \alpha= \lambda \nu\mathds{1}. \label{eq:alph-dag-alpha-1}
\end{equation}
 Since Eq.~\eqref{eq:matrixCSIneq-alpha}
 becomes an equality, we have $m=\mathrm{Tr}(\alpha^\dagger \alpha) = \lambda \nu \mathrm{Tr}(\mathds{1}) = \lambda \nu m$. Hence $\lambda\nu=1$ and Eq.~\eqref{eq:alph-dag-alpha-1} says $\alpha$ is unitary.
\end{proof}

\begin{lemma}\label{lemma:OmegaLemma-Ralpha}
	Let $R$ be a unitary involutive $R$-matrix. Assume that there exists a positive semi-definite $m\times m$ matrix $\Omega\geq 0$ 
    satisfying the following conditions: \\
	(1)  $(\Omega\otimes \Omega)  R=R(\Omega\otimes \Omega) $;\\
	(2)  $\Tr_2[(\mathds{1}_m\otimes \Omega)  R]=-\mathds{1}_m$,
	where $\Tr_2$ takes the partial trace over the second tensor factor. \\
	Then $A\equiv\Tr[\Omega]$ %
    is a positive integer and $z_R(x)$ is a finite polynomial with degree $n_\mathrm{max}=A$.
\end{lemma}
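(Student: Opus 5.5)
The plan is to replace the ordinary character in Proposition~\ref{prop:dn-char-rel} by an $\Omega$-weighted character, and to use conditions (1) and (2) to evaluate it exactly as a Markov-type trace. For each $n$, let $\rho_{R,n}$ be the $S_n$ representation on $\mathfrak{I}^{\otimes n}$ and $P_n=\frac{1}{n!}\sum_{\tau\in S_n}\rho_{R,n}(\tau)$ the projector of Eq.~\eqref{eq:PnProjTrivialSubrep-main}. Condition (1) says $\Omega^{\otimes 2}$ commutes with $R$, hence $\Omega^{\otimes n}$ commutes with every $R_{j,j+1}$ and so with every $\rho_{R,n}(\tau)$ and with $P_n$. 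I will define the weighted dimension
\begin{equation*}
D_n\equiv \Tr\big[\Omega^{\otimes n}P_n\big]=\frac{1}{n!}\sum_{\tau\in S_n}\Tr\big[\Omega^{\otimes n}\rho_{R,n}(\tau)\big].
\end{equation*}
Since $P_n$ is a projector commuting with $\Omega^{\otimes n}\geq 0$, we have $D_n=\Tr[P_n\Omega^{\otimes n}P_n]\geq 0$.

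The first key step is to compute $\Tr[\Omega^{\otimes n}\rho_{R,n}(\tau)]$. Because the weighted trace is invariant under conjugation by $\rho_{R,n}(\sigma)$ (using (1)), it depends only on the cycle type of $\tau$. For a product of disjoint cycles acting on disjoint tensor factors it factorizes. For a single $k$-cycle, represented by $R_{12}R_{23}\cdots R_{k-1,k}$, I will peel off the last factor: taking $\Tr_k$ with weight $\Omega$ of $R_{k-1,k}$ gives $-\mathds{1}$ by condition (2). Induction then yields $(-1)^{k-1}\Tr\Omega=(-1)^{k-1}A$. Hence $\Tr[\Omega^{\otimes n}\rho_{R,n}(\tau)]=\mathrm{sgn}(\tau)A^{c(\tau)}$, where $c(\tau)$ is the number of cycles. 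The standard identity $\sum_\tau \mathrm{sgn}(\tau)A^{c(\tau)}=A(A-1)\cdots(A-n+1)$ then gives $D_n=\binom{A}{n}$.

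Positivity finishes the integrality. Since $D_n\geq0$ for all $n$, and $\binom{A}{n}$ is eventually of alternating sign unless $A$ is a nonnegative integer, $A$ must be a nonnegative integer. Moreover $A=\Tr\Omega=0$ would force $\Omega=0$, which contradicts (2). So $A\in\mathbb{Z}_{>0}$. Then $D_A=1>0$ gives $d_A\geq1$, and $D_n=0$ for $n>A$.

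The main obstacle is turning $D_n=0$ into $d_n=0$, i.e.\ $P_n=0$. This is immediate if $\Omega$ is strictly positive, since then $\Tr[P_n\Omega^{\otimes n}P_n]=0$ forces $P_n=0$. So I will first try to prove that $\Omega$ is invertible. Suppose $\ker\Omega\neq0$ and let $V_0=\mathrm{range}\,\Omega$ and $V_1=\ker\Omega$. By (1) and unitarity of $R$, the subspace $\ker(\Omega\otimes\Omega)$ and its orthogonal complement $V_0\otimes V_0$ are $R$-invariant. I would then evaluate condition (2) on a vector of $V_1$, expanding $\Tr_2[(\mathds{1}\otimes\Omega)R]$ over an eigenbasis of $\Omega$ restricted to $V_0$. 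The aim is to show that $R$ restricted to $V_1\otimes V_0$ would have to map into $V_0\otimes V_1$ with a vanishing weighted partial trace, contradicting $-\mathds{1}$ on $V_1$.

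If this direct route fails, the fallback is to appeal to Theorem~\ref{thm:classification_Herm_R}. The argument above applied to $V_0$, where $R$ restricts to a unitary involutive $R$-matrix, shows that its trivial isotypic parts vanish beyond degree $A$. Combining this with the total-positivity structure of $z_R$ from Eq.~\eqref{eq:classification_Herm_R} would then exclude an infinite factor $\prod_i(1-a_ix)^{-1}$ and pin the degree at $n_\mathrm{max}=A$.
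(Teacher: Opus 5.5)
Your first three paragraphs follow the paper's proof exactly. The paper also uses (1) to make $\Tr[\Omega^{\otimes n}\rho_{R,n}(\sigma)]$ a class function, evaluates cycles with (2), gets $\Tr[\Omega^{\otimes n}P_n]=\binom{A}{n}$, and derives integrality from positivity. At the step you flag, the paper only asserts ``This then implies that $P_n=0$ for any $n\geq A+1$.'' That is immediate only when $\Omega>0$, which is the case where the lemma is used ($\Omega=\alpha^\dagger\alpha$ with $\alpha$ invertible). You are right that $D_n=0$ by itself gives only $\Omega^{\otimes n}P_n=0$.

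\textbf{Neither of your remedies closes this gap.} First, (1) and (2) do not force $\Omega$ to be invertible. Take $R=-\mathds{1}_{m^2}$ with $m\ge 2$ and $\Omega=\mathrm{diag}(1,0,\dots,0)$. Then (1) holds trivially, and $\Tr_2[(\mathds 1_m\otimes\Omega)R]=-\Tr(\Omega)\mathds 1_m=-\mathds 1_m$. Moreover $R$ maps $\ker\Omega\otimes\mathrm{range}\,\Omega$ to itself, so the contradiction you aim for never appears. Second, the fallback uses only (1), condition (2) restricted to $V_0=\mathrm{range}\,\Omega$, and Theorem~\ref{thm:classification_Herm_R}, and these cannot suffice. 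On $\C^2$ take $R\ket{11}=-\ket{11}$, $R\ket{22}=\ket{22}$, $R\ket{12}=\ket{21}$, $R\ket{21}=\ket{12}$ (a fermion and a boson with trivial mutual statistics), with $\Omega=\mathrm{diag}(1,0)$. Then (1) holds and $R|_{V_0\otimes V_0}=-1$ satisfies (2) with $A=1$, yet $z_R(x)=(1+x)/(1-x)$ is infinite. What rules this example out is (2) on $\ker\Omega$: here $\Tr_2[(\mathds 1\otimes\Omega)R]=\mathrm{diag}(-1,0)$. Your argument never uses that part of (2).

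\textbf{How to close it: keep the weighted partial trace operator-valued.} Use the coset representatives $c_n=e$ and $c_j=s_js_{j+1}\cdots s_{n-1}$ for $j<n$ of $S_n/S_{n-1}$. Then
\[
P_n=\tfrac1n\sum_{j=1}^n\rho_{R,n}(c_j)\,(P_{n-1}\otimes\mathds 1),\qquad \rho_{R,n}(c_j)=R_{j,j+1}\cdots R_{n-1,n}.
\]
The weight $\Omega$ on the $n$-th factor meets only $R_{n-1,n}$. So (2) gives $\Tr_n[(\mathds 1^{\otimes(n-1)}\otimes\Omega)\rho_{R,n}(c_j)]=-\rho_{R,n-1}(s_j\cdots s_{n-2})$ for $j<n$, and each such term is absorbed by $P_{n-1}$. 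Hence
\[
\Tr_n\!\left[(\mathds 1^{\otimes(n-1)}\otimes\Omega)\,P_n\right]=\frac{A-n+1}{n}\,P_{n-1}.
\]
The left side equals $\Tr_n[(\mathds 1\otimes\Omega^{1/2})P_n(\mathds 1\otimes\Omega^{1/2})]\ge 0$, and $P_{n-1}\ge 0$. Therefore $P_{n-1}=0$ whenever $n-1>A$, with no invertibility needed.

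Iterating the identity recovers $\Tr[\Omega^{\otimes k}P_k]=\binom{A}{k}$. This must vanish at $k=\lfloor A\rfloor+1$, which forces $A\in\mathbb{Z}_{>0}$ (since $\Omega\neq 0$). Finally $\binom{A}{A}=1$ gives $P_A\neq0$, so $n_{\max}=A$. This version of the argument does not use condition (1) at all.
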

\begin{proof}
	Let $\rho_{R,n}$ denote the representation of the symmetric group $S_n$ generated by the $R$-matrix, as defined in Sec.~\ref{sec:YBE}. %
	Define a function $\chi_{\Omega,n}:S_n\to \mathbb{R}$ as follows
	\begin{equation}
		\chi_{\Omega,n}(\sigma):=\Tr[\Omega^{\otimes n} \rho_{R,n}(\sigma)],\quad \forall\sigma\in S_n.
	\end{equation}
	The condition (1) of the Lemma implies that $\chi_{\Omega,n}$ is a class function on the group $S_n$, i.e., $\chi_{\Omega,n}$ is constant on the conjugacy class of $S_n$: $\chi_{\Omega,n}(\tau\sigma\tau^{-1})=\chi_{\Omega,n}(\sigma)$ for all $\sigma,\tau\in S_n$. We now apply  the basic fact of the symmetric group $S_n$ that each element $\sigma$ of $S_n$ 
	is conjugate to a product of disjoint cycles in which each cycle acts on a consecutive subsequence of $1,2,\ldots,n$. (This is the usual cycle decomposition of a permutation, followed by relabeling the elements which gives a conjugate permutation.) 
    Using condition (2) above, we can evaluate $\Tr[\Omega^{\otimes n} \rho_{R,n}(\sigma)]$ for such a product of disjoint cycles. For example, for $\sigma=(123\ldots n)$, we have
	$\rho_{R,n}(\sigma)=R_{12}R_{23}\ldots R_{n-1,n}$, and therefore 
    \begin{eqnarray}
    \chi_{\Omega,n}(\sigma)&=&
    \Tr[\Omega^{\otimes n} R_{12}R_{23}\ldots R_{n-1,n}]\nonumber\\
    &=&
    -\Tr[\Omega^{\otimes (n-1)} R_{12}R_{23}\ldots R_{n-2,n-1}]\nonumber\\
    &=&\ldots\nonumber\\
    &=&(-1)^{n-2}\Tr[(\Omega\otimes \Omega) R_{12}]\nonumber\\
    &=&(-1)^{n-1}\Tr[\Omega]\nonumber\\
    &=&(-1)^{n-1}A,
    \end{eqnarray}
    where we have used the above condition (2) for $n-1$ times. 
    More generally, we have
	\begin{equation}
		\chi_{\Omega,n}(\sigma)=(-1)^n (-A)^{l(\sigma)},\quad\forall \sigma\in S_n,
	\end{equation}
	where $l(\sigma)$ is the total number of disjoint cycles in $\sigma$. 
	
	Now consider the projection operator to the trivial subrepresentation of $\rho_{R,n}$:
\begin{equation}\label{eq:PnProjTrivialSubrep}
    P_n=\frac{1}{n!}\sum_{\sigma\in S_n}\rho_{R,n}(\sigma),
\end{equation}
introduced in the proof of Proposition~\ref{prop:dn-char-rel}, which relates to the Hilbert series via $d_n=\Tr[P_n]$. 
We have 
	\begin{eqnarray}\label{eq:OmegaPnderivation}
		\Tr[\Omega^{\otimes n} P_n]&=&\frac{1}{n!}\sum_{\sigma\in S_n} \chi_{\Omega,n}(\sigma)\nonumber\\
		&=&\frac{1}{n!}\sum_{\sigma\in S_n} (-1)^n (-A)^{l(\sigma)}\nonumber\\
		&=&\frac{1}{n!}A(A-1)\ldots(A-n+1),
	\end{eqnarray}
	where in the last step we used the identity~\footnote{
   This identity can be proved by evaluating $\Tr[\Omega^{\otimes n} P_n]$ for the trivial $R$-matrix $R=-X$
   ~(with $\Omega=\mathds{1}_m$ and $A=m$) in two different ways. For convenience, we denote $g(x)\equiv x(x+1)\ldots (x+n-1)$. On one side, the first two lines of Eq.~\eqref{eq:OmegaPnderivation} show that $\Tr[\Omega^{\otimes n} P_n]=(-1)^n f(-m)/n!$. On the other side, with $\Omega=\mathds{1}_m$, $\Tr[\Omega^{\otimes n} P_n]=\Tr[P_n]=\mathrm{rank}[P_n]$ 
		can be directly computed for this trivial $R$-matrix, and is equal to the dimensional of the totally anti-symmetric subspace of $\mathfrak{I}^{\otimes n}$ for an $m$-dimensional vector space $\mathfrak{I}$, which is $\binom{m}{n}=m(m-1)\ldots(m-n+1)/n!=(-1)^n g(-m)/n!$. Therefore $f(-m)=g(-m)$ for all integers $m\geq 1$. Since both $f(x)$ and $g(x)$ are polynomials of degree $n$, we conclude that $f(x)=g(x)$. }
	\begin{equation}\label{eq:SnSumRule}
		f(x)\equiv\sum_{\sigma\in S_n} x^{l(\sigma)}=x(x+1)\ldots (x+n-1).
	\end{equation}
	Since both $\Omega$ and $P_n$ are positive semi-definite, we must have $\Tr[\Omega^{\otimes n} P_n]\geq 0$ 
	for any integer $n\geq 2$. 
    If $A$ is not an integer, then the RHS of Eq.~\eqref{eq:OmegaPnderivation} would be negative for some $n>A+1$; therefore $A$ must be a positive integer~(note that we already have $A>0$). %
    This then implies that $P_n=0$ for any $n\geq A+1$, meaning that $\rho_{R,n}$ does not contain the trivial representation of $S_n$ for $n\geq A+1$. 
    Meanwhile, we have $P_A\neq 0$, which follows from the $n=A$ case of Eq.~\eqref{eq:OmegaPnderivation}. 
    Therefore $z_R(x)$ is a finite polynomial with degree $n_\mathrm{max}=A$. 
\end{proof}

\begin{lemma}\label{lemma:alpha_unique_inv}
	Let $R$ be a simple involutive $R$-matrix. If there exists a nonzero $\alpha$ satisfying Eq.~\eqref{eq:Ralpha_def-graphical}, then it must be unique~(up to a multiplicative constant) and invertible.
\end{lemma}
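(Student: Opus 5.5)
The plan is to translate Eq.~\eqref{eq:Ralpha_def-graphical} into an intertwining statement and then apply Schur's lemma to the simple algebras $\Dsw(R)=\Dse(R)=M_m(\C)$. The fourth and fifth lines of Fig.~\ref{fig:Ralphaeqn} show that $\alpha$ transports $R$ to the rotated tensor $\Rmm$ (equivalently $\Rmp$). Reading these lines with the open indices in the $\swarrow$ direction, I expect that for every pair of external indices $(a,d)$ they take the form
\begin{equation*}
\alpha\, R^{(ad)}_{\swarrow}=\big(R^{(a'd')}_{\searrow}\big)^{T}\alpha
\end{equation*}
or an equivalent relation: a matrix built from the $\swarrow$-slices of $R$, multiplied on one side by $\alpha$, equals the transpose of a corresponding $\searrow$-slice multiplied on the other side. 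The first task is to fix this bookkeeping carefully from the graphical definition. The target is a precise statement that $\alpha$ intertwines the defining representation of $\Dsw(R)$ with some representation of $\Dsw(R)$ obtained from the transposed slices.

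With that in hand, invertibility follows from a kernel argument. Let $K=\ker\alpha\subseteq V$. If the intertwining relation has the form $\alpha X=Y\alpha$, with $X$ running over the generators of $\Dsw(R)$, then $\alpha X v=Y\alpha v=0$ for $v\in K$, so $K$ is invariant under every generator $R^{(ad)}_{\swarrow}$ and hence under $\Dsw(R)$. Because $R$ is simple, $\Dsw(R)=M_m(\C)$ acts irreducibly on $V$, so $K=0$ or $K=V$. Since $\alpha\neq0$, $K=0$ and $\alpha$ is invertible. If the bookkeeping instead places $\alpha$ on the other side, I would use the image of $\alpha$, or the kernel of $\alpha^T$, and run the same irreducibility argument with $\Dse(R)$ in place of $\Dsw(R)$.

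Uniqueness comes from comparing two solutions. Let $\alpha_1$ and $\alpha_2$ be nonzero solutions. Both are invertible by the previous step. The matrix $\kappa=\alpha_1^{-1}\alpha_2$ satisfies $\kappa X=X\kappa$ for all generators $X$, because the twisted side of the relation cancels, since both solutions twist by the same $Y$. Therefore $\kappa$ lies in the commutant of $\Dsw(R)$, which is $\DC[R]=\C\mathds 1$ by simplicity. Hence $\alpha_2=c\,\alpha_1$.

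The main obstacle is the first step, extracting a clean intertwining relation from the graphical equation. The difficulty is that $\alpha$ bends a leg, so the equation mixes the $\swarrow$ and $\searrow$ slicings. One must check that the $Y$ side really defines a representation of the same generators, or at least that it is independent of which solution $\alpha$ is used, and that the kernel is preserved. I would first try the third line of Fig.~\ref{fig:Ralphaeqn} together with $R^2=\mathds1$, to put all occurrences of $\alpha$ on a single external leg before slicing.
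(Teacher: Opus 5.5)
Your proposal is correct and is essentially the paper's proof: $\ker\alpha$ is invariant under a slice algebra of $R$, so simplicity forces $\alpha$ to be invertible, and for two solutions $\alpha_1^{-1}\alpha_2$ commutes with that algebra and hence lies in $\DC[R]=\C\mathds{1}$. On the bookkeeping you flagged: via $R^2=\mathds{1}$ the locality condition is equivalent to $\sum_a\alpha_{xa}R^{ag}_{yh}=\sum_f R^{gf}_{xy}\alpha_{fh}$, i.e.\ $\alpha\,R^{(gy)}_{\searrow}=\bigl(R^{(gy)}_{\swarrow}\bigr)^{T}\alpha$, where the twisting factor depends only on $R$, so it is your $\Dse$ variant that applies, exactly as in the paper.
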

\begin{proof}
	Take any $w\in \mathrm{Ker}(\alpha)$, the kernel of the matrix $\alpha$. From Eq.~\eqref{eq:Ralpha_def-graphical} we obtain
	$0=\adjincludegraphics[height=4ex,valign=c]{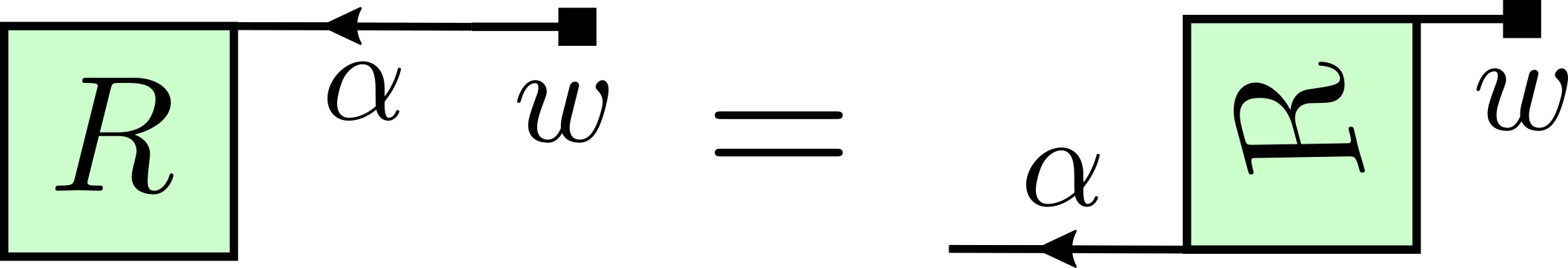}$, which means that $\mathrm{Ker}(\alpha)$ is an invariant subspace of $\Dse[R]$. %
	But since we assume that $R$ is simple, meaning that $\Dse[R]$ is simple as an algebra,  $\Dse[R]$ has no proper invariant subspace. 
    Since we assume $\alpha\neq 0$, it follows that $\text{Ker}(\alpha)=0$, 
    i.e., $\alpha$ is invertible. Furthermore, for any two invertible $\alpha_1,\alpha_2$ satisfying Eq.~\eqref{eq:Ralpha_def-graphical}, a straightforward tensor network manipulation shows that $\alpha_2^{-1}\alpha_1\in \DC[R]$. Since $R$ is simple, we have $\DC[R]\cong \C$, leading to $\alpha_2=\lambda\alpha_1$ for some $\lambda\in \C$, i.e., $\alpha$ is unique up to a multiplicative constant. 
\end{proof}

\begin{remark}\label{rmk:example-noninvalpha}
Here is an example of a non-simple $R$-matrix where a nonzero $\alpha$ satisfying Eq.~\eqref{eq:Ralpha_def-graphical} exists, but is not invertible. Consider the $R$-matrix $R=R_1\boxplus R_3$, the direct sum of $R_1$ and $R_3$ with trivial mutual statistics, as defined in Sec.~\ref{sec:mutual_para}, %
where $R_1$ and $R_3$ are taken from Ex.~\ref{ex:decoupled} and Ex.~\ref{ex:1m} of Tab.~\ref{tab:Hilbert_series}, with quantum dimensions $m_1$ and $m_3$, respectively. One can check that for this $R$-matrix, $\alpha$ must have the form $\alpha=i_1 \alpha_1 p_1$, where $i_1$ and $p_1$ are the inclusion map and projector for the subobject $R_1$, respectively,  and $\alpha_1$ is an arbitrary $m_1\times m_1$ matrix. Such an $\alpha$ is clearly not invertible. 

\end{remark}

\unitarysimpleselfdualR*
\begin{proof}
	(1). This has already been proved in Lemma~\ref{lemma:alpha_unique_inv}. \\
	(2). Taking the complex conjugate of Eq.~\eqref{eq:Ralpha_def-graphical} and using the fact that $R$ is unitary and involutive -- which gives $ R^\dagger = R$ and therefore $R^*=R^T$ -- we have
	\begin{equation}\label{eq:unit-inv-alpha-R}
		\adjincludegraphics[height=4ex,valign=c]{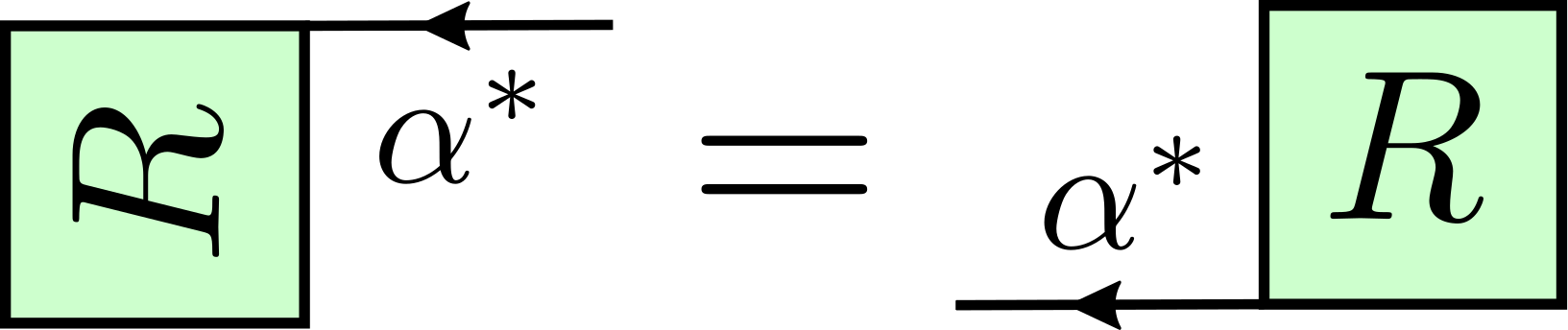}.
	\end{equation}
	Therefore
	\begin{equation}\label{eq:derivalphaalpha}
		\adjincludegraphics[height=4ex,valign=c]{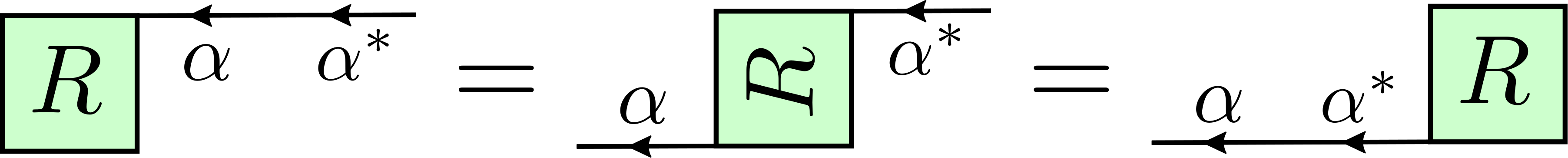},
	\end{equation}
	where the first equality follows from the relation in Fig.~\ref{fig:Ralphaeqn}, and the second follows from Eq.~\eqref{eq:unit-inv-alpha-R}. Eq.~\eqref{eq:derivalphaalpha} says that $\alpha^*\alpha\in \DC[R]\cong \C$, leading to $\alpha^*\alpha=\nu \mathds{1}$ for some $\nu\in \C$. 
    From $\nu^* \mathds{1}=\alpha\alpha^*=\alpha^*\alpha=\nu\mathds{1}$~(recall that every finite invertible matrix commutes with its inverse), 
    we see that $\nu\in\mathbb{R}$. 
    We can then normalize $\alpha$ such that $|\nu|=1$, i.e., $\nu=\pm 1$. \\
	(3). Using the relations in Fig.~\ref{fig:Ralphaeqn} along with $\alpha'=\nu\alpha^*$~[which follows from (2)], we have 
	\begin{equation}\label{eq:Tr1aaR}
		\adjincludegraphics[height=4ex,valign=c]{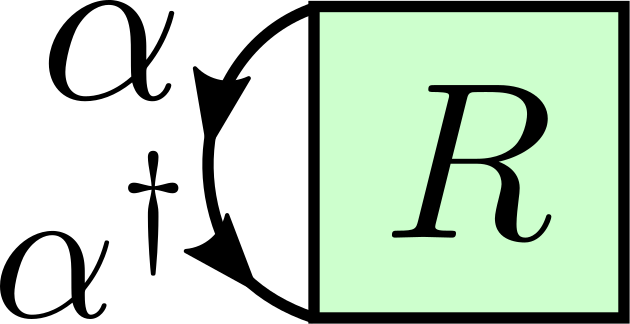}
        =
        \adjincludegraphics[height=4ex,valign=c]{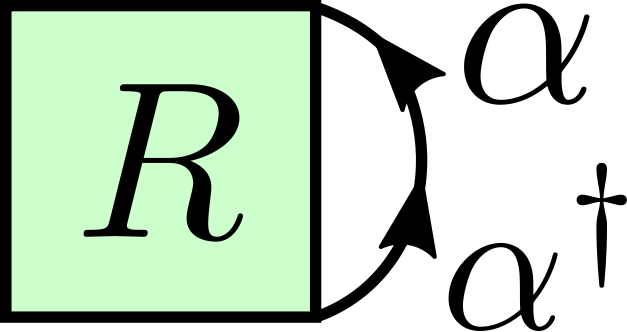}=\nu\eta_\alpha\mathds{1},	
	\end{equation}
	where $\nu\eta_\alpha=\pm 1$. Below we first prove the theorem for the case $\nu\eta_\alpha=-1$, and then apply the same argument to $-R$ to prove the other case. 
	
	If $\nu\eta_\alpha=-1$, then we apply Lemma~\ref{lemma:OmegaLemma-Ralpha} with $\Omega=\alpha^\dagger\alpha$.  Eq.~\eqref{eq:Tr1aaR} implies  condition (2) of the Lemma, and condition (1) is proved as follows
	\begin{equation}
		\adjincludegraphics[height=9ex,valign=c]{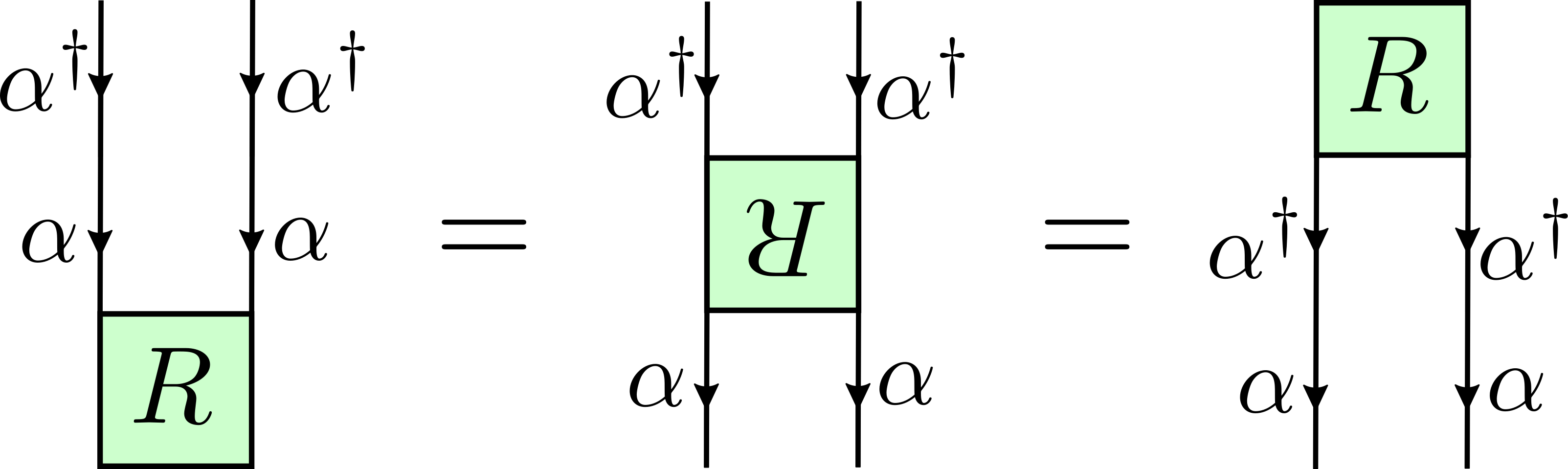},
	\end{equation}
	where we used the relations in Fig.~\ref{fig:Ralphaeqn} and the unitarity of $R$. Therefore Lemma~\ref{lemma:OmegaLemma-Ralpha} applies, which shows that $z_R(x)$ is finite. It follows that all the conditions of Lemma~\ref{lemma:alphaunitary} are satisfied, which implies that $\alpha$ is unitary. Combining with 
	$\alpha^*\alpha=\nu \mathds{1}$, we have $\alpha^T=\nu \alpha$. \\
	(4).	Since both $R$ and $\alpha$ are unitary, Eq.~\eqref{eq:Ralpha_def-graphical} implies that $\Rmp$ is also unitary, i.e. $R$ is a dual-unitary tensor.\\
	(5). The first part follows from Eq.~\eqref{eq:Tr1aaR} and the unitarity of $\alpha$, while the second part follows from Lemma~\ref{lemma:alphaunitary}. 

	At this point, we have proved  Theorem~\eqref{thm:unitary_simple_self-dualR} for the case $\nu\eta_\alpha=-1$. If we have $\nu\eta_\alpha=+1$ instead, we can apply the same arguments above to the $R$-matrix $-R$, which also satisfies all the conditions of the theorem, but for which $\nu\eta_\alpha=-1$. This proves (3)-(5) of the theorem for the case $\nu\eta_\alpha=+1$, except the last claim on $z_R(x)$. But since $z_{-R}(x)=(1+x)^m$, Eq.~\eqref{eq:hRh_mRrelation} shows that $z_{R}(x)=(1-x)^{-m}$, thereby completing the proof.
\end{proof}

\section{Comparison between $R$-parastatistics and other known types of particle statistics}\label{sec:relation_to_others}
In  this appendix, we briefly mention previous theories of parastatistics~(in particular, Green's theory~\cite{Green1952}) and other known types of particle statistics, and explain their relation and difference to  $R$-parastatistics. %

\subsection{Green's theory of parastatistics}\label{app:Green-theory}
Green's  parastatistics~\cite{Green1952,Araki1961,Greenberg1965,LANDSHOFF196772,druhl1970parastatistics,Taylor1970b} is defined by a set of \textit{trilinear} CRs~(rather than \textit{bilinear} CRs used to define $R$-parastatistics) between paraparticle creation and annihilation operators 
\begin{equation}\label{eq:CRGreen}
	\begin{aligned}
		& {\left[\left[\hat{\psi}^{\dagger}_i, \hat{\psi}_j\right]_{ \pm}, \hat{\psi}_l\right]_-=-2 \delta_{i l} \hat{\psi}_j,} \\
		& {\left[\left[\hat{\psi}_i, \hat{\psi}_j\right]_{ \pm}, \hat{\psi}_l\right]_-=0,}
	\end{aligned}
\end{equation}
where $[\hat{A},\hat{B}]_\pm=\hat{A}\hat{B}\pm \hat{B}\hat{A}$, and $i,j,l$ are mode indices. This theory is also well defined in any spatial dimension.
Unlike $R$-parastatistics, the correspondence between first and second quantization for Green's parastatistics is less direct. Ref.~\cite{Taylor1970b} showed that Green's paraparticles also realize higher dimensional representations of %
$S_N$ in the sense of Eq.~\eqref{eq:wavefuntion_exchange_para}, %
but the derivation is much longer than the $R$-parastatistics case presented in Sec.~\ref{sec:relation_first_quantization}. %

Green's parastatistics and $R$-parastatistics have a small overlap, %
which we explain below. To define the Fock space for Green's paraparticles, Green introduced the following decomposition~\cite{Green1952}
\begin{eqnarray}\label{eq:GreenDecomposition}
\hat{\psi}^{\dagger}_j=\sum^m_{a=1}\hat{\psi}^{+}_{j,a},\quad %
\hat{\psi}^{\phantom{\dagger}}_j=\sum^m_{a=1}\hat{\psi}^{-}_{j,a},
\end{eqnarray}
where $m\in\Z_{\geq 1}$ is called the order of the paraparticle~\cite{Green1952},  $\{\hat{\psi}^{\pm}_{j,a}\}$ satisfy the CRs in Eq.~\eqref{eq:fundamental_Rcommu} with
$R^{ab}_{cd}=\pm\delta_{ad}\delta_{bc}(-1)^{\delta_{ab}}$~(Ex.~\ref{ex:Green}), and one can verify that this decomposition %
satisfies Eq.~\eqref{eq:CRGreen}. 
Thus, Green's parastatistics  is equivalent to  $R$-parastatistics in Ex.~\ref{ex:Green} after the decomposition in Eq.~\ref{eq:GreenDecomposition}. The main technical subtlety here is that the decomposition in  %
Eq.~\eqref{eq:GreenDecomposition}
generally leads to a \textit{reducible} representation of Green's trilinear algebra in Eq.~\eqref{eq:CRGreen}, %
even if one uses an irrep of 
$\{\hat{\psi}^{\pm}_{j,a}\}$. 
To find irreps of Green's trilinear field algebra~\eqref{eq:CRGreen}, one needs to find subspaces of the Fock space in Sec.~\ref{sec:state_space} that are left invariant by all $\{\hat{\psi}^{\phantom{\dagger}}_j,\hat{\psi}^{\dagger}_j\}$, which is mathematically hard~\cite{Green1952,Araki1961,Greenberg1965,LANDSHOFF196772,druhl1970parastatistics,Stoilova2020}.  
Consequently, the exclusion statistics of Green's paraparticles is hard to compute and is still not fully solved to date~\cite{Stoilova2020}. 
In particular, unlike $R$-parastatistics, the grand partition function of a system of Green's paraparticles  generally does not factorize as a product of single mode partition functions~\cite{Stoilova2020}, making it difficult to study thermodynamics. 

We now briefly discuss %
whether Green's paraparticles can emerge %
in condensed matter systems. %
First, we should distinguish between trivial and nontrivial realizations. Example of a trivial realization is to simply stack $m$ identical layers of a bosonic topological phase with emergent fermions, for example the 2D toric code~\cite{kitaev2003fault}, and use $\hat{\psi}^\pm_{j,a}$ to denote the fermion creation and annihilation operators in the $a$-th layer. Since operators in different layers commute, $\{\hat{\psi}^{\pm}_{j,a}\}$ satisfy  Eq.~\eqref{eq:fundamental_Rcommu} with
$R$ in Ex.~\ref{ex:Green}~[with $(+)$ sign], and consequently, this realizes Green's trilinear algebra via Eq.~\eqref{eq:GreenDecomposition}. This ``realization'' is  trivial, since it  does not satisfy the local indistinguishability criterion in Definition~\ref{def:weak_indistinguishable_particles}, as the index $a$ can be accessed by local measurements even when the particle is deep in the bulk. Consequently, this realization is useless for the quantum information applications in Sec.~\ref{sec:QI_statistics}, as it is vulnerable to local noise and eavesdropping due to being a composite particle. 
Nevertheless, we know Ex.~\ref{ex:Green} for any $m>1$ can be realized as simple particles in a special class of deconfined finite group gauge theories in 3D, as discussed in App.~G.2 of Ref.~\cite{wang2025secret}. Still, this only realizes a reducible representation of Green's trilinear algebra. Existence of a solvable condensed matter model with emergent Green's paraparticles realizing an irrep of Green's trilinear algebra appears open, and we refer to Ref.~\cite{zhou2026green} for recent progress in this direction.

\subsection{Infinite statistics and quons}\label{app:infinitestatistics}
Quons were introduced by Greenberg~\cite{greenbergParticlesSmallViolations1991} as a deformation of Bose and Fermi statistics, with annihilation and creation operators satisfying
\begin{equation}\label{eq:quonalgebra}
\hat{\psi}_k \hat{\psi}_l^{\dagger}-q \hat{\psi}_l^{\dagger} \hat{\psi}_k=\delta_{k l},
\end{equation}
where $-1\leq q\leq 1$, and $q=\pm 1$ recover ordinary bosons~($+$) and fermions~($-$), respectively, while $q=0$ gives infinite statistics~\cite{greenbergExampleInfiniteStatistics1990}. Note that
no relation is imposed on products of creation operators, in contrast to $R$-parastatistics.
Consequently, for $n\geq 1$,
the quon Fock space %
contains $n!$ linearly independent $n$-particle states with distinct mode labels, and all representations of $S_n$ occur, and there is no upper bound on the particle number $n$. 
Quon theories also admit an analog of $\hat e_{ij}$ operators %
satisfying the $\mathfrak{gl}_N$ relations %
in Eqs.~\eqref{eq:commu_Eab_psi_p} and \eqref{eq:commu_Eab_Ecd}, but in quon theories  
$\hat{e}_{ij}$ is given by an infinite series involving all modes~\cite{greenbergParticlesSmallViolations1991}:
\begin{align}\label{eq:eijQuon}
\hat e_{ij}
={}&\hat\psi_i^\dagger\hat\psi_j
+\sum_k \hat\psi_k^\dagger\hat\psi_i^\dagger
\hat\psi_j\hat\psi_k  \\
&+\sum_{k_1,k_2}
\hat\psi_{k_1}^\dagger\hat\psi_{k_2}^\dagger
\hat\psi_i^\dagger\hat\psi_j
\hat\psi_{k_2}\hat\psi_{k_1}
+\cdots . \nonumber
\end{align}
To our knowledge, quons %
are not known to emerge %
in condensed matter systems. An exact realization of the quon algebra would require an infinite local Hilbert space, which rules out quantum spin systems. Relativistic quon field theories appear to suffer from locality~(causality) problems~\cite{greenbergParticlesSmallViolations1991},   although certain formal properties still hold, such as cluster decomposition and a CPT theorem. 
\subsection{Transtatistics}
Ref.~\cite{Dakic2024reconstructionof} developed a   theory of generalized quantum statistics under the name of transtatistics, which is closely related to $R$-parastatistics, but using a fundamentally different approach. They started with a different set of assumptions: instead of postulating a second-quantized
operator algebra, they assume that single-particle dynamics is described by the unitary group $U(N)$ acting on $N$ modes, subject to a certain locality condition in the Fock space. Under these assumptions, they classify the
possible single-mode partition function $z(x)$ with $x=e^{-\beta(\epsilon-\mu)}$, and showed that $z(x)$ must 
form a so-called \textit{totally positive sequence}~\cite{Dakic2024reconstructionof}. 
This classification captures the
same kind of generalized exclusion statistics of $R$-paraparticles measured by
$z_R(x)$, and is more general in that there exist exotic totally positive sequences that cannot be realized by $R$-paraparticles~(see discussion at the end). 

A more recent work~\cite{sanchez2025reconstruction} proposed a
second-quantized realization of transtatistics, defined by the quadratic algebraic relations analogous to %
Eq.~\eqref{eq:fundamental_Rcommu}:
\begin{align}
[\hat\psi^+_{i,a},\hat\psi^+_{j,b}]_{A,B}&=0,\nonumber\\
    [\hat\psi^-_{i,a},\hat\psi^-_{j,b}]_{A,B}&=0,\nonumber\\
    [\hat\psi^-_{i,a},\hat\psi^+_{j,b}]_{A,B}
    &=g_{ab}\delta_{ij}\mathbf{1},
\end{align}
where $g_{ab}$ is a Hermitian matrix, and $[~,~]_{A,B}$ is the so-called 
$(A,B)$-bracket defined by
\begin{equation}
\begin{aligned}
    [\hat P_{i,a},\hat Q_{j,b}]_{A,B}
    &=
    \sum_{c,d} A^{cd}_{ab}
    \{\hat P_{i,c},\hat Q_{j,d}\}
    +
    \sum_{c,d} B^{cd}_{ab}
    [\hat P_{i,c},\hat Q_{j,d}],
\end{aligned}
\end{equation}
for $P,Q\in\{\psi^+,\psi^-\}$, where $A,B$ are four-index tensors satisfying certain algebraic constraints analogous to the Yang-Baxter equation~\eqref{eq:YBE}. Ref.~\cite{sanchez2025reconstruction} also claims that the $R$-commutation relations in Eq.~\eqref{eq:fundamental_Rcommu} are recovered via a special choice of $A$ and $B$, however, the current version of Ref.~\cite{sanchez2025reconstruction} seems to have some subtle issues regarding the orientation of the $R$-matrix. 

Therefore, the main difference between transtatistics and $R$-parastatistics is their starting point and the overall philosophy. 
The former starts from a set of operational axioms, classify possible exclusion statistics, and derive operator algebra of quantum fields, while
the latter directly starts from 
$R$-matrix second quantization and investigates the unique physical consequences of the theory as well as realization in physical systems. 
In our opinion, an interesting open question here is which class of transtatistics can be realized in physical systems, including in condensed matter and high energy physics, which was only briefly discussed as a possible future direction in Refs.~\cite{Dakic2024reconstructionof,sanchez2025reconstruction}. In particular, there are exotic types of transtatistics with degenerate vacuum states~(i.e., $d_0>1$), or even with states with negative occupation numbers~(i.e., $d_{n}>0$ for some negative integer $n$), in which case the single mode partition function $z(x)$ is a Laurent series instead of a Taylor series. These unusual behaviors are not present in any $R$-parastatistics theory, and deviate from our general expectations of  physical quantum field theories, making them hard to realize in physical systems. It is possible that by imposing extra axioms on transtatistics, one recovers the theory of $R$-parastatistics.

\subsection{Generalized statistics in 1D}\label{sec:1Dstatistics}
Up to now we have been discussing particle statistics that are consistently defined in any spatial dimension. In this section we mention some interesting families of particle statistics that are limited to 1D, including $q$-deformed bosons~(Sec.~\ref{app:q-deformation}), parafermions~~(Sec.~\ref{sec:parafermions}), and 1D Abelian anyons~(Sec.~\ref{sec:1DAnyons}). Then in Sec.~\ref{sec:non-involutive-R-para} we mention a potential extension of $R$-paraparticles to non-involutive $R$-matrices that are limited to 1D, which potentially generalize other 1D particle statistics to the non-Abelian regime. 
\subsubsection{$q$-deformed  bosons}\label{app:q-deformation}
The second quantized theory of $q$-deformed bosons is defined by a $q$-deformation of the CCR of ordinary bosons~\cite{Biedenharn1989qboson,macfarlaneQanaloguesQuantumHarmonic1989,Sun1989qboson}
\begin{align}
\hat{\psi}_i \hat{\psi}_i^\dagger&=q \hat{\psi}_i^\dagger \hat{\psi}_i+q^{-\hat{n}_i},\nonumber\\
[\hat{n}_i,\hat{\psi}_j^\dagger]&=\delta_{ij}\hat{\psi}_j^\dagger,
\end{align}
and different modes commute. Note the difference from quons in Eq.~\eqref{eq:quonalgebra}. Here $\hat{n}_i$ is the analog of particle number operator in Eq.~\eqref{eq:def_n_a}, and also have non-negative integer eigenvalues. The analog of the bilinear operator $\hat{e}_{ij}$ in Eq.~\eqref{eq:def_e_ab} are
\begin{align}
\hat{E}_{i,i+1}&=\hat{\psi}_i^\dagger \hat{\psi}_{i+1},\nonumber\\
\hat{E}_{i+1,i}&=\hat{\psi}_{i+1}^\dagger \hat{\psi}_i,\nonumber\\
\hat{K}_{i,i+1}&=q^{(\hat{n}_i-\hat{n}_{i+1})/2}. 
\end{align}
These operators $\{\hat{E}_{i,i+1},\hat{E}_{i+1,i},\hat{K}_{i,i+1}\}_{i=1}^{N-1}$ 
realize the quantum group $U_q(\mathfrak{sl}_N)$
\cite{Sun1989qboson} 
instead of the ordinary Lie algebra $\mathfrak{sl}_N$. 
In general, $q$-deformed bosons lack a natural free-particle theory that can be solved simply by diagonalizing
a one-particle Hamiltonian, although $q$-bosons still lead to Bethe-ansatz-solvable models~\cite{bogoliubovExactSolutionQboson1993,bogoliubovCorrelationFunctionsStrongly1998,tsilevichQuantumInverseScattering2006}.
We also mention that there is a different version of $q$-deformed bosons~\cite{PuszWoronowicz1989} where different modes $q$-commute~(i.e., $\hat{\psi}_i \hat{\psi}_j=q \hat{\psi}_j \hat{\psi}_i$ for $i<j$ ) instead of commute, realizing the dual quantum group $SU_q(N)$.

\subsubsection{Parafermions}\label{sec:parafermions}
Parafermions~\cite{fendleyParafermionicEdgeZero2012,Fendley2014Freepara,Fendley2016review} are generalizations of Majorana fermions~\cite{Kitaev2001Majorana}, defined by the following commutation relations
\begin{eqnarray}\label{eq:def_parafermions}	\hat{\psi}_{i}\hat{\psi}_{j}&=&\omega\hat{\psi}_{j}\hat{\psi}_{i},\text{ for } i<j,\nonumber\\
	\hat{\psi}_{i}^n&=&1, ~\forall i,
\end{eqnarray}
where $n\geq 2$ is an integer and $\omega$ is a primitive $n$-th root of unity. 
Parafermions emerge in 1D $\mathbb{Z}_n$-symmetric quantum spin chains~\cite{fendleyParafermionicEdgeZero2012} and the 1D boundary of certain 2D topological phases~\cite{Fendley2016review}. Parafermions also admit an exactly solvable free particle theory ~\cite{Fendley2014Freepara} %
, however,  the free  Hamiltonian is non-Hermitian, and a Hermitian free parafermion theory is not known to date.  
Furthermore, the solution technique for free parafermions is significantly more involved than other free particle theories we mentioned above.  

\subsubsection{Abelian anyons in 1D}\label{sec:1DAnyons}
An analog of Abelian anyons in one-dimensional continuum were proposed in 
\cite{kunduExactSolutionDouble1999}, defined by the commutation relations
\begin{align}\label{eq:1Danyons}
\hat\psi(x)\hat\psi(y)
&=e^{-i\kappa\,\mathrm{sgn}(x-y)}\hat\psi(y)\hat\psi(x),
\\
\hat\psi(x)\hat\psi^\dagger(y)
&=e^{i\kappa\,\mathrm{sgn}(x-y)}
\hat\psi^\dagger(y)\hat\psi(x)+\delta(x-y),\nonumber
\end{align}
where $\kappa=0$ and $\kappa=\pi$ recover bosons and fermions, respectively. Bilinear Hamiltonians of 1D anyons generally do not have a free-particle spectrum, but can nevertheless be solved by Bethe-ansatz techniques~\cite{kunduExactSolutionDouble1999,batchelorOneDimensionalInteractingAnyon2006,girardeauAnyonFermionMappingApplications2006,
patuCorrelationFunctionsOnedimensional2007}. 
A lattice version of Eq.~\eqref{eq:1Danyons} were proposed in Ref.~\cite{keilmannStatisticallyInducedPhase2011}, where fractional Jordan--Wigner
transformations were introduced to show that 1D Abelian anyons can emerge in a bosonic system. %
1D Abelian anyons have been observed in recent cold-atom experiments~\cite{kwanRealizationOnedimensionalAnyons2024,dharObservingAnyonizationBosons2025}.

\subsubsection{Non-involutive $R$-parastatistics in 1D}\label{sec:non-involutive-R-para}
Below we mention an interesting open direction to study $R$-parastatistics in 1D with non-involutive $R$-matrices, which can potentially generalize all other 1D particle statistics mentioned above. Let $R$ be a non-involutive $R$-matrix, and consider quadratic relations of the following form
\begin{equation}\label{eq:noninvolR-para}
    \hat\psi_{i,a}\hat\psi_{j,b}    =\sum_{c,d}R^{cd}_{ab}\hat\psi_{j,c}\hat\psi_{i,d},
    \quad i<j. 
\end{equation}
This is the analog of the second and the third lines of Eq.~\eqref{eq:fundamental_Rcommu} for a non-involutive $R$-matrix, where the extra constraint $i<j$ makes the relation consistent. %
Note that the relation in Eq.~\eqref{eq:noninvolR-para} only makes sense in a 1D geometry, where particle positions have a well-defined ordering. 
This relation can be viewed as a non-Abelian
generalization of the commutation relations of  parafermions and 1D Abelian anyons~[first lines of Eq.~\eqref{eq:def_parafermions}	 and Eq.~\eqref{eq:1Danyons}, respectively].
The second lines of Eq.~\eqref{eq:def_parafermions}	 and Eq.~\eqref{eq:1Danyons} can be generalized similarly.
The main challenge is not formally writing down such algebras, but finding useful
choices for which the resulting many-body theory is solvable and realizable in physical systems. In
particular, unlike the involutive case studied in this paper, a general
non-involutive \(R\)-matrix often does not produce a free-particle theory. Solvability would likely require extra integrability structures, which remains an interesting open problem. Furthermore, to show that non-involutive $R$-paraparticles can emerge in condensed matter systems, one probably also needs to generalize the MPO Jordan-Wigner transformation~\cite{wang2023para} to the non-involutive case, and construct a 1D quantum spin chains that exactly maps to non-involutive $R$-paraparticle theories.

\subsection{Other theories}\label{sec:othertheories}
Up to now we have discussed various second quantized
theories of quantum particle statistics that generalize ordinary
fermions and bosons. We now mention some notable families of
quantum statistics that are studied without using a second
quantization formalism. 

In spatial dimensions higher than two, topological phases can contain extended excitations such as  strings and membranes~\cite{wangBraidingStatisticsLoop2014,wangNonAbelianStringParticle2015,
LinLevin2015Loop,LanKongWen3DAB,kongDefects3dimensionalToric2020,kobayashiGeneralizedStatisticsLattices2026}, whose statistics are described by
particle-loop braiding, loop-loop braiding, three-loop braiding, or
more general linking processes rather than by exchange of point
particles. 
Some extrinsic defects in topological phases can also exhibit nontrivial quantum statistics
~\cite{bombinTopologicalOrderTwist2010,teoMajoranaFermionsNonAbelian2010,freedmanProjectiveRibbonPermutation2011,barkeshliTwistDefectsProjective2013}.

We finally mention some other generalizations of the Pauli exclusion principle in the literature, including various types of generalized exclusion statistics in the first quantization formulation~\cite{Gentile1940,Haldane1991Fractional,wuStatisticalDistributionGeneralized1994}, and exotic exclusion statistics of quasiparticles in conformal field theory spectra~\cite{Schoutens1997CFT,Bouwknegt1999CFT}. There are also various theories of generalized particle exchange statistics proposed in the first quantization formulation, such as $Z_2\times Z_2$ parastatistics~\cite{Toppan2021Inequivalent}, and immanons~\cite{tichyExtendingExchangeSymmetry2017}. 
\bibliography{library1}
\end{document}